\let\oldsqrt\sqrt
\def\sqrt{\mathpalette\DHLhksqrt}
\def\DHLhksqrt#1#2{%
\setbox0=\hbox{$#1\oldsqrt{#2\,}$}\dimen0=\ht0
\advance\dimen0-0.2\ht0
\setbox2=\hbox{\vrule height\ht0 depth -\dimen0}%
{\box0\lower0.4pt\box2}}
\newcommand{\tr}{\operatorname{tr}}
\newcommand{\Bcal}{\mathcal{B}}
\newcommand{\Dcal}{\mathcal{D}}
\newcommand{\Ecal}{\mathcal{E}}
\newcommand{\Fcal}{\mathcal{F}}
\newcommand{\Hcal}{\mathcal{H}}
\newcommand{\Ical}{\mathcal{I}}
\newcommand{\Mcal}{\mathcal{M}}
\newcommand{\Lcal}{\mathcal{L}}
\newcommand{\Ncal}{\mathcal{N}}
\newcommand{\Jcal}{\mathcal{J}}
\newcommand{\Pprob}{\mathbb{P}}
\newcommand{\Rcal}{\mathcal{R}}
\newcommand{\Ocal}{\mathcal{O}}
\newcommand{\ident}{\mathbbm{1}}
\newcommand{\Ads}{\mathbbm{A}}
\newcommand{\Bds}{\mathbbm{B}}
\newcommand{\Xds}{\mathbbm{X}}
\newcommand{\sbt}{\,\begin{picture}(-1,1)(-1,-3)\circle*{2.5}\end{picture}\ }
\newcommand*\xbar[1]{%
   \hbox{%
     \vbox{%
       \hrule height 0.5pt % The actual bar
       \kern0.5ex%         % Distance between bar and symbol
       \hbox{%
         \kern-0.2em%      % Shortening on the left side
         \ensuremath{#1}%
         \kern-0.0em%      % Shortening on the right side
       }%
     }%
   }%
}
\def\BraVert{\egroup\,\mid\,\bgroup}
\def\ketbra#1#2{\ket{#1\vphantom{#2}}\!\bra{#2\vphantom{#1}}}
\def\bra#1{\mathinner{\langle{#1}|}}
\def\ket#1{\mathinner{|{#1}\rangle}}
\def\braket#1{\mathinner{\langle{#1}\rangle}}
\newtheorem*{theorem*}{Theorem}
\newtheorem{lemma}{Lemma}
\newtheorem{corollary}{Corollary}
\newtheorem{proposition}{Proposition}
\newtheorem{definition}{Definition}
\newcommand{\neutralize}[1]{\expandafter\let\csname c@#1\endcsname\count@}
\newtheorem{thm}{Theorem}
\begin{document}

\title{Resource theory of causal connection}

\author{Simon Milz}
\email{simon.milz@oeaw.ac.at} 
\affiliation{Institute for Quantum Optics and Quantum Information, Austrian Academy of Sciences, Boltzmanngasse 3, 1090 Vienna, Austria}
\orcid{0000-0002-6987-5513}

\author{Jessica Bavaresco}
%\email{jessica.bavaresco@oeaw.ac.at} 
\affiliation{Institute for Quantum Optics and Quantum Information, Austrian Academy of Sciences, Boltzmanngasse 3, 1090 Vienna, Austria}
\orcid{0000-0002-4823-0353}

\author{Giulio Chiribella} 
%\email{giulio@cs.hku.hk}
\affiliation{QICI Quantum Information and Computation Initiative, Department of Computer Science, The University of Hong Kong, Pokfulam Road, Hong Kong}
\affiliation{Department of Computer Science, University of Oxford, Wolfson Building, 15 Parks Road, Oxford OX1 3QD, United Kingdom }
\affiliation{Perimeter Institute for Theoretical Physics, 31 Caroline St North, Waterloo, ON N2L 2Y5, Canada }
\orcid{0000-0002-1339-0656}

\begin{abstract}
The capacity of distant parties to send  signals to one another is a fundamental requirement  in many information-processing tasks. Such ability is determined by the causal structure connecting the parties, and more generally, by the intermediate processes carrying signals from one laboratory to another. Here we build a fully fledged resource theory of causal connection for all multi-party communication scenarios, encompassing those where the parties operate in a definite causal order and also where the order is indefinite. We define and characterize the set of free processes and three different sets of free transformations thereof, resulting in three distinct resource theories of causal connection. In the causally ordered setting,  we identify the most resourceful processes in the bipartite and tripartite  scenarios. In the general setting, instead, our results suggest that there is no global most valuable resource.   We establish the signalling robustness as a resource monotone of causal connection and provide tight bounds on it for many pertinent sets of processes. Finally, we introduce a resource theory of causal non-separability, and show that it is -- in contrast to the case of causal connection -- unique. Together our results offer a flexible and comprehensive framework to quantify and transform general quantum processes, as well as insights into their multi-layered causal connection structures. 
\end{abstract}

\maketitle

\newpage

\tableofcontents

\newpage

%%%%%%%%%%
\section{Introduction}\label{sec::Intro}
%%%%%%%%%%
The ability of separated parties to signal to one another is one of the most fundamental resources in information processing.  Examples are abundant even in everyday life; for instance, any group call where the participants want to communicate with each other requires a network of communication links of sufficiently high quality. Somewhat less mundanely, quantum protocols such as teleportation and superdense coding require reliable transmission of classical and quantum bits~\cite{nielsen_quantum_2000}.

The possibility of signalling between a set of parties is determined by the causal structure in which they operate. For example, if Alice's operations causally precede Bob's, then Bob cannot send any signal to Alice. More generally, the amount of signalling between Alice and Bob will depend on the background processes that connect their laboratories. When the parties operate in a definite causal order, their causal structure can be described by the framework of quantum combs~\cite{chiribella_transforming_2008, chiribella_quantum_2008, chiribella_theoretical_2009}, that is, matrices that describe  intermediate processes connecting the parties in a given order. More generally, the parties could operate in an indefinite causal structure, which can be described with the framework of process matrices~\cite{OreshkovETAL2012,araujo_witnessing_2015},  representing intermediate processes connecting the parties in an indefinite order. 
The goal of this paper is to provide a rigorous framework in which combs and process matrices can be regarded as resources for the communication between multiples parties.  

The canonical framework for characterizing resources is that of resource theories~\cite{horodecki_quantumness_2013, coecke_mathematical_2016}. 
Any resource theory consists of a set of free (i.e., non-resourceful) objects and a set of free transformations thereof (i.e., transformations that are assumed not to create resources), together with resource monotones (i.e., functions that are non-increasing under free transformations, and as such quantify the amount of resource in a given object).

In quantum mechanics, resource theories have been highly successful in characterizing resources such as entanglement~\cite{vedral_quantifying_1997, brus_characterizing_2002, plbnio_introduction_2007, horodecki_quantum_2009}, purity~\cite{horodecki_reversible_2003, streltsov_maximal_2018}, athermality~\cite{brandao_resource_2013, gour_resource_2015, lostaglio_introductory_2019}, asymmetry~\cite{gour_resource_2008, vaccaro_tradeoff_2008, gour_measuring_2009, marvian_theory_2013, marvian_extending_2014, piani_robustness_2016}, coherence~\cite{aberg_quantifying_2006, baumgratz_quantifying_2014, levi_quantitative_2014, chitambar_comparison_2016, chitambar_critical_2016,  winter_operational_2016, yadin_quantum_2016, napoli_robustness_2016, streltsov_colloquium_2017, wu_quantum_2020}, imaginarity~\cite{hickey_quantifying_2018, wu_resource_2021, wu_operational_2021} and nonclassicality~\cite{wolfe_quantifying_2020} to name but a few pertinent examples (see, e.g., Ref.~\cite{chitambar_quantum_2019} for an overview). 

In most of the existing resource theories, the resources are quantum states. Recently, there has been increasing interest in a new type of resource theories, where the resources are quantum channels (that is, valid transformations of quantum states into quantum states). For example, in Refs.~\cite{gour_dynamical_2020, gour_dynamical_res_2020, gour_entanglement_2021}, the entanglement of bipartite channels has been investigated, leading to a resource theory where the free objects are channels and the free transformations are quantum supermaps~\cite{chiribella_transforming_2008, chiribella_theoretical_2009, chiribella_quantum_2013} (that is, valid transformations of quantum channels into quantum channels). In an analogous fashion, the distillation of quantum channel resources has recently been considered in Ref.~\cite{regula_fundamental_2021}. More closely related to our present work, in Refs.~\cite{chiribella_quantum_2019, kristjansson_resource_2020}, resource theories of communication have been introduced, focusing on the possible transformations of  communication  channels. 

Here we establish a general framework to quantify the resourcefulness of the causal structure connecting an arbitrary number of parties. Our main goal is to characterize to what extent a given causal structure enables signalling between the parties. The resulting framework will be called a resource theory of \textit{causal connection}. However, a slight modification of the framework also allows us to formalize an explicit resource theory of \textit{causal non-separability}, where causal indefiniteness -- instead of causal connection -- is the resource of interest. A salient aspect of the resource theories of causal connection and causal non-separability is that, unlike in most other resource theories, the basic resources are neither quantum states nor quantum channels but rather quantum causal structures, described by quantum combs and process matrices. Accordingly, the free transformations are transformations of combs and process matrices, similar to the types of transformations considered in Refs.~\cite{bisio_theoretical_2019, castro-ruiz_dynamics_2018}. The resource-theoretic study of these higher-order processes is a relatively recent development. Transformations of combs and the resulting resource theories of (quantum) non-Markovianity have been discussed in Ref.~\cite{berk_resource_2021}. A resource theory of superpositions of causal orders has been provided in Ref.~\cite{taddei_quantum_2019}, leading to similar mathematical structures as those encountered in this work. Finally, the connection between channel capacities and causal order has been investigated in Ref.~\cite{jia_causal_2019}. In a similar vein, albeit not within a resource theoretical framework, the resources required for the implementation of different generalized quantum processes have been investigated in Refs.~\cite{nery_simple_2021, milz_genuine_2021}. 

In general, the  basic objects in our resource theories are process matrices, representing the background processes connecting the parties' laboratories. We start by establishing resource theories of causal connection. In this case the \textit{free} objects are process matrices that do \textit{not} allow for signalling between the parties. For the free transformations, we investigate three natural sets, one contained inside the other. The first set is the least restrictive one, and consists of all proper operations on process matrices that preserve the set of free processes. The second set consists of operations that cannot be used to establish signalling between the involved parties. Finally, the third, most restrictive set consists of operations that can be actively implemented by the parties using only local operations and shared correlations. The ensuing three possible sets of free transformations -- of which the latter two are the most physically motivated ones -- give rise to three distinct resource theories of causal connection, each with its own peculiar features.

Besides establishing a general resource theoretic framework, we introduce quantitative measures of causal connection. In particular, we propose the \textit{signalling robustness} as a faithful monotone of causal connection and find tight bounds for this monotone for the case of causally ordered processes as well as special sets of causally indefinite processes. As it turns out, both for the two- and three-party scenario,  the process with the highest signalling robustness in the causally ordered case is a Markovian process that only consists of unitary channels linking one party to the next. This result implies  that non-Markovian processes do not offer a higher amount of causal connection, despite their ability to create correlations between the outputs of different parties.    Perhaps more surprisingly, numerical results show that also causal non-separability does not increase the amount of causal connection, thus suggesting that the advantages of causally non-separable processes over causally ordered ones in quantum information tasks~\cite{chiribella_quantum_2013, chiribella_perfect_2012, OreshkovETAL2012, araujo_quantum_2017, quintino_inversion_2018, bavaresco_strict_2020, bavaresco_unitaries_2021} do not stem from an increased amount of causal connection. 

Like other types of robustness, the signalling robustness can be phrased as a semidefinite program (SDP). Using this SDP formulation, we provide an intuitive interpretation of the signalling robustness in terms of the `number and strength' of causal loops that could in principle be closed in a given process. This, in turn, offers an intuitive interpretation of causal connection even in cases where the underlying process is causally indefinite.

We then explore convertibility of processes under the free operations (for all three possible sets thereof) in the resource theory of causal connection. For some choices of free operations -- albeit, as we show, not all -- we find that they cannot change the causal ordering of a process, nor can they transform causally ordered processes into processes with indefinite causal order. On the other hand, causally non-separable processes can always be freely transformed into processes with definite causal order by means of free transformations. 

Finally, we adapt our results and the structural framework we establish for causal connection to formulate a  resource theory of non-separability. While this latter resource theory has been alluded to in the literature~\cite{araujo_witnessing_2015}, to our knowledge, it has not been fully worked out yet. Leveraging our considerations on causal connection, we lay out an explicit resource theory of causal non-separability and characterize the so-called causal robustness, introduced in Ref.~\cite{araujo_witnessing_2015} as a monotone in the resource theory  of causal non-separability. Unlike in the case of causal connection, as we show, there is only \textit{one} meaningful resource theory of causal non-separability, namely the maximal one, i.e., where the set of free transformations coincides with the set of all proper operations that cannot create causal indefiniteness. 

Finally, using these insights from the resource theory of causal non-separability, we return to the resource theories of causal connection, and show that, for two of them, there is no total order with respect to free operations, even when only considering the set of causally non-separable processes.

The paper is structured as follows. In Sec.~\ref{sec::Preliminaries}, we provide the necessary mathematical prerequisites for the discussion of general quantum processes and transformations thereof. Based on the notion of non-signalling (NS) constraints, in Sec.~\ref{sec::FreeStates_FreeOp}, we formally introduce the set of free processes and free transformations, the latter based both on axiomatic and operational considerations. The signalling robustness, as well as its formulation in terms of an SDP is introduced in Sec.~\ref{sec::Montones}, where we also provide its alternative interpretation in terms of the number of causal loops that can be closed in a given process. Tight bounds on the signalling robustness in the bi- and multipartite case, as well as the most resourceful process on two parties and its transformations are presented in Sec.~\ref{sec::BoundsRes}. A derivation of the unique resource theory of causal non-separability is provided in Sec.~\ref{sec::Interconvertability}, where we also provide an investigation of the interconvertibility of causally non-separable processes by means of free transformations (with respect to the resource theory of causal connection). The manuscript then concludes in Sec.~\ref{sec::Conclusion}. Further mathematical background and the derivation of all monotones of the resource theory of causal connection can be found in the Appendix.

%%%%%%%%%%
\section{Preliminaries}\label{sec::Preliminaries}
%%%%%%%%%%
Throughout this paper, we predominantly consider the following scenario: Two parties (Alice and Bob) each have independent laboratories in which they can receive quantum states, manipulate them, and subsequently send them forward (we will consider all involved systems to be of finite dimension). In a causally ordered framework, there are three possibilities: Alice's actions could have an influence on the state that Bob receives (denoted by $A\prec B$), Bob's actions could have an influence on the state that Alice receives (denoted by $B\prec A$) or neither party can influence the other (denoted by $A||B$). The former two cases allow for communication between the parties, while in the latter case, no signal can be sent between them. 

The aim of this paper is to provide a meaningful and operationally clear-cut framework to quantify the  `value' of the possibility of communication between the involved parties.  We will refer to this possibility as the {\em causal connection} between the parties. We emphasize that, even though we mostly work in the two-party case, most of our results hold in more generality, and we will be explicit whenever results depend on the number of parties involved. 

The causal connection between multiple parties is described by a process matrix~\cite{OreshkovETAL2012,araujo_witnessing_2015}, which encapsulates all situations where the parties have a definite causal order, as well as more exotic scenarios in which the order between the parties is indefinite. Independent of the respective underlying process, the operations Alice (A) and Bob (B) can perform on the respective quantum systems they receive and subsequently send forward are described by an instrument $\Jcal_X = \{\Mcal_X^k\}$, a collection of (trace non-increasing) completely positive (\textbf{CP}) maps, each corresponding to the implemented operation given an observed measurement outcome of party $X$, that add up to a CP trace-preserving (\textbf{TP}) map. Once all possible joint probabilities $\Pprob(i,j|\Jcal_A,\Jcal_B)$ for all potential outcomes Alice and Bob could obtain given their respective instruments are measured, the underlying causal order (or lack thereof) can be deduced.

Before providing a resource theory of causal connection, we first introduce the mathematical background needed to represent causal connection and its manipulation. 

\begin{figure}
\centering
\includegraphics[width=0.45\textwidth]{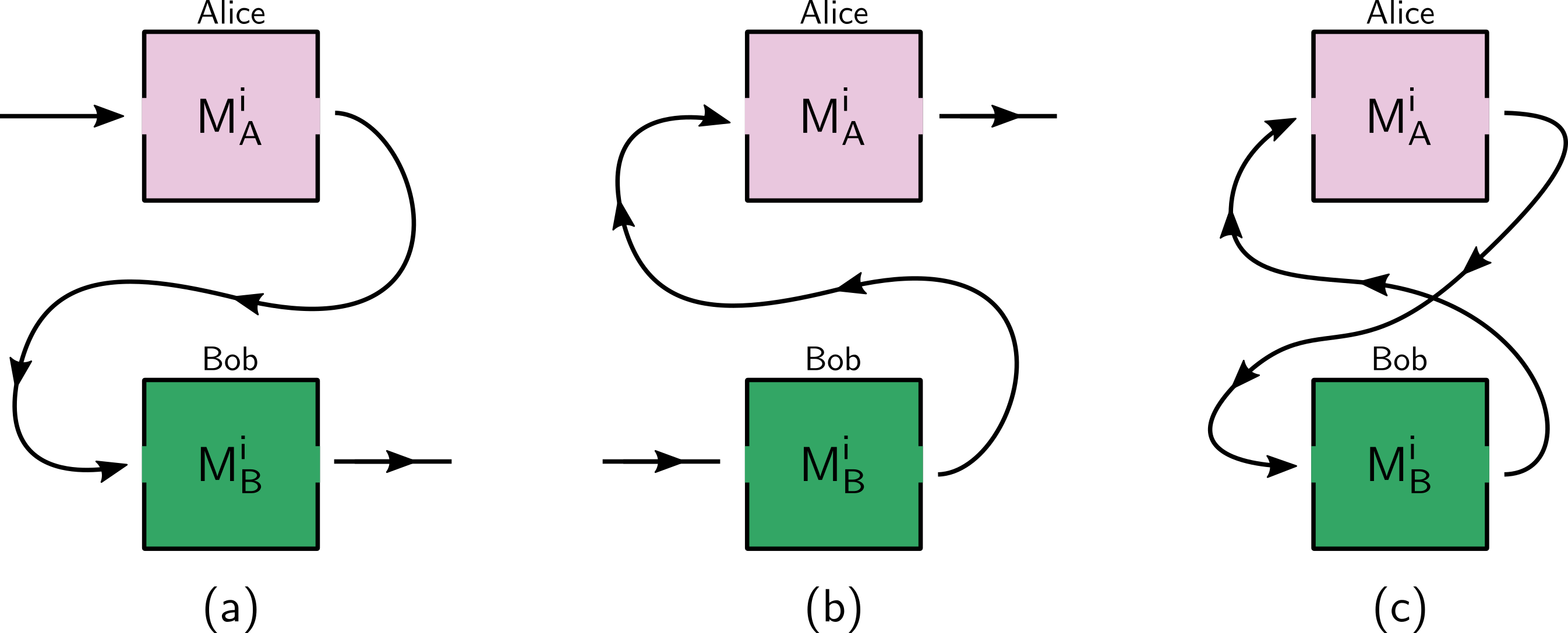}      
\caption{\textbf{Causal connection  between two parties.}  Two parties, Alice and Bob, receive, manipulate, and send forward quantum systems. The causal connection between Alice and Bob determines whether the operations performed by one party have an influence on the state the other party receives.  Specifically, we consider the situation where  Alice's manipulation can influence the input of Bob [panel (a)], or vice versa [panel (b)]. We will also consider more general scenarios that do not allow one to attribute a definite causal order, but no logical paradoxes occur [panel (c)]. Note that the case where none of the parties can influence the other can be considered a special case of (a) or (b). }
\label{fig::gen_setup}
\end{figure}

%%%%%%%%%%%%%%%%%%%%%%%%%%%%%%%%%%%%%%%%%
\subsection{Choi-Jamio{\l}kowski isomorphism and the link product}

For all of our considerations, it proves convenient to employ the Choi-Jamio\l{}kowski isomorphism (CJI) to represent CP maps by their Choi matrices~\cite{de_pillis_linear_1967, jamiolkowski_linear_1972, choi1975}. Denoting the input (output) space of party $X$ by $\Hcal_{X_I}$ ($\Hcal_{X_O}$), an instrument can equivalently be represented by a collection $\Jcal_X = \{M_X^k\}$ of positive semidefinite matrices $M^k_X \in \Bcal(\Hcal_{X_O} \otimes \Hcal_{X_I}) $ that satisfy $\tr_{X_O}(\sum_k M_X^k)= \ident_{X_I}$~\cite{chiribella_theoretical_2009}. Whenever there is no risk of confusion, we drop the explicit distinction between a map and its Choi matrix and we will employ the convention $X = X_IX_O$ when labelling Hilbert spaces. Whenever we want to make the distinction between maps and Choi matrices explicit, we use calligraphic letters to denote maps, and their upright version to denote the corresponding Choi matrix. 

A particular CPTP map that we encounter regularly is the partial trace $\tr_X$, with corresponding Choi matrix $\ident_X$, and the identity channel $\Ical_{X_O\rightarrow Y_I}$ with corresponding Choi matrix $\Phi^+ = \sum_{i,j}\ketbra{ij}{ij}$, where $\{\ket{ij}\}$ is a fixed product basis of $\Hcal_{X_O} \otimes \Hcal_{Y_I}$. Throughout, we will often add subscripts to the respective objects to denote the spaces they are defined on, and we assume that objects with different subscripts differ, i.e., $M_X \not\cong M_Z$, even if $\Hcal_X \cong \Hcal_Z$. Additionally, whenever applicable, we use the convention $M_X := \tr_Y M_{XY}$, etc.\footnote{Since the objects we consider are not always of unit trace, we frequently abuse this notation and implicitly assume that $M_X$ is a trace renormalized version of $\tr_Y M_{XY}$. This slight ambiguity does not affect the arguments we make, and whenever necessary we are explicit about it.}

It is convenient to introduce the \textit{link product} $\star$ which allows one to express the concatenation of maps in terms of their respective Choi matrices~\cite{chiribella_theoretical_2009}. Specifically, let $F_{XY} \in \Bcal(\Hcal_X \otimes \Hcal_Y)$ and $L_{YZ} \in \Bcal(\Hcal_Y \otimes \Hcal_Z)$ be the Choi matrices of two maps $\Fcal: \Bcal(\Hcal_X) \rightarrow \Bcal(\Hcal_Y)$ and $\Lcal : \Bcal(\Hcal_Y) \rightarrow \Bcal(\Hcal_Z)$. Then, the Choi matrix $N_{XZ} \in \Bcal(\Hcal_X \otimes \Hcal_Z)$ of the concatenation $\Ncal = \Lcal \circ \Fcal$ is given by 
\begin{gather}
\begin{split}
    N_{XZ} &= F_{XY} \star L_{YZ}  \\
    &:= \tr_Y[(F_{XY} \otimes \ident_{Z})(\ident_X\otimes L_{YZ}^{\mathrm{T}_{Y}})]\, ,
\end{split}
\end{gather}
where $\sbt^{\mathrm{T}_Y}$ denotes the partial transpose with respect to $\Hcal_Y$. Intuitively, the link product simply provides a convenient way of connecting different maps in the Choi representation. If the two constituents of the link product do not share any spaces they are defined on, then it coincides with the normal tensor product, i.e., $F_X\star L_Y = F_X\otimes L_Y $. As shown in Ref.~\cite{chiribella_theoretical_2009}, the link product of two positive semidefinite (Hermitian) matrices is again positive semidefinite (Hermitian). Additionally, the link product is associative and  commutative for all cases we consider.\footnote{Commutativity only holds up to reordering of the tensor factors of composite systems, but this subtlety will not be of importance for the considerations of this work.} In anticipation of future matters, it is worth considering these latter two properties in more detail. For example, the link product of three maps
\begin{gather}
\begin{split}
F_{X} \star L_{Y} \star K_{XY} 
&= F_{X} \star (K_{XY} \star L_{Y}) 
\\&=  (F_{X} \star K_{XY}) \star L_{Y}
\end{split}
\end{gather} 
can be understood in two different ways. On the one hand, as $K$ `acting on' $L$, yielding a new matrix on $\Hcal_X$ which is then linked with $F$; or it can be read as $K$ `acting on' $F$, yielding a new matrix on $\Hcal_Y$ which is then linked with $L$. We emphasize that the commutativity of the link product does not imply commutativity of the underlying maps. While we do not exclusively employ the link product to phrase our results, the flexibility it provides considerably simplifies the presentation of statements and proofs.

%%%%%%%%%%%%%%%%%%%%%%%%%%%%%%%%%%%%%%%%%
\subsection{Process matrices and causal order}

Suppose that two parties, Alice and Bob, perform local operations in two separate laboratories. Using the CJI for Alice's and Bob's instruments, the joint probability distribution of the outcomes can be expressed in terms of a {\em process matrix} $W \in \Bcal(\Hcal_{A_I} \otimes \Hcal_{A_O} \otimes \Hcal_{B_I} \otimes \Hcal_{B_O})$~\cite{OreshkovETAL2012}, as follows\footnote{To comply with the definition of the link product, this definition of the process matrix differs from the one in~\cite{OreshkovETAL2012} by a transpose. This difference is a mere notational one.}
\begin{gather}
\label{eqn::Prob}
\begin{split}
    \Pprob(i,j|\Jcal_A,\Jcal_B) &= \tr[W^\mathrm{T}(M_A^i \otimes M_B^j)] \\ &= W\star M_A^i \star M_B^j\, ,
\end{split}
\end{gather}
Mathematically, the process matrix is the Choi matrix of a higher-order map transforming Alice's and Bob's local operations into probabilities~\cite{chiribella_quantum_2013}. This matrix contains all spatio-temporal correlations of the underlying process that connects Alice's and Bob's laboratories~\cite{chiribella_memory_2008, shrapnel_updating_2017}. 
A graphical representation of this scenario is provided in Fig.~\ref{fig::Procmatrix}. 
\begin{figure}
    \centering
    \includegraphics[width=0.45\linewidth]{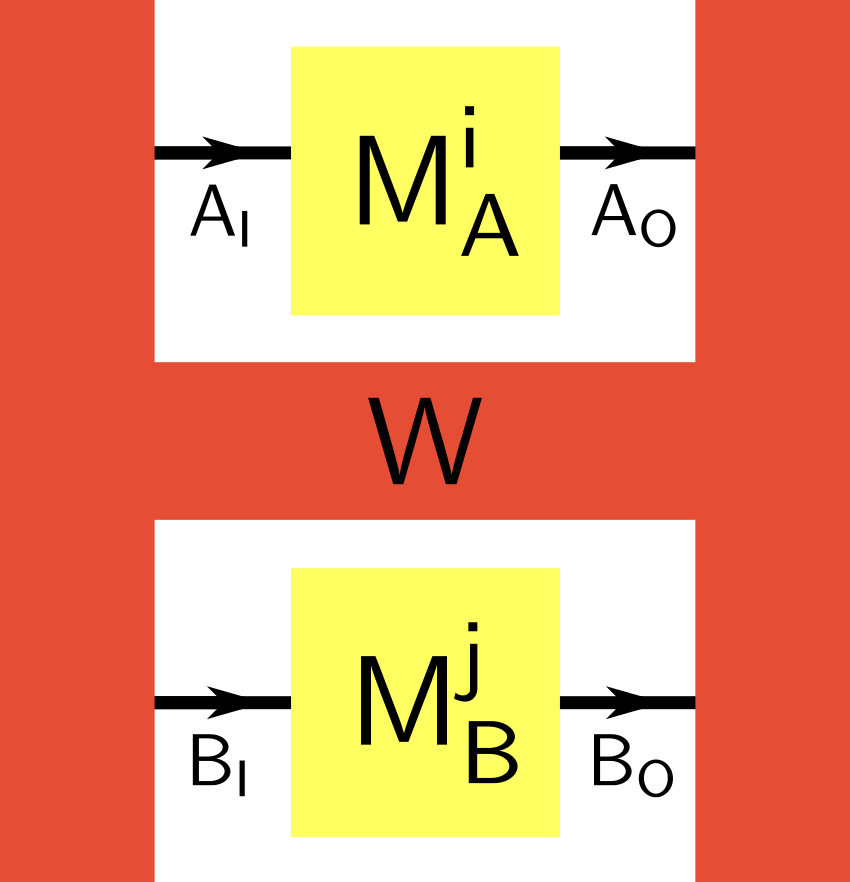}
    \caption{\textbf{Process matrix.} In their respective laboratories, Alice and Bob can perform arbitrary quantum operations $M_A^i, M_B^j$ transforming an input space $X_I$ to an output space $X_O$. All spatio-temporal correlations between them are given by $W$ and computed by `contraction' (i.e., link product) of $W$ with $M_A^i$ and $M_B^j$. Locally, causality holds (as signified by the arrows), but there is not necessarily a global ordering between the events in different laboratories.}
    \label{fig::Procmatrix}
\end{figure}
 
Demanding that quantum mechanics holds locally by requiring Alice and Bob to implement sets of quantum instruments, and that all probabilities must be positive and normalized, even when $W$ acts on correlated operations, implies
\begin{gather}
\label{eqn::DefProc}
    W\geq 0 \quad \text{and} \quad \tr[W^\mathrm{T}(M_A \otimes M_B)]=1
\end{gather}
for all Choi matrices $\{M_A, M_B\}$ of CPTP maps. These requirements can be succinctly summarized (for two parties) as~\cite{araujo_witnessing_2015}
\begin{gather}
\label{eqn::succinct}
W\geq 0, \quad W = L_V(W), \quad \tr(W)=d_{A_O}d_{B_O}\, ,
\end{gather}
where the projection operator\footnote{The subscript `V' refers to `valid'.} $L_V$  is of the form 
\begin{gather}
\label{eqn::DefProcMat}
\begin{split}
    L_V[W] = &{}_{A_O}W + {}_{B_O}W - {}_{A_OB_O}W - {}_{B_IB_O}W  \\
    &+ {}_{A_OB_IB_O}W  - {}_{A_IA_O}W  + {}_{A_IA_OB_O}W\, ,
\end{split}
\end{gather}
with ${}_{X}W \coloneqq \frac{\ident}{d_X} \otimes \tr_XW$. A generalization of $L_V$ to more parties can be found in Ref.~\cite{araujo_witnessing_2015}. While the explicit form of the projection operator $L_V$ is not intuitively obvious, individual terms in Eq.~\eqref{eqn::DefProcMat} can be given a direct meaning. For example, ${}_{A_O}W$ corresponds to the `Bob-to-Alice' one-way signalling part of $W$, while ${}_{A_OB_O}W$ coincides with the non-signalling part of $W$. The detailed forms of $L_V$ as well as all the projection operators we provide in the subsequent sections follow directly from their respective defining equation (for example, Eq.~\eqref{eqn::DefProc} for $L_V$), but only rarely allows for an intuitive interpretation -- in the sense that it is not obvious why they must contain certain combinations of individually interpretable terms beyond the algebra that yields them.

In what follows, we will refer to matrices $W$ that satisfy Eq.~\eqref{eqn::DefProc} (or the multi-party generalizations thereof) as proper process matrices (or simply processes) and denote the set of all such matrices by $\texttt{Proc}$. 

Throughout, we often make use of the fact that the operator ${}_{X}\sbt\,$ is trace-preserving, self-dual, and a projection, as well as the fact that ${}_{XY}\sbt = {}_{YX}\sbt$ holds. Self-duality means that for any two Hermitian matrices $Q$ and $R$ we have $\tr[Q({}_XR)] = \tr[({}_XQ)R]$, while the projection property implies that ${}_X({}_XR) = {}_XR$ for all matrices R. Both of these properties can be readily seen from the definition of ${}_X\sbt$ and also hold in the link product, i.e., $Q\star {}_XR = {}_X Q\star R = {}_X Q\star {}_XR$ whenever $Q$ and $R$ are both defined on the space $X$.

In anticipation of later investigations, it is worth discussing the properties of  \textit{causally ordered} processes and their relation to non-signalling constraints~\cite{piani_properties_2006, eggeling_semicausal_2002} in more detail. Following Refs.~\cite{chiribella_theoretical_2009, chiribella_quantum_2008}, we often call a process matrix that corresponds to a causally ordered process a proper comb (or simply comb). Causal order (for example, $A\prec B$, for two involved parties) means that Bob cannot signal to Alice. Put differently, in this causal order, the choice of Bob's instrument cannot influence the statistics Alice records. Using Eq.~\eqref{eqn::Prob}, this implies
\begin{gather}
    W^{A\prec B} \star M_A^i \star M_B = W^{A\prec B} \star M_A^i \star M'_B\, ,
\end{gather}
for  all CPTP maps $M_B,M_B'$. Using $\tr_{B_{O}} M_B = \tr_{B_{O}} M_B'\ = \ident_{B_I}$, it follows that
\begin{gather}
\label{eqn::AprecB}
    W^{A\prec B} = \ident_{B_{O}} \otimes W_{B_IA_OA_I}
\end{gather}
holds, where $W_{B_IA_OA_I}$ satisfies $\tr_{B_I}W_{B_IA_OA_I} = \ident_{A_O} \otimes \rho_{A_I}$ [due to the normalization constraint of Eq.~\eqref{eqn::DefProc}], and $\rho_{A_I}$ is a quantum state on $\Hcal_{A_I}$.

In the more general, multi-party, case, we distinguish between `input legs' of a process (labelled by $\{A_I, B_I, \dots\}$) and `output legs' of a process (labelled by $\{A_O,B_O, \dots\}$) (see Fig.~\ref{fig::Caus_Ordered} for a graphical representation).
\begin{figure}
    \centering
    \includegraphics[width=0.8\linewidth]{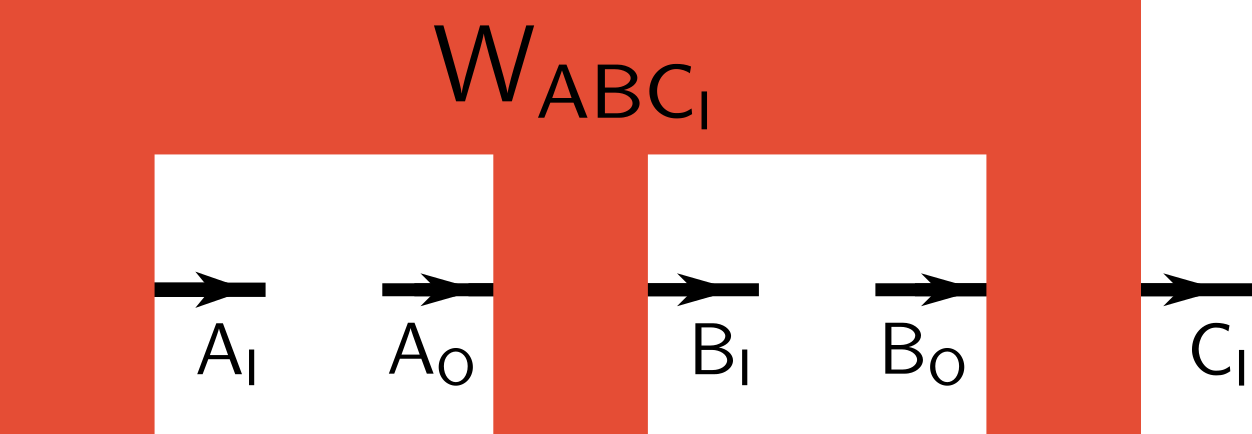}
    \caption{\textbf{Quantum comb.} For any process we distinguish between input legs -- here, $A_O$ and $B_O$ -- and output legs -- here, $A_I, B_I$, and $C_I$. If the process is causally ordered, then the hierarchy of trace conditions is satisfied (see Eqs.~\eqref{eqn::Hierarchy1} --~\eqref{eqn::Hierarchy3}).}
    \label{fig::Caus_Ordered}
\end{figure}
In slight abuse of notation, we also use $\prec$ to denote an ordering of legs (not just laboratories), e.g., $A_I\prec A_O \prec B_I \prec B_O \prec C_I$ for the example of Fig.~\ref{fig::Caus_Ordered}.  

Causal ordering then imposes a whole hierarchy of trace conditions that a comb has to satisfy~\cite{chiribella_quantum_2008,chiribella_theoretical_2009}. For example, for the process provided in Fig.~\ref{fig::Caus_Ordered}, we have $W_{ABC_I} \geq 0$, 
\begin{align}
\label{eqn::Hierarchy1}
    \tr_{C_I} W_{ABC_I} &= \ident_{B_{O}} \otimes W_{AB_I}, \\
    \label{eqn::Hierarchy2}
    \tr_{B_I}W_{AB_I} &= \ident_{A_O} \otimes W_{A_I}, \\
    \label{eqn::Hierarchy3}
    \text{and} \ \tr_{A_I} W_{A_I} &= 1\, .
\end{align}
More concisely, using the operators ${}_X\sbt$ defined above, these conditions can be subsumed in one projection operator $L_{A\prec B\prec C_I}$ as
\begin{align}
\notag
    W&= L_{A\prec B \prec C_I}[W] \\
    \notag
    &:= W- {}_{C_I}W + {}_{C_IB_O}W \\
    &\phantom{:=}- {}_{C_IB_OB_I}W + {}_{C_IB_OB_IA_O}W\, , \\
    \text{and} \ \tr W&= d_{A_O} d_{B_O}\, ,
    \label{eqn::CausHier}
\end{align}
where, for compactness, we have dropped the additional subscripts on $W$. The constraint of Eq.~\eqref{eqn::AprecB} is then just a special case of the above conditions, with a trivial final output leg. Importantly, any comb that satisfies the above conditions (or the multi-time generalization thereof) corresponds to a causally ordered process that can in principle be implemented as a standard quantum circuit~\cite{chiribella_quantum_2008,chiribella_theoretical_2009}. We will also refer to combs as \textit{deterministic operations}.

On a formal level, causal ordering amounts to signalling conditions. As mentioned, for a comb that satisfies the causal ordering $A\prec B$, the comb that Alice `sees' locally is independent of the CPTP operations that Bob can implement. On the other hand, for the same process, Alice can generally influence the local comb Bob sees, i.e., there exist CPTP maps $M_A$ and $M_A'$ such that $W^{A\prec B} \star M_A \neq W^{A\prec B} \star M'_A$. As such, Alice can inform Bob about what operation she performed, and thus signal to him. 

In our definition of free transformations of process matrices, we frequently encounter such non-signalling conditions and often express the properties of the objects we investigate by means of projectors like $L_{A\prec B \prec C_I}$ employed above. 

Finally, Eq.~\eqref{eqn::AprecB} (together with its version for the case $B\prec A$) allows us to provide the formal definition of \textit{causally separable} process matrices (for two parties) as such process matrices that can be written as 
\begin{gather}
    W = p\,W^{A\prec B} + (1-p)W^{B\prec A}\, ,
\end{gather}
for some probability $0\leq p\leq 1$, where completely non-signalling processes $W^{A||B} = \ident_{A_OB_O} \otimes \rho_{A_IB_I}$ -- which we also call parallel processes -- can be considered to belong to either the set $A\prec B$ or $B\prec A$ in the above definition. Whenever we want to express that a process is of order $X\prec Y$ but \textit{not} $X||Y$, we will denote it by $X\vec \prec Y$. We call the set of causally separable process matrices \texttt{Sep}. While it is not necessarily possible to represent all $W\in \texttt{Sep}$ by quantum circuits, they are nonetheless not overly exotic, since they can be perceived as causally ordered processes together with an initial coin flip that decides which process is chosen in an individual run.\footnote{The structure of separable processes becomes more involved in the multi-party case, where causal order could be chosen dynamically, for example the causal ordering between Bob and Charlie could depend on what Alice does~\cite{oreshkov_causal_2016, wechs_definition_2019}. These subtleties in the definition of causally separable processes are not of relevance for our considerations.}

All causally ordered ($A \prec B, B\prec A$, and $A||B$) processes as well as convex combinations thereof are described by a process matrix of the form~\eqref{eqn::DefProc}. However, there are processes that satisfy the requirements~\eqref{eqn::DefProc} (or variants of it, for the case of more parties) but do not correspond to a causally separable process~\cite{PhysRevA.88.022318, OreshkovETAL2012, oreshkov_causal_2016} (see Fig.~\ref{fig::Venn} for a graphical representation of the set of valid processes).
\begin{figure}
    \centering
    \includegraphics[width=0.7\linewidth]{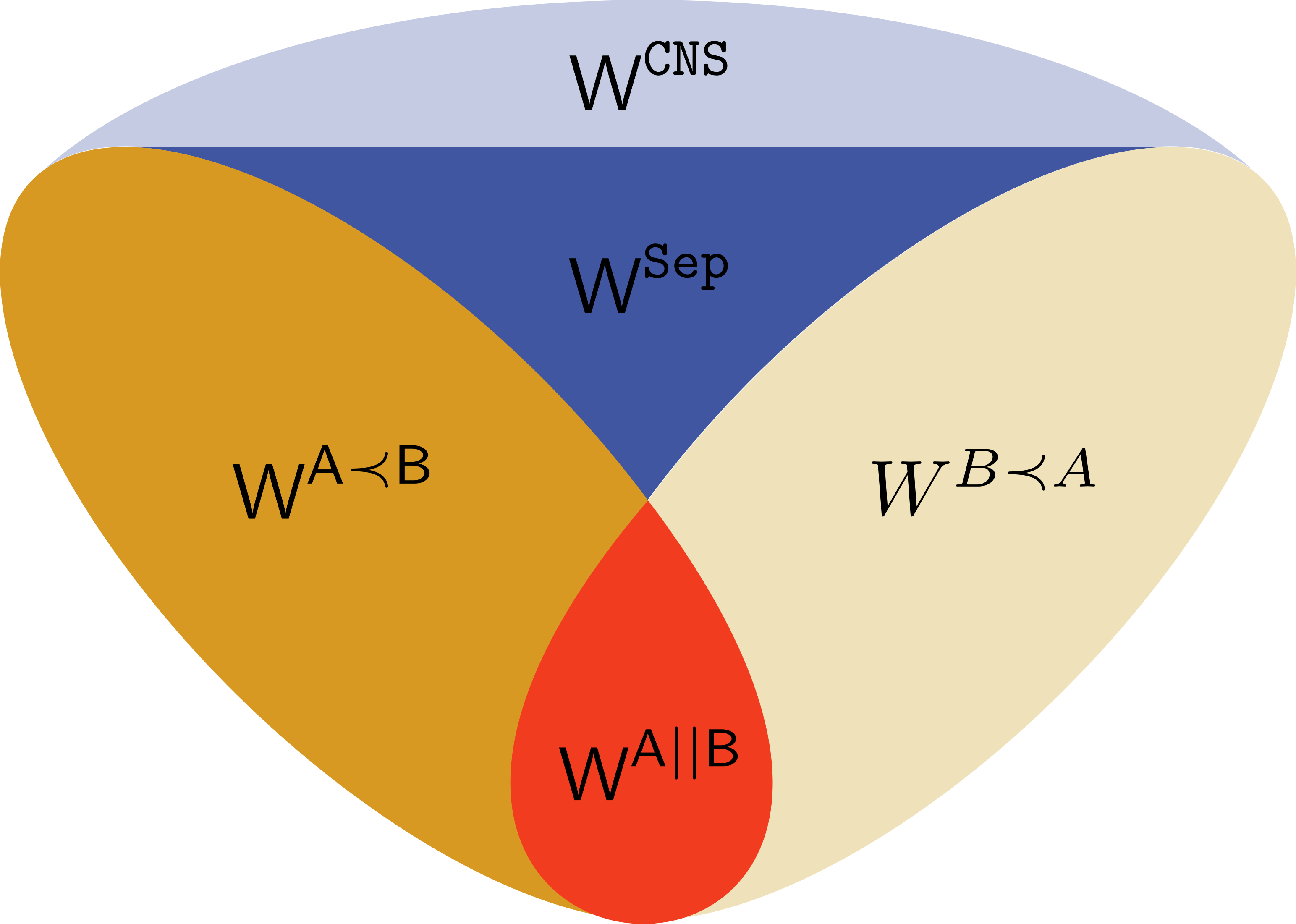}
    \caption{\textbf{Sets of process matrices.} The individual sets of process matrices that are ordered $A\prec B$ or $B\prec A$ form convex sets, with their intersection coinciding with the set of parallel/non-signalling processes ($A||B$). The convex hull of all causally ordered processes corresponds to $\texttt{Sep}$. The set $\texttt{Proc}$ of \textit{all} proper process matrices forms a strict (convex) superset of $\texttt{Sep}$, and all processes in $\texttt{Proc}\setminus \texttt{Sep}$ are causally non-separable (CNS).}
    \label{fig::Venn}
\end{figure}

Such processes have been dubbed \textit{causally non-separable}~\cite{OreshkovETAL2012,araujo_witnessing_2015}. Beyond the foundational intrigue they offer, they also widen the concept of processes and allow us to consider the largest class of conceivable resources for a resource theory of causal connection. In turn, their inclusion provides further insights into the fundamental differences between causally ordered and causally 
disordered processes; for example, as we shall see, general processes have more involved interconversion relations under free transformations than those that are causally ordered. 

In addition, causal non-separability -- instead of causal connection -- can itself be considered a resource for quantum processes. Structurally, the objects required to build up the corresponding resource theory of causal non-separability are similar to those employed in the resource theory of causal connection, and both theories share some of the same monotones. However, defining and characterizing `meaningful' sets of free transformations for the case of causal non-separability presents itself as a more layered question than for causal connection (we provide a detailed discussion in Sec.~\ref{sec::ResCausNonSep}) and requires much of the technical machinery that we provide in the subsequent sections. Consequently, here, we first start with a comprehensive discussion of the resource theory of causal connection and return to the resource theory of causal non-separability in Sec.~\ref{sec::Interconvertability}.

%%%%%%%%%%
\section{Resource theories of causal connection}\label{sec::FreeStates_FreeOp}
%%%%%%%%%%
Any resource theory is concerned  with the transformation of resources and the question of what resources can be transformed into each other given certain constraints. 
Here, the set of resources are process matrices, describing the background causal structure between a set of parties. Accordingly, the transformations of resources are given by supermaps (which we denote by $\Upsilon$). We will call such transformations {\em adapters}, since they can generally transform the input and output systems in the parties' laboratories, thus serving as ``adapters'' from one type of system to another, and we call them \textit{admissible adapters} if they map any proper process matrix onto a proper process matrix. A diagrammatic representation of the action of an adapter on a process matrix is shown in Fig.~\ref{fig::Adapter}.   
\begin{figure}
    \centering
    \includegraphics[width=0.98\linewidth]{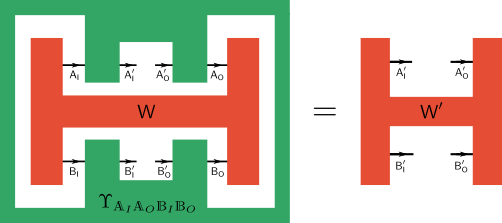}
    \caption{\textbf{General Adapters.}  A process matrix $W_{AB}$ can be mapped onto another process matrix $W'_{A'B'}$ via an adapter, such that $W'_{A'B'} = \Upsilon_{\Ads_{I}\Ads_{O}\Bds_{I}\Bds_{O}} \star W_{AB}$, where $\mathbbm{X}_{I/O} = X_{I/O}X_{I/O}'$. Depending on the considered resource theory, the adapter has to satisfy certain constraints. The green border around the process matrix $W$, belonging to the adapter $\Upsilon$, signifies that, in general, the adapter does not have to be causally ordered.}
    \label{fig::Adapter}
\end{figure}

In the following, we specify precisely the set of \textit{free objects}, and the set of \textit{free transformations} (also called `free operations' or, in what follows, `free adapters'). Since the property of interest is causal connection, the choice of free objects is rather straightforward: the free objects are process matrices that do not allow for communication between any of the parties. For the free transformations, on the other hand, we will see that -- besides the basic requirement of mapping proper processes to proper processes -- there are a few possible choices, all satisfying the basic requirement that free objects should be mapped onto free objects. In particular, we consider free transformations that leave the set of free processes invariant; that cannot be used themselves to establish communication between the parties; or that can be \textit{implemented} using exclusively non-signalling resources. We see below that these requirements are distinct from one another and lead to strictly different sets of free transformations.

%%%%%%%%%%%%%%%%%%%%%%%%%%%%%%%%%%%%%%%%%
\subsection{Free objects}\label{sec::FreeStates}

In a resource theory of causal connection, the set of free process matrices should not allow for any signalling between the parties. Mathematically, this means that the conditions   
\begin{align}\label{nosigprocessmatrix}
    \nonumber W\star M_A &= W\star M_A'  \\
    W\star M_B &= W\star M_B'    
\end{align} 
have to hold for all CPTP maps $M_A$, $M_A'$, $M_B$ and $M_B'$, which means that $W$ is ordered both $A \prec B$ \textit{and} $B\prec A$. These conditions imply that the set of free process matrices is the set of `non-signalling' or `parallel' processes $W^{A||B}$, that is, processes of the form 
\begin{gather}
\label{eqn::parallel}
    W^{A||B} \geq 0 \quad \text{and} \quad W^{A||B} = \rho_{A_IB_I} \otimes \mathbbm{1}_{A_OB_O}
\end{gather}
where $\rho_{A_IB_I}$ is a quantum state. Concretely, we have $W^{A||B} \star M_B = \rho_{A_I} \otimes \ident_{A_O}$ and $W^{A||B} \star M_A = \rho_{B_I} \otimes \ident_{B_O}$ for all CPTP maps $M_A$ and $M_B$, implying that no signalling between Alice and Bob can occur. Conversely, it is easy to see that processes of the form of Eq.~\eqref{eqn::parallel} are the only ones that do not enable communication between Alice and Bob. 

Hence, we define the set of free objects as the set $\texttt{Free} = \{W^{A||B}\}$ of all process matrices that satisfy Eq.~\eqref{eqn::parallel}. This definition extends naturally to multiple parties. Equivalently, using projectors, the set of free process matrices can also be characterized as the set of all matrices $W^{A||B} \in \Bcal(\Hcal_A \otimes \Hcal_B)$ that satisfy 
\begin{gather}
\begin{split}
\label{eqn::charFreeProj}
&W^{A||B} \geq 0, \\
&W^{A||B} = {}_{A_OB_O}W^{A||B}, \\
\text{and} \ &\tr W^{A||B} = d_{A_O}d_{B_{O}}\, .
\end{split}
\end{gather}
We emphasize that the set of free process matrices is convex. While convexity is not strictly necessary in a general resource theory, it will be helpful in some of applications discussed later in the paper. 

It is important to point out the difference between the above notion of non-signalling process matrix and the notion of non-signalling channel frequently used in the literature. In the case of two parties, a non-signalling channel is a bipartite channel $\Ncal:\Bcal(\Hcal_{A}\otimes \Hcal_{B}) \rightarrow \Bcal(\Hcal_{A'}\otimes \Hcal_{B'})$, jointly acting on systems in Alice's and Bob's laboratories. The channel $\Ncal$ is non-signalling from Alice to Bob if Alice's choice of input states cannot influence Bob's output state, i.e., 
\begin{gather}
    \tr_{A'}\{\Ncal[\rho_{A} \otimes \eta_{B}]\} = \tr_{A'}\{\Ncal[\xi_{A} \otimes \eta_{B}]\} 
\end{gather}
for all $\rho_{A}$, $\xi_{A}$ and $\eta_{B}$, and analogously for the case where Bob cannot signal to Alice. A bipartite channel is  non-signalling channel (NS) if it is non-signalling both from Alice to Bob and from Bob to Alice  (such channels are also called causal~\cite{beckman_causal_2001,piani_properties_2006}). 

On the other hand, a non-signalling \textit{process matrix}, as defined in Eq.\eqref{nosigprocessmatrix}, describes a causal structure where the local choice of CPTP map of one party do not have any observable effect in the other party's laboratory. These two notions of non-signalling can be related by considering process matrices as channels from $A_O \otimes B_O$ to $A_I \otimes B_I$ and realizing that, in this case, the free processes defined in Eq.~\eqref{eqn::parallel} correspond to a strict subset of all non-signalling channels. The reason for this difference is that, in the definition of non-signalling channels, the parties are allowed to signal only by preparing different states, while in the definition of non-signalling process matrix, the parties could signal by choosing local quantum channels. 

Throughout the paper, we also consider more general signalling scenarios and always understand non-signalling with respect to the operations the respective party could perform in their laboratory.

%%%%%%%%%%%%%%%%%%%%%%%%%%%%%%%%%%%%%%%%%
\subsection{Free transformations}

The transformations in a resource theory of causal connection are represented by adapters that transform process matrices into process matrices. We define their action through their associated Choi matrices $\Upsilon$. Let $W \in \Bcal(\Hcal_{A_I} \otimes \Hcal_{A_O} \otimes \Hcal_{B_I} \otimes \Hcal_{B_O})$ be a process matrix which is subjected to a transformation and let $W' \in  \Bcal(\Hcal_{A'_I} \otimes \Hcal_{A'_O} \otimes \Hcal_{B'_I} \otimes \Hcal_{B'_O})$ be the resulting process matrix (see Fig.~\ref{fig::Adapter}). Then, using the link product, the adapter responsible for this transformation, $\Upsilon\in\Bcal(\Hcal_{\Ads_I} \otimes \Hcal_{\Ads_O} \otimes \Hcal_{\Bds_I} \otimes \Hcal_{\Bds_O})$, with $\Xds_{I/O} := X_{I/O}X_{I/O}'$, acts on a process matrix according to
\begin{gather}
    W'_{A'B'} \coloneqq \Upsilon_{\Ads_I\Ads_O\Bds_I\Bds_O}\star W_{AB}. 
\end{gather}
The probability distribution observed by Alice and Bob when performing their local instruments $\Jcal_{A'}=\{M_{A'}^i\}$ and $\Jcal_{B'}=\{M_{B'}^j\}$ on the transformed process matrix $W'_{A'B'}$ is then given by
\begin{gather}
\begin{split}
    &\Pprob(i,j|\Jcal_{A'},\Jcal_{B'}) \\
    &= W'_{A'B'} \star M_{A'}^i \star M_{B'}^j \\
    &= (W_{AB} \star \Upsilon_{\Ads_I \Ads_O \Bds_I \Bds_O}) \star (M_{A'}^i \star M_{B'}^j).
\end{split}
\label{eqn::AdapterDef_actW}
\end{gather}
In a dual fashion, an adapter can also be understood as a transformation of the local  operations performed by the parties. This view can easily be understood thanks to the flexibility of the link product:
\begin{gather}
\begin{split}
    &\Pprob(i,j|\Jcal_{A'},\Jcal_{B'}) \\
    &= (W_{AB} \star \Upsilon_{\Ads_I \Ads_O \Bds_I \Bds_O}) \star (M_{A'}^i \star M_{B'}^j)\\
    &= W_{AB} \star [\Upsilon_{\Ads_I \Ads_O \Bds_I \Bds_O} \star (M_{A'}^i \star M_{B'}^j)] \\
    &=  W_{AB} \star M_{AB}^{ij}.
\end{split}
\label{eqn::AdapterDef_actM}
\end{gather}

\begin{figure}
    \centering
    \includegraphics[width=0.9\linewidth]{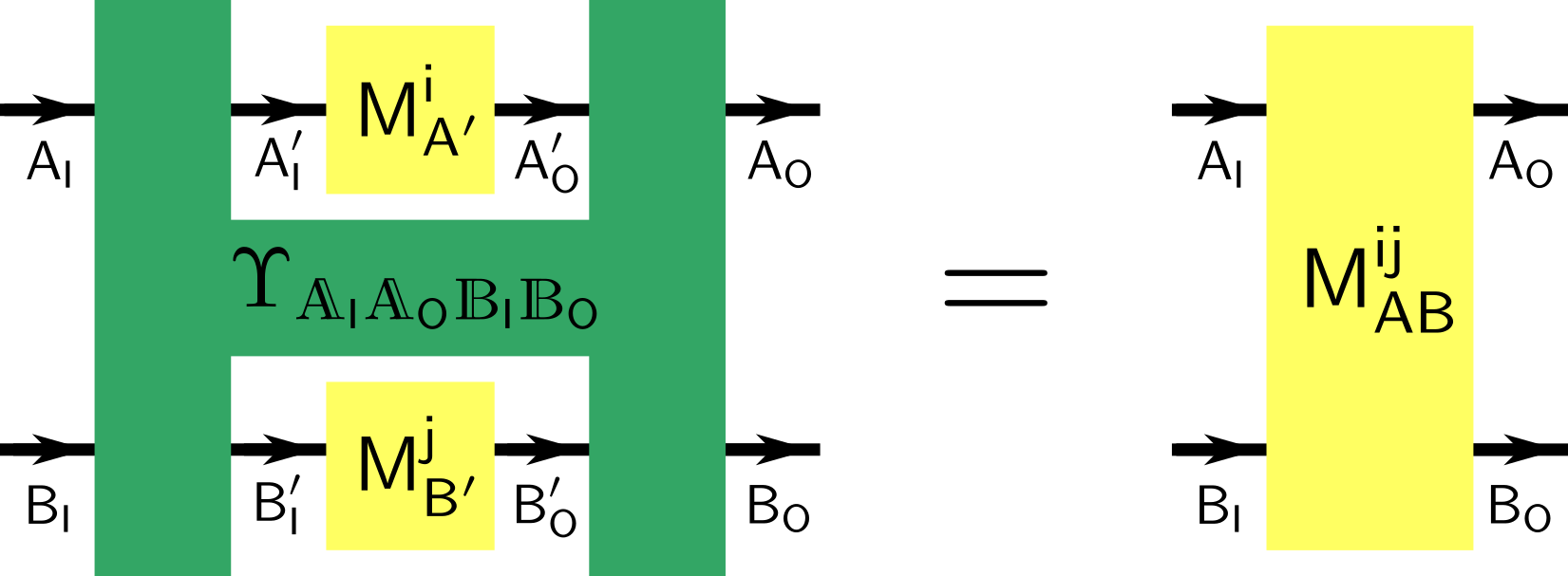}
    \caption{\textbf{Adapters as transformations of local operations.} Instead of being considered a transformation of the process matrix Alice and Bob share (depicted in Fig.~\ref{fig::Adapter}), adapters can also be understood as a transformation of their local operations.}
    \label{fig::MapstoMaps}
\end{figure}

Consequently, one can consider $\Upsilon_{\Ads_I\Ads_O\Bds_I\Bds_O}$ in terms of the mapping (see Fig.~\ref{fig::MapstoMaps})
\begin{gather}
    M_{AB}^{ij} \coloneqq \Upsilon_{\Ads_I\Ads_O\Bds_I\Bds_O} \star M_{A'}^i \star M_{B'}^j\, ,
\end{gather}
that maps the local operations of Alice and Bob to a new, possibly correlated operation. This corresponds to the dual, yet equivalent, action of the way in which we considered the action of the adapters in Eq.~\eqref{eqn::AdapterDef_actW}.

Let us now turn  to the definition of \textit{free} adapters. As we anticipated, there are several potential candidates for the set of free adapters, since which transformations are considered free ultimately depends on the level of control that Alice and Bob may have of the causal structure.   In the following, we lay down a hierarchy of physically motivated requirements corresponding to different levels of control.

Since our central concern is the ability of parties to communicate, natural requirements for free adapters are:
\begin{itemize}
    \item[\textbf{R$1$.}] To map proper process matrices onto proper process matrices.
    \item[\textbf{R$2$.}] To map free process matrices onto free process matrices.
    \item[\textbf{R$3$.}] To not allow for signalling.
    \item[\textbf{R$4$.}] To be implemented without communication between the parties.
\end{itemize}

Requirement R$1$ is not specifically a requirement on the {\em free} transformations per se, but rather a requirement for any physically meaningful transformation on the causal structure; valid causal structures should be transformed into valid causal structures.  Requirement R$2$ is the minimal requirement on \textit{free} transformations: they should map free objects onto free objects. Requirements R$3$ and R$4$ concern the physical properties of the adapters themselves and come into play as soon as Alice and Bob either have access to the adapter (Requirement R$3$) or implement the adapter themselves, by performing operations in their own laboratories (Requirement R$4$).

In the following we investigate the various sets of transformations that follow from these requirements and how they relate to each other. In particular, we show that Requirements R$2$ to R$4$ -- the requirements pertaining specifically to \textit{free} transformations -- form a strict hierarchy: the set of transformations corresponding to Requirement R$4$ is strictly smaller than the set of transformations corresponding to R$3$, which in turn is strictly smaller than the set of transformations corresponding to R$2$. Consequently these three requirements will lead to three \textit{distinct} sets of free adapters, and thus three distinct resource theories.

%%%%%%%%%%%%%%%%%%%%%%%%%%%%%
\subsubsection{Admissible transformations}
Let us start by analyzing Requirement R$1$, that adapters map process matrices to process matrices. Such adapters have been considered in Ref.~\cite{castro-ruiz_dynamics_2018} but here we introduce  them in a way that is dual to previous considerations. 

The set of transformations corresponding to Requirement R$1$ is the  set of \textit{all} conceivable  transformations of causal structures, free or not. Any  adapter $\Upsilon_{\Ads_I\Ads_O\Bds_I\Bds_O}$ in this set satisfies the basic condition  
\begin{gather}
\label{eqn::legal}
\Upsilon_{\Ads_I\Ads_O\Bds_I\Bds_O}\star W_{AB}\in\texttt{Proc}
\end{gather}
for all process matrices $W_{AB}\in\texttt{Proc}$. Note that here and in what follows, we always assume $\Upsilon \geq 0$ such that all probabilities are positive, even when the adapters only act on parts of process matrices~\cite{chiribella_transforming_2008,chiribella_quantum_2013,castro-ruiz_dynamics_2018}. 

We call any $\Upsilon \geq 0$ that satisfies Eq.~\eqref{eqn::legal} an \textit{admissible adapter}\footnote{In what follows, \texttt{A} will always stand for admissible, while $A$ and $\mathbbm{A}$ denote the party Alice.} and denote it by $\Upsilon^{\texttt{A}} \in \Theta_{\texttt{A}}$. Importantly,  Eq.~\eqref{eqn::legal} can be equivalently rewritten  as 
\begin{gather}
\label{eqn::legal2}
    \Upsilon_{\Ads_I\Ads_O\Bds_I\Bds_O}\star M_{A'B'} \text{ is NS channel}
\end{gather}
for all NS channels $M_{A'B'}$. This is the property that we use in this paper for the investigation of the properties of admissible adapters. 

The equivalence of Eqs.~\eqref{eqn::legal} and~\eqref{eqn::legal2} can be shown using the fact that process matrices are equivalently characterized as the most general supermaps transforming non-signalling channels into probabilities~\cite{chiribella_quantum_2013}, and by using the commutativity and associativity of the link product.  The proof of  equivalence is as follows. First, suppose that Eq.~\eqref{eqn::legal}  holds, that is, that  the matrix $\Upsilon^{\texttt{A}}_{\Ads_I\Ads_O\Bds_I\Bds_O} \star  W_{AB}$ is a valid process matrix whenever $W_{AB} \in \texttt{Proc}$.  This condition is equivalent to
\begin{gather}
\label{eqn::Req1}
\begin{split}
  &(\Upsilon^{\texttt{A}}_{\Ads_I\Ads_O\Bds_I\Bds_O} \star  W_{AB}) \star M_{A' B'} = 1
 \end{split}
\end{gather}
 for every non-signalling channel $M_{A'B'}$~\cite{chiribella_quantum_2013}.  Using the associativity  and commutativity of the link product, we then obtain the condition 
 \begin{gather}\label{eqn::aaaa}
      W_{AB} \star (\Upsilon^{\texttt{A}}_{\Ads_I\Ads_O\Bds_I\Bds_O} \star M_{A'B'}) =1  \, .
  \end{gather} 
for every process matrix $W_{AB}$ and for every non-signalling channel $M_{A'B'}$. Now, since the process matrix $W_{AB}$ is arbitrary, Eq.~\eqref{eqn::aaaa} implies that $\Upsilon^{\texttt{A}}_{\Ads_I\Ads_O\Bds_I\Bds_O} \star M_{A'B'}$ is a non-signalling channel whenever $M_{A'B'}$ is a non-signalling channel.  Hence,  Eq.~\eqref{eqn::legal2}  must hold.  Conversely, suppose that Eq.~\eqref{eqn::legal2} holds.  Then, following the above steps in the reverse order we obtain that Eq.~\eqref{eqn::legal} must hold.

Summarizing, Eqs.~\eqref{eqn::legal} and~\eqref{eqn::legal2} are  equivalent conditions. Physically, their equivalence is the higher-order analogue of the relation between Schr\"odinger and Heisenberg picture of a quantum evolution.

The  characterization  of admissible adapters in terms of NS channels can be used to derive an explicit equation for the constraints that an admissible adapter must satisfy.  Recall that a channel with Choi matrix $M_{AB}$ is non-signalling if and only if ~\cite{piani_properties_2006}
\begin{gather}
\label{eqn::NonSigTrace}
\begin{split}
    &{}_{X_O}M_{AB} = {}_{X_IX_O} M_{AB}\, ,
\end{split}
\end{gather}
where $X\in \{A,B\}$. In App.~\ref{app::Charc_LegalAdapters} we show that this is equivalent to the requirement 
\begin{gather}
\begin{split}
    M &= M - {}_{A_O}M + {}_{A_IA_O}M - {}_{B_O}M + {}_{A_OB_O}M \\
    &- {}_{A_IA_OB_O}M +  {}_{B_IB_O}M - {}_{A_OB_IB_O}M \\
    &+ {}_{A_IA_OB_IB_O}M =:L_{ns}[M]
\end{split}
\end{gather}
where we omitted the subscripts on $M_{AB}$ and introduced the self-dual trace preserving projector $L_{ns}$ onto the vector space spanned by non-signalling channels. Naturally, since it is the Choi matrix of a channel, $M_{AB}$ additionally satisfies $M_{AB} \geq 0$ and $\tr M_{AB} = d_{A_I}d_{B_I}$. Using the projector $L_{ns}$, in App.~\ref{app::Charc_LegalAdapters} we prove that the requirement that an adapter maps non-signalling maps to non-signalling maps is then equivalent to the following definition:
\begin{definition}[Admissible adapters]\label{def::LegalAdapters}
An admissible adapter is a matrix $\Upsilon^\textup{{\texttt{A}}}\in\Bcal(\Hcal_{\Ads_I}\otimes\Hcal_{\Ads_O}\otimes\Hcal_{\Bds_I}\otimes\Hcal_{\Bds_O})$ that satisfies
\begin{align}
    &\Upsilon^\textup{{\texttt{A}}} \geq 0 \\
    &\Upsilon^\textup{{\texttt{A}}} = \Upsilon^\textup{{\texttt{A}}} - L_{ns}'[\Upsilon^\textup{{\texttt{A}}}] + (L_{ns} \otimes L_{ns}')[\Upsilon^\textup{{\texttt{A}}}]\, , \\ 
    &{}_{AB}(L_{ns}'[\Upsilon^\textup{{\texttt{A}}}]) = {}_{ABA'B'}\Upsilon^\textup{{\texttt{A}}}\, ,\\
&\tr(\Upsilon^\textup{{\texttt{A}}}) = d_{A_I}d_{B_I}d_{A_O'}d_{B_O'},
\end{align}
where $L'_{ns}$ is the same projector as $L_{ns}$ but acts on the primed degrees of freedom. The set of admissible adapters is denoted $\Theta_\textup{{\texttt{A}}}$.
\end{definition}
Naturally, this definition can be extended to the multi-party setting in a similar vein.
  Note that this definition is equivalent to  that provided in Ref.~\cite{castro-ruiz_dynamics_2018}, the only difference being that we characterized the admissible adapters by their action on NS channels, rather than their action on process matrices.   While both approaches lead to the same definition of admissible adapters, the dual route we took here will make it easier to directly decide whether or not a given adapter is admissible.

Finally, it is worth stressing that the admissible adapters describe {\em all} logically conceivable transformations of process matrices, not just the free transformations.  In the next sections we will show concrete examples of admissible adapters that  map free processes to non-free processes (i.e., they violate Requirement R$2$).

%%%%%%%%%%%%%%%%%%%%%%%%%%%%%
\subsubsection{Free-preserving transformations}
Let us now analyze Requirement R$2$, that adapters map free objects to free objects. This constitutes the minimal requirement on free adapters. The ensuing set of \textit{free-preserving adapters}, denoted by $\Upsilon^\texttt{FP}$, is the set of adapters that satisfy
\begin{gather}
    \Upsilon^\texttt{FP}_{\Ads_I\Ads_O\Bds_I\Bds_O}\star W^{A||B}\in\texttt{Free},
\end{gather}
for all process matrices $W^{A||B}\in\texttt{Free}$. Using the characterization of free processes provided in Eq.~\eqref{eqn::charFreeProj} this implies 
\begin{gather}
\begin{split}
   {}_{A_O'B_O'} (\Upsilon^\texttt{FP} \star W^{A||B}) &= (\Upsilon^\texttt{FP} \star W^{A||B}) \\
   \text{and} \quad \tr(\Upsilon^\texttt{FP} \star W^{A||B}) &= d_{A'_O}d_{B_O'}
\end{split}   
\end{gather}
In App.~\ref{app:FreePres} we show that this implies the following characterization of the linear maps that transform free processes to free processes. 

\begin{definition}[Free-preserving adapters]\label{def::FreePreservAdapters}
A free-preserving adapter is a matrix $\Upsilon^\textup{\texttt{FP}}\in\Bcal(\Hcal_{\Ads_I}\otimes\Hcal_{\Ads_O}\otimes\Hcal_{\Bds_I}\otimes\Hcal_{\Bds_O})$ that satisfies
\begin{align}
    \Upsilon^\textup{\texttt{FP}} &\geq 0\, \\
    {}_{A_OB_O}\Upsilon^\textup{\texttt{FP}} &= {}_{A_O'B_O'A_OB_O}\Upsilon^\textup{\texttt{FP}}\, , \\
    {}_{A'B'A_OB_O}\Upsilon^\textup{\texttt{FP}} &= {}_{A'B'AB}\Upsilon^\textup{\texttt{FP}}\,, \\
    \tr(\Upsilon^\textup{\texttt{FP}}) &=  d_{A_I}d_{B_I}d_{A_O'}d_{B_O'}.
\end{align}
The set of free-preserving adapters is denoted $\Theta_\textup{\texttt{FP}}$.
\end{definition}

We emphasize that although the requirement to be free-preserving guarantees that proper \textit{free} processes are mapped to proper \textit{free} processes, it alone does not guarantee that the resulting set of operations is admissible, i.e., maps \textit{any} proper (non-free) process to a proper process. In fact, there are free-preserving adapters that can indeed map proper process matrices to objects outside of the set $\texttt{Proc}$. 

To see this, consider, for example, the free-preserving adapter 
\begin{gather}
    \Upsilon^\mathrm{1SW} = \Phi_{A_IB_I'}^+ \otimes \Phi_{B_IA_I'}^+ \otimes \Phi_{A_OA_O'}^+ \otimes \Phi_{B_OB_O'}^+\, ,
\end{gather}
which simply swaps Alice's and Bob's input lines (see Fig.~\ref{fig::Swap}). This adapter maps process matrices of the form $\rho_{A_IB_I} \otimes \ident_{A_OB_O}$ to $\rho_{B_I'A_I'} \otimes \ident_{A_O'B_O'}$, where $\rho_{B_I'A_I'}$ is simply a permutation of $\rho_{A_IB_I}$. On the other hand, starting with the non-signalling map $N = \Phi_{A_I'A_O'}^+ \otimes \Phi_{B_I'B_O'}^+$ which corresponds to independent identity channels  on Alice's and Bob's side, respectively, we see that $\Upsilon^\mathrm{1SW} \star N = \Phi_{B_IA_O}^+ \otimes \Phi_{A_IB_O}^+$ holds, where the latter is a channel that allows for perfect communication from Alice to Bob and from Bob to Alice, and hence does not represent a NS channel, implying $\Upsilon^\mathrm{1SW} \notin \Theta_{\texttt{A}}$. This fact can equivalently be understood in the dual perspective. To this end, we note that free-preserving adapters are equivalently characterized as transformations that map general channels $A_I'B_I' \rightarrow A_O'B_O'$ to channels $A_IB_I \rightarrow A_OB_O$. Since this latter set of transformations is strictly larger than the set of transformations that map NS channels to NS channels, it follows that there exist free-preserving adapters that are not admissible. 

On the other hand, there also exist admissible adapters that are not free-preserving. One such example is the admissible adapter
\begin{gather}
    \Upsilon = W'_{A'B'}\otimes\ident_{AB}/d_{A_O}d_{B_O},
\end{gather}
which discards any input process matrix and substitutes it for some non-free process $W'\in\texttt{Proc}\setminus\texttt{Free}$. This simple adapter is clearly admissible -- since it replaces any process matrix by a proper process $W'_{A'_IA'_OB'_IB'_O}$ -- but maps free processes to non-free ones.

We thus conclude that neither Requirement R$1$ nor R$2$ alone define good candidates for free adapters. In the next section we provide a more suitable set by imposing both requirements simultaneously.

%%%%%%%%%%%%%%%%%%%%%%%%%%%%%
\subsubsection{Admissible and Free-preserving transformations}
\label{sec::legal_free_pres}
Naturally, the combination of both Requirements R$1$ and R$2$ then leads us to the maximal set of operations that both map proper processes to proper processes and free processes to free processes. The adapters in this set, which we denote by $\Upsilon^{\texttt{A}\texttt{FP}}$, are the ones in the intersection $\Theta_{\texttt{A}\texttt{FP}}\coloneqq\Theta_{\texttt{A}}\cap\Theta_\texttt{FP}$. As we will see, adapters in this set -- while mathematically reasonable -- have some undesirable properties; for one, they can can change the causal ordering of processes. Additionally, they can create causal non-separability, making them somewhat `too powerful' to be good candidates for the free operations in a resource theory of causal connection.\footnote{In Sec.~\ref{sec::ResCausNonSep}, in the context of the resource theory of causal non-separability, we will discuss the set $\Theta_{\texttt{ASP}}$ of admissible adapters that preserve the set \texttt{Sep} of causally separable process matrices, which turn out to be the \textit{only} meaningful set of free adapters in the resource theory of causal non-separability. As we shall see there, $\Theta_{\texttt{ASP}} \setminus \Theta_{\texttt{AFP}} \neq \emptyset$, and no inclusion property holds amongst them.}

For a simple characterization, we note that some of the properties that make an adapter free-preserving are already implied by the corresponding stronger requirements on admissible adapters; both the trace of adapters $\Upsilon^{\texttt{A}\texttt{FP}} \in \Theta_{{\texttt{A}\texttt{FP}}}$ as well as their trace-rescaling properties are given by the requirements on admissible adapters, making the corresponding conditions for free-preserving adapters superfluous. In particular, for any admissible adapter we have 
\begin{gather}
\begin{split}
    {}_{A'B'A_OB_O}\Upsilon^{\texttt{A}} &= {}_{A'B'A_OB_O}(L_{ns}'[\Upsilon^{\texttt{A}}]) \\
    &= {}_{A'B'A_OB_O}(L_{ns} \otimes L_{ns}'[\Upsilon^{\texttt{A}}]) \\
    &= {}_{A'B'A_OB_O}(L_{ns}[\Upsilon^{\texttt{A}}]) \\
    &= {}_{A'B'AB}\Upsilon^{\texttt{A}} \, ,
\end{split}
\end{gather}
where we used the fact that $L_{ns}'$ is trace preserving (such that ${}_{A'B'}(L_{ns}'[\Upsilon^{\texttt{A}}]) = {}_{A'B'}\Upsilon^{\texttt{A}}$) as well as the fact that $\Upsilon^{\texttt{A}} \in \Theta_{\texttt{A}}$ (such that $L_{ns}'[\Upsilon^{\texttt{A}}] = L_{ns} \otimes L_{ns}'[\Upsilon^{\texttt{A}}]$). The last equality in the above equation is obtained by inserting the definition of $L_{ns}$. Consequently, two of the properties of free-preserving adapters are already implied by requirements on admissible adapters such that the definition of free-preserving admissible adapters is equivalent to 

\begin{definition}[Admissible and Free-preserving adapters]\label{def::LegalFreePreserv}
An admissible and free-preserving adapter is a matrix $\Upsilon^\textup{{\texttt{A}\texttt{FP}}}\in\Bcal(\Hcal_{\Ads_I}\otimes\Hcal_{\Ads_O}\otimes\Hcal_{\Bds_I}\otimes\Hcal_{\Bds_O})$ that satisfies
\begin{align}
        \Upsilon^\textup{{\texttt{A}\texttt{FP}}} &\in \Theta_{\textup{{\texttt{A}}}}\, ,\\
        {}_{A_OB_O}\Upsilon^\textup{{\texttt{A}\texttt{FP}}} &= {}_{A_O'B_O'A_OB_O} \Upsilon^\textup{{\texttt{A}\texttt{FP}}}.
\end{align}
The set of admissible and free-preserving adapters is denoted $\Theta_\textup{{\texttt{A}\texttt{FP}}}$.
\end{definition}

Interestingly, despite leaving the set of process matrices that do not enable signalling between the involved parties invariant, adapters in $\Theta_{\texttt{A}\texttt{FP}}$ can both change the causal order of processes they are applied to, and even create indefinite causal order.  

\begin{figure}
    \centering
    \includegraphics[width=0.99\linewidth]{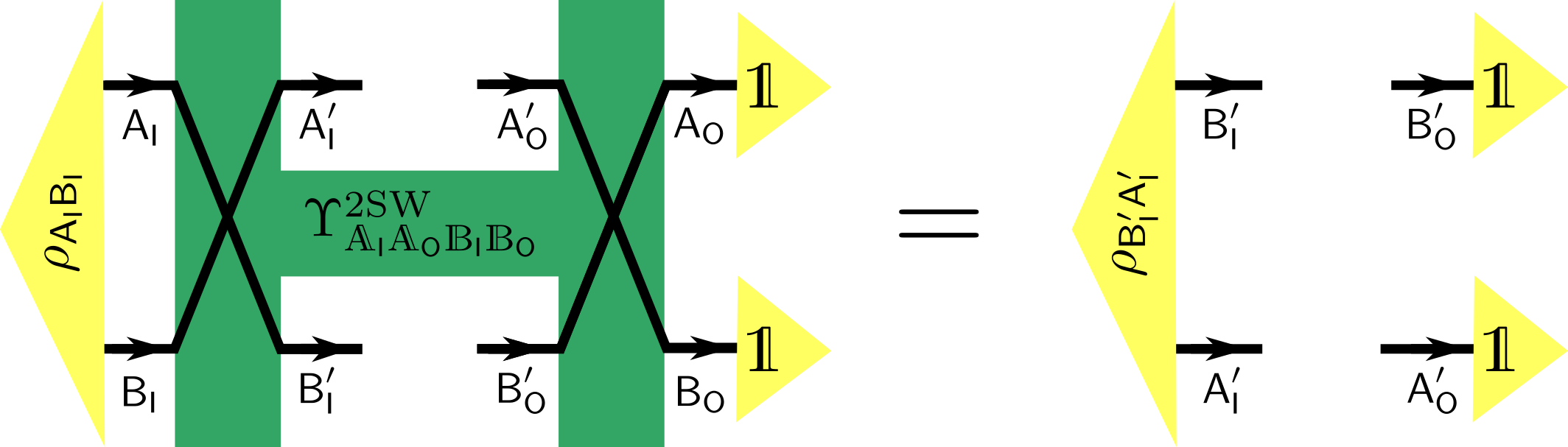}
    \caption{\textbf{Swapping Adapter.} An adapter $\Upsilon^{2\mathrm{SW}}$ that merely swaps the spaces $A_I$ and $B_I$, and $A_O$ and $B_O$, respectively, leaves the set of free process matrices invariant. However, it obviously allows for signalling between Alice and Bob, and  lies thus in $\Theta_{{\texttt{A}\texttt{FP}}}$ but not in $\Theta_{\texttt{NS}}$. Note that the adapter $\Upsilon^{1\mathrm{SW}}$ that is mentioned in the main text as an example of an adapter that is free-preserving but not admissible has the same action but without the second swap.}
    \label{fig::Swap}
\end{figure}

To see  that admissible free-preserving adapters can modify the causal order,   consider an adapter $\Upsilon^{\mathrm{2SW}}$ of the form (see Fig.~\ref{fig::Swap})
\begin{gather}\label{eqn::2swap}
    \Upsilon^{\mathrm{2SW}} = \Phi^+_{A_IB_I'} \otimes \Phi^+_{B_IA_I'} \otimes \Phi^+_{A_O'B_O} \otimes \Phi^+_{B_O'A_O}\, ,
\end{gather}
where we assume the dimensions of all involved spaces to be equal. This adapter amounts to two swap operations, which lead to the relabelling $A_I \mapsto B_I'$, $B_I \mapsto A_I'$, $A_O \mapsto B_O'$, and $B_O \mapsto B_O'$. It is easy to see that this adapter indeed maps free processes to free processes as we have 
\begin{gather}
\begin{split}
    \Upsilon^{\mathrm{2SW}}_{\Ads_I\Ads_O\Bds_I\Bds_O} &\star (\rho_{A_IB_I} \otimes \ident_{A_OB_O}) \\
    &= \rho_{B'_IA'_I} \otimes \ident_{A'_OB'_O}\, ,
\end{split}
\end{gather}
where the inverted the labels on $\rho_{B'_IA'_I}$ signify that the resulting state is the same as $\rho_{A_IB_I}$ but with its spaces swapped. Unlike the previous example  $\Upsilon^{\mathrm{1SW}}$, however, the adapter, consisting of two swap operations, is admissible.  To see this, we first consider  its effect on a product $M_{A'_IA_O'} \otimes M_{B_I'B_O'}$ of two CPTP maps. We have 
\begin{align}\label{eqn::bbbb}
\begin{split}
    \Upsilon^{\mathrm{2SW}}_{\Ads_I\Ads_O\Bds_I\Bds_O} &\star (M_{A'_IA_O'} \otimes M_{B_I'B_O'}) \\
    &= M_{A_IA_O} \otimes M_{B_IB_O}\, ,
\end{split}
\end{align}
where $M_{A_IA_O} = M_{B_I'B_O'}$ and $M_{B_IB_O} = M_{A'_IA_O'}$ (up to a relabeling of the involved spaces). Now, we recall that a matrix $M_{AB}$ is the Choi matrix of a non-signalling map $\Mcal:\Bcal(\Hcal_{A_I}\otimes\Hcal_{B_I}) \rightarrow \Bcal(\Hcal_{A_O}\otimes\Hcal_{B_O})$ if and only if  $M_{AB} \geq 0$ and 
\begin{gather}
\label{eqn::DecompNonSig}
    M_{AB} = \sum_i \lambda_i M_{A}^{(i)} \otimes M_{B}^{(i)}, \ \ \   \sum_i\lambda_i=1\, ,
\end{gather}
where all $\{M^{(i)}_A,M^{(i)}_B\}$ are Choi matrices of CPTP maps and $\lambda_i\in\mathbb{R}$~\cite{Gutoski09, chiribella_quantum_2013}. Hence, Eq.~\eqref{eqn::bbbb} implies that $\Upsilon^{\mathrm{2SW}}$ maps non-signalling maps to non-signalling maps, and, therefore, since it is positive semidefinite,  $\Upsilon^{\mathrm{2SW}} \in \Theta_{\texttt{A}\texttt{FP}}$. However, given that it swaps the involved spaces, a process that has causal ordering $A\vec \prec B$ is mapped to a process of causal ordering $B' \vec \prec A'$ by $\Upsilon^{\mathrm{2SW}}$. 

Physically, the supermap $\Upsilon^{\mathrm{2SW}}$ implements a transformation of causal structures that inverts the signalling relations between Alice and Bob: for instance, if originally signalling is only possible from Alice to Bob, then after the transformation signalling is only possible from Bob to Alice.

A slight modification of the above example shows the existence of admissible and free-preserving adapters that can map a process of order $A\prec B$ to a process that is a convex mixture of processes with orderings $A\prec B$ and $B\prec A$. The simplest example is an adapter that is a convex mixture of $\Upsilon^\mathrm{2SW}$ and an identity adapter $\Upsilon^\mathrm{Id}$, which leaves any process matrix it acts upon unchanged (except for adding primes to all involved spaces). Choosing $\Upsilon^\mathrm{mix} = p\Upsilon^{\mathrm{2SW}} + (1-p) \Upsilon^{\mathrm{Id}} \in \Theta^{\texttt{A}\texttt{FP}}$, with $p \in (0,1)$, then yields an adapter that maps a process matrix $W^{A\prec B}$ to 
\begin{gather} 
\Upsilon^\mathrm{mix} \star W^{A\prec B} = pW^{B'\prec A'} + (1-p) W^{A'\prec B'}\, ,
\end{gather}
which is a convex mixture of causal orders. 

However, the above process is still causally separable, and it remains to investigate whether admissible and free-preserving adapters can transform causally separable to causally non-separable processes. In App.~\ref{app::Maps_SeptoNonsep}, we work out a numerical method to search for such adapters and to verify the causal non-separability of the resulting processes. As a result of this investigation, we indeed find admissible and free-preserving adapters that  map causally separable to causally non-separable processes. For details, see App.~\ref{app::Maps_SeptoNonsep}.

The above observations suggest that, while the set $\Theta^{\texttt{A}\texttt{FP}}$  satisfies the  first two requirements for a set of free transformations, it may be  too big for a  physically meaningful  resource theory of causal connection. On the one hand, it allows one to change between causal orders, which seems to require internal signalling in the adapter, and on the other hand, perhaps more surprisingly, it also allows one to generate causally  non-separable processes from causally separable ones.  On top of this, it is intuitively clear that the realization of  many of the adapters in $\Theta_{\texttt{A}\texttt{FP}}$ requires some form of signalling between Alice and Bob. 
%and in some cases, even some form of causal non-separability. 
In Secs.~\ref{sssec::NS_Adapters} and~\ref{sss:lose} we will make this intuition rigorous, showing that the adapters that can be physically realized without signalling are a strict subset of $\Theta^{\texttt{A}\texttt{FP}}$.

The difference  between adapters that do not generate signalling and adapters that can be implemented without signalling operations can be better understood by referring to two different interpretation of the adapters. 
The first interpretation is that the adapters represent physical transformations implemented by a third party, other than Alice and Bob, who can alter the causal structure between them. In this setting, the natural requirement is that the operation should not create causal connections where such connections were not present. In this case, one may be interested in the set of admissible and free-preserving adapters. 

The second interpretation is that the adapters represent physical processes that are accessible to the parties. That is, the adapter is a multiport device, to which Alice and Bob can connect their inputs and outputs.  In this setting, Alice and Bob have full access to the adapter at hand and can, potentially, use it to exchange information with one another, independent of the process matrix they initially shared. In this setting, however, the correct physical requirement is  Requirement R$3$, which demands that free adapters should not allow for Alice and Bob to communicate with each other, even if they had full access to the inputs and outputs of the adapter at hand. As it will turn out, demanding this additional requirement then removes many of the surprising (and overly powerful) properties of admissible and free-preserving adapters. 

%%%%%%%%%%%%%%%%%%%%%%%%%%%%%
\subsubsection{Non-signalling transformations}\label{sssec::NS_Adapters}
While Requirements R$1$ and R$2$ impose restrictions on the kinds of process matrices that can be achieved by the action of an adapter, Requirement R$3$ focuses on  the  properties  of the   adapter   when   seen   itself   as   a   potential resource  shared  by  the  involved  parties. In particular, demanding Requirement R3 amounts to giving Alice and Bob complete access to their respective ‘sides’ of the adapter and demanding that they cannot use it for communication.  We  will  call adapters that satisfy this requirement \textit{non-signalling adapters}, denoted by $\Upsilon^\texttt{NS} \in \Theta_\texttt{NS}$.

\begin{figure}
    \centering
    \includegraphics[width=0.99\linewidth]{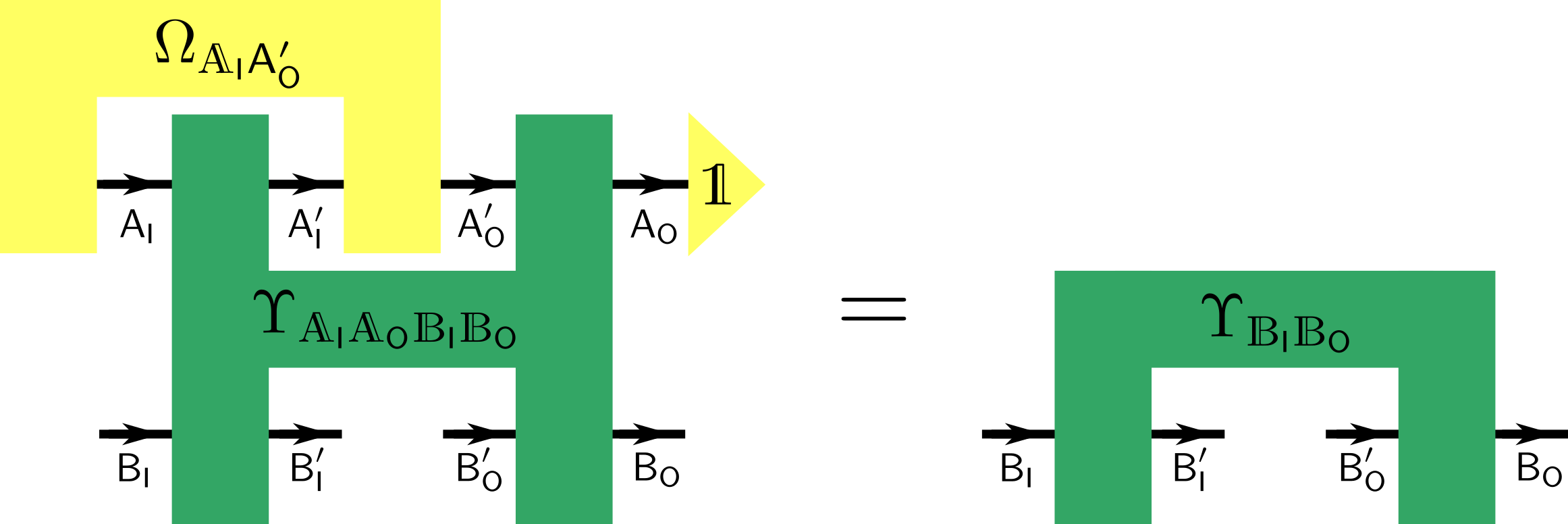}
    \caption{\textbf{Local operations on an adapter.} Given access to an adapter $\Upsilon_{\Ads_I \Ads_O\Bds_I\Bds_O}$, Alice and Bob could, in principle, use it to transmit information between them. To do so, Alice would perform deterministic operations on her end of the adapter, and if Bob's resulting part of the adapter depended on said operations, the adapter would allow for communication between the parties. Non-signalling adapters are exactly those adapters that do not allow for communication and the local adapters are independent of the respective operations on the other side (depicted above for Bob's side). Note that, for the depicted case, a deterministic operation is of the form $\Omega_{\Xds_IX_O'} \otimes \ident_{X_O}$.}
    \label{fig::AdLocalAc}
\end{figure}

Analogously to the case of channels and processes, no signalling from Alice to Bob by means of the adapter $\Upsilon^\texttt{NS}$ means that, whatever deterministic operation  Alice performs on her side of the adapter, once Alice's part is discarded, the remaining part  on Bob's side  must be independent of Alice's operation, and vice versa (see Fig.~\ref{fig::AdLocalAc}). 
The fact that the most general deterministic operation $\Omega_{\Xds_I\Xds_O}$ a party $X\in \{A,B\}$ could perform is a comb~\cite{chiribella_theoretical_2009} with ordering $X_I \prec X_I' \prec X_O' \prec X_O$ implies that it has the special form $\Omega_{X_I  X_I'  X_O'} \otimes \ident_{X_O}$, with $\tr_{X_O'}\Omega_{X_I X_I'  X_O'} = \Omega_{X_I} \otimes \ident_{X_I'}$  and $\tr  \Omega_{X_I}  =1$.  For short, we say that the combs $\Omega_{X_I X_I'  X_O'}$ are of type $X_I \prec X_I' \prec X_O'$.  

Now,  a  non-signalling adapter $\Upsilon^\texttt{NS}_{\Ads_I\Ads_O\Bds_I\Bds_O}$ has to satisfy the condition that Alice's operations cannot signal to Bob, namely 
\begin{align}
\notag
    \Upsilon^\texttt{NS}_{\Bds_I\Bds_O} &= \Upsilon^\texttt{NS}_{\Ads_I\Ads_O\Bds_I\Bds_O} \star \Omega_{A_IA_I' A_O'} \otimes \ident_{A_O} \\
    \label{eqn::non_signal1}
    &= \Upsilon^\texttt{NS}_{\Ads_I\Ads_O\Bds_I\Bds_O} \star \widetilde{\Omega}_{A_I A_I'A_O'} \otimes \ident_{A_O}
    \end{align}
 for all combs $\Omega_{A_IA_I'A_O'}$ and $\widetilde \Omega_{A_IA_I'A_O'}$  of type $A_I\prec A_I' \prec A_O'$. 
 Similarly, for non-signalling from Bob to Alice, it must hold that
 \begin{align}
\notag
    \Upsilon^\texttt{NS}_{\Ads_I\Ads_O} &= \Upsilon^\texttt{NS}_{\Ads_I\Ads_O\Bds_I\Bds_O} \star \Omega_{B_IB_I'B_O'} \otimes \ident_{B_O} \\
    \label{eqn::non_signal2}
    &= \Upsilon^\texttt{NS}_{\Ads_I\Ads_O\Bds_I\Bds_O} \star \widetilde{\Omega}_{B_IB_I' B_O'} \otimes \ident_{B_O}\, ,
\end{align}
for all combs $\Omega_{B_IB_I'B_O'}$ and $\widetilde \Omega_{B_IB_I'B_O'}$ of type $B_I\prec  B_I' \prec B_O'$. 
 
Eqs.~\eqref{eqn::non_signal1} and~\eqref{eqn::non_signal2} can be explicitly  characterized by a set of trace conditions, similar to the trace conditions for causally ordered combs~\cite{chiribella_theoretical_2009}. To derive the trace conditions for non-signalling adapters, let us assume that Alice tries to send a signal to Bob via the adapter $\Upsilon$ by simply feeding states into it. Her corresponding comb would have the form 
\begin{gather}
    \Omega_{A_IA_I' A_O'} \otimes \ident_{A_O} = \rho_{A_I} \otimes \eta_{A_O'} \otimes \ident_{A_I'A_O}, 
\end{gather}
where $\rho_{A_I}$ and $\eta_{A_O'}$ can be arbitrary quantum states. Now, requiring that the resulting comb
\begin{gather}
    \Upsilon_{\Ads_I\Ads_O\Bds_I\Bds_O} \star (\rho_{A_I} \otimes \eta_{A_O'} \otimes \ident_{A_I'A_O})
\end{gather}
on Bob's side is independent of $\eta_{A_O'}$ implies $\tr_{A_O}\Upsilon_{\Ads_I\Ads_O\Bds_I\Bds_O} = \ident_{A_O'} \otimes \Upsilon_{\Ads_I\Bds_I\Bds_O}$, while additional independence of $\rho_{A_I}$ implies $\tr_{A_I'}\Upsilon_{\Ads_I\Bds_I\Bds_O} = \ident_{A_I} \otimes \Upsilon_{\Bds_I\Bds_O}$. The same reasoning applies for operations on Bob's side. 

While these are not the most general operations that Alice and Bob can perform, these trace conditions are already sufficient to ensure the impossibility of communication by means of a non-signalling adapter (see App.~\ref{app::NSAdapt} for a proof). Combining them into the self-dual, trace preserving projectors $L_A$ and $L_B$, we arrive at the general definition of a non-signalling adapter, equivalent to that of non-signalling operations introduced in Ref.~\cite{taddei_quantum_2019}:

\begin{definition}[Non-signalling adapters]\label{def::free_adapters}
A non-signalling adapter is a matrix $\Upsilon^\textup{\texttt{NS}} \in \Bcal(\Hcal_{\Ads_I} \otimes \Hcal_{\Ads_O} \otimes \Hcal_{\Bds_I} \otimes \Hcal_{\Bds_O})$ that satisfies
\begin{align}
&\Upsilon^\textup{\texttt{NS}} \geq 0\, ,\\
\label{eqn::defFree}
&\Upsilon^\textup{\texttt{NS}} = (L_A \otimes L_B)[\Upsilon^\textup{\texttt{NS}}], 
\\\label{eqn::trFree} \mathrm{and} \quad &\tr\Upsilon^\textup{\texttt{NS}} = d_{A_O'}d_{A_I'} d_{B_O'}d_{B_I'}\, ,
\end{align}
where $L_X[\Upsilon] = \Upsilon - {}_{X_O}\Upsilon + {}_{X_OX_O'}\Upsilon - {}_{X_I'X_OX_O'}\Upsilon + {}_{X_IX_I'X_OX_O'}\Upsilon$. The set of non-signalling adapters is denoted $\Theta_\textup{\texttt{NS}}$.
\end{definition}
We show that this definition indeed yields adapters that satisfy Eqs.~\eqref{eqn::non_signal1} --~\eqref{eqn::non_signal2} in App.~\ref{app::NSAdapt}. While here we focus on the two-party case, this definition, as well as the remaining properties of non-signalling adapters we derive in this section, straightforwardly extend to the multi-party scenario. Since $L_A$ and $L_B$ are projectors, any $\Upsilon^{\texttt{NS}}$ that satisfies the above definition is invariant under both $L_A$ and $L_B$; invariance under $L_A$ encapsulates non-signalling from Alice to Bob, and vice versa for invariance under $L_B$. The trace condition in Eq.~\eqref{eqn::trFree} ensures that the resulting process matrices $\Upsilon^\texttt{NS}_{\Ads_I\Ads_O\Bds_I\Bds_O}\star W_{AB}$ are always properly normalized. 

Interestingly, it can be verified that the set of non-signalling adapters, defined by Requirement R$3$, automatically satisfies Requirements R$1$ and R$2$ as well. That is, non-signalling adapters not only cannot be used by Alice and Bob to signal to each other, but they also constitute transformations of process matrices that map $\texttt{Proc}\mapsto\texttt{Proc}$ and $\texttt{Free}\mapsto\texttt{Free}$.  It is important to stress, however, that Requirement R$3$ is more restrictive than the combination of Requirements R$1$ and R$2$; as we will see in the following, not all admissible and free-preserving adapters are non-signalling.

Let us first show that  non-signalling adapters satisfy  Requirement R$1$, that is, they  map valid process matrices into valid  process matrices. 
This could be done by insertion, i.e., by showing that the conditions of Def.~\ref{def::free_adapters} imply those of Def.~\ref{def::LegalAdapters}. Equivalently, here we show that any $\Upsilon^\texttt{NS} \in \Theta_\texttt{NS}$ maps non-signalling channels on $A'B'$ to non-signalling channels on $AB$. To this end, we first note that, due to the decomposition provided by Eq.~\eqref{eqn::DecompNonSig}, we can restrict our discussion to non-signalling maps of the product form $M_{A'B'} = M_{A'} \otimes M_{B'}$. For these maps, we have
\begin{gather}
\label{eqn::non-Signalling_axiom}
    \begin{split}
{}_{A_O}M_{AB} &= {}_{A_O}(\Upsilon^\texttt{NS}_{\Ads_I\Ads_O \Bds_I \Bds_O} \star M_{A'} \star M_{B'}) \\
    &= {}_{A_OA_O'}\Upsilon^\texttt{NS}_{\Ads_I\Ads_O \Bds_I \Bds_O} \star M_{A'} \star M_{B'} \\
    &= {}_{A_OA_O'}\Upsilon^\texttt{NS}_{\Ads_I\Ads_O \Bds_I \Bds_O} \star {}_{A_O'A_I'}M_{A'} \star M_{B'} \\
    &= {}_{A_OA_O'A_I'A_I}\Upsilon^\texttt{NS}_{\Ads_I\Ads_O \Bds_I \Bds_O} \star M_{A'} \star M_{B'} \\
    &= {}_{A_OA_I}({}_{A_O'A_I'}\Upsilon^\texttt{NS}_{\Ads_I\Ads_O \Bds_I \Bds_O} \star M_{A'} \star M_{B'}),
\end{split}
\end{gather}
where we have used the fact that ${}_X\sbt$ is a self-dual projection (i.e., it can be moved around freely in the link product) and the condition in Eq.~\eqref{eqn::NonSigTrace} ${}_{A_O'}M_{A'} = {}_{A_O'A_I'}M_{A'}$ of CPTP maps. Since ${}_{A_I}\sbt$ is a projection, the above implies ${}_{A_O}M_{AB} = {}_{A_IA_O}M_{AB}$. Thus, given that $M_{AB}$ satisfies Eq.~\eqref{eqn::NonSigTrace}, it is non-signalling from Alice to Bob. Analogously, we can show that $M_{AB}$ is non-signalling from Bob to Alice. Both the positivity of $M_{AB}$ and $\tr(M_{AB}) = d_{A_I}d_{B_I}$ are easy to see, implying that an adapter $\Upsilon^\texttt{NS} \in \Theta_\texttt{NS}$ maps any non-signalling channel $M_{A'B'}$ to a non-signalling channel $M_{AB}$, thus satisfying Requirement R$1$. Naturally, this result can be extended to more than two parties.

Now we show that non-signalling adapters also satisfy Requirement R$2$, i.e., they map free processes to free processes. This follows by application of the properties of non-signalling adapters $\Upsilon^{\texttt{NS}}$ and free processes $W^{A||B}$, which satisfy $W^{A||B} = {}_{A_OB_O}W^{A||B}$. With this, we have
\begin{gather}
\begin{split}
    W^{A'||B'}&:= \Upsilon^\texttt{NS}_{\Ads_I\Ads_O\Bds_I\Bds_O} \star W^{A||B} \\
    &= {}_{A_OB_O}\Upsilon^\texttt{NS}_{\Ads_I\Ads_O\Bds_I\Bds_O} \star W^{A||B} \\
    &= {}_{A'_{O}B'_{O}} (\Upsilon^\texttt{NS}_{\Ads_I\Ads_O\Bds_I\Bds_O} \star W^{A||B}) \\
    &= {}_{A'_{O}B'_{O}}W^{A'||B'},
\end{split}
\end{gather}
where, like in the previous derivations, we have used the properties of the operator ${}_{X}\sbt$ as well as those of non-signalling adapters. Again, positivity and proper trace normalization follow directly. Hence, non-signalling adapters are free-preserving.  

Summarizing,  non-signalling adapters satisfy Requirements R$1$ and R$2$;  we have $\Theta_{\texttt{NS}} \subseteq \Theta_{{\texttt{A}\texttt{FP}}}$.  We now show that this inclusion is strict, i.e., $\Theta_\texttt{NS} \subset \Theta_{\texttt{A}\texttt{FP}}$ (in what follows, `$\subset$' will always denote \textit{strict} inclusion). For this purpose, we  provide a concrete example of an adapter that satisfies Requirements R$1$ and R$2$ but not R$3$.

Again, we consider the adapter $\Upsilon^\mathrm{2SW} \in \Theta_{{\texttt{A}\texttt{FP}}}$ from Eq.~\eqref{eqn::2swap}, that implements two swap operators on the process matrices that it acts upon. We have previously shown that this adapter is admissible and free-preserving. However, as we have also seen, it can change the causal order of processes, a property that, as we show in the following, cannot be exhibited by non-signalling adapters, proving that $\Upsilon^{2\mathrm{SW}} \notin \Theta_\texttt{NS}$.

Let us see that every  non-signalling adapter must  preserve the causal order of process matrices, a property which is interesting  in its own right.    This statement is proven by direct insertion. Let $W_{AB}$ be a process matrix with causal ordering $B\prec A$. From Eq.~\eqref{eqn::AprecB} we see that this implies $W_{AB} = {}_{A_O}W_{AB}$ and ${}_{A_OA_I}W_{AB} = {}_{A_OA_IB_O}W_{AB}$. With this, we have
\begin{gather}
\begin{split}
    &W'_{A'B'} := \Upsilon^\texttt{NS}_{\Ads_I\Ads_O\Bds_I\Bds_O} \star W_{AB} \\
    &= {}_{A_O}\Upsilon^\texttt{NS}_{\Ads_I\Ads_O\Bds_I\Bds_O} \star W_{AB} \\
    &= {}_{A_O'} \Upsilon^\texttt{NS}_{\Ads_I\Ads_O\Bds_I\Bds_O} \star W_{AB} =  {}_{A_O'} W'_{A'B'},  
\end{split}
\end{gather}
where we have used the causality constraints on $W_{AB}$, the property ${}_{A_O}\Upsilon^\texttt{NS} = {}_{A_OA_O'}\Upsilon^\texttt{NS}$ [which can be seen by direct insertion into Eq.~\eqref{eqn::defFree}], as well as the self-duality of the operators ${}_X\sbt$. In the same vein, one shows that ${}_{A_O'A_I'} W'_{A'B'} = {}_{A_O'A_I'B_O'}W'_{A'B'}$, implying that $W'_{A'B'}$ has causal order $B' \prec A'$. Consequently, since the same can be shown for processes that are ordered $A\prec B$, adapters $\Upsilon^\texttt{NS} \in \Theta_{\texttt{NS}}$ do not change causal order, which means that $\Upsilon^\mathrm{2SW} \in \Theta_{\texttt{AFP}} \setminus \Theta_{\texttt{NS}}$. We thus have $\Theta_\texttt{NS} \subset \Theta_{\texttt{A}}\texttt{FP}$.

From the linearity of the action of $\Upsilon^\texttt{NS}$, one can also conclude that non-signalling adapters map causally separable process matrices to causally separable process matrices, a property that, as we have seen, is not satisfied by all adapters $\Upsilon^{\texttt{A}\texttt{FP}} \in \Theta_{\texttt{A}\texttt{FP}}$ either.

Hence, non-signalling adapters form a strict subset of the set of admissible and free-preserving adapters and not only satisfy the additional desirable property of not allowing for signalling between the involved parties, but moreover preserve causal order. However, the fact that non-signalling adapters do not allow for signalling does not necessarily mean that they can be \textit{implemented} with non-signalling resources alone. The implementation of adapters and the final requirement R$4$ are discussed in the next section.

%%%%%%%%%%%%%%%%%%%%%%%%%%%%%
\subsubsection{Transformations from local operations and shared entanglement}\label{sss:lose}
Requirement R$4$ demands that adapters should be implementable using only  non-signalling resources, such as local operations and shared entanglement. On the hierarchy of control over the adapters, we are now at the point where Alice and Bob do not just have access to their respective parts of the adapter, but, in order for the adapter to be free, they have to be able to implement it without employing signalling resources. Such non-signalling resources are local operations and shared entanglement and, following the literature on such operations~\cite{Gutoski09, schmid_postquantum_2021}, we denote the corresponding adapters by $\Upsilon^\texttt{LOSE} \in \Theta_\texttt{LOSE}$.

As an example of such an adapter, consider the case where Alice pre-processes the quantum state she obtains by means of a CPTP map $\Lambda_{\Ads_I}$, then performs her CP map $M_{A'}^i$, and post-processes the resulting state by means of a CPTP map $\Gamma_{\Ads_O}$ before sending it forward. Such processing would change Alice's and Bob's respective probabilities, but would not constitute any additional signalling (besides the one given by $W$) between both parties (see Fig.~\ref{fig::mini_combs}). 
With this, if Bob also pre- and post-processes (using the CPTP 
maps $\Lambda_{\Bds_I}$ and $\Gamma_{\Bds_O}$), the joint probabilities for measurements in their respective laboratories are given by:
\begin{gather}
\begin{split}
    &\Pprob(i,j|\Jcal_{A'},\Jcal_{B'}) \\
    &= W_{AB} \star (\Lambda_{\Ads_I} \star M_{A'}^i \star \Gamma_{\Ads_O}) \\
    &\phantom{=} \ \star (\Lambda_{\Bds_I} \star M_{B'}^j \star \Gamma_{\Bds_O}) \\
    &= (W_{AB} \star \Lambda_{\Ads_I} \star  \Gamma_{\Ads_O}  \star \Lambda_{\Bds_I} \star \Gamma_{\Bds_O}) \\
    &\phantom{=} \ \star (M_{A'}^i \star M_{B'}^j) \\
    &=: (W_{AB} \star \Upsilon^\texttt{LOSE}_{\Ads_I \Ads_O \Bds_I \Bds_O}) \star (M_{A'}^i \star M_{B'}^j)\, ,
    \end{split}
        \label{eqn::AdapterDef}
\end{gather}
where we have defined the adapter $\Upsilon^\texttt{LOSE}_{\Ads_I \Ads_O \Bds_I \Bds_O} \in \Bcal(\Hcal_{\Ads_I} \otimes \Hcal_{\Ads_O} \otimes \Hcal_{\Bds_I} \otimes \Hcal_{\Bds_O})$ which encapsulates how the original $W_{AB}\in \Bcal(\Hcal_A \otimes \Hcal_B)$ gets transformed when Alice and Bob perform their local pre- and post-processing. Such local operations (and the adapters $\Upsilon^\texttt{LOSE}_{\Ads_I \Ads_O \Bds_I \Bds_O}$ they implement) have been considered in Ref.~\cite{araujo_witnessing_2015} as a set of operations under which causal robustness (a measure of causal non-separability, see Sec.~\ref{sec::Montones_CausalRobust}) is monotone. As above, here, we have employed the flexibility of the link product; while the pre- and post-processing operations $\Lambda$ and $\Gamma$ can be understood as transforming the CP maps $M_{A'}^i$ and $M_{B'}^j$, respectively, they can also be understood as a transformation of $W$, leading to the above definition of the adapter $\Upsilon^\texttt{LOSE}$. 

Additionally, Alice and Bob can perform more general operations without signalling to each other. For example, instead of independent pre- and post-processing operations, they could correlate their respective operations, both in a classical or quantum way, effectively implementing local combs $C_{\Ads_I\Ads_O}$ ($C_{\Bds_I\Bds_O}$) with causal ordering $A_I\prec A_I' \prec A_O' \prec A_O$ ($B_I\prec B_I' \prec B_O' \prec B_O$) (see Fig.~\ref{fig::mini_combs} for a graphical representation). 
\begin{figure}
    \centering
    \includegraphics[width=0.7\linewidth]{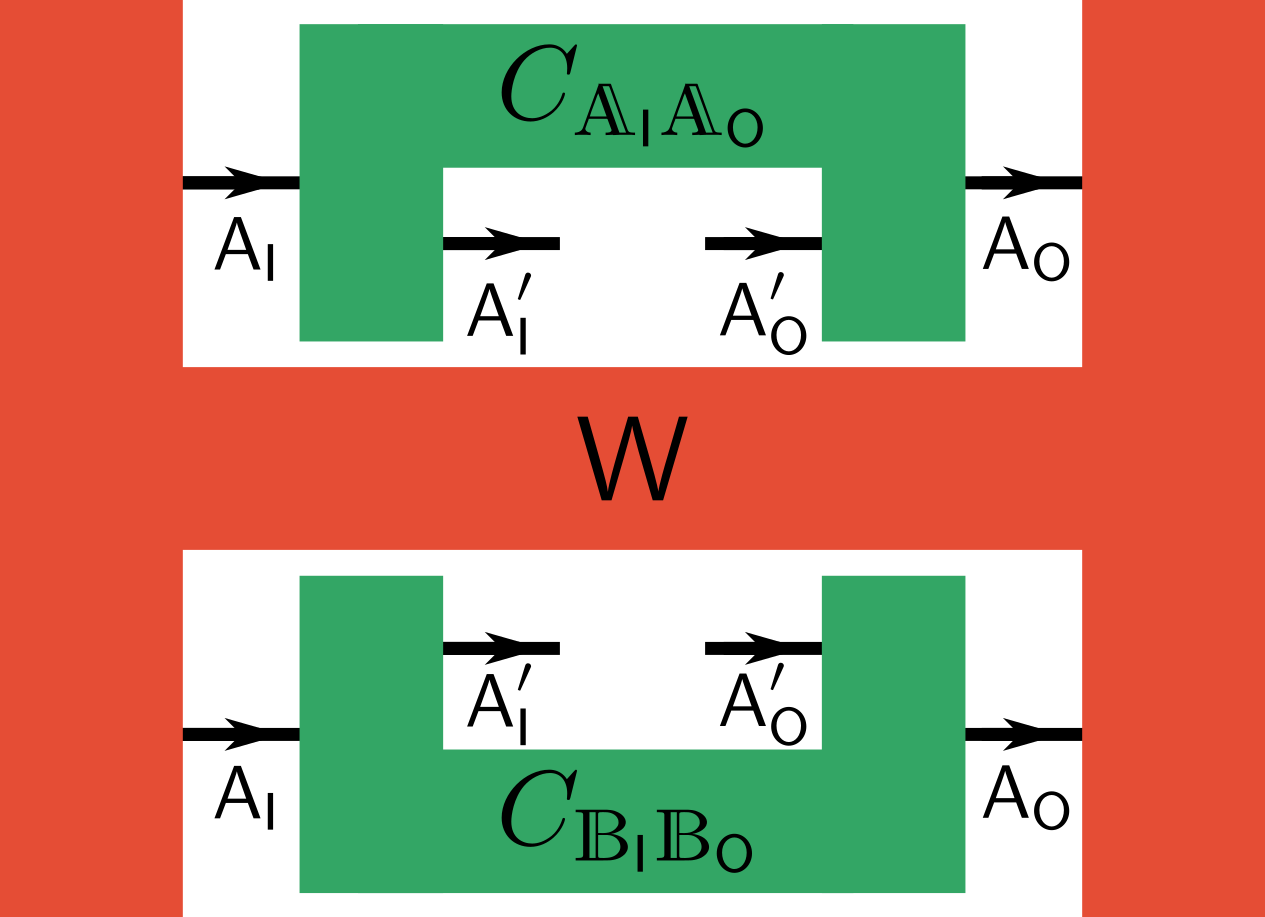}
    \caption{\textbf{Adapters from local combs.} Alice and Bob can effectively change the underlying process matrix by applying local, mutually independent combs. As a special, memoryless, case, this includes the scenario where Alice and Bob merely pre- and post-process the state they receive and feed forward, respectively, by means of independent maps $\Lambda_{\Ads_I}, \Gamma_{\Ads_O}, \Lambda_{\Bds_I}$, and $\Gamma_{\Bds_O}$.}
    \label{fig::mini_combs}
\end{figure}

Due to their causal ordering, these combs satisfy~\cite{chiribella_theoretical_2009}:
\begin{gather}
    \tr_{X_O}C_{\Xds_I\Xds_O} = \ident_{X_O'} \otimes  C_{\Xds_I}, \quad \tr_{X_{I}'}C_{\Xds_I} = \ident_{X_I}\, ,
\end{gather}
and $C_{\Xds_I\Xds_O} \geq 0$ for $X\in\{A,B\}$. Following the same reasoning that led to Eq.~\eqref{eqn::AdapterDef} we see that the corresponding adapter is given by 
\begin{gather}
\begin{split}
    \Upsilon^\texttt{LOSE}_{\Ads_I \Ads_O \Bds_I \Bds_O} &= C_{\Ads_I\Ads_O}  \star C_{\Bds_I\Bds_O} \\
    &= C_{\Ads_I\Ads_O}  \otimes C_{\Bds_I\Bds_O} \,.
\end{split}
\end{gather}
Finally, Alice and Bob could also share parts of an entangled state $\rho_{\widetilde A_I \widetilde A'_O \widetilde B_I \widetilde B'_O}$ which, together with local CPTP operations, they can use to implement an adapter (see Fig.~\ref{fig::Shared_initial} for a graphical representation). 
\begin{figure}
    \centering
    \includegraphics[width=0.9\linewidth]{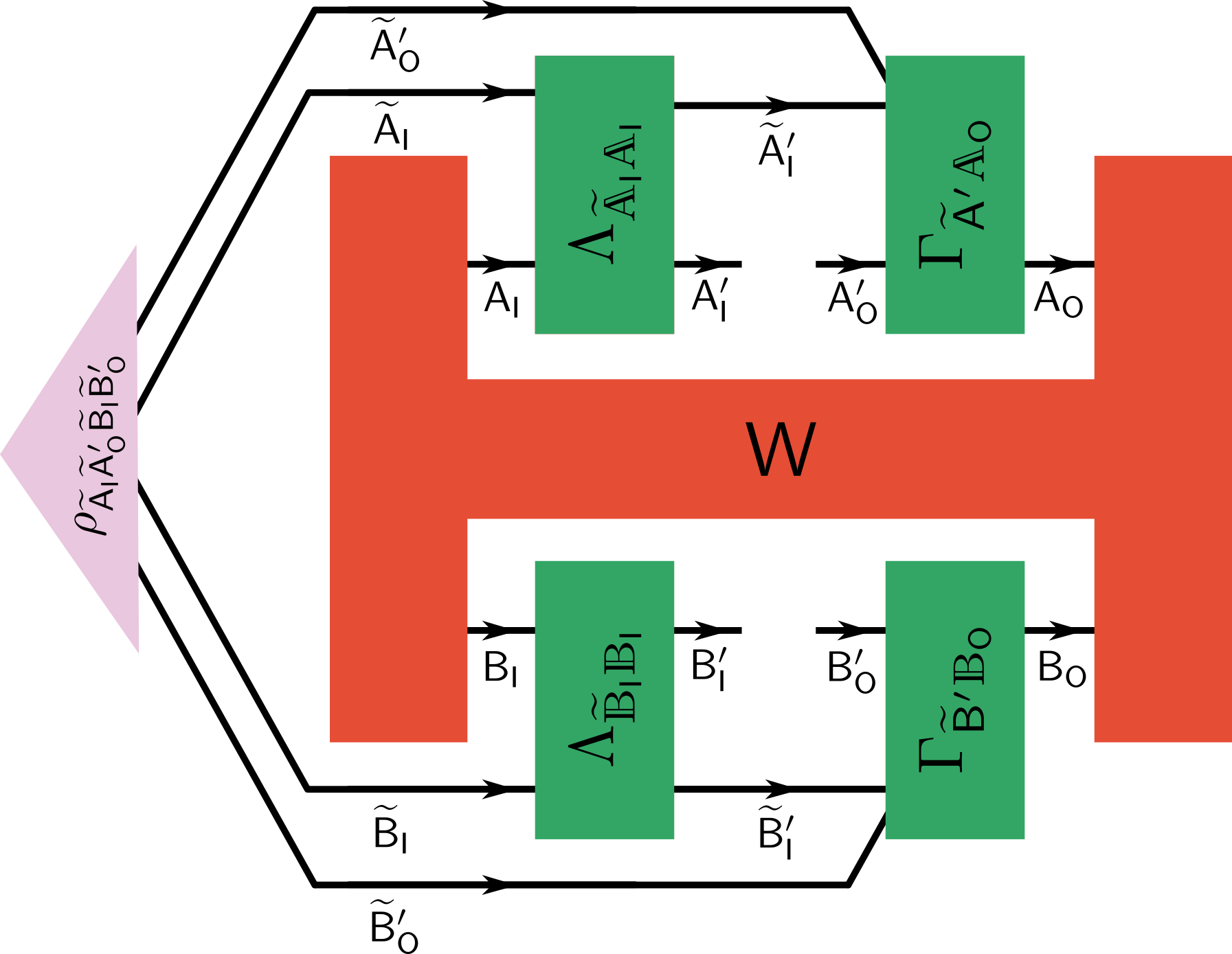}
    \caption{\textbf{LOSE Adapters.} Alice and Bob can share an entangled state and use it for local processing by means of the local maps $\Lambda_{\widetilde{\Xds}_I,\Xds_I}$ and $\Gamma_{\widetilde{X}'\Xds_{O}}$. The resulting adapters do not allow for signalling between the two parties and contain the adapters of Fig.~\ref{fig::mini_combs} as a special case. Note that the additional spaces $\widetilde X_O'$ could be absorbed -- here and in the main text -- into $\widetilde X_I$. The additional `lines' have been made explicit to emphasize that the entanglement can be shared between \textit{all} parts of the adapter.}
    \label{fig::Shared_initial}
\end{figure}
Concretely, the most general adapter $\Upsilon^\texttt{LOSE}_{\Ads_I\Ads_O\Bds_I\Bds_O}$ they could create in this way is of the form 
\begin{gather}
\begin{split}
    &\Upsilon^\texttt{LOSE}_{\Ads_I\Ads_O\Bds_I\Bds_O} \\
    &= \rho_{\widetilde A_I \widetilde A'_O \widetilde B_I \widetilde B'_O} \star \Lambda_{\widetilde \Ads_I \Ads_I} \star \Gamma_{\widetilde A' \Ads_O} \star \Lambda_{\widetilde \Bds_I \Bds_I} \star \Gamma_{\widetilde B' \Bds_O}
\end{split}
  \label{eqn::local_adapter}
\end{gather}

This motivates the definition of $\texttt{LOSE}$ adapters as those transformation that can be implemented by means of shared entanglement and local operations: 

\begin{definition}[LOSE adapters]
\label{def::Free_Adapters_prime}
An adapter from local operations and shared entanglement  $\Upsilon^{\textup{\texttt{LOSE}}} \in \Bcal(\Hcal_{\Ads_I} \otimes \Hcal_{\Ads_O} \otimes \Hcal_{\Bds_I} \otimes \Hcal_{\Bds_O})$ is a linear operator that can be written in the form of Eq.~\eqref{eqn::local_adapter}. The set of these free adapters is denoted $\Theta_{\textup{\texttt{LOSE}}}$. 
\end{definition}

This definition is in the spirit of local operations and ancillary entanglement provided in Ref.~\cite{taddei_quantum_2019}. Evidently, it can be extended to the multiparty case, without any added difficulties. Note that satisfaction of Eq.~\eqref{eqn::local_adapter} automatically implies $\Upsilon^\texttt{LOSE} \geq 0$, as it is the link product of positive semidefinite operators, and it guarantees that $\Upsilon^{\texttt{LOSE}}$ is properly normalized.

Any such adapter forbids signalling between Alice and Bob, and therefore it automatically satisfies Requirement R$3$, i.e., $\Theta_\texttt{LOSE} \subseteq \Theta_{\texttt{NS}}$. As a consequence, it also  preserves causal order and satisfies Requirements R$1$ and R$2$  (see App.~\ref{app::pre_shared} for a rigorous proof of these statements). 

However, Requirement R$4$ is more restrictive than Requirement R$3$, as there are non-signalling adapters that cannot be implemented by local operations and shared entanglement. To show that the inclusion $\Theta_\texttt{LOSE} \subset \Theta_\texttt{NS}$ is strict, we employ a result of Ref.~\cite{beckman_causal_2001}, where the existence of non-signalling channels that cannot be implemented by means of shared entanglement and local operations was shown (expressed in the nomenclature of Ref.~\cite{beckman_causal_2001}, there are bipartite channels that are \textit{causal} but not \textit{localizable}). Using two such maps $\Ecal: \Bcal(\Hcal_{A_I} \otimes \Hcal_{B_I}) \rightarrow \Bcal(\Hcal_{A_I'} \otimes \Hcal_{B_I'})$ and $\Fcal: \Bcal(\Hcal_{A_O'} \otimes \Hcal_{B_O'}) \rightarrow \Bcal(\Hcal_{A_O} \otimes \Hcal_{B_O})$ with corresponding Choi matrices $E_{\Ads_I\Bds_I}$ and $F_{\Ads_O\Bds_O}$, we can construct a non-signalling adapter 
\begin{gather}
   \Upsilon^\texttt{NS}_{\Ads_I\Ads_O\Bds_I\Bds_O} =  E_{\Ads_I\Bds_I} \otimes F_{\Ads_O\Bds_O}. 
\end{gather}
The fact that this indeed defines a non-signalling adapter can be seen by direct insertion into Eq.~\eqref{eqn::defFree} and using the fact that $E$ and $F$ are the Choi matrices of non-signalling channels. If this adapter were of the form of Def.~\ref{def::Free_Adapters_prime}, then we would have 
\begin{gather}
\label{eqn::ConjLOSE}
\begin{split}
    &E_{\Ads_I\Bds_I} \otimes F_{\Ads_O\Bds_O}  \\
    &= \rho_{\widetilde A_I \widetilde A'_O \widetilde B_I \widetilde B'_O} \star \Lambda_{\widetilde \Ads_I \Ads_I} \star \Gamma_{\widetilde A' \Ads_O} \star \Lambda_{\widetilde \Bds_I \Bds_I} \star \Gamma_{\widetilde B' \Bds_O}\, .
    \end{split}
\end{gather}
Now, tracing out the degrees of freedom ${\Ads_O\Bds_O}$ (i.e., tracing out $F_{\Ads_O\Bds_O}$) and using the properties of the maps on the RHS of the above equation, we obtain
\begin{gather}
\label{eqn::LOSE}
    E_{\Ads_I\Bds_I} = \rho_{\widetilde{A}_I\widetilde{B}_I} \star \Lambda_{A_I\widetilde{A}_I A_I'} \star \Lambda_{B_I\widetilde{B}_I B_I'}\,,
\end{gather}
where $\Lambda_{X_I\widetilde{X}_I X_I'}$ is the Choi matrix of a CPTP map $\Lcal: \Bcal(\Hcal_{\widetilde{X}_I} \otimes \Hcal_{X_I}) \rightarrow \Bcal(\Hcal_{{X}'_I})$. However, Eq.~\eqref{eqn::LOSE} would provide a decomposition of $E_{\Ads_I\Bds_I}$ in terms of local channels and shared entanglement (which, by assumption, it does not admit), thus implying that the adapter $\Upsilon^\texttt{NS}_{\Ads_I\Ads_O\Bds_I\Bds_O} =  E_{\Ads_I\Bds_I} \otimes F_{\Ads_O\Bds_O}$ cannot possess a decomposition of the form of Eq.~\eqref{eqn::ConjLOSE}. Consequently, we have shown the full hierarchy of strict inclusions $\Theta_\texttt{LOSE} \subset \Theta_{\texttt{NS}} \subset \Theta_{{\texttt{A}\texttt{FP}}}$ (see Fig.~\ref{fig::Adapters_sets}).

\begin{figure}
    \centering
    \includegraphics[width=0.9\linewidth]{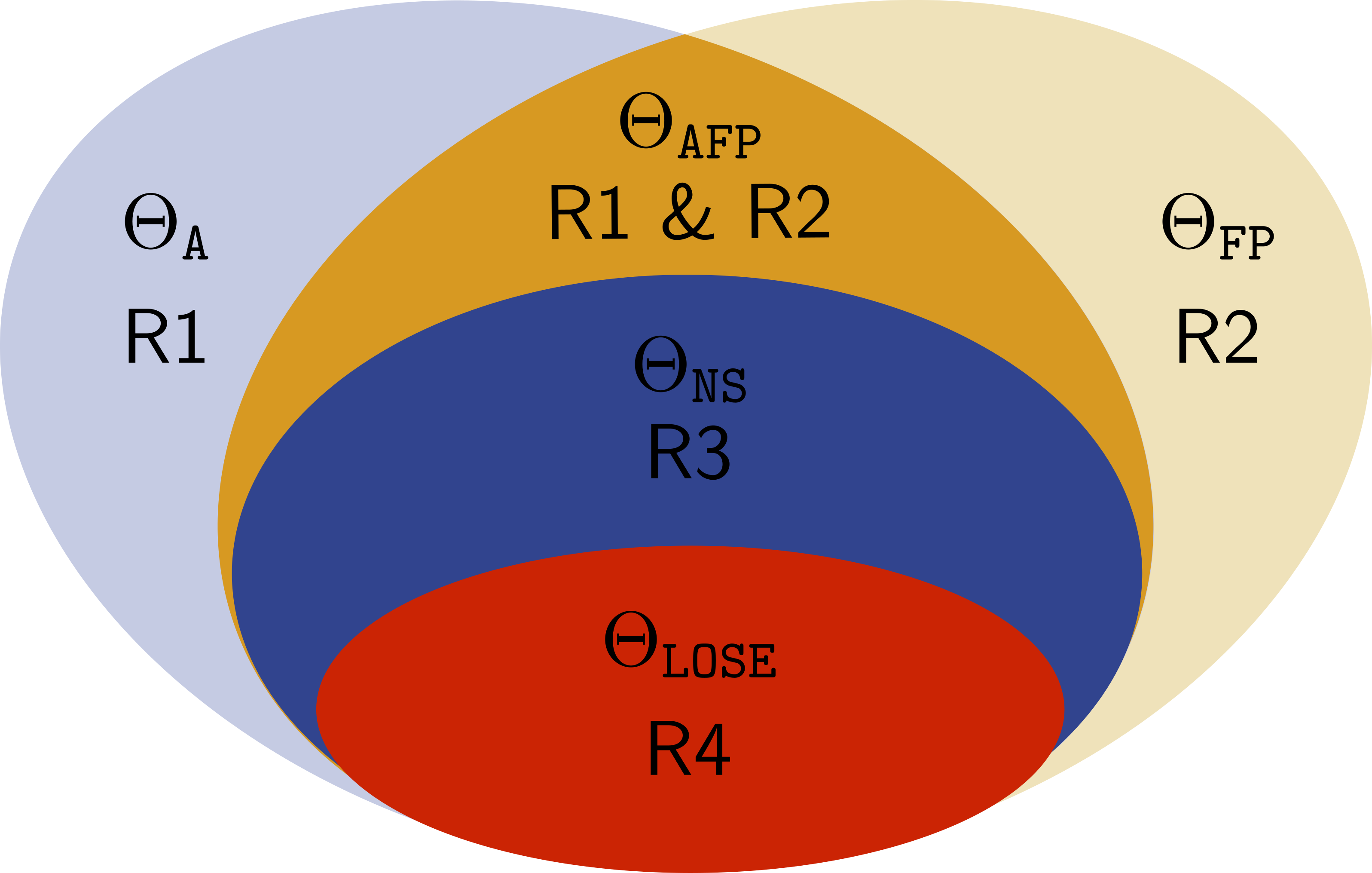}
    \caption{\textbf{Sets of adapters.} The individual sets of adapters follow from the Requirements R$1$ -- R$4$. Requirement R$1$ leads to the set $\Theta_{\texttt{A}}$ of admissible adapters (Def.~\ref{def::LegalAdapters}); Requirement R$2$ leads to the set $\Theta_\texttt{FP}$ of free-preserving adapters (Def.~\ref{def::FreePreservAdapters}), of which the operators inside the beige area are not proper transformations of process matrices; the combination of both Requirements R$1$ and R$2$ leads to the set $\Theta_{\texttt{A}\texttt{FP}}$ of admissible and free-preserving adapters (Def.~\ref{def::LegalFreePreserv}); Requirement R$3$ leads to the set $\Theta_\texttt{NS}$ of non-signalling adapters (Def.~\ref{def::free_adapters}), which coincides with the set $\Theta_{\texttt{CA}}$ of completely admissible adapters (Prop.~\ref{thm::comp_leg_ad}); and Requirement R$4$ leads to the set $\Theta_\texttt{LOSE}$ of adapters from local operations and shared entanglement (Def.~\ref{def::Free_Adapters_prime}).}
    \label{fig::Adapters_sets}
\end{figure}

While operationally clear-cut, the definition of adapters from local operations and shared entanglement is, however, too narrow to lend itself to a simple characterization. Given a positive semidefinite matrix $\Upsilon$, it is unclear how to show whether it is of the form of Eq.~\eqref{eqn::local_adapter} (already for channels, the membership to the set of \texttt{LOSE} operations is NP hard~\cite{Gutoski09}). This is reminiscent of the difficulty to characterize, e.g., LOCC operations~\cite{chitambar_everything_2014} in the resource theory of entanglement, or the problem of characterizing the set of entanglement-breaking supermaps that are obtained from entanglement-breaking pre- and post-processing~\cite{chen_entanglement-breaking_2019}. In either of these cases, the original set of transformations is extended to the set of completely resource non-generating operations (separable maps~\cite{bennett_quantum_1999} in the former and entanglement-breaking supermaps~\cite{chen_entanglement-breaking_2019} in the latter case). Here, as we have seen, there are at least two distinct possibilities of extending the set $\Theta_{\texttt{LOSE}}$ to a more mathematically amenable set of free operations, namely $\Theta_\texttt{NS}$ and $\Theta_\texttt{AFP}$.

%%%%%%%%%%%%%%%%%%%%%%%%%%%%%
\subsubsection{Defining free transformations}

The analysis of potential free operations for a resource theory of causal connection yields (at least) three distinct sets of free adapters, $\Theta_\texttt{LOSE} \subset \Theta_{\texttt{NS}} \subset \Theta_{{\texttt{A}\texttt{FP}}}$, each corresponding to a different level of control that the involved parties have over the adapter at hand. Both of the sets $\Theta_\texttt{LOSE}$ and $\Theta_{\texttt{NS}}$ respect Requirements R$1$, R$2$ and R$3$: they map non-signalling maps to non-signalling maps (and hence proper processes to proper processes), they map free processes to free processes, and they do not allow for signalling between the parties that share them. Additionally, they both preserve causal order. In particular these latter two properties -- both of which are not necessarily satisfied by adapters $\Upsilon^{{\texttt{A}\texttt{FP}}} \in \Theta_{{\texttt{A}\texttt{FP}}}$ -- are properties one would expect from free transformations in a resource theory of causal connection (see Tbl.~\ref{tab::CollProp} for a collation of the properties of different sets of adapters). Hence, the adapters in $\Theta_{{\texttt{A}\texttt{FP}}}$, while mathematically clear-cut, seem overly powerful. Consequently, in the following, we will predominantly focus on the resource theories of causal connection based on the free adapters $\Theta_\texttt{NS}$ and $\Theta_{\texttt{LOSE}}$. Whenever applicable, we will be explicit about whether or not properties we derive are also valid for a resource theory based on $\Theta_{{\texttt{A}\texttt{FP}}}$.

The sets $\Theta_\texttt{LOSE}$ and $\Theta_{\texttt{NS}}$ we focus on from now on differ in the following aspect: on the one hand, the set of \texttt{NS} adapters possesses a simpler mathematical characterisation, while not necessarily having a straightforward implementation in terms of non-signalling resources; on the other hand, adapters from local operations and shared entanglement are equipped with a simple physical implementation but have a cumbersome mathematical characterization.

From now on, our default choice of free transformations will be  the non-signalling adapters (see Def.~\ref{def::free_adapters}).  That is,  we set
\begin{gather}
    \Theta_\texttt{Free}\coloneqq\Theta_\texttt{NS}.
\end{gather}
  Whenever free adapters are mentioned -- unless explicitly stated otherwise -- it will be understood that we are talking about  non-signalling adapters and in order to emphasize the non-signalling property of the free adapters, we will continue to denote the set of free adapters by $\Theta_\texttt{NS}$ instead of $\Theta_\texttt{Free}$.

\begin{table}[t]
    \centering
    \begin{tabular}{l|c|c|c|c|c}
    & \thead{Pres. \\ \texttt{Proc}} & \thead{Pres. \\ \texttt{Free}} & \thead{Pres. \\ \texttt{Sep}} & \thead{non- \\sign.} & \thead{no sign. \\ required} \\
    \hline
     $\Theta_{{\texttt{A}}}$ & yes & no & no & no & no\\
     $\Theta_{\texttt{FP}}$ & no & yes & no & no & no\\
     $\Theta_{{\texttt{A}\texttt{FP}}}$ & yes & yes & no & no & no\\
     $\Theta_{\texttt{NS}}$ & yes & yes & yes & yes & no\\
     $\Theta_{\texttt{LOSE}}$ & yes & yes & yes &yes & yes\\
    \end{tabular}
    \caption{Sets of Adapters and their properties. The $\Theta_{{\texttt{A}}}$ and $\Theta_\texttt{FP}$ constitute the sets of adapters that preserve $\texttt{Proc}$ and $\texttt{Free}$, respectively. Adapters in $\Theta_{{\texttt{A}\texttt{FP}}}$, $\Theta_{\texttt{NS}}$ (which coincides with $\Theta_{\texttt{CA}}$), and $\Theta_{\texttt{LOSE}}$ preserve both $\texttt{Proc}$ and $\texttt{Free}$ and can thus meaningfully be considered as free. However, adapters in $\Theta_{{\texttt{A}\texttt{FP}}}$ could be used for signalling purposes and can map causally separable to causally non-separable processes. Adapters in $\Theta_{\texttt{LOSE}}$, the smallest set of free processes we consider, do not even require signalling resources for their implementation.}
    \label{tab::CollProp}
\end{table}

\subsection{Completely admissible adapters}
\label{sec::comp_Leg}
Up to this point, when considering adapters $\Upsilon_{\Ads_I\Ads_O\Bds_I\Bds_O}$ we have always considered their action on process matrices $W_{AB}$ (or, equivalently, on non-signalling channels $M_{A'B'}$). In principle, though, an adapter $\Upsilon_{\Ads_I\Ads_O\Bds_I\Bds_O}$ could also act on a non-signalling channel $\bar M_{A'\bar A B'\bar B}$ (with $A_I'\bar A_I \not \rightarrow B_O'\bar B_O$ and $B_I'\bar B_I \not \rightarrow A_O'\bar A_O$), i.e., act non-trivially only on parts of it (here, the degrees of freedom denoted by primed labels), and trivially on the rest (here, the degrees of freedom denoted by labels with a bar). A natural requirement on adapters is then that they are \textit{completely admissible}, i.e., they map non-signalling channels to non-signalling channels, even when acting non-trivially only on parts of them. Somewhat surprisingly, this additional requirement already removes all of the peculiarities we have encountered with respect to adapters in $\Theta_\texttt{A}$ and $\Theta_{\texttt{AFP}}$, and, as we show in this section, the set of completely admissible adapters coincides with $\Theta_\texttt{NS}$, providing yet another reason to choose the latter as the set of free adapters. 

To show this, let us first define completely admissible adapters:  \begin{definition}[Completely admissible adapters]
\label{def::CompLeg}
A completely admissible adapter is a positive semidefinite matrix $\Upsilon^{\textup{\texttt{CA}}} \in \Bcal(\Hcal_{\Ads_I} \otimes \Hcal_{\Ads_O} \otimes \Hcal_{\Bds_I} \otimes \Hcal_{\Bds_O})$ that satisfies 
\begin{gather}
    \Upsilon^{\textup{\texttt{CA}}}_{\Ads_I\Ads_O\Bds_I\Bds_O} \star \bar M_{A'\bar A B'\bar B}
\end{gather}
is non-signalling $A_I\bar A_I \not \rightarrow B_O\bar B_O$ and $B_I\bar B_I \not \rightarrow A_O\bar A_O$ for all non-signalling channels $\bar M_{A'\bar A B'\bar B}$ ($A_I'\bar A_I \not \rightarrow B_O'\bar B_O$ and $B_I'\bar B_I \not \rightarrow A_O'\bar A_O$) and arbitrary additional spaces $\{\bar A_I, \bar A_O, \bar B_I, \bar B_O\}$. The set of completely admissible adapters is denoted $\Theta_{\textup{\texttt{CA}}}$.
\end{definition}
Naturally, the above definition is equivalent to demanding that a completely admissible adapter $\Upsilon^{\texttt{CA}}_{\Ads_I\Ads_O\Bds_I\Bds_O}$ maps proper process matrices to proper process matrices even when only acting non-trivially on a part of them. 

It is easy to see that Def.~\ref{def::CompLeg} directly excludes adapters like the two-swap $\Upsilon^{2SW} \in \Theta_{\texttt{AFP}}$, which we discussed in Sec.~\ref{sec::legal_free_pres} as an example of an admissible and free preserving adapter that can change the causal order of a process matrix. More generally, it turns out that \textit{any} adapter that allows for communication between Alice and Bob maps some non-signalling map to a signalling map when only acting non-trivially on a part of it. Since the set of non-signalling adapters is exactly $\Theta_{\texttt{NS}}$, we have the following Proposition: 
\begin{proposition}
\label{thm::comp_leg_ad}
For any choice of spaces $\{\Ads_I, \Ads_O, \Bds_I, \Bds_O\}$ the corresponding sets of completely admissible adapters and non-signalling adapters coincide, i.e., 
\begin{gather}
    \Theta_{\textup{\texttt{CA}}} =  \Theta_{\textup{\texttt{NS}}} \, .
\end{gather}
\end{proposition}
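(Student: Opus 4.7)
The plan is to prove the proposition by double inclusion, establishing $\Theta_{\texttt{NS}} \subseteq \Theta_{\texttt{CA}}$ by direct computation and $\Theta_{\texttt{CA}} \subseteq \Theta_{\texttt{NS}}$ by contraposition, using an explicit witness constructed from a failure of the non-signalling trace identities.

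For the first inclusion, I take $\Upsilon^{\texttt{NS}}$ satisfying Definition~\ref{def::free_adapters} and an arbitrary non-signalling channel $\bar{M}_{A'\bar{A}B'\bar{B}}$ of the type in Definition~\ref{def::CompLeg}. Because the non-signalling cut in the hypothesis on $\bar{M}$ is between $A'\bar{A}$ and $B'\bar{B}$, the decomposition in Eq.~\eqref{eqn::DecompNonSig}, applied to this bipartition, expresses $\bar{M}$ as an affine combination of products $\bar{M}_{A'\bar{A}}^{(i)}\otimes\bar{M}_{B'\bar{B}}^{(i)}$, and by linearity of $\star$ it is enough to treat a single product term. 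From $L_A[\Upsilon^{\texttt{NS}}]=\Upsilon^{\texttt{NS}}$ I extract the trace identities ${}_{A_O}\Upsilon^{\texttt{NS}} = {}_{A_O A_O'}\Upsilon^{\texttt{NS}}$ and ${}_{A_I'A_O A_O'}\Upsilon^{\texttt{NS}} = {}_{A_IA_I'A_O A_O'}\Upsilon^{\texttt{NS}}$, obtained by applying the projector ${}_{A_OA_O'}\sbt$ to both sides of that relation, together with their $B$-counterparts. Inserting these into the chain of equalities in Eq.~\eqref{eqn::non-Signalling_axiom}, with the barred spaces carried along as passive spectators and the CPTP condition on $\bar{M}_{A'\bar{A}}$ used in the intermediate step, yields the non-signalling identities for $\Upsilon^{\texttt{NS}}\star\bar{M}$ on the cut $(A\bar{A})\,|\,(B\bar{B})$ demanded by Definition~\ref{def::CompLeg}.

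For the converse inclusion, I argue by contraposition: assume $\Upsilon\in\Theta_{\texttt{CA}}$ but $\Upsilon\notin\Theta_{\texttt{NS}}$. By the equivalence of Definition~\ref{def::free_adapters} with the operational conditions in Eqs.~\eqref{eqn::non_signal1}--\eqref{eqn::non_signal2}, the failure of $(L_A\otimes L_B)[\Upsilon]=\Upsilon$ implies that there exist two deterministic combs $\Omega$ and $\widetilde\Omega$ on Alice's side (the case of Bob being symmetric) for which $\Upsilon\star(\Omega\otimes\ident_{A_O})$ and $\Upsilon\star(\widetilde\Omega\otimes\ident_{A_O})$ disagree as matrices on $\Bds_I\Bds_O$. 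To promote this difference to a violation of complete admissibility, I introduce a reference register $\bar{A}_I$ of dimension two together with $\bar{A}_O$ chosen of dimension $d_{A_I}$ (so as to supply the extra input leg required by the combs), take $\bar{B}$ trivial, and build $\bar{M}_{A'\bar{A}B'}$ as the tensor product of (i) a bipartite channel on $A'\bar{A}$ that, controlled on the computational basis of $\bar{A}_I$, implements $\Omega$ or $\widetilde\Omega$, and (ii) an arbitrary CPTP map on $B'$. The tensor-product structure with the Bob factor immediately secures the non-signalling cut of Definition~\ref{def::CompLeg}, whereas $\Upsilon\star \bar{M}$ displays signalling from the $\bar{A}_I$ register into the Bob sector by construction, contradicting complete admissibility.

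The main obstacle lies in the second inclusion: designing the probe $\bar{M}$ as a bona fide non-signalling channel with the correct input/output layout while faithfully embedding Alice's combs, which are higher-order objects whose normalization does not match that of a CPTP map on $A_I'A_O'$ alone. The enlargement of $\bar{A}_O$ so that it carries the role of $A_I$ is essential here, providing exactly the extra leg that turns the comb into an ordinary channel on the composite system $A'\bar{A}$. A cleaner, more abstract fallback, which I would invoke if the explicit construction becomes cumbersome, is the observation that complete preservation of a linear-equational property such as non-signalling is equivalent to membership in the corresponding linear subspace; in the present case this subspace is exactly the image of the projector $L_A\otimes L_B$, directly recovering Definition~\ref{def::free_adapters}.
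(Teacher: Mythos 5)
Your first inclusion, $\Theta_{\texttt{NS}}\subseteq\Theta_{\texttt{CA}}$, follows the same route as the paper's proof (product decomposition of the non-signalling probe via Eq.~\eqref{eqn::DecompNonSig}, the trace identities extracted from $L_A\otimes L_B$, spectator registers carried through the chain of Eq.~\eqref{eqn::non-Signalling_axiom}) and is fine. The second inclusion has the right central idea --- a classically controlled probe on a reference register turns ``the choice of local operation influences the other party's marginal'' into a bona fide signalling violation, which is exactly the paper's trick with the controlled map $H_{B'\bar B}$ --- but your explicit construction has two genuine gaps. First, a deterministic comb $\Omega_{A_IA_I'A_O'}\otimes\ident_{A_O}$ feeds its prepared state into the adapter's $A_I$ leg, whereas in the complete-admissibility test $A_I$ is an \emph{external input} of the resulting channel and the probe only touches $A_I',A_O',\bar A_I,\bar A_O$. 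Rerouting the comb's $A_I$ output to $\bar A_O$ and then discarding it means each control branch implements only the reduced channel $\tr_{A_I}\Omega^{(j)}$ on $A_I'\to A_O'$, while the adapter's $A_I$ receives an unrelated external state. It is not true ``by construction'' that the difference $\Upsilon\star\Omega\neq\Upsilon\star\widetilde\Omega$ survives this replacement: you need the separate observation that plain admissibility (implied by membership in $\Theta_{\texttt{CA}}$) already forces $\Upsilon\star(\rho_{A_I}\otimes M_{A_I'A_O'}\otimes\ident_{A_O})$ to be independent of $\rho_{A_I}$, whence by linearity $\Upsilon\star\Omega=\Upsilon\star(\sigma_{A_I}\otimes\tr_{A_I}\Omega\otimes\ident_{A_O})$ and the entire difference is carried by the channel parts. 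Without that step the controlled probe may simply detect nothing.

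Second, even granting that Bob's local adapters $\Upsilon\star\Omega$ and $\Upsilon\star\widetilde\Omega$ differ as operators on $\Hcal_{\Bds_I}\otimes\Hcal_{\Bds_O}$, you contract Bob's side with a single ``arbitrary'' CPTP map on $B'$ and take $\bar B$ trivial. A fixed channel probe on $B_I'\to B_O'$ need not distinguish the two local adapters: their difference can lie in the orthocomplement of the span of channel Choi matrices (any operator of the form $E_{B_IB_O}\otimes W_{B_I'}\otimes\ident_{B_O'}$ with $\tr W=0$ is annihilated by every such contraction), and nothing you have assumed excludes this. You must either route $B_I'$ and $B_O'$ out to the reference $\bar B$ with a swap --- which is precisely what the paper does on Alice's side in order to expose the full local adapter $C_{\Ads_I\Ads_O}$ --- or supply a span argument showing that some channel probe on $B'$ suffices. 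Finally, the abstract fallback (``complete preservation of a linear-equational property is equivalent to membership in the image of $L_A\otimes L_B$'') is a restatement of the proposition, not a proof of it; identifying the relevant subspace with the image of that projector is exactly what is at stake.
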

The proof of this theorem can be found in App.~\ref{app::CV_NS}. Importantly, it provides a more fundamental interpretation of non-signalling adapters beyond the considerations of the resource theory of causal connection; they correspond exactly to the largest set of adapters that is admissible as soon the adapters can also non-trivially act on additional degrees of freedom. 

Naturally, the question arises, if, instead of considering the set of completely admissible adapters, one could have also considered the set of \textit{completely free preserving} adapters instead, to arrive at a similar result. However, it is easy to see that all free preserving adapters are already \textit{completely} free preserving: any matrix $\Upsilon^{\texttt{FP}}_{\Ads_I\Ads_O\Bds_I\Bds_O} \geq 0$ for which $\Upsilon^{\texttt{FP}}_{\Ads_I\Ads_O\Bds_I\Bds_O} \star (\rho_{A_IB_I} \otimes \ident_{A_OB_O}) = \rho_{A_I'B_I'} \otimes \ident_{A_O'B_O'}$ holds for all quantum states $\rho_{A_IB_I}$ directly satisfies $\Upsilon^{\texttt{FP}}_{\Ads_I\Ads_O\Bds_I\Bds_O} \star (\rho_{A_I\bar A_IB_I\bar B_I} \otimes \ident_{A_O\bar A_OB_O\bar B_O}) = \rho_{A_I'\bar A_IB_I'\bar B_I} \otimes \ident_{A_O'\bar A_OB_O'\bar B_O}$ for all quantum states $\rho_{A_I\bar A_IB_I\bar B_I}$. Consequently, requiring adapters to be \textit{completely} free preserving does not add any additional constraints on the adapters in $\Theta_{\texttt{FP}}$. With these final remarks on the structure of adapters out of the way, we now return to the set of free process matrices and show that it already follows entirely from the properties of the set $\Theta_\texttt{NS}$ of free adapters that we chose.

%%%%%%%%%%%%%%%%%%%%%%%%%%%%%%%%%%%%%%%%%
\subsection{Free objects revisited}\label{sec::FreeStRev}

So far it seemed natural to define the free process matrices as those  of the form $\rho_{A_IB_I}\otimes \ident_{A_OB_O}$ since they constitute the set of non-signalling process matrices. However, except for the derivation of $\Theta_{{\texttt{A}\texttt{FP}}}$, the set of free process matrices did not crucially enter our consideration of free adapters, since both $\Theta_{\texttt{NS}}$ and $\Theta_{\texttt{LOSE}}$ are derived from considerations of the (non-) signalling properties of the adapters instead of the free processes themselves.  

In principle, all free process matrices should be obtainable from free adapters via contraction of the excessive degrees of freedom. This property does indeed hold for all the sets $\Theta_{\texttt{A}\texttt{FP}}$, $\Theta_\texttt{NS}$, and $\Theta_\texttt{LOSE}$.  To prove this fact, it is enough to show that every free process matrix can be generated from adapters in the most restrictive set $\Theta_\texttt{LOSE}$.  

To this purpose, note that $\rho_{A_I'B_I'} \otimes \ident_{A_I\Ads_O B_I \Bds_O}/d_{A_O} d_{B_O}$ is a \texttt{LOSE} adapter for arbitrary quantum states $\rho_{A_I'B_I'}$. 
  In particular, this holds even if the input of the adapter is trivial, that is, if  $A_I$,  $A_O$, $B_I$,  and $B_O$ are one-dimensional systems.  In this case, the \texttt{LOSE} adapter simply generates the free process matrix  $\rho_{A_I'  B_I'} \otimes  \ident_{A_O'  B_O'}$.    

The above  observation suggests an alternative way to motivate the resource theory of causal connection, starting from the free operations instead of starting from the free objects.   The argument is as follows:   (1) the operations  (free or not)  of the resource theory are admissible adapters, that is, valid transformations of process matrices,  (2) the free operations are those that cannot be used to signal from Alice to Bob, or vice-versa,   that is, the adapters satisfying Requirement R$3$, (3) the free objects are the free   transformations with trivial input systems $A_I$,  $A_O$, $B_I$,  and $B_O$. This alternative argument leads to the resource theory of causal connection with the non-signalling adapters as free operations. Naturally, the same argument can be employed for the set of \texttt{LOSE} adapters (yielding the same set $\texttt{Free}$ of free process matrices), leading to a resource theory of causal connection with \texttt{LOSE} adapters as the free transformations and non-signalling process matrices as the free objects.

%%%%%%%%%%
\section{Measures of causal connection}\label{sec::Montones}
%%%%%%%%%%
Having the set $\texttt{Free}$ of free objects and the set $\Theta_{\texttt{NS}}$ of free transformations at hand, we can now introduce a measure of resourcefulness for process matrices in terms of their \textit{generalized robustness}, i.e., their robustness against `worst-case' general noise~\cite{steiner_generalized_2003}. Specifically, we analyse two kinds of generalized robustness: with respect to the set $\texttt{Free}$ of free (or non-signalling) processes -- called generalized \textit{signalling} robustness $\Rcal_s(W)$ -- and with respect to the set $\texttt{Sep}$ of causally separable processes -- called generalized \textit{causal} robustness $\Rcal_c$. This latter robustness is a natural resource monotone for the resource theory of causal non-separability (see Sec.~\ref{sec::ResCausNonSep}). Both of these robustness measures will provide us with monotones of the resource theory of causal connection and allow us to analyse the interconvertibility of process matrices under free adapters. 

We start with a discussion of the signalling robustness and its physical interpretation, followed by an investigation of the interconvertibility of causally indefinite processes.

%%%%%%%%%%%%%%%%%%%%%%%%%%%%%%%%%%%%%%%%%
\subsection{Robustness of signalling}\label{sec::Montones_ParallelRobust}

A generally used concept for the resourcefulness of an element of a resource theory is that of robustness against worst-case noise. Here, we follow this program and introduce the \textit{(generalized) robustness of signalling}, or simply signalling robustness, $\Rcal_s(W)$, which measures the maximal robustness of a process matrix $W$ under worst-case general mixing with respect to the set of free objects (i.e., the non-signalling process matrices in $\texttt{Free}$). We thus define $\Rcal_s(W)$ of a process $W$ as 
\begin{gather}
\label{eqn::ParRobust}
\Rcal_s(W) = \underset{T \in \texttt{Proc}} \min \left\{ s\geq 0 \left| \frac{W + sT}{1+s} = C \in \texttt{Free} \right.\right\}\, ,
\end{gather}
where $T\in \texttt{Proc}$ is a proper process matrix. Notice that, while we mostly work in the two-party scenario, this definition is valid for any number of parties. The signalling robustness is a faithful measure of causal connection -- clearly, $\Rcal_s(W) = 0$ for all $W\in \texttt{Free}$ and $\Rcal_s(W) > 0$ for all $W\notin \texttt{Free}$. 

Additionally, $\Rcal_s(W)$ is a convex function on $\texttt{Proc}$. While this property is not necessary for a resource measure it sure is desirable. Convexity can be proven in the same vein as the analogous proof for the robustness of coherence~\cite{napoli_robustness_2016} (see App.~\ref{app:conv} for an explicit proof). 

More importantly, $\Rcal_s(W)$ is non-increasing under the free adapters $\Theta_{\texttt{NS}}$, making it a monotone of the resource theory of causal connection. For the signalling robustness to be non-increasing under the action of an adapter $\Upsilon \in \Theta_{\texttt{NS}}$, it is sufficient for $\Upsilon$ to map free process matrices to free process matrices, a fact we have already shown above. In detail, let a process matrix $W$ have an optimal [with respect to the signalling robustness, see Eq.~\eqref{eqn::ParRobust}] decomposition $W = (1+\Rcal_s(W))C^* - \Rcal_s(W)T^*$, where $C^*\in \texttt{Free}$ and $T^*\in \texttt{Proc}$. Applying the adapter $\Upsilon$ to $W$ yields
\begin{gather}
\begin{split}
    &\Upsilon \star W \\
    &= (1+\Rcal_s(W))(\Upsilon \star C^*) - \Rcal_s(\Upsilon \star T^*) \\
    &=: (1+\Rcal_s(W))C' - \Rcal_s(W)T',
    \end{split}
    \label{eqn::decomp}
\end{gather}
where $C'\in \texttt{Free}$ and $T'\in \texttt{Proc}$. Since Eq.~\eqref{eqn::decomp} is a valid -- but potentially not optimal -- decomposition for $\Upsilon \star W$, this implies that its signalling robustness is at most $\Rcal_s(W)$. Hence,
\begin{gather}
    \Rcal_s(\Upsilon \star W)\leq\Rcal_s(W).
\end{gather}

Since the above proof only relies on the invariance of $\texttt{Free}$ under the considered adapters, $\Rcal_s$ is also a monotone under the action of adapters in $\Theta_{\texttt{A}\texttt{FP}}$. Being a monotone of causal connection, $\Rcal_s(W)$ provides a necessary condition for the interconvertibility of process matrices by means of free adapters; if $\Rcal_s(W') > \Rcal_s(W)$, then there is no free adapter $\Upsilon$ such that $W' = \Upsilon \star W$. As we shall see, unsurprisingly, this condition is not sufficient. For comprehensiveness, using recent results presented in Ref.~\cite{gour_dynamical_res_2020}, in App.~\ref{app::Montones_allmonotones} we derive \textit{all} monotones of the resource theory of causal connection, i.e., all functions $f(W_{AB})$ that are non-increasing under free adapters. This, in turn, provides -- in principle -- an unambiguous way to decide whether or not a process matrix can be transformed into another by means of free transformations. However, the resulting monotones are rather abstract, and we will not use them in what follows to decide interconvertibility between processes. 

Conveniently, analogously to other robustness measures in the literature~\cite{napoli_robustness_2016, piani_robustness_2016, uola_quantifying_2019, uola_quantification_2020} the computation of $\Rcal_s(W)$ can be phrased as a semidefinite program (SDP). Na\"ively, for two parties, we obtain $\Rcal_s(W)$ as the solution of the optimization 
\begin{align}
\begin{split}
 \textbf{given} \ \ & W \\ 
 \textbf{minimize} \ \ & s \\ 
 \textbf{subject to} \ \ & \frac{W + sT}{1+s} = \rho_{A_IB_I} \otimes \ident_{A_OB_O}\\
 & L_V(T) = T\\
 & \tr(T) = d_{A_O}d_{B_O}\\
 & \tr(\rho_{A_IB_I}) = 1\\
 & T\geq 0, \ \ \rho_{A_IB_I} \geq 0, \ \ s\geq 0, 
\end{split}
\label{eqn::primalRob}
\end{align}
which is not an SDP per se. However, by setting $\widetilde{T} := sT$ and $\widetilde{\rho}_{A_IB_I}:=(1+s)\rho_{A_IB_I}$ and making the substitution $\widetilde{T}=\widetilde{\rho}_{A_IB_I} \otimes \ident_{A_OB_O} - W$, the above problem can be stated as the following SDP:
\begin{align}
\begin{split}\label{sdp::parallel_primal}
 \textbf{given} \ \ & W \\ 
 \textbf{min} \ \  & \tr(\widetilde \rho_{A_IB_I}) - 1 \\ 
 \textbf{s.t.} \ \  &  \widetilde \rho_{A_IB_I}\otimes \ident_{A_OB_O} - W \geq 0 \\
 & \widetilde \rho_{A_IB_I} \geq 0.
\end{split}
\end{align}
This generalizes easily to more than two parties. We can use the above SDP to obtain a better operational understanding of the signalling robustness via the corresponding dual problem. As it is easy to see that the above primal problem is strictly feasible (choose, e.g., $\widetilde \rho_{A_IB_I} = \lambda \ident_{A_IB_I}$ and $\lambda > \lambda_{\max}$, where $\lambda_{\max}$ is the maximal eigenvalue of $W$), this dual provides the same value for the signalling robustness as the original SDP. By assigning the dual variable $S$ to the first inequality constraint of the primal, we obtain $\Rcal_s(W)$ as the solution of the dual problem
\begin{align}
\begin{split}\label{sdp::parallel_dual}
 \textbf{given} \ \ & W \\ 
 \textbf{max} \ \ & \tr(WS) - 1 \\ 
 \textbf{s.t.} \ \ &  S \geq 0 \\
 & \ident-\tr_{A_OB_O}(S)\geq 0.
\end{split}
\end{align}
On the one hand then, the SDP~\eqref{sdp::parallel_primal} and its dual provide the numerical tools to evaluate the signalling robustness of some interesting process matrices (and we will use them throughout to compute numerical values for the signalling robustness). On the other hand, the dual problem provides us with a concrete physical interpretation of the signalling robustness and connects it to the concept of witnesses of causal connection.

%%%%%%%%%%%%%%%%%%%%%%%%%%%%%%%%%%%%%%%%%
\subsection{Operational interpretation of the signalling robustness}

\begin{figure}[t]
    \centering
    \includegraphics[width=0.95\linewidth]{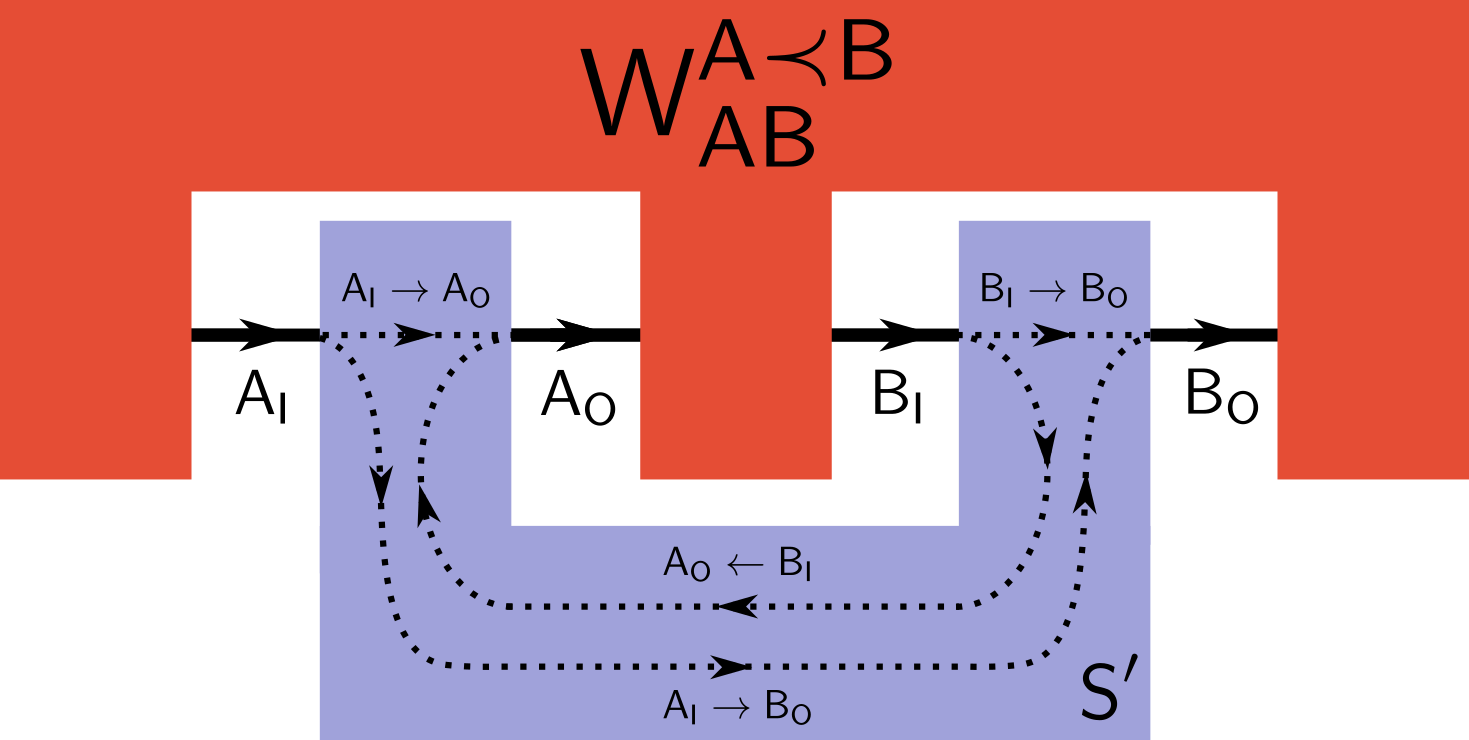}
    \caption{\textbf{Contracting an $A\prec B$ comb with $S'$.} A channel with corresponding Choi matrix $S'$ generally allows for signalling from $B_I$ to $A_I$, depicted by the dotted loop. Concatenating such a channel with, for example, a comb $W^{A\prec B}_{AB}$ with causal ordering $A\prec B$ then closes causal loops. The remaining dotted lines depict the other possible communication lines $S'$ allows for.}
    \label{fig::S_action}
\end{figure}
Below, we will see that the signalling robustness can be interpreted in terms of witnesses for causal connection. Besides this somewhat technical interpretation, the dual problem~\eqref{sdp::parallel_dual} of the SDP~\eqref{sdp::parallel_primal} offers a much more direct and operational interpretation of $\Rcal_s(W)$.

To see this, first note that the second condition in the dual SDP~\eqref{sdp::parallel_dual} can be replaced by $\tr_{A_OB_O}S = \ident$ without loss of generality. Indeed, for any $S \geq 0$ that maximizes $\tr(WS)-1$ and satisfies $\ident - \tr_{A_{O}B_{O}}S = D \geq 0$, we can define $S' = S+ D\otimes \frac{\ident_{A_OB_O}}{d_{A_O}d_{B_O}}$, which satisfies $S' \geq 0$ and $\tr_{A_{O}B_{O}}S' = \ident$, as well as $\tr(WS') \geq \tr(WS)$. The property $\tr_{A_OB_O}S' = \ident$ together with $S'\geq 0$ implies that $S'$ is the Choi matrix of a general -- potentially two-way signalling -- channel from $A_IB_I$ to $A_OB_O$. 

Generally, `plugging' such a channel into $W$, i.e., computing $\tr(WS')  = W\star S^{\prime\mathrm{T}}$ leads to causal loops. To see this, we note that the property $L_V(W) = W$ implies ${}_{A_IB_I}W = {}_{A_IA_OB_IB_O}W = \ident/(d_{A_I}d_{B_I})$, which means that $\tr_{A_IB_I}W = \ident_{A_OB_O}$, making $W$ the Choi matrix of a channel from $A_OB_O$ to $A_IB_I$. In general then, concatenating a channel from $A_OB_O$ to $A_IB_I$ (with corresponding Choi matrix $W$) with a channel from $A_IB_I$ to $A_OB_O$ creates causal loops (see Fig.~\ref{fig::S_action} for a graphical depiction). 

In this sense, $\tr(WS') - 1$ is a quantifier for the `amount' of causal loops that would be closed in $W$ if one could contract it with a general $A_IB_I \rightarrow A_OB_O$ channel $S'$. Importantly, this intuitive interpretation is compatible with our previous considerations; if $W\in \texttt{Free}$ then $\tr(WS') - 1 = 0$ and no causal loops can be closed. 

%%%%%%%%%%%%%%%%%%%%%%%%%%%%%%%%%%%%%%%%%
\subsection{Witnesses of signalling robustness and transformations under free adapters}

In addition to providing an intuitive interpretation of the signalling robustness, the above dual problem and, in particular, the new variable $S$ are directly related to \textit{witnesses} of causal connection. Here, we clarify this connection. The set of free process matrices is convex, and as such, there exist Hermitian matrices $\widetilde S$ (called witnesses) that satisfy $\tr(\widetilde SW) \geq 0$ for all $W = \rho_{A_IB_I} \otimes \ident_{A_OB_O}$ and $\tr(\widetilde SW) < 0$ for at least one $W\notin \texttt{Free}$. This allows us to characterize the set of witnesses of causal connection. Since
\begin{gather}
    \tr[\widetilde S (\rho_{A_IB_I}\otimes \ident_{A_OB_O})] = \tr(\rho_{A_IB_I}\tr_{A_OB_O}\widetilde S )
\end{gather}
has to be non-negative for all states $\rho_{A_IB_I}$, we see that a Hermitian matrix $\widetilde S$ can only be a witness of causal connection if $\tr_{A_OB_O}\widetilde{S} \geq 0$. 

Additionally, for $\widetilde S$ to be a proper witness, there should be at least one  process matrix $W\notin \texttt{Free}$ such that $\tr(W\widetilde S) < 0$, implying that generally $\widetilde S \ngeq 0$. For any witness $\widetilde S$, the matrix $\lambda \widetilde S$ is also a witness for $\lambda \geq 0$, implying that $-\tr(\widetilde S W)$, which can be considered a measure for causal connection, is in principle unbounded. To obtain meaningful numerical values, the set of witnesses needs to be restricted in a way that does not limit the set of causally connected process matrices $W$ that can be `detected' by the witnesses. The respective restriction is a priori somewhat arbitrary. Here, we choose it such that we obtain a connection to the above dual problem~\eqref{sdp::parallel_dual}. That is, we impose the restriction that all witnesses we consider are of the form 
\begin{gather}
    \label{eqn::decompWitness}
    \widetilde{S} = \ident/(d_{A_O}d_{B_O}) - S,
\end{gather}
where $S\geq 0$ [which coincides with the second condition of the dual~\eqref{sdp::parallel_dual}]. As $\tr_{A_OB_O}\widetilde{S} \geq 0$, this restriction then implies $\ident - \tr_{A_OB_O}S \geq 0$ (which coincides with the first condition of the dual problem). A priori, the above requirement~\eqref{eqn::decompWitness} excludes many potential witnesses and thus might restrict the set of causally connected processes that can be detected. However, for \textit{any} witness $\widetilde S$ we can find a $\lambda \geq 0$ and a matrix $S\geq 0$ such that $\lambda \widetilde S = \ident/(d_{A_OB_O}) - S$. As $\lambda \widetilde S$ is still a witness (for $\lambda \geq 0$) and detects the same set of causally connected processes as $\widetilde S$, only allowing for witnesses of the form $\ident/(d_{A_O}d_{B_O}) - S$ is thus merely a rescaling but does not restrict the range of detectable processes. 

Since $\widetilde S$ is a witness, we have $-\tr(\widetilde SW) \leq 0$ for all $W = \rho_{A_IB_I}\otimes \ident_{A_OB_O}$. This upper bound is tight; for any $\Lambda_{AB} \geq 0$ with $\tr_{A_OB_O}\Lambda_{AB} = \ident_{A_{I}B_{I}}$, $\widetilde S = \ident/(d_{A_O}d_{B_O}) - \Lambda_{AB}$ is a witness that satisfies Eq.~\eqref{eqn::decompWitness}. For this witness then, we have $\tr[\widetilde S(\rho_{A_IB_I}\otimes \ident_{A_OB_O})] = 0$. 

Importantly, with this restriction on the witnesses $\widetilde{S}$, we see that the second line in the dual problem~\eqref{sdp::parallel_dual} corresponds to 
\begin{align}
    \tr(WS) - 1 &= \tr[W(\ident/(d_{A_O}d_{B_O}) - \widetilde S)] - 1 \nonumber \\
    &= -\tr(W\widetilde S), 
\end{align}
implying that the dual problem maximizes the measure $-\tr(W\widetilde S)$, which, as we have seen, unsurprisingly yields $0$ on the set of free processes. We thus see that the dual problem~\eqref{sdp::parallel_dual} which yields the signalling robustness of a given process matrix $W$ can equivalently be understood as an optimization of the measure $-\tr(\widetilde S W)$ of causal connection over the set of witnesses of causal connection.

With these two interpretations of the signalling robutstness at hand, we will now bound this monotone both for the general two-party case, as well as special multi-party cases and use these bounds to discuss the notion of most resourceful processes. 

%%%%%%%%%%
\section{Bounds on the signalling robustness and most resourceful processes} \label{sec::BoundsRes}
%%%%%%%%%%

%%%%%%%%%%%%%%%%%%%%%%%%%%%%%%%%%%%%%%%%%
\subsection{Maximal signalling robustness on two parties}\label{subsec::max_par_robust}

An important question in any resource theory is that of the existence of a \textit{most valuable} resource, i.e., an object that allows one to obtain all other objects via free transformations. While such an object does generally not exist, in our case one can ask the related (but strictly weaker question) of an upper bound on the signalling robustness, and whether this bound is tight (at least in many relevant cases). Here we show that $\Rcal_s(W) \leq d_{\bar O}^2 - 1$ for all two-party process matrices $W$, where $d_{\bar O}$ is the maximum of the output dimensions of the two parties (Alice and Bob), and this bound is tight. 

As it turns out, a process $W^{A\rightarrow B}$ that satisfies this bound is given by a causally ordered process that consists of an initial state preparation (say, in Alice's laboratory), an identity channel between Alice and Bob, and a final discarding of Bob's output (see Sec.~\ref{sec::MostRes}). While it is somewhat self-evident that an identity channel between two parties has high causal connection, it is nonetheless surprising that neither additional memory, nor causal indefiniteness allow for larger amounts of causal connection (in the two-party scenario). Similar results hold in the multi-party scenario, suggesting that, even beyond the case of two parties, neither memory nor causal indefiniteness lead to improved causal connection. Here, we first comprehensively discuss the two-party case and then provide partial results for the multi-party scenario.

The proof that in the two-party case $\Rcal_s(W)$ is upper bounded by $d_{\bar O}^2 - 1$ has two steps. First, we show that 
\begin{gather}
\label{eqn::Cruc_Ineq}
    d_{\bar O}^2\cdot{}_{A_OB_O}W - W \geq 0
\end{gather}
for all proper two-party process matrices $W$. Then, using the dual SDP~\eqref{sdp::parallel_dual}, we see that 
\begin{gather}
\begin{split}
    \Rcal_s(W) &= \max(\tr(WS)) - 1 \\
    &\leq  d_{\bar O}^2\max(\tr({}_{A_OB_O}W S)) - 1 \leq d_{\bar O}^2 - 1\, , 
\end{split}
\end{gather}
where we have used Eq.~\eqref{eqn::Cruc_Ineq}, the self-duality of the operator ${}_{A_OB_O}\sbt$, as well as the fact that $\tr_{A_OB_O}(S) \leq \ident$ and $\tr(W) = d_{A_OB_O}$. Consequently, showing that Eq.~\eqref{eqn::Cruc_Ineq} holds provides the upper bound we aim for. 
\begin{lemma}
\label{lem::UpBound}
For any proper two-party process matrix $W$, 
\begin{gather}
    d_{\bar O}^2\cdot{}_{A_OB_O}W - W \geq 0,
\end{gather}
where $d_{\bar O}\coloneqq\max(d_{A_O},d_{B_O})$, holds.
\end{lemma}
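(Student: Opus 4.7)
I would prove the operator inequality by passing to its equivalent scalar statement via SDP duality. Using the self-duality of the projector ${}_{A_OB_O}\,\cdot\,$, the claim $d_{\bar O}^2\cdot{}_{A_OB_O}W - W\geq 0$ is equivalent to the bound $\tr(WS)\leq d_{\bar O}^2$ holding for every $S\geq 0$ with $\tr_{A_OB_O}S\leq\ident_{A_IB_I}$, which is exactly the feasible region of the dual SDP~\eqref{sdp::parallel_dual}. Without loss of generality assume $d_{A_O}\geq d_{B_O}$, so that $d_{\bar O}=d_{A_O}$.

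The operator-theoretic workhorse is the auxiliary inequality: for any PSD operator $Y$ on a bipartite Hilbert space $\Hcal_X\otimes\Hcal_Z$,
\begin{equation*}
d_X\,\ident_X\otimes \tr_XY\geq Y.
\end{equation*}
By linearity and spectral decomposition, the proof reduces to the rank-one case $Y=|\psi\rangle\langle\psi|$, where the Schmidt decomposition $|\psi\rangle = \sum_j\sqrt{p_j}|a_j\rangle|b_j\rangle$ together with a Cauchy--Schwarz estimate on the Schmidt basis delivers the inequality via a short Schur-complement check on the relevant $2\times 2$ blocks. A direct application of this auxiliary inequality with $X=A_OB_O$ only yields the weak bound $(d_{A_O}d_{B_O})^2\cdot{}_{A_OB_O}W\geq W$, which is insufficient, so additional process-matrix structure must be exploited to pass to the tight factor $d_{\bar O}^2$.

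The needed improvement invokes the projection identity $L_V[W]=W$. A convenient reformulation is the factorization
\begin{equation*}
L_V = P_{A_O}(1-P_{A_I})(1-P_{B_O}) + P_{B_O}(1-P_{B_I})(1-P_{A_O}) + P_{A_OB_O}\,,
\end{equation*}
yielding a sectorial decomposition $W = W_{B\to A}+W_{A\to B}+W^{\mathrm{NS}}$ with $W^{\mathrm{NS}}={}_{A_OB_O}W$, where the two signalling sectors satisfy $P_{A_OB_O}W_{X\to Y}=0$ together with complementary vanishing-trace identities. One then bounds each unidirectional-signalling sector by $(d_{X_O}^2-1)\cdot W^{\mathrm{NS}}$, applying the auxiliary inequality on the subsystem corresponding to the party doing the signalling, and sums the bounds to conclude $d_{\bar O}^2\cdot W^{\mathrm{NS}}\geq W$.

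\textbf{Main obstacle.} The signalling sectors $W_{A\to B}$ and $W_{B\to A}$ are not individually positive semidefinite, so bounding their sum by a PSD inequality cannot be performed term-by-term. The argument must proceed either via a careful global operator inequality that exploits the joint positivity $W\geq 0$ together with the structural constraints $P_{A_OB_O}W_{X\to Y}=0$, or -- more transparently -- by establishing the scalar bound $\tr(WS)\leq d_{\bar O}^2$ on the dual SDP directly, using the auxiliary inequality applied to $S$ combined with the process-matrix conditions on $W$. In either case the key point is that causal-loop configurations such as $\Phi^+_{A_IA_O}\otimes\Phi^+_{B_IB_O}$ -- which would saturate the weaker $(d_{A_O}d_{B_O})^2$ bound -- are excluded precisely by $L_V[W]=W$, and the tight factor $d_{\bar O}^2$ is saturated instead by identity-comb processes $W = \rho_{A_I}\otimes\Phi^+_{A_OB_I}\otimes\ident_{B_O}$ (when $d_{A_O}\geq d_{B_O}$).
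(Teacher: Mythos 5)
Your proposal assembles the right raw materials but does not actually close the argument. The auxiliary inequality $d_X\,\ident_X\otimes\tr_X Y\geq Y$ for $Y\geq 0$ is exactly the positivity of the paper's map $\Dcal^{(X)}[Y]=d_X^2\cdot{}_{X}Y-Y$, and you correctly observe both that applying it to the joint system $A_OB_O$ only gives the factor $(d_{A_O}d_{B_O})^2$ and that the constraint $L_V[W]=W$ must supply the improvement. But the route you then sketch — splitting $W$ into signalling sectors via a factorization of $L_V$ and bounding each sector separately — runs into the obstacle you yourself name: the sectors $W_{A\to B}$ and $W_{B\to A}$ are not PSD, so there is no term-by-term operator bound, and the two fallback options you list ("a careful global operator inequality" or "the scalar dual bound directly") are left as declarations of intent rather than arguments. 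In addition, the opening reduction is shaky: the operator inequality $d_{\bar O}^2\cdot{}_{A_OB_O}W-W\geq 0$ is \emph{a priori strictly stronger} than the scalar statement $\tr(WS)\leq d_{\bar O}^2$ over the dual-feasible set $\{S\geq 0,\ \tr_{A_OB_O}S\leq\ident\}$; the former is equivalent to $\tr(WQ)\leq d_{\bar O}^2\,\tr(W\,{}_{A_OB_O}Q)$ for \emph{all} $Q\geq 0$, which does not follow from the dual bound alone. So even if you had established the scalar bound, you would not yet have the Lemma as stated.

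The missing idea is much simpler than a sectorial decomposition: apply the auxiliary inequality \emph{twice, sequentially, with the same constant}. Since $\bar\Dcal^{(X_O)}[Y]:=d_{\bar O}^2\cdot{}_{X_O}Y-Y$ is (the restriction of) a CP map, the composition $\bar\Dcal^{(A_O)}\circ\bar\Dcal^{(B_O)}$ preserves positivity, giving
\begin{gather*}
d_{\bar O}^4\cdot{}_{A_OB_O}W-d_{\bar O}^2\bigl({}_{A_O}W+{}_{B_O}W\bigr)+W\geq 0
\end{gather*}
for any $W\geq 0$. The process-matrix condition $L_V[W]=W$ then enters only through the single linear identity ${}_{A_O}W+{}_{B_O}W=W+{}_{A_OB_O}W$, which collapses the middle term and yields $(d_{\bar O}^4-d_{\bar O}^2)\,{}_{A_OB_O}W\geq(d_{\bar O}^2-1)\,W$; dividing by $d_{\bar O}^2-1>0$ gives the claim. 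This is a genuinely global operator inequality of the kind you gesture at, but it needs no decomposition of $W$ into non-PSD pieces and no passage through the dual SDP. Your closing remarks about which processes saturate which bound are correct and consistent with this, but as written the proof has a gap precisely where the composition-plus-linear-identity step should be.
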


\begin{proof}
For the proof, we first note that the map $\Dcal[W] = d\tr(W)\ident - W$ is CP, where $d$ is the dimension of the space $W$ lives on. Indeed, it is easy to see that the Choi matrix $\eta_\Dcal$ of $\Dcal$ is given by 
\begin{gather}
    \eta_\Dcal = d\ident \otimes \ident - \Phi^+\, .
\end{gather}
Since $\tr(\Phi^+) = d$, we see that the above matrix $\eta_\Dcal$ is positive semidefinite, and $\Dcal$ thus a CP map. Consequently, applying $\Dcal$ to only a part of a positive matrix still yields a positive output, i.e., 
\begin{gather}
\begin{split}
\Dcal^{(X_O)}[W] &= d_{X_O} \tr_{X_O}W \otimes \ident_{X_O} - W \\
&=  d_{X_O}^2 \cdot{}_{X_O}W - W\geq 0
\end{split}
\end{gather}
holds for all positive semidefinite matrices $W$ (note that this relation has been proven independently in~\cite{bavaresco_semi-device-independent_2019}, but we provide an explicit proof for self-containedness). Exchanging $d_{X_O}$ for $d_{\bar O} = \max_{X=A,B}\{d_{X_O}\}$ does not change the positivity of the map, such that 
\begin{gather}
    \bar{\Dcal}^{(X_O)}[W] := d_{\bar{O}}^2\cdot {}_{X_O}W  - W \geq 0.
\end{gather}
Now, applying $\bar{\Dcal}^{(A_O)} \circ \bar{\Dcal}^{(B_O)}$ to a positive matrix $W$, we see that 
\begin{gather}
\begin{split}
\label{eqn::doubleDepol}
    &\bar{\Dcal}^{(A_O)} \circ \bar{\Dcal}^{(B_O)}[W] \\
    &= d_{\bar O}^4 \cdot{}_{A_OB_O}W - d_{\bar O}^2 ({}_{A_O}W  + {}_{B_O}W)  + W \geq 0\, .
\end{split}
\end{gather}
Up to this point, we have not yet employed the fact that $W$ is a proper process matrix, and the above equation thus holds for all positive matrices $W$. Now, using that for proper process matrices $L_V(W) = W$ holds [see Eq.~\eqref{eqn::DefProcMat}], it is easy to see that we have ${}_{A_O}W  + {}_{B_O}W = W + {}_{A_OB_O}W$~\cite{araujo_witnessing_2015}. With this, Eq.~\eqref{eqn::doubleDepol} reads
\begin{gather}
    (d_{\bar O}^4 - d_{\bar O}^2) {}_{A_OB_O}W - (d_{\bar O}^2-1) W \geq 0\, .
\end{gather}
Since the dimension $d_{\bar O}$ is always assumed to be at least $2$, this equation directly implies the assertion of the Lemma. 
\end{proof}
As outlined above, this Lemma then leads to the desired bound on the signalling robustness for two-party matrices. 
\begin{proposition}
\label{prop::Upbound}
For any two-party process matrix $W$, we have 
\begin{gather}
    \Rcal_s(W) \leq d_{\bar O}^2 -1\, ,
\end{gather}
where $d_{\bar O} = \max(d_{A_O},d_{B_O})$.
\end{proposition}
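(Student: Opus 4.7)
The plan is to deduce the bound directly from Lemma~\ref{lem::UpBound} via the dual SDP~\eqref{sdp::parallel_dual} for $\Rcal_s(W)$. Since Lemma~\ref{lem::UpBound} has already been established, the remaining work is essentially plugging the operator inequality $d_{\bar O}^2\cdot{}_{A_OB_O}W - W \geq 0$ into the feasibility region of the dual and estimating.

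Concretely, I would start from
\begin{equation*}
\Rcal_s(W) \;=\; \max_{S} \,\bigl[\tr(WS) - 1\bigr],
\end{equation*}
where the maximisation is over $S \geq 0$ with $\ident - \tr_{A_OB_O}(S) \geq 0$. Because $S \geq 0$, Lemma~\ref{lem::UpBound} implies
\begin{equation*}
\tr(WS) \;\leq\; d_{\bar O}^2\,\tr\bigl(({}_{A_OB_O}W)\,S\bigr),
\end{equation*}
and by the self-duality of the projector ${}_{A_OB_O}\sbt$ the right-hand side equals $d_{\bar O}^2\,\tr\bigl(W\,({}_{A_OB_O}S)\bigr)$. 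Now I expand ${}_{A_OB_O}S = \ident_{A_OB_O}/(d_{A_O}d_{B_O}) \otimes \tr_{A_OB_O}(S)$ and use the dual constraint $\tr_{A_OB_O}(S) \leq \ident_{A_IB_I}$ together with $W \geq 0$ to conclude that
\begin{equation*}
\tr\bigl(W\,({}_{A_OB_O}S)\bigr) \;\leq\; \frac{\tr(W)}{d_{A_O}d_{B_O}} \;=\; 1,
\end{equation*}
where the last equality uses the normalisation $\tr(W) = d_{A_O}d_{B_O}$ from Eq.~\eqref{eqn::succinct}. Combining these bounds gives $\tr(WS) - 1 \leq d_{\bar O}^2 - 1$ for every feasible $S$, hence $\Rcal_s(W) \leq d_{\bar O}^2 - 1$.

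There is no real obstacle beyond Lemma~\ref{lem::UpBound}, which does the heavy lifting; the rest is a three-line manipulation. The only subtlety worth flagging in the write-up is that the two depolarising inequalities used to pass from $W \geq 0$ (via the CP map $\bar{\Dcal}^{(X_O)}$) to the tight bound rely on the identity ${}_{A_O}W + {}_{B_O}W = W + {}_{A_OB_O}W$, which in turn is a direct consequence of $L_V(W) = W$; this is the place where being a valid process matrix, rather than merely a positive operator, actually enters. Tightness of the bound will be handled separately in Sec.~\ref{sec::MostRes} by exhibiting an identity-channel process that saturates it.
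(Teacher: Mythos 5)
Your proposal is correct and follows essentially the same route as the paper: both derive the bound by feeding the operator inequality of Lemma~\ref{lem::UpBound} into the dual SDP~\eqref{sdp::parallel_dual}, using the self-duality of ${}_{A_OB_O}\sbt$, the constraint $\tr_{A_OB_O}(S)\leq\ident$, and the normalisation $\tr(W)=d_{A_O}d_{B_O}$. Your remark about where the process-matrix condition $L_V(W)=W$ (rather than mere positivity) enters is accurate and matches the paper's own emphasis.
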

This bound can be achieved by a causally ordered  comb  if  certain  dimensional  conditions are satisfied. For example, without loss of generality, let $d_{A_O} \geq d_{B_O}$. Then, if $d_{A_O} = d_{B_I}$, the bound $d_{\bar O}^2 -1 = d_{A_O}^2 -1 $ is saturated by a process of the form 
\begin{gather}
\label{eqn::MaxVal}
    W^{A \rightarrow B} \coloneqq \frac{\ident_{A_I}}{d_{A_I}}\otimes \Phi_{A_OB_I}^+ \otimes \ident_{B_O},
\end{gather} 
where $\Phi_{A_OB_I}^+ = \sum_{ij} \ketbra{ii}{jj}$ is the Choi matrix of the identity channel. Choosing the proper witness $S = \ident_{A_I} \otimes \Phi_{A_OB_I}^+ \otimes \frac{\ident_{B_O}}{d_{B_O}}$, we see that $\tr(W^{A \rightarrow B}S) - 1 = d_{A_O}^2 -1$. 

Importantly, while this bound holds for general two-party processes, it is only tight for scenarios, where the dimension of the largest output space ($A_O$ above) is at most as large as that of the other party's input space ($B_I$ above). Intuitively, this is due to the fact that there is no identity channel from $A_O$ to $B_I$ when $d_{A_O} > d_{B_I}$, and the causal connection between the parties is thus bounded by the upper bound we gave in Prop.~\ref{prop::Upbound} but not necessarily tight anymore (the influence of the different dimensions of the involved spaces on the tightness of the bounds we provide is discussed in more detail in Sec.~\ref{subsec::robust_multipartite}).

We emphasize that the fact that $W$ is a proper process matrix (and not just a positive semidefinite matrix which satisfies $\tr(W) = d_{A_O} d_{B_O}$) is crucial for the derivation of the above bounds. For positive semidefinite matrices $W$ with $\tr(W) = d_{A_O} d_{B_O}$, it is easy to see that the term $\tr(WS)$ can achieve its algebraic maximum of $d_{A_I} d_{A_O}d_{B_I} d_{B_O}$ and analogously for more parties. This, then, also implies that Lem.~\ref{lem::UpBound} holds for proper process matrices, but not for general positive semidefinite matrices with $\tr(W) = d_{A_O}d_{B_O}$.

%%%%%%%%%%%%%%%%%%%%%%%%%%%%%%%%%%%%%%%%%
\subsection{Most resourceful process on two parties}
\label{sec::MostRes}

Let us, for the moment, assume that all dimensions of $W$ are the same, i.e., $d_{A_I} = d_{A_O} = d_{B_I} = d_{B_O}$. As we have seen, in this case, a process with the maximal signalling robustness is given by 
\begin{gather}\label{eqn::fullsigprocess}
    W^{A \rightarrow B} \coloneqq \frac{\ident_{A_I}}{d_{A_I}}\otimes \Phi_{A_OB_I}^+ \otimes \ident_{B_O},
\end{gather}
which we will call the \textit{fully signalling} process. The process $W^{B\rightarrow A}$ is defined analogously.

In principle, $W^{A \rightarrow B}$ could be a good candidate for the most resourceful process of the resource theory of causal connection, in the sense that, starting from $W^{A \rightarrow B}$, all other processes may be reachable by means of free adapters. This, however, is not the case for two distinct reasons. On the one hand, if $W'_{A'B'}$ is defined on spaces $\{A_I',A_O',B_I',B_O'\}$ with $d_{X_I'} > d_{X_I}$ and $d_{X_O'} > d_{X_O}$ for $X\in \{A,B\}$, then the signalling robustness of $W'_{A'B'}$ can exceed that of $W^{A \rightarrow B}$ (since $d_{\bar{O}'} > d_{\bar{O}}$), implying that $W^{A \rightarrow B}$ cannot be transformed to all $W'_{A'B'}$ by means of free transformations. 

On the other hand, even when focusing on transformations that do not change (or only decrease) the respective dimensions, a free adapter cannot change the causal ordering of the process matrix it acts on, such that $W^{A \rightarrow B}$ cannot be transformed to, for example, a process that is ordered $B\vec{\prec} A$, or a process that is causally non-separable. At best, then, $W^{A \rightarrow B}$ and $W^{B \rightarrow A}$ can be the most resourceful processes for all process matrices that are of ordering $A\prec B$ and $B\prec A$, respectively, and that do not exceed them with respect to the dimensions of the involved spaces (we will henceforth assume this latter property and not mention it explicitly anymore). It will turn out that this is indeed the case, both for two- and the three-party case. 

Interestingly, the causally separable process
\begin{gather}\label{eqn::Wmix}
    W^\text{mix}_q\coloneqq (1-q)\,W^{A\to B}+q\,W^{B\to A},
\end{gather}
which is a convex mixture of two fully signalling processes, one in each direction, also satisfies $\Rcal_s(W^\text{mix}_q)=d_{\bar O}^2-1$, which can be seen by setting $S = \Phi^+_{A_OB_I} \otimes \Phi^+_{B_OA_I}$ in the dual SDP~\eqref{sdp::parallel_dual}. 

Processes of this form do not have a definite causal order (they are causally separable, though), but since free adapters cannot create causal non-separability, they cannot be transformed to \textit{all} process matrices. More interestingly, as we will see, there are causally non-separable processes that have a lower signalling robustness than  $W^\text{mix}_q$, implying that the signalling robustness does not impose a total order in the set of all process matrices. 

Let us now show that, within the set of two-party ordered processes $W^{A\prec B}$, the above process $W^{A\rightarrow B}$ is indeed the most valuable in the sense that \textit{all} two-party ordered processes $W^{A\prec B}$ can be obtained from it. This can be easily seen by considering that every $W^{A\prec B}$ can be written as a concatenation of an initial pure state $\Psi_{A_IE}$ on $A_I$ and some appropriate ancillary system $E$ and a CPTP map $A_OE \rightarrow B_I$ with corresponding Choi matrix $\Gamma_{A_OEB_I}$, such that $W^{A\prec B}_{AB} = \Psi_{A_IE} \star \Gamma_{A_OEB_I} \star \ident_{B_O}$~\cite{chiribella_theoretical_2009}. It is straightforward to see that, by performing a local adapter in her laboratory, Alice can prepare all states $\Psi_{A_IE}$ and implement all maps $\Gamma_{A_OEB_I}$ which, when applying such adapters to $W^{A\rightarrow B}$, allows her to produce all possible two-party ordered processes $W^{A\prec B}$ (see Fig.~\ref{fig::Most_val_A_prec_B} for a graphical representation). We thus have the following proposition:
\begin{proposition}
\label{prop::TransfBipartite}
Every process matrix $W'_{X'Y'}$ with causal ordering $X'\prec Y'$ can be obtained from $W^{X\rightarrow Y}$ via free adapters if the dimensions of the spaces $W'_{X'Y'}$ is defined on do not exceed those of the corresponding spaces of $W^{X\rightarrow Y}$.
\end{proposition}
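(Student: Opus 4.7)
The plan is to construct an explicit \texttt{LOSE} adapter (which is automatically non-signalling, since $\Theta_\texttt{LOSE}\subset\Theta_\texttt{NS}$) that carries $W^{X\to Y}$ to the target $W'_{X'Y'}$. The key ingredient is the Stinespring-like representation of causally ordered processes recalled just before the proposition: every $W'_{X'Y'}$ with ordering $X'\prec Y'$ admits a decomposition
\begin{gather*}
W'_{X'Y'}=\Psi_{X_I'E}\star \Gamma_{X_O'EY_I'}\star \ident_{Y_O'},
\end{gather*}
for some pure state $\Psi_{X_I'E}$ and some CPTP map with Choi matrix $\Gamma_{X_O'EY_I'}$. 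My goal is to realize the right-hand side of this identity as the output of a suitable adapter applied to $W^{X\to Y}=\tfrac{\ident_{X_I}}{d_{X_I}}\otimes\Phi^+_{X_OY_I}\otimes\ident_{Y_O}$.

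Concretely, I would take the adapter to be a product $\Upsilon^\texttt{LOSE}=C_X\otimes C_Y$ of a local comb on each party's side, with respective causal orderings $X_I\prec X_I'\prec X_O'\prec X_O$ and $Y_I\prec Y_I'\prec Y_O'\prec Y_O$. Alice's comb (i) discards the incoming $X_I$, (ii) prepares $\Psi_{X_I'E}$ with a private ancilla $E$ and sends $X_I'$ to the user, (iii) upon receiving $X_O'$ from the user, applies $\Gamma$ to $(X_O',E)$ to produce a state on $Y_I'$, and isometrically embeds this state into $X_O$, which is feasible since the hypothesis ensures $d_{Y_I'}\le d_{Y_I}=d_{X_O}$. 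Bob's comb (i) applies the adjoint partial isometry on the signal arriving at $Y_I$ to recover the $Y_I'$ register and forwards it to the user, (ii) discards whatever arrives on $Y_O'$ and emits an arbitrary fixed state on $Y_O$, which will be swallowed by the trailing $\ident_{Y_O}$. By construction the combined operator lies in $\Theta_\texttt{LOSE}$, and positivity together with the correct normalization are automatic from Def.~\ref{def::Free_Adapters_prime}.

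The remaining step is to verify, via the link product, that $\Upsilon^\texttt{LOSE}\star W^{X\to Y}=W'_{X'Y'}$. The factor $\tfrac{\ident_{X_I}}{d_{X_I}}$ is consumed by Alice's initial discarding and contributes the scalar $1$; the link between Alice's isometric embedding, the maximally entangled $\Phi^+_{X_OY_I}$, and Bob's inverse isometry collapses to the identity channel on $Y_I'$, along which $\Gamma$'s output is delivered; finally, Bob's discarding of $Y_O'$ together with the trailing $\ident_{Y_O}$ reproduces the factor $\ident_{Y_O'}$. The resulting expression is precisely the Stinespring decomposition of $W'_{X'Y'}$.

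The only genuinely technical point is the bookkeeping of dimensional mismatches when some primed space is strictly smaller than its unprimed counterpart, which is handled uniformly by an isometric embedding on one side of $\Phi^+_{X_OY_I}$ and its Hermitian adjoint completed to a CPTP map on the other; the remaining spaces ($X_I'$, $X_O'$, $Y_O'$) carry no constraint because they are either prepared, received, or discarded locally. I do not anticipate a substantive obstacle: the argument is essentially the observation that party $X$ alone can, via pre- and post-processing inside her own laboratory, simulate the preparation $\Psi$ and the channel $\Gamma$ whose composition with the identity wire of $W^{X\to Y}$ exhausts all ordered processes compatible with the dimensional constraints.
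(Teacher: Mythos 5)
Your proposal is correct and follows essentially the same route as the paper: decompose the target ordered process as $\Psi_{X_I'E}\star\Gamma_{X_O'EY_I'}\star\ident_{Y_O'}$ and realize $\Psi$ and $\Gamma$ via a local comb in party $X$'s laboratory, using the identity wire of $W^{X\to Y}$ to deliver $\Gamma$'s output to the other party. The only difference is that you spell out the dimension bookkeeping with an isometric embedding and its adjoint, which the paper leaves implicit.
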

The same holds true in the three-party case, where, for example, the process matrix
\begin{gather}
    W^{A\rightarrow B \rightarrow C} = \eta_{A_I} \otimes \Phi^+_{A_OB_I} \otimes \Phi^+_{B_OC_I} \otimes \ident_{C_O}
\end{gather}
can be transformed to all processes of ordering $A\prec B \prec C$. We provide the proof in App.~\ref{app::tri_part_trans}, where we also present a conjecture for an ordered process that cannot be reached in the four-party case. Importantly, this latter four-party conjecture -- or any statement of this kind -- only make sense if we put restrictions on the respective dimensions; having identity channels on arbitrary dimensions at hand, will allow one to implement all causally ordered processes on (sufficiently) lower dimensions by simply using the additional dimensions as a memory carrier. This is similar to the analogous case in the resource theories of entanglement, coherence and purity, where all states on lower dimensions can be achieved from a single resourceful state of sufficiently large dimension~\cite{plbnio_introduction_2007, horodecki_quantum_2009, streltsov_colloquium_2017, gour_resource_2015}. 

\begin{figure}[t]
    \centering
    \includegraphics[width=0.95\linewidth]{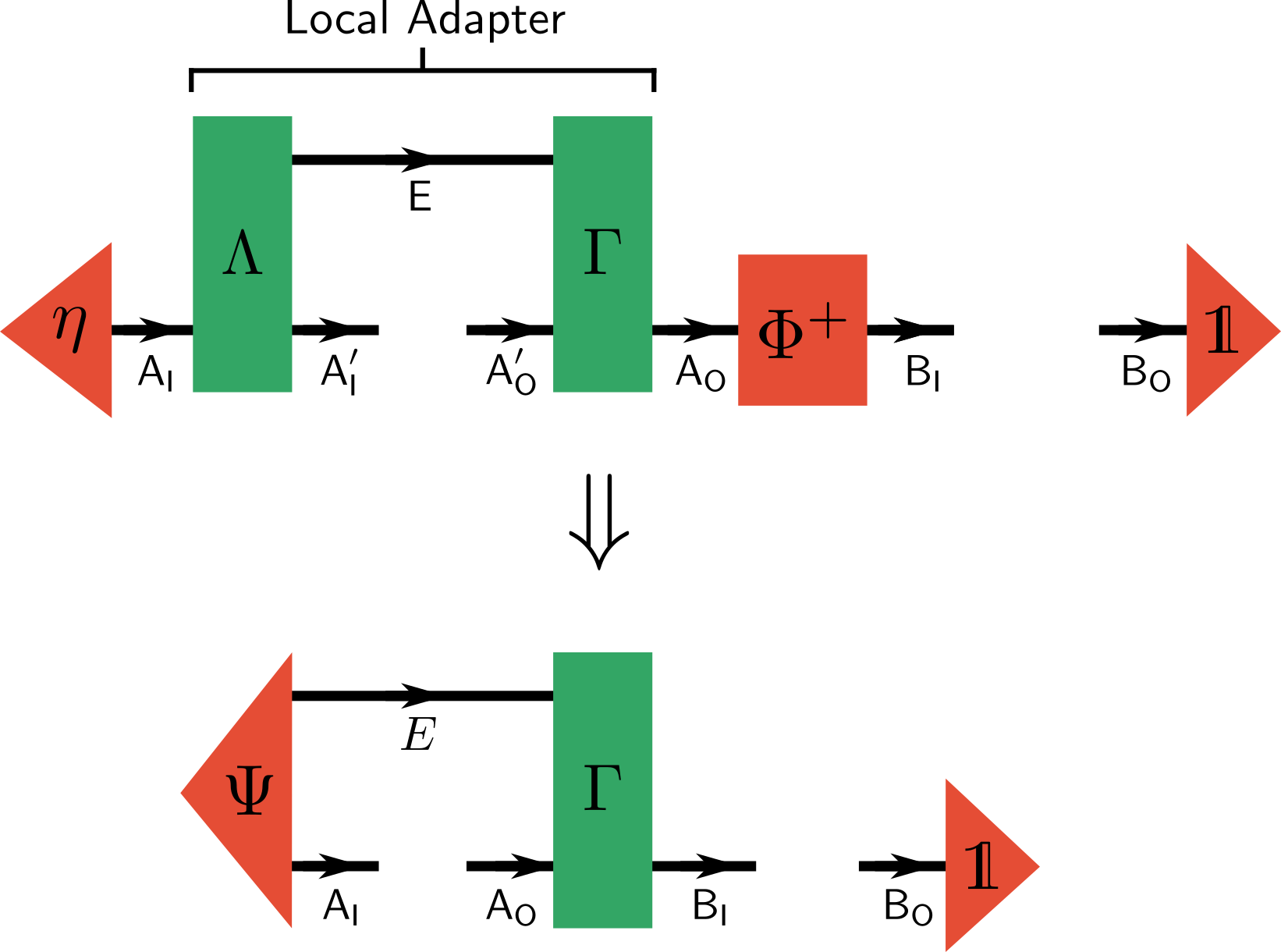}
    \caption{\textbf{Constructing all $A\prec B$ two-party combs from $W^{A \rightarrow B}$.} Any causally ordered comb $W^{A\prec B}$ can be written as a circuit with an initial system-environment state and a CPTP map that acts on the system \textit{and} the environment. By applying a local adapter, Alice can thus transform $W^{A\rightarrow B}$ to any process $W^{A\prec B}$ (with system dimensions smaller or equal than those of $W^{A\rightarrow B}$, that is). For convenience, we have dropped the primes in the system labeling in the bottom panel.}
    \label{fig::Most_val_A_prec_B}
\end{figure}

%%%%%%%%%%%%%%%%%%%%%%%%%%%%%%%%%%%%%%%%%
\subsection{Transformations of bipartite processes}\label{sec::TransfBipar}

As we have seen, some process matrices cannot be reached by transforming the most robust bipartite process with free adapters. 
The first example of this kind we discuss is the process $W^\text{OCB}_{A'B'}$ (named after Oreshkov, Costa and Brukner~\cite{OreshkovETAL2012}), a two-party, causally non-separable process matrix introduced in Ref.~\cite{OreshkovETAL2012} and defined as
\begin{gather}\label{eqn::w_ocb}
\begin{split}
    &W^\text{OCB} \\
    &= \frac{1}{4}\left[\ident_{AB} +\frac{1}{\sqrt{2}}\left(\sigma_{A_O}^z \sigma_{B_I}^z + \sigma_{A_I}^z \sigma_{B_I}^x  \sigma_{B_O}^z \right)\right]\, ,
\end{split}
\end{gather}
where $\sigma^z$ and $\sigma^x$ are Pauli matrices and we have omitted the respective identity matrices and tensor products. This causally non-separable process cannot be reached from $W^{A\rightarrow B}$ or $W^{B\rightarrow A}$, since free adapters cannot map causally ordered to causally non-separable processes. By using the SDP~\eqref{sdp::parallel_primal}, we find that the signalling robustness of  $W^\text{OCB}$ is $\Rcal_s(W^\text{OCB})=1$. Consequently, it is also impossible to transform $W^\text{OCB}$ to $W^{A\rightarrow B}$ or $W^{B\rightarrow A}$ by means of free adapters, since the latter two processes have -- when all systems are qubits -- a signalling robustness of $3$. Expectedly, there is no total order in the set of processes with respect to the free transformations $\Theta_{\texttt{NS}}$ (this was already clear from the fact that processes of order $A\vec{\prec} B$ cannot be transformed to processes of order $B\vec{\prec} A$ and vice versa by means of free adapters). 

While $W^\text{OCB}$ cannot be transformed to $W^{A\rightarrow B}$ by means of free transformations, there exist processes $W^{A\vec{\prec} B}$ with causal order $A\vec{\prec} B$ such that $W^{A\vec{\prec} B} = W^\text{OCB} \star \Upsilon$ for some free adapter $\Upsilon \in \Theta_{\texttt{NS}}$. More generally, in the two-party case, any causally non-separable process matrix $W$ can be `degraded' to a process of causal order $A\vec{\prec} B$ or $B\vec{\prec} A$, i.e., there exists a free adapter $\Upsilon$ such that $W \star \Upsilon$ is of causal order $A\vec{\prec} B$ or $B\vec{\prec} A$. This can be seen from the following, more general Proposition:
\begin{proposition}
\label{prop::Degrad1}
For any process matrix $W$ with $\Rcal_s(W) >0$ there exists a free adapter $\Upsilon$ such that $W \star \Upsilon$ has a causal ordering $A\vec{\prec} B$ or $B\vec{\prec} A$.
\end{proposition}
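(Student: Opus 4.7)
The plan is to construct an explicit free (in fact, \texttt{LOSE}) adapter that ``projects'' $W$ onto a definite causal order. Since $\Rcal_s(W)>0$, $W$ does not lie in $\texttt{Free}$, and by Eq.~\eqref{eqn::parallel} $W$ must permit signalling in at least one direction. By the $A\leftrightarrow B$ symmetry of the statement, I only need to treat the case in which $W$ enables $A\to B$ signalling: there exist CPTP maps $M_A,M_A'$ with $W\star M_A\neq W\star M_A'$. The complementary case, where only $B\to A$ signalling is present, is handled by the mirrored adapter.

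I will take the candidate adapter to be $\Upsilon=\Upsilon_A\otimes\Upsilon_B$ with an identity-relabeling comb on Alice's side, $\Upsilon_A=\Phi^+_{A_IA_I'}\otimes\Phi^+_{A_OA_O'}$, and the comb $\Upsilon_B=\Phi^+_{B_IB_I'}\otimes\eta_{B_O}\otimes\ident_{B_O'}$ on Bob's side, where $\eta$ is any fixed state. Intuitively, Bob's sub-adapter forwards whatever signal $W$ delivers on $B_I$ to Bob's new input $B_I'$, plugs the fixed state $\eta$ into $B_O$ (so that $W$ receives no information from Bob), and decouples $B_O'$ from $W$ altogether. This severs any $B\to A$ signalling that might have flowed through $W$, while preserving $A\to B$ signalling. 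Both $\Upsilon_A$ and $\Upsilon_B$ manifestly satisfy the comb trace conditions of type $X_I\prec X_I'\prec X_O'\prec X_O$, so $\Upsilon\in\Theta_{\texttt{LOSE}}\subseteq\Theta_{\texttt{NS}}$ by Def.~\ref{def::Free_Adapters_prime}.

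Evaluating $W'=\Upsilon\star W$ via the link product, Alice's factor acts by pure relabeling while Bob's factor relabels $B_I\mapsto B_I'$, plugs $\eta$ into $B_O$, and leaves a free $\ident_{B_O'}$ tensor factor behind, yielding $W'=\tilde W_{A'B_I'}\otimes\ident_{B_O'}$ with $\tilde W=W\star\eta_{B_O}$. The explicit $\ident_{B_O'}$ factor forbids $B'\to A'$ signalling and places $W'$ in the $A'\prec B'$ class, cf.\ Eq.~\eqref{eqn::AprecB}. For the accompanying trace condition $\tr_{B_I'}\tilde W=\ident_{A_O'}\otimes\rho_{A_I'}$, I would use that $\tr_{B_I}\tilde W=W\star(\ident_{B_I}\otimes\eta_{B_O})$ equals Alice's local single-party process when Bob implements the CPTP ``discard-and-prepare-$\eta$'' map; any valid single-party process is forced into exactly this product form.

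Finally, to confirm that $W'$ is genuinely $A'\vec\prec B'$ and not merely parallel, I will track the $A\to B$ signalling through the adapter. Bob's local comb $W\star M_A$, as a valid single-party process, must take the form $\ident_{B_O}\otimes\sigma_{B_I}^{M_A}$, with $\sigma_{B_I}^{M_A}$ depending nontrivially on $M_A$ by hypothesis. Contracting with $\Upsilon_B$ then yields $W'\star M_{A'}=\sigma_{B_I'}^{M_A}\otimes\ident_{B_O'}$, which varies with $M_{A'}$ (identified with $M_A$ via $\Upsilon_A$), establishing the required $A'\to B'$ signalling. The main technical hurdle is the link-product bookkeeping that shows the adapter simultaneously severs $B\to A$ and preserves $A\to B$ signalling; once the single-party normalization is invoked, the remaining steps reduce to routine manipulations.
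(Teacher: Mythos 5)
Your proof is correct. The adapter you use is essentially the one in the paper's own proof --- both trace out one party's output leg of $W$ and feed a fixed state into it (the paper takes $\eta=\ident_{B_O}/d_{B_O}$, so that $\Upsilon^{(B_O)}\star W={}_{B_O}W$) --- but the two arguments certify that the resulting process is not parallel in genuinely different ways. The paper argues algebraically and by contradiction: if both ${}_{A_O}W$ and ${}_{B_O}W$ were parallel, then ${}_{A_O}W={}_{B_O}W={}_{A_OB_O}W$, and inserting this into the identity ${}_{A_O}W+{}_{B_O}W=W+{}_{A_OB_O}W$ (a consequence of $L_V[W]=W$) forces $W={}_{A_OB_O}W$, contradicting $\Rcal_s(W)>0$; this shows that at least one of the two trace-and-replace adapters works without identifying which. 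You instead argue operationally: you invoke the fact that $W\notin\texttt{Free}$ means some pair of CPTP maps is distinguishable through $W$, choose which output to sever accordingly, and track the surviving signalling through the adapter using the normal form $W\star M_A=\ident_{B_O}\otimes\sigma^{M_A}_{B_I}$ of single-party processes. Both routes are sound; the paper's is shorter and avoids both the case split and the (asserted but standard) operational characterization of $\texttt{Free}$, while yours is constructive about which causal order survives, allows an arbitrary replacement state $\eta$, and makes explicit that the adapter is a product of local combs and hence lies in $\Theta_{\texttt{LOSE}}$ --- which the paper's adapter does as well, so both versions establish the claim for all three sets of free transformations.
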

\begin{proof}
For the proof, note that $\Rcal_s(W) >0$ implies $W \neq {}_{A_OB_O} W$. Now, consider the free adapter $\Upsilon^{(X_O)}$ that traces out the output space $X_O$ and replaces it by a maximally mixed state, i.e., $\Upsilon^{(X_O)} \star W = {}_{X_O}W$. The resulting process is causally ordered $A\prec B$ if $X = B$ and $B\prec A$ if $X = A$. It remains to show that for at least one of these possibilities we do not have $A||B$. To see this, imagine that ${}_{X_O}W$ satisfies $A||B$ for all $X$. In this case we have 
\begin{gather}
    {}_{A_O}W = {}_{A_OB_O}W \quad \text{and} \quad  {}_{B_O}W = {}_{A_OB_O}W\, . 
\end{gather}
Inserting this into $L_V[W] = W$ yields $W = {}_{A_OB_O} W$ which contradicts the assumption $\Rcal_s(W) > 0$.
\end{proof}
Naturally, the above proposition is trivial if $W$ already has a definite causal order, but becomes more insightful for the case where $W$ corresponds to a mixture of causal orders or a causally non-separable process. 

Finally, one might wonder if there are processes that can be transformed into both definite causal orderings $A\vec \prec B$ \textit{and} $B\vec \prec A$ by means of free adapters. An answer to this question is provided by the following Proposition: 
\begin{proposition}
    Any causally non-separable process matrix on two parties can be transformed into a process matrix with ordering $A\vec \prec B$, as well as into one with ordering $B\vec \prec A$, by means of free adapters. 
\end{proposition}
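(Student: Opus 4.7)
The plan is to re-use the free adapter $\Upsilon^{(X_O)}$ introduced in the proof of Prop.~\ref{prop::Degrad1} — which acts on a process by replacing $X_O$ with the maximally mixed state, i.e.\ $\Upsilon^{(X_O)} \star W = {}_{X_O}W$ — and to apply it for both choices $X=A$ and $X=B$. This adapter is implementable by a single party discarding its output and preparing a maximally mixed state, so it lies in $\Theta_{\texttt{LOSE}}\subset\Theta_{\texttt{NS}}$ and is free. Applying $\Upsilon^{(B_O)}$ to $W$ yields ${}_{B_O}W$, which automatically has ordering $A\prec B$, and applying $\Upsilon^{(A_O)}$ yields ${}_{A_O}W$, which has ordering $B\prec A$. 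All that remains is to show that, when $W$ is causally non-separable, these two outputs are of the \emph{strict} orderings $A\vec\prec B$ and $B\vec\prec A$, respectively, i.e.\ that ${}_{B_O}W\neq {}_{A_OB_O}W$ and ${}_{A_O}W\neq {}_{A_OB_O}W$.

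I would prove the contrapositive for the first inequality: assuming ${}_{B_O}W = {}_{A_OB_O}W$, the goal is to deduce that $W$ is itself of ordering $B\prec A$, hence causally separable, contradicting the assumption. Substituting the hypothesis into the validity constraint $W=L_V[W]$ of Eq.~\eqref{eqn::DefProcMat} cancels the second and third terms of $L_V[W]$, and applying the projector ${}_{B_I}$ to the hypothesis yields ${}_{B_IB_O}W = {}_{A_OB_IB_O}W$, which cancels the fourth and fifth terms as well. The identity then collapses to $W = {}_{A_O}W - {}_{A_IA_O}W + {}_{A_IA_OB_O}W$. Next, acting on this reduced identity with ${}_{A_IA_O}$ and using the idempotence and absorption rules ${}_X({}_XW)={}_XW$ and ${}_X({}_YW)={}_{XY}W$ produces ${}_{A_IA_O}W = {}_{A_IA_OB_O}W$. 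Substituting this last equality back into the reduced identity collapses it to $W = {}_{A_O}W$, which by the discussion surrounding Eq.~\eqref{eqn::AprecB} is precisely the defining condition for $W$ to have ordering $B\prec A$. By the symmetric argument, exchanging the roles of Alice and Bob, ${}_{A_O}W = {}_{A_OB_O}W$ forces $W = {}_{B_O}W$, i.e.\ ordering $A\prec B$.

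Combining the two implications, causal non-separability of $W$ rules out both ${}_{B_O}W = {}_{A_OB_O}W$ and ${}_{A_O}W = {}_{A_OB_O}W$; the free adapters $\Upsilon^{(B_O)}$ and $\Upsilon^{(A_O)}$ therefore map $W$ to proper process matrices of strict orderings $A\vec\prec B$ and $B\vec\prec A$, respectively, proving the claim. The main obstacle is purely bookkeeping: tracking how the seven terms of $L_V[W]$ pair off and cancel under the hypothesis and its derived consequences requires care, but becomes mechanical once the self-duality, idempotence and trace-preservation of the operators ${}_X\sbt$ recalled after Eq.~\eqref{eqn::DefProcMat} are applied systematically.
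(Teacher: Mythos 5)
Your proof is correct, and the core of it differs from the paper's in an interesting way. Both arguments share the same skeleton: apply a local discard-and-reprepare adapter on one party's output (a $\Theta_{\texttt{LOSE}}$ operation, hence free), observe that the result is causally ordered, and then show by contradiction that it cannot land in $\texttt{Free}$ unless $W$ already had a definite causal order. Where you diverge is in the contradiction step. The paper replaces Alice's output by a \emph{pure} state $\ketbra{\Psi}{\Psi}$, assumes the output is non-signalling for every element of a pure-state basis of $\Bcal(\Hcal_{A_O})$, and reconstructs $W$ tomographically via the dual basis $\{\Delta^{(i)}_{A_O}\}$ to exhibit it as $A\prec B$ ordered. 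You instead use the single depolarizing adapter $\Upsilon^{(X_O)}$ from Prop.~\ref{prop::Degrad1} and run a purely algebraic cancellation inside the validity identity $W=L_V[W]$: from ${}_{B_O}W={}_{A_OB_O}W$ you collapse $L_V[W]$ to ${}_{A_O}W-{}_{A_IA_O}W+{}_{A_IA_OB_O}W$, which is exactly $L_{B\prec A}(W)$, and then extract ${}_{A_IA_O}W={}_{A_IA_OB_O}W$ and $W={}_{A_O}W$. I checked the cancellations and the application of ${}_{A_IA_O}$ to the reduced identity; they are all valid uses of idempotence and commutation of the ${}_X\sbt$ operators. What your route buys is economy: one canonical adapter instead of a basis-indexed family, no dual-basis construction, and the statement of a clean standalone equivalence (for valid $W$, ${}_{B_O}W\in\texttt{Free}$ iff $W$ is $B\prec A$ ordered) that slots directly onto Prop.~\ref{prop::Degrad1}. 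What the paper's route buys is a slightly more operational picture (feeding in states and observing Bob's marginal) and an explicit decomposition of $W$ witnessing its causal order. Both proofs also admit the same relaxation noted in the paper: causal non-separability can be weakened to $W$ not being of a single definite order.
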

Before proving the above statement, we emphasize that it is in line with Prop.~\ref{prop::Degrad1}, since we have $\Rcal_s(W) > 0$ for all causally non-separable process matrices, but goes beyond it, since now transformation to \textit{both} definitive causal orderings is possible. 
\begin{proof}
For the proof, consider a free trace-and-replace adapter that traces out the output of Alice's operation and feeds forward a fixed state $\ketbra{\Psi}{\Psi}^\mathrm{T}$ (where the additional transpose is for notational convenience), i.e., $W' = \Upsilon \star W = \braket{\Psi_{A_O}|W|\Psi_{A_O}} \otimes \ident_{A_O'}$. In what follows, for ease of notation, without risk of confusion, we identify $A_O$ and $A_O'$, thus replacing  $\ident_{A_O'}$ by $ \ident_{A_O}$. Since the corresponding adapter is a local operation in Alice's laboratory, it corresponds to a free adapter, and the resulting $W'$ is causally ordered $B\prec A$ (but potentially $B||A$ holds). Now, we show that assuming that $W'$ is ordered $B||A$ for all possible $\ket{\Psi}$ contradicts the assumption that the original $W$ was causally non-separable. 

To this end, we choose a basis $\{\ketbra{\Psi^{(i)}_{A_O}}{\Psi^{(i)}_{A_O}}\}_{i=1}^{d_{A_O}^2}$ of $\Bcal(\Hcal_{A_O})$ consisting of pure states. Assuming that $W'_i:= \braket{\Psi^{(i)}_{A_O}|W|\Psi^{(i)}_{A_O}} \otimes \ident_{A_O'}$ is causally ordered $B||A$ for all $i$ then implies 
\begin{gather}
    W_{i}' = \rho_{A_IB_I}^{(i)} \otimes \ident_{A_OB_O} \quad \forall i\, ,
\end{gather}
where $\{\rho_{A_IB_I}^{(i)}\}$ are quantum states. With this, we can express $W$ as 
\begin{gather}
\label{eqn::DecompDual}
    W = \sum_{i=1}^{d_{A_O}^2} \rho_{A_IB_I}^{(i)} \otimes \Delta_{A_O}^{(i)} \otimes \ident_{B_O}\, ,
\end{gather}
where $\{\Delta_{A_O}^{(i)}\}_{i=1}^{d_{A_O}^2}$ is the dual set to the basis $\{\ketbra{\Psi^{(i)}_{A_O}}{\Psi^{(i)}_{A_O}}\}_{i=1}^{d_{A_O}^2}$, in the sense that $\tr(\Delta_{A_O}^{(j)}\ketbra{\Psi_{A_O}^{(i)}}{\Psi_{A_O}^{(i)}}) = \delta_{ij}$. The validity of Eq.~\eqref{eqn::DecompDual} can be checked by direct insertion. Since the process matrix of Eq.~\eqref{eqn::DecompDual} factorizes into an identity matrix on $B_O$ and the rest, we see that it is causally ordered $A\prec B$, contradicting the assumption that $W$ is causally non-separable. Consequently, there must exist at least one $\ket{\Psi_{A_O}^{(i)}}$ for which $W_i'$ is causally ordered $B\vec \prec A$. Running the same argument for a trace-and-replace adapter on Bob's side then shows that it must also be possible to transform a causally non-separable process matrix $W$ to a process of causal order $A\vec \prec B$. 
\end{proof}
We note that, in the above proof, the requirement of causal non-separability can be relaxed. Indeed, it is sufficient if $W$ is a convex combination of causal orders but \textit{not} of the form $A\prec B$, $B\prec A$, or $A||B$. 

The two above propositions provide an a posteriori justification for the inclusion of causally indefinite processes as valid objects in a resource theory of causal connection; while free adapters do not allow the transitioning from one fixed causal order to another, causally indefinite processes in a sense `bridge the gap' between the two sets, since each causally indefinite process can be transformed to both $A\vec\prec B$ as well as $B\vec \prec A$ processes. This, then, both raises the question, whether there exists a most resourceful causally non-separable process, as well as whether there exists a causally indefinite process that allows one to reach \textit{all} causally ordered ones by means of free adapters. We return to these questions in Sec.~\ref{sec::Montones_CausalRobust}, where we introduce a second monotone of causal connection that is tailored to the distinction between causally separable and causally non-separable processes. First though, we extend some of the results we found above for the two-party case to the multi-party scenario.

%%%%%%%%%%%%%%%%%%%%%%%%%%%%%%%%%%%%%%%%%
\subsection{Signalling robustness of multipartite causally ordered processes}\label{subsec::robust_multipartite}

Some of the results on signalling robustness translate straightforwardly to the multi-party case. Here, we provide a couple of relevant cases for which a bound on the signalling robustness can be found. The somewhat technical proofs are relegated to App.~\ref{app::robust_multipartite}. 

For arbitrary many parties with causal order $A\prec B \prec C \prec \cdots$ a generalization of the fully signalling process $W^{A\to B}$ indeed maximizes $\Rcal_s(W^{A\prec B\prec C\prec \cdots})$. To see this, we require the following proposition:
\begin{proposition}
\label{prop::maxRordered}
Let $\textup{\texttt{Proc}}_{1:N}$ be the set of all causally ordered processes on $N$ parties with causal order $X^{(1)}\prec X^{(2)} \prec \cdots \prec X^{(N)}$. For any $W\in \textup{\texttt{Proc}}_{1:N}$ we have 
\begin{gather}
    \Rcal_s(W) \leq \prod_{i=1}^{N-1} d_{X^{(i)}_O}^{2} - 1:= d_{\bar{O}}^2-1,
\end{gather} 
where  $d_{X^{(i)}_O}$ is the output dimension of party $X^{(i)}_O$. 
\end{proposition}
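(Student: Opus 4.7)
The plan is to mirror the two-party argument of Prop.~\ref{prop::Upbound}: establish a causally-ordered multi-party analogue of Lem.~\ref{lem::UpBound}, namely
\begin{equation*}
\Big(\prod_{i=1}^{N-1} d_{X_O^{(i)}}^{2}\Big)\cdot {}_{X_O^{(1)}\cdots X_O^{(N-1)}}W \;-\; W \;\geq\; 0
\end{equation*}
for every $W\in\texttt{Proc}_{1:N}$, and then invoke the dual SDP~\eqref{sdp::parallel_dual} to extract the claimed bound on $\Rcal_s(W)$.

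Granted this inequality, the dual-SDP step mirrors the two-party computation. For any feasible $S\geq 0$ with $\ident-\tr_{X_O^{(1)}\cdots X_O^{(N)}}S\geq 0$, multiplying the inequality by $S$ and taking the trace, together with self-duality of ${}_X\sbt$, gives $\tr(WS)\leq \big(\prod_{i=1}^{N-1} d_{X_O^{(i)}}^{2}\big)\tr\big(W\cdot {}_{X_O^{(1)}\cdots X_O^{(N-1)}}S\big)$. Since any causally ordered comb in $\texttt{Proc}_{1:N}$ satisfies $W=\ident_{X_O^{(N)}}\otimes\tilde W$ (the last party's output is acted upon trivially), we have ${}_{X_O^{(N)}}W=W$, so the right-hand side equals $\big(\prod_{i=1}^{N-1} d_{X_O^{(i)}}^{2}\big)\tr\big(W\cdot {}_{X_O^{(1)}\cdots X_O^{(N)}}S\big)$. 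Expanding ${}_{X_O^{(1)}\cdots X_O^{(N)}}S=\tfrac{\ident}{\prod_j d_{X_O^{(j)}}}\otimes\tr_{X_O^{(1)}\cdots X_O^{(N)}}S$, applying the dual constraint $\tr_{X_O^{(1)}\cdots X_O^{(N)}}S\leq\ident$, and using $\tr W=\prod_j d_{X_O^{(j)}}$ shows that this expression is at most $\prod_{i=1}^{N-1} d_{X_O^{(i)}}^{2}$; subtracting $1$ yields the proposition.

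To prove the key inequality I would proceed by induction on $N$, exploiting the recursive structure of causally ordered combs. The base case $N=2$ is a sharpening of Lem.~\ref{lem::UpBound}: since a causally ordered two-party comb satisfies $W=\ident_{B_O}\otimes W_{AB_I}$ (cf.\ Eq.~\eqref{eqn::AprecB}), applying the CP map $\Dcal^{(A_O)}[\sbt]=d_{A_O}^{2}\cdot {}_{A_O}\sbt-\sbt$ directly to $W$ gives $\Dcal^{(A_O)}[W]=\ident_{B_O}\otimes\Dcal^{(A_O)}[W_{AB_I}]\geq 0$, which is the desired inequality with the sharper prefactor $d_{A_O}^{2}$ in place of $d_{\bar O}^{2}$. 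For the inductive step, I would peel off the last party using $W=\ident_{X_O^{(N)}}\otimes W_{N-1}$ together with the hierarchy condition of the form of Eqs.~\eqref{eqn::Hierarchy1}--\eqref{eqn::Hierarchy3}, which gives $\tr_{X_I^{(N)}}W_{N-1}=\ident_{X_O^{(N-1)}}\otimes W_{N-2}$ for a genuine $(N-1)$-party causally ordered comb $W_{N-2}$. Applying $\Dcal^{(X_O^{(N-1)})}$ to $W_{N-1}$ and combining its CP positivity with the inductive hypothesis applied to $W_{N-2}$ should yield the inequality for $W$.

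The main obstacle is the inductive step. The intermediate object $W_{N-1}$ is not itself a valid $(N-1)$-party process matrix but rather a semi-comb of the type depicted in Fig.~\ref{fig::Caus_Ordered}, so the inductive hypothesis must be invoked on $W_{N-2}$ rather than directly on $W_{N-1}$. Ensuring that $\Dcal^{(X_O^{(N-1)})}$ acts compatibly with the trace condition linking $W_{N-1}$ to $W_{N-2}$ -- so that the factorization $\ident_{X_O^{(N-1)}}\otimes W_{N-2}$ is preserved and the CP positivity from $\Dcal^{(X_O^{(N-1)})}$ combines cleanly with the inductive positivity on $W_{N-2}$ -- is the central technical hurdle. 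Analogous to how Lem.~\ref{lem::UpBound} crucially used $L_V[W]=W$ to simplify the expansion of $\bar\Dcal^{(A_O)}\circ\bar\Dcal^{(B_O)}[W]$, here one must carefully exploit the full hierarchy of causal-ordering constraints to make the peeling argument go through.
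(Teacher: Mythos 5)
Your strategy is sound and genuinely different from the paper's. The paper works with the \emph{primal} definition of $\Rcal_s$: it decomposes the completely depolarizing channel on each output $X_O^{(i)}$, $i<N$, as $\tfrac{1}{d^2}\Ical + \tfrac{d^2-1}{d^2}\widetilde\Gamma$ with $\widetilde\Gamma$ CPTP, so that ${}_{X_O^{(1)}\cdots X_O^{(N-1)}}W = \tfrac{1}{d_{\bar O}^2}W + (1-\tfrac{1}{d_{\bar O}^2})T$ with $T\in\texttt{Proc}$ an explicit admissible noise term, and then reads the bound off Eq.~\eqref{eqn::ParRobust}. You instead prove an operator inequality and feed it into the dual SDP~\eqref{sdp::parallel_dual}, mirroring Prop.~\ref{prop::Upbound}. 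Your dual-side computation is complete and correct, and it cleanly isolates where the causal order actually enters: only through ${}_{X_O^{(N)}}W=W$, i.e., the existence of a definite last party — which is consistent with the paper's own corollary extending the bound to that larger set.

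The one step you leave open — the inductive ``peeling'' argument you flag as the central technical hurdle — is in fact unnecessary, because your key inequality holds for \emph{every} positive semidefinite $W$ with no causal-ordering input at all. Writing $P_k := {}_{X_O^{(1)}\cdots X_O^{(k)}}W$ (so $P_0=W$) and $d_k := d_{X_O^{(k)}}$, a telescoping decomposition gives
\begin{equation*}
\Big(\prod_{i=1}^{N-1}d_i^2\Big)P_{N-1} - W \;=\; \sum_{k=1}^{N-1}\Big(\prod_{i=1}^{k-1}d_i^2\Big)\Big(d_k^2\cdot{}_{X_O^{(k)}}P_{k-1} - P_{k-1}\Big) \;=\; \sum_{k=1}^{N-1}\Big(\prod_{i=1}^{k-1}d_i^2\Big)\,\Dcal^{(X_O^{(k)})}[P_{k-1}],
\end{equation*}
and every summand is positive semidefinite: each $P_{k-1}\geq 0$ because it is the image of $W\geq 0$ under CP trace-and-replace maps, and each $\Dcal^{(X_O^{(k)})}$ is CP by the argument in Lem.~\ref{lem::UpBound}. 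So no induction on the comb hierarchy, and no analogue of the $L_V(W)=W$ manipulation used in Lem.~\ref{lem::UpBound}, is needed; your concern about compatibility between $\Dcal^{(X_O^{(N-1)})}$ and the trace conditions dissolves. With this observation your proof closes, and it arguably makes the structure more transparent than the paper's construction: the product-of-dimensions prefactor comes from pure positivity, while the causal structure is used exactly once, in the dual step.
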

We emphasize that here, unlike in the two-party case, we restrict our attention to the causally ordered case, such that there is no maximization over the output dimensions, but rather the bound on the signalling robustness only depends on all but the last output space, which is intuitively clear, since the last output space cannot signal to any of the other parties. The set of processes for which the above bound (or a variant thereof) holds can be slightly enlarged to the set of all processes with a definite last party and convex mixtures thereof (see App.~\ref{app::robust_multipartite}).

For the case where all involved dimensions are equal (the more general case is discussed in App.~\ref{app::robust_multipartite}), it is then straightforward to show the following corollary:  
\begin{corollary}
\label{cor::maxProcOrdered}
If all involved dimensions are equal, then the signalling robustness on the set $\textup{\texttt{Proc}}_{1:N}$ is maximized by the process $W^{X^{(1)} \rightarrow  \cdots \rightarrow X^{(N)}}$.
\end{corollary}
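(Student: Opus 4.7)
The plan is to verify that the generalized fully signalling process
\[
W^{X^{(1)}\to\cdots\to X^{(N)}} \;=\; \tfrac{\ident_{X^{(1)}_I}}{d}\otimes\bigotimes_{i=1}^{N-1}\Phi^+_{X^{(i)}_O X^{(i+1)}_I}\otimes\ident_{X^{(N)}_O}
\]
already saturates the upper bound $d^{2(N-1)}-1$ on $\Rcal_s$ established by Proposition~\ref{prop::maxRordered} (specialised to equal dimensions $d$). Since Proposition~\ref{prop::maxRordered} provides the upper bound over all of $\texttt{Proc}_{1:N}$, exhibiting a single element of that set at which the bound is attained is all that is required.

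First I would observe that this process is causally ordered $X^{(1)}\prec\cdots\prec X^{(N)}$: it is manifestly a concatenation of identity channels, and the hierarchy of trace conditions analogous to Eqs.~\eqref{eqn::Hierarchy1}--\eqref{eqn::Hierarchy3} can be checked leg by leg, so $W^{X^{(1)}\to\cdots\to X^{(N)}}\in\texttt{Proc}_{1:N}$. Next I would generalise the two-party dual witness used below Eq.~\eqref{eqn::MaxVal} and propose
\[
S \;=\; \ident_{X^{(1)}_I}\otimes\bigotimes_{i=1}^{N-1}\Phi^+_{X^{(i)}_O X^{(i+1)}_I}\otimes\tfrac{\ident_{X^{(N)}_O}}{d}
\]
as a candidate feasible point for the dual SDP~\eqref{sdp::parallel_dual} (suitably extended to $N$ parties).

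The bulk of the verification is then a short direct calculation. Feasibility of $S$ reduces to two checks: (i) $S\geq 0$, which is immediate since $S$ is a tensor product of positive semidefinite factors; and (ii) $\ident-\tr_{\text{out}}S\geq 0$, which I would obtain by partially tracing factor-by-factor: each $\tr_{X^{(i)}_O}\Phi^+_{X^{(i)}_O X^{(i+1)}_I}=\ident_{X^{(i+1)}_I}$ propagates identities through all input spaces $X^{(2)}_I,\ldots,X^{(N)}_I$, while $\tr_{X^{(N)}_O}(\ident_{X^{(N)}_O}/d)=1$, giving $\tr_{\text{out}}S=\ident_{\text{in}}$ and hence equality in the constraint.

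Finally I would compute $\tr(WS)$ by exploiting the fact that $W$ and $S$ factorise across the same tensor decomposition, so $\tr(WS)$ factorises into a product of local traces: on $X^{(1)}_I$ one gets $\tr(\ident/d)=1$, on $X^{(N)}_O$ one gets $\tr(\ident/d)=1$, and on each intermediate pair $X^{(i)}_O X^{(i+1)}_I$ one gets $\tr((\Phi^+)^2)=d^2$ (using $\Phi^+=d\,\ketbra{\phi^+}{\phi^+}$). Multiplying these yields $\tr(WS)=d^{2(N-1)}$, so $\Rcal_s(W^{X^{(1)}\to\cdots\to X^{(N)}})\geq d^{2(N-1)}-1$, and combining with the upper bound from Proposition~\ref{prop::maxRordered} closes the argument. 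No step looks genuinely hard here --- the only mild subtlety is keeping careful track of which spaces are being linked/traced in the $N$-party tensor product, but since all factors live on disjoint subsystems this is purely bookkeeping rather than a real obstacle.
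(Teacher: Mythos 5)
Your proposal is correct and follows essentially the same route as the paper: the appendix proof of Cor.~\ref{cor::maxProcOrdered} also exhibits the witness $S = \ident_{X^{(1)}_I}\otimes\Phi^+_{X^{(1)}_OX^{(2)}_I}\otimes\cdots\otimes\Phi^+_{X^{(N-1)}_OX^{(N)}_I}\otimes\ident_{X^{(N)}_O}/d$ as a feasible point of the dual SDP and evaluates $\tr(WS)$ to saturate the bound of Prop.~\ref{prop::maxRordered}. Your write-up is in fact slightly more explicit than the paper's (which compresses the feasibility check and the trace computation into one line, with a small typographical slip writing $\tr(SW)=d_{\bar O}^2-1$ where the dual objective $\tr(SW)-1$ is meant).
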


As we show in App.~\ref{app::tri_part_trans}, for three parties, $W^{X^{(1)} \rightarrow \cdots \rightarrow X^{(N)}}$ can also be transformed by means of free adapters to all other tripartite processes that are ordered in the same way, making it not only the most robust, but also the most resourceful process.

For the interpretation of the provided result, for simplicity, consider the case where all dimensions are equal. Then, it appears to be most advantageous from a causal connection perspective to perfectly pass information in one direction. Neither memory, nor convex mixture with different causal orders yield better causal connections. Upon numerical investigation, it appears like this bound can also not be beaten by causally non-separable processes in the tripartite case, and we conjecture that this holds for arbitrary numbers of parties. As mentioned, a more thorough investigation of the multipartite case can be found in App.~\ref{app::robust_multipartite}.

After this short discussion of the multi-party case, let us now return to the two-party scenario and analyse the interconvertibility of processes by means of free adapters in more detail, particularly within the set of causally non-separable processes, where it presents itself more layered than in the causally ordered one.

%%%%%%%%%%
\section{Resource theory of causal non-separability}\label{sec::Interconvertability}
%%%%%%%%%%
In the discussion of the resource theory of causal connection, causally non-separable processes formed the natural `bridge' between the sets of processes with different causal orders, and allowed for a discussion of the signalling structure of \textit{all} processes that quantum mechanics (in principle) allows for. Beyond this role for the investigation of signalling properties, causal non-separability can be considered a resource in its own right -- as evidenced, for example, by the fact that causal non-separability can allow for the violation of causal inequalities~\cite{OreshkovETAL2012, branciard_simplest_2015}, or outperform causally ordered processes when it comes to distinguishing non-signalling channels~\cite{chiribella_perfect_2012}. Somewhat unsurprisingly, the corresponding resource theory of causal non-separability is conceptually closely related to the resource theory of causal connection. Here, we first detail how the resource theory of causal non-separability can be developed following the same reasoning as above and show that, unlike in the case of causal connection, there is only \textit{one} meaningful set of free adapters when causal non-separability is the resource of interest. In order to quantify the resourcefulness of causal non-separability, we provide a corresponding resource monotone, the causal robustness (originally introduced in~\cite{araujo_witnessing_2015}), which also turns out to be a resource monotone for the resource theory of causal connection. Consequently, it will allow us to investigate the interconvertibility of causally non-separable processes under free adapters with respect to causal connection. 

%%%%%%%%%%%%%%%%%%%%%%%%%%%%%%%%%%%%%%%%%
\subsection{Uniqueness of the resource theory of causal non-separability}\label{sec::ResCausNonSep}

For simplicity, we limit our discussion of resource theories of causal non-separability to the two-party case. Naturally, the set of free objects in such a resource theories is the set \texttt{Sep} of causally separable processes. Following the same path we took for the resource theory of causal connection, the different sets of possible free transformations could be derived by imposing qualitatively different requirements on them. Such requirements on free operations\footnote{While mathematically they are the same type of objects, for better distinction, we use two different symbols, $\Xi$ and $\Upsilon$ for adapters in the resource theory of causal non-separability and that of causal connection, respectively.} $\Xi$ are:
\begin{itemize}
    \item[\textbf{R$1$'.}] To map process matrices to process matrices.
    \item[\textbf{R$2$'.}] To map causally separable to causally separable processes.
    \item[\textbf{R$3$'.}] To not allow for the creation of causal non-separability.
    \item[\textbf{R$4$'.}] To be implemented with only causally ordered resources between the parties.
\end{itemize}
Requirement R$1$' is the same as R$1$ and simply ensures that free operations are also admissible. Requirement R$4$' seems like the natural analogue of R$4$. As we have seen, there are adapters that are composed of signalling channels and still yield admissible adapters, one such example being $\Upsilon^{2SW}$.  However, unlike R$4$ for the resource theory of causal connection, R$4$' does not always lead to admissible adapters. To see this, consider an adapter $\Xi$ that is implemented by Alice and Bob by making use of a resource that allows for signalling from Bob to Alice. In many cases, the resulting adapter cannot be applied to a process matrix of causal ordering $A\prec B$ since causal loops would be closed. An example for such an adapter is $\Upsilon^{1SW} \notin \Theta_{{\texttt{A}}}$. In addition to potentially yielding `improper adapters', from a conceptual point of view, it is questionable why (and how), in a situation where Alice and Bob share a comb of a given causal order (say $A\prec B$), they should be able to signal in, say, the opposite direction when making an adapter. Intuitively then, the only types of adapters that Alice and Bob can implement without potentially running into logical contradiction seem to be those that do not require signalling, i.e., adapters in $\Theta_{\texttt{LOSE}}$. While this might be too restrictive -- as we mentioned, $\Upsilon^{2SW} \notin \Theta_{\texttt{LOSE}}$ would be an admissible adapter that can be implemented by means of signalling resources -- $\Theta_{\texttt{LOSE}}$ appears like the largest set of adapters that Alice and Bob can implement themselves that is not plagued with interpretational issues. Requirement R$4$' should thus be replaced by R$4$, implying that, at the operational level, where Alice and Bob are the ones to implement the free adapters, there is no fundamental distinction between the resource theory of causal non-separability and that of causal connection. 

Adapters that satisfy both requirements  R$1$' and R$2$' are the largest set of possible free transformations in a resource theory of causal non-separability, and we denote the corresponding set of admissible separability preserving adapters by $\Theta_{\texttt{ASP}}$. It is natural to compare this maximal set of free operation to $\Theta_{{\texttt{A}\texttt{FP}}}$, the corresponding maximal set of free operations for the resource theory of causal connection. As we have seen, there are free adapters $\Upsilon \in \Theta_{{\texttt{A}\texttt{FP}}}$ that map causally ordered processes to causally non-separable processes. Consequently, such adapters would not lie in $\Theta_{\texttt{ASP}}$. On the other hand, it is easy to construct adapters that satisfy $\Xi \in \Theta_\texttt{ASP}$ but $\Xi \notin \Theta_{\texttt{A}\texttt{FP}}$. An example for such an adapter is given by 
\begin{gather}
    \xi = \frac{1}{d_{A_O}d_{B_O}}\ident_{AB} \otimes W_{A'B'}\, ,
\end{gather}
where $W_{A'B'} \in \texttt{Sep}\setminus\texttt{Free}$. This adapter, when acting on a process matrix simply discards it and replaces it by a fixed, signalling process matrix (and thus does not lie in $\Theta_{\texttt{AFP}}$). With this, we see that, while they naturally share many common adapters, there is no clear hierarchy or inclusion relation between the two maximal sets of free adapters for the resource theory of causal connection and that of causal non-separability.

Requirement R$3$' appears like the natural analogy to the corresponding requirement R$3$ for the resource theory of causal connection. However, on its own, it does not even guarantee that the resulting set of adapters preserves $\texttt{Sep}$. Basically, R$3$' demands that, no matter what Alice and Bob do locally, when they have access to the adapter, they cannot use it to create a process matrix that is causally non-separable. Locally, the only operations they can perform to transform an adapter to a process matrix is to feed in a state and discard the final degrees of freedom, i.e., given an adapter $\Xi$, they can transform it to 
\begin{gather}
    W' = \rho_{A_I} \otimes \eta_{B_I} \otimes \ident_{A_OB_O} \star \Xi. 
\end{gather}
Now, since $\rho_{A_I} \otimes \eta_{B_I} \otimes \ident_{A_OB_O}$ corresponds to a non-signalling process, the requirement that $W' \in \texttt{SEP}$ is weaker than demanding that non-signalling processes are mapped to non-signalling processes, which, besides being admissible, is the restriction on adapters that lie in $\Theta_{{\texttt{A}\texttt{FP}}}$. Since adapters in $\Theta_{{\texttt{A}\texttt{FP}}}$ can create causally non-separable process matrices when acting on a causally separable one, this means that the set of free adapters emerging from requirement R$3$' is too big to be a meaningful set of free adapters for the resource theory of causal non-separability. On the other hand, combining requirements R$2$' and R$3$' (and R$1$') to restrict ourselves to adapters that preserve $\texttt{Sep}$ does not introduce anything new, since the resulting set of free adapters simply coincides with $\Theta_{\texttt{ASP}}$ (i.e., the set already obtained from the combination of R$1$' and R$2$'): Since $\rho_{A_I} \otimes \eta_{B_I} \otimes \ident_{A_OB_O}$ is a process matrix in $\texttt{Sep}$, satisfaction of R$2$' ensures that it cannot be mapped to a causally non-separable one, thus already implying satisfaction of requirement R$3$'. Consequently, while R$3$ yields an interesting and physically meaningful set of free adapters, its na{\"i}ve translation R$3$' for the resource theory of causal non-separability does not lead to additional restrictions/a meaningful new theory. 

We conclude thus, given that R$4$ coincides with R$4$', there seems to only be a \textit{single} meaningful resource theory of causal non-separability, namely the one where the free transformations are given by $\Theta_{\texttt{ASP}}$. This statement notwithstanding, one could, in principle, introduce other sets of adapters that preserve $\texttt{Sep}$. For example, $\Theta_{\texttt{NS}}$, the set of all non-signalling adapters preserves $\texttt{Sep}$, since all such adapters preserve causal order. However, there is no additional operational motivation to restrict oneself to this set of adapters as a natural set of free adapters in the resource of causal non-separability. 

It remains to briefly comment on the structure of adapters $\Xi \in \Theta_\texttt{ASP}$. As we have seen above, for the resource theory of causal connection, it is possible to phrase the properties of free adapters in terms of linear constraints (besides positivity that is). Mathematically, this is due to the fact that the non-signalling requirement on free process matrices, i.e., that they are of the form $\rho_{A_IB_I} \otimes \ident_{A_OB_O}$ is -- besides the positivity constraint -- a linear one. On the other hand, this fails to hold for the condition of causal separability; linear combinations of causally separable processes can well be causally non-separable. This, in turn, makes a concrete characterization of the adapters in the set $\Theta_\texttt{ASP}$ more elusive and the question of whether a given adapter lies in $\Theta_\texttt{ASP}$ a hard one to answer in general. 

Nonetheless, the resource theory of causal non-separability can nicely be phrased within the framework for the resource theory of causal connection that we set up above. Here, we do not aim to provide a \textit{full} account of it, but merely want to emphasize the structural similarities with the resource theory of causal connection. 

Additionally, in the same vein as above, it is natural to introduce a robustness measure with respect to the set $\texttt{Sep}$, the causal robustness, as a monotone for the resource theory of causal non-separability. While this monotone is not faithful for causal connection, as it turns out, it is nonetheless monotone under adapters in $\Theta_{\texttt{NS}}$ and $\Theta_{\texttt{LOSE}}$. Introducing the causal robustness thus kills two birds with one stone -- it completes our brief discussion of the resource theory of causal non-separability and provides us with a second monotone for the resource theory of causal connection, allowing, for example, for the investigation of the interconvertibility of processes in $\texttt{Proc}\setminus \texttt{Sep}$ under adapters $\Upsilon \in \Theta_{\texttt{NS}}$.

%%%%%%%%%%%%%%%%%%%%%%%%%%%%%%%%%%%%%%%%%
\subsection{Robustness of causal non-separability}\label{sec::Montones_CausalRobust}

We have seen in the previous sections that the stratification of the space of process matrices with respect to free adapters of the resource theory of causal connection presents itself as somewhat layered. In particular, there is no total order, i.e., $W \overset{\Theta_{\texttt{NS}}}{\nrightarrow} W'$ does not generally imply $W' \overset{\Theta_{\texttt{NS}}}{\rightarrow} W$. Concretely, we have seen this for pairs of processes, where either both processes were causally ordered, or one was causally ordered, while the other was not. However, we have not yet considered the case where both processes are causally non-separable. 

Since the previous results were obtained based on the monotonicity of the robustness of signalling under free adapters, they hold independent of whether $\Theta_{\texttt{LOSE}}$, $\Theta_{\texttt{NS}}$, or $\Theta_{\texttt{A}\texttt{FP}}$ is considered as the set of free adapters. This will fail to hold true for the results of the present section, where we require preservation of $\texttt{Sep}$ for the derivation of our results. 

By providing a second monotone for the resource theory of causal connection, we show that the lack of total order also holds within the set of causally non-separable processes. To this end, we consider the \textit{(generalized) robustness of causal non-separability} $\Rcal_c$, inspired by analogous measures for the robustness of entanglement~\cite{steiner_generalized_2003} and coherence~\cite{napoli_robustness_2016}, and already introduced in Ref.~\cite{araujo_witnessing_2015} for process matrices (following Ref.~\cite{araujo_witnessing_2015}, we will simply call $\Rcal_c$ `causal robustness'). We define $\Rcal_c(W)$ of a process matrix $W$ as the maximal robustness under worst-case general mixing, now with respect to the set of causally separable process matrices \texttt{Sep} (instead of the set \texttt{Free}), \textit{i.e.}, 
\begin{gather}
\label{eqn::Robust}
\Rcal_c(W) = \underset{T \in \texttt{Proc}} \min \left\{ s\geq 0 \left| \frac{W + sT}{1+s} = R \in \texttt{Sep} \right.\right\}\, ,
\end{gather}
where $T$ is proper process matrix and \texttt{Sep} is the set of causally separable process matrices.\footnote{While the definition of the set \texttt{Sep} is straightforward in the bipartite case, it becomes more subtle in the multipartite setting~\cite{araujo_witnessing_2015,oreshkov_causal_2016, wechs_definition_2019}. Here, we predominantly focus on the two-party case.} Since free processes in the resource theory of causal non-separability preserve $\texttt{Sep}$, the causal robustness is a monotone of this resource theory. More importantly for our purposes, it is also a monotone for two of the resource theories of causal connection we introduced above (those based on $\Theta_{\texttt{NS}}$ and $\Theta_{\texttt{LOSE}}$), and we now use it to investigate how causally non-separable processes can be converted into one another by means of adapters in $\Theta_{\texttt{NS}}$. 

%%%%%%%%%%%%%%%%%%%%%%%%%%%%%%%%%%%%%%%%%
\subsection{Conversion of causally non-separable processes under free transformations}

Analogously to the signalling robustness, one can show that the causal robustness is a convex function and non-increasing under free adapters (i.e., adapters in $\Theta_{\texttt{NS}}$). This latter point follows directly from the fact that free adapters preserve causal ordering (this fails to hold for adapters in $\Theta_{\texttt{AFP}}$). 

However, the causal robustness is not a faithful measure for causal connection, since $\Rcal_c(W)=0$ for all $W\in\texttt{Sep}$, which is a strict superset of \texttt{Free}. Nonetheless, the fact that it is non-increasing under free adapters allows one to show that there are pairs $\{W,W'\}$ of causally non-separable process matrices for which $W \overset{\Theta_{\texttt{NS}}}{\nrightarrow} W'$ and $W' \overset{\Theta_{\texttt{NS}}}{\nrightarrow} W$ holds, i.e., there is no total order in the set of causally non-separable processes with respect to free adapters in $\Theta_\texttt{NS}$.

As for the signalling robustness, the causal robustness of a process matrix can be phrased as an SDP and equivalently be calculated by its primal and dual formulation. Its primal SDP problem can be derived starting from the definition of causal robustness in Eq.~\eqref{eqn::Robust} and following similar steps as for the signalling robustness in Eqs.~\eqref{eqn::primalRob} and~\eqref{sdp::parallel_primal}, to arrive at
\begin{align}
\begin{split}\label{sdp::causalprimal}
    \textbf{given} \ \ & W \\ 
    \textbf{min} \ \   & \frac{1}{d_O}\tr(\widetilde R)-1\\ 
    \textbf{s.t.} \ \  & \widetilde R-W\geq 0 \\ 
                        & \widetilde R \geq 0 \\
                       & \widetilde R \propto R \in\texttt{Sep}.
\end{split}
\end{align}

The dual problem associated to the above SDP reads
\begin{align}
\begin{split}\label{sdp::causaldual}
    \textbf{given} \ \ & W \\ 
    \textbf{max} \ \   & \tr(WS)-1\\ 
    \textbf{s.t.} \ \  & S\geq 0 \\ 
                       & \frac{1}{d_O}\ident + U - L_{A\prec B}(U) \geq S \\
                       & \frac{1}{d_O}\ident + V - L_{B\prec A}(V) \geq S,
\end{split}
\end{align}
where $L_{X\prec Y}(W)\coloneqq {}_{Y_O}W - _{Y_IY_O}W + _{X_OY_IY_O}W$ is the projector onto the subspace of processes with causal order $X \prec Y$. Analogously as for the robustness of signalling, $S$ can be interpreted as a witness of causal non-separability.

Evaluating either this SDP or its dual, the causal robustness of $W^\text{OCB}$ [see Eq.~\eqref{eqn::w_ocb}] can be computed to be $\Rcal_c(W^\text{OCB})=0.1716$, while the causal robustness of $W^{A\to B}$ is $\Rcal_c(W^{A\to B})=0$. Hence, this pair of process matrices provides us with an interesting example where
\begin{align}
\Rcal_s(W^\text{OCB})&<\Rcal_s(W^{A\to B})\\ 
&\text{and} \nonumber \\ 
\Rcal_c(W^\text{OCB})&>\Rcal_c(W^{A\to B})=0,
\end{align}
reinforcing that neither process matrix can be converted into the other with free adapters.

There are also examples of pairs of process matrices that are \textit{both} causally non-separable but that, nevertheless, cannot be converted into each other. In order to find such examples, we numerically sample process matrices and check their properties. For bipartite process matrices, this works as follows:
\begin{enumerate}
    \item Fix input dimensions $d_{A_I},d_{B_I}$ and output dimensions $d_{A_O},d_{B_O}$.
    \item Uniformly sample a positive semidefinite matrix $Q$ of size ($d_{A_I}d_{A_O}d_{B_I}d_{B_O}$)$\times$($d_{A_I}d_{A_O}d_{B_I}d_{B_O}$).
    \item Define $\widetilde{W} := L_V(Q)$ to be the projection of $Q$ on the subspace of valid process matrices, where $L_V$ is the projection operator in Eq.~\eqref{eqn::DefProcMat}.
    \item Define $W = d_{A_O}d_{B_O}\frac{\widetilde{W}}{\tr(\widetilde{W})}$.
    \item Check whether $W$ is a positive semidefinite matrix. If not, discard $W$ and repeat the process. If yes, than $W$ is a proper process matrix.
\end{enumerate}
This method can be straightforwardly extended to process matrices of multiple parties.

Sampling random processes in this way, we find an example of a process matrix $W^\#$, whose signalling robustness is $\Rcal_s(W^\#)=1.0581$, while its causal robustness is $\Rcal_c(W^\#)=0.0245$. Just like $W^\text{OCB}$, the process matrix $W^\#$ is causally non-separable. However, there is no clear hierarchy of robustness between them, since they satisfy the relations
\begin{align}
\Rcal_s(W^\text{OCB})&<\Rcal_s(W^\#)\\ 
&\text{and} \nonumber \\ 
\Rcal_c(W^\text{OCB})&>\Rcal_c(W^\#)>0.
\end{align}
Notice that, even though both process matrices in this example are causally non-separable, which means that, in principle, there could exist a free adapter that could convert one into the other, the impossibility of such conversion is nevertheless guaranteed by the lack of hierarchy between them, as shown by the different ordering in their signalling and causal robustness values. A data file containing the matrices that correspond to the processes $W^\text{OCB}$ and $W^\#$, along with the values of their robustness, can be found in the online repository~\cite{github}.

It would then be interesting to find out whether in the set of causally non-separable processes there exist one process matrix that could be transformed into all others with free adapters (in principle, this is possible despite the lack of total order). This process matrix would necessarily have higher or equal causal and signalling robustness then all other causally non-separable processes.

In order to investigate the potentially \textit{most} valuable resource with respect to causal robustness, we applied a seesaw algorithm, alternatingly optimising the witness $S$ for a given process matrix $W$ and vice versa. Concretely, the algorithm iterates the SDP~\eqref{sdp::causaldual} and the SDP given by
\begin{align}
\begin{split}\label{sdp::parallel_seesawmax}
    \textbf{given} \ \ & S \\ 
    \textbf{max} \ \ & \tr(WS) \\ 
    \textbf{s.t.} \ \ &  W \in \texttt{Proc}.
\end{split}
\end{align}
which takes the optimal witness from the solution of SDP~\eqref{sdp::causaldual} as input and outputs the process matrix that maximally violates it. The output process matrix then becomes the input of SDP~\eqref{sdp::causaldual} in the next round of iteration, and so on. As this problem is non-convex, one is not guaranteed to obtain the global optimum for $\tr(WS)$ by means of this algorithm. However, running the seesaw for a large number of different initial process matrices provides a good guess for the process matrices which maximize $\Rcal_c(W)$. 

The highest value of causal robustness that we have found for process matrices in which all subspaces are qubits with our seesaw is $\Rcal_c(W^*)=0.2104$. This value exceeds $\Rcal_c(W^\text{OCB})$ showing that $W^\text{OCB}$ cannot be transformed into $W^*$ with free adapters. This same process matrix $W^*$ has signalling robustness of $\Rcal_s(W^*)=1.4805$, which is also greater than $\Rcal_s(W^\text{OCB})=1$. Hence, these two causally non-separable process matrices satisfy the relation 
\begin{align}
\Rcal_s(W^\text{OCB})&<\Rcal_s(W^*)\\ 
&\text{and} \nonumber \\ 
\Rcal_c(W^\text{OCB})&<\Rcal_c(W^*),
\end{align}
making $W^*$ a good candidate for a process matrix that can be transformed into $W^\text{OCB}$ with free adapters. A data file containing the matrices that correspond to the processes $W^\text{OCB}$ and $W^*$, along with the values of their robustness, can be found in the online repository~\cite{github}.

In principle, the question of whether one process can be transformed to another by means of free adapters can be decided by the SDP
\begin{align}
\begin{split}\label{sdp::feasibility}
    \textbf{given} \ \ & W, W' \\ 
    \textbf{find} \ \ & \Upsilon\\ 
    \textbf{s.t.} \ \ &  \Upsilon\star W = W' \\
    & \Upsilon \in \Theta_{\texttt{NS}}\,.
\end{split}
\end{align}
Using this SDP, we have verified that the conversion of $W^*$ into $W^\text{OCB}$ with free adapters is \textit{not} possible. Although not conclusive, this result indicates that there might not be a causally non-separable process which represents the most valuable resource with respect to causal connection.

Considering the matter of a most resourceful process matrix on more general grounds, we may derive some insight from the analysis of a \textit{hypothetical} process.

Consider the matrix $Z \in \Bcal(\Hcal_{A_I}\otimes\Hcal_{A_O}\otimes\Hcal_{B_I}\otimes\Hcal_{B_O})$ consisting of an identity channel from $A_O$ to $B_I$ and another identity channel from $B_O$ to $A_I$, given by
\begin{gather}
Z\coloneqq \Phi^{+}_{A_OB_I}\otimes\Phi^{+}_{A_IB_O}.
\end{gather}

$Z$ is not a proper process matrix, as evidenced by the fact that $Z\neq L_V(Z)$, since by allowing for perfect communication in both directions it could give rise to causal loops. This non-valid process could be considered a natural extension of the most valuable process with respect to signalling robustness -- the fully-signalling bipartite ordered processes $W^{A\to B}$ or $W^{B\to A}$ [see Eq.~\eqref{eqn::fullsigprocess}] that can be transformed into any other ordered processes of the same order and subspace dimensions by means of free adapters -- to the realm of indefinite causal order. Indeed, \textit{all} ordered process matrices can be achieved from $Z$ by simply blocking one of the output spaces to create a one-way fully-signalling ordered process, and using a free adapter to transform it. It can also be checked that $W^\text{OCB}$ and $W^*$ can be generated by applying free adapters to $Z$, as well as all randomly sampled causally non-separable processes we tried. Hence, it is not unreasonable to expect that \textit{all} bipartite processes could be reached via free transformations on $Z$. 

Although such an unrealistic resource may be overkill when it comes to generating all bipartite process matrices, the evidence presented here suggests that it could actually be the required resource, in the sense that no proper process matrix might be able to fulfil such a role. 

In order to better understand the non-valid process $Z$, we investigate how far it is from the set $\texttt{Proc}$ of valid process matrices of the same dimension. One potential measure for this distance is the robustness of $Z$ with respect to the set of valid process matrices, i.e.,

\begin{gather}
\Rcal_{\texttt{Proc}}(Z) = \underset{T} \min \left\{ s\geq 0 \left| \frac{Z + sT}{1+s} = W \in \texttt{Proc} \right.\right\}.
\end{gather}

Here, we must carefully choose the characteristics of the noise $T$ against which we will be measuring robustness. If $T$ is set to be any valid process matrix, then $\Rcal_{\texttt{Proc}}(Z)$ will diverge, since $Z$ is not contained in the image of the projector $L_V$. A more viable option, although with an arguably more far-fetched interpretation, would be to set the noise to any $T\geq0$ of fixed trace $\tr T = d_{A_O}d_{B_O}$. The solution of such a problem for an all-qubit $Z$ is then $\Rcal_{\texttt{Proc}}(Z)=3$ and $W=W^\text{mix}$ [see Eq.~\eqref{eqn::Wmix}]. 

Perhaps surprinsingly, according to this notion of distance, the valid process matrix closest to $Z$ is not a causally non-separable process such as $W^*$ or $W_\text{OCB}$, but instead, an equally-weighted convex mixture of two one-way fully-signalling processes ordered in opposite directions, precisely the most robust process matrix with respect to signalling robustness.

While not a physical process, the discussion of the (non-valid) process $Z$ sheds some light on the properties a most resourceful two-party process matrix $W^\mathrm{max}$ would have to satisfy. On the one hand, it would have to contain perfect channels both from Alice to Bob, as well as from Bob to Alice, that can be `addressed' independently, such that one can degrade $W^\mathrm{\max}$ to both $W^{A\rightarrow B}$ \textit{and} $W^{B\rightarrow A}$. This, already, seems to exclude the existence of $W^\mathrm{max}$, since such a process matrix would most likely contain closed loops and thus lead to logical paradoxa. On the other hand, $W^\mathrm{max}$ would have to be causally non-separable, since causal separability is preserved under free adapters. Consequently, we conjecture that $Z$ (or variants thereof) is the only positive matrix of the correct trace that can be transformed to \textit{all} processes by means of free adapters, but, as already mentioned, $Z$ is not a valid process itself. 

%%%%%%%%%%
\section{Conclusion}\label{sec::Conclusion}
%%%%%%%%%%
In this paper, we have constructed a resource theory of causal connection, making  a first step towards a systematic resource-theoretic understanding of signalling in general (non-)causal structures. We derived the fundamental building blocks of this theory -- the sets of free objects and free transformations -- both on operational grounds as well as more axiomatic considerations. In turn, the resulting different levels of control of the involved parties over the adapter yield a strict hierarchy of sets of free transformations. 

The axiomatic perspective leads to the largest set of possible free transformations, which, despite being free, still allows for internal signalling as well as for the creation of causal non-separability. The guiding principle in the more operationally motivated settings were generalized non-signalling conditions between the involved parties via free transformations. Like in the case of many other resource theories -- most prominently that of entanglement -- all of these approaches lead to the same set of free objects, but differ with respect to the free transformations, thus giving rise to distinct resource theories of causal connection.

In order to quantify the causal connection of general processes, we introduced the signalling robustness as a faithful monotone of the resource theory of causal connection and provided a general upper bound for this monotone for the case of two parties, as well as tight upper bounds for many relevant multi-party scenarios. Somewhat surprisingly, the process that maximizes the signalling robustness for two parties is causally ordered, implying that causal non-separability does not increase `the amount of communication' that is possible between parties. Based on numerical evidence, we conjecture that this also holds true for an arbitrary numbers of parties, suggesting that the communication advantages displayed by causally indefinite processes are not rooted in an overall greater ability to communicate when the assumption of global causal order is dropped. For two and three parties, in addition, we provided the most resourceful causally ordered process, respectively, i.e., the process that can be transformed to all other processes of the same causal order by means of free transformations.  

Besides quantifying causal connection in general processes, the signalling robustness has a direct interpretation in terms of a witness of causal connection. Concurrently, this robustness can be understood as a measure of how many causal loops can be closed in a given process. 

Our results allow one to investigate and quantify -- under one common umbrella -- the ability to signal, both in causally ordered as well as causally disordered processes. Naturally, analogous considerations for the set of causally separable processes as the set of free objects lead to a resource theory of causal non-separability, which we introduced, showed uniqueness of, and compared to the resource theory of causal connection. In the same vein, using the framework we provided, one could also consider other sets of resources in process matrices, like, for example, the possibility of two-way signalling. Such considerations will be subject to future work.

As we showed by means of causal robustness, a second monotone of the resource theory of causal connection, causally non-separable processes display a layered structure with respect to free transformations, and there exist pairs of causally non-separable processes where no process can freely be transformed to the other. On the other hand, while causally ordered processes can only be transformed to processes of the same causal order by means of free adapters, for two parties, \textit{every} causally indefinite process can be transformed to processes of \textit{either} causal order by means of free transformations. 

While quantifying causal connection, it is currently unclear what operational task is naturally related to the signalling robustness. Using techniques provided in Refs.~\cite{rosset_resource_2018, uola_quantifying_2019,uola_quantification_2020}, for every process $W$ that satisfies $\Rcal_s(W)>0$, one can tailor a particular information-theoretic game for which the process $W$ will outperform every process with vanishing signalling robustness (and this outperformance is quantified by $\Rcal_s(W)$). However, no single operational task is known, for which the signalling robustness faithfully quantifies performance. In particular, such a task would have to involve bidirectional signalling, since it would have to highlight both signalling in the direction $A\rightarrow B$ as well as $B\rightarrow A$ (for example, a task that only requires signalling from Alice to Bob would not faithfully represent the resourcefulness of a process that allows for signalling from Bob to Alice). Besides this operational question, it is of interest to both find a general tight bound for the signalling robustness for any number of parties, and to investigate how the signalling robustness ties in with the complete set of monotones for the resource theory of causal connection that we provide in App.~\ref{app::Montones_allmonotones}.

Additionally, there are still many natural open questions with respect to the interconvertibility of processes under free transformations, both single shot as well as asymptotically, approximately, and catalytically~\cite{chitambar_quantum_2019}. In particular these latter points require detailed analysis since -- unlike in other resource theories like those involving states and channels as resources -- the tensor product of process matrices in general does \textit{not} yield a proper process matrix~\cite{guerin_composition_2019}.  

Finally, as mentioned, in the causally ordered case, there are processes from which all others can be reached by means of free transformations for the two- and three-party scenario. This suggests that these processes play a similar role as that played by the ebit in the resource theory of entanglement, giving rise to the question of distillability~\cite{bennett_concentrating_1996,bennett_purification_1996,horodecki_quantum_2009} of causal connection for general causally ordered processes. 
\\

\noindent All code developed for this work is openly available in the online repository~\cite{github}.

\begin{acknowledgments}
We thank Tristan Kraft, Kavan Modi, Felix A. Pollock, Benjamin Yadin and Marco T\'ulio Quintino for valuable discussions, and Giuseppe Vitagliano for providing the idea for the proof of Prop.~\ref{prop::Upbound}. The authors are also grateful to Princess Bubblegum for performing very demanding calculations. S. M. acknowledges funding from the European Union’s Horizon 2020 research and innovation programme under the Marie Sk{\l}odowska Curie grant agreement No 801110, and the Austrian Federal Ministry of Education, Science and Research (BMBWF). The opinions expressed in this publication are those of the authors, the EU Agency is not responsible for any use that may be made of the information it contains. JB acknowledges funding from the Austrian Science Fund (FWF) through the Zukunftskolleg ZK03 and the START project Y879-N27.  GC acknowledges financial support from the Hong Kong Research Grant Council through grant 17300918 and through the Senior Research Fellowship Scheme SRFS2021-7S02, by the Croucher Foundation, by the John Templeton Foundation through grant 61466, The Quantum Information Structure of Spacetime (qiss.fr).  Research at the Perimeter Institute is supported by the Government of Canada through the Department of Innovation, Science and Economic Development Canada and by the Province of Ontario through the Ministry of Research, Innovation and Science. The opinions expressed in this publication are those of the authors and do not necessarily reflect the views of the John Templeton Foundation.
\end{acknowledgments}

\bibliographystyle{apsrev4-1}
\bibliography{references}

%merlin.mbs apsrev4-1.bst 2010-07-25 4.21a (PWD, AO, DPC) hacked
%Control: key (0)
%Control: author (72) initials jnrlst
%Control: editor formatted (1) identically to author
%Control: production of article title (-1) disabled
%Control: page (0) single
%Control: year (1) truncated
%Control: production of eprint (0) enabled
\begin{thebibliography}{85}%
\makeatletter
\providecommand \@ifxundefined [1]{%
 \@ifx{#1\undefined}
}%
\providecommand \@ifnum [1]{%
 \ifnum #1\expandafter \@firstoftwo
 \else \expandafter \@secondoftwo
 \fi
}%
\providecommand \@ifx [1]{%
 \ifx #1\expandafter \@firstoftwo
 \else \expandafter \@secondoftwo
 \fi
}%
\providecommand \natexlab [1]{#1}%
\providecommand \enquote  [1]{``#1''}%
\providecommand \bibnamefont  [1]{#1}%
\providecommand \bibfnamefont [1]{#1}%
\providecommand \citenamefont [1]{#1}%
\providecommand \href@noop [0]{\@secondoftwo}%
\providecommand \href [0]{\begingroup \@sanitize@url \@href}%
\providecommand \@href[1]{\@@startlink{#1}\@@href}%
\providecommand \@@href[1]{\endgroup#1\@@endlink}%
\providecommand \@sanitize@url [0]{\catcode `\\12\catcode `\$12\catcode
  `\&12\catcode `\#12\catcode `\^12\catcode `\_12\catcode `\%12\relax}%
\providecommand \@@startlink[1]{}%
\providecommand \@@endlink[0]{}%
\providecommand \url  [0]{\begingroup\@sanitize@url \@url }%
\providecommand \@url [1]{\endgroup\@href {#1}{\urlprefix }}%
\providecommand \urlprefix  [0]{URL }%
\providecommand \Eprint [0]{\href }%
\providecommand \doibase [0]{http://dx.doi.org/}%
\providecommand \selectlanguage [0]{\@gobble}%
\providecommand \bibinfo  [0]{\@secondoftwo}%
\providecommand \bibfield  [0]{\@secondoftwo}%
\providecommand \translation [1]{[#1]}%
\providecommand \BibitemOpen [0]{}%
\providecommand \bibitemStop [0]{}%
\providecommand \bibitemNoStop [0]{.\EOS\space}%
\providecommand \EOS [0]{\spacefactor3000\relax}%
\providecommand \BibitemShut  [1]{\csname bibitem#1\endcsname}%
\let\auto@bib@innerbib\@empty
%</preamble>
\bibitem [{\citenamefont {Nielsen}\ and\ \citenamefont
  {Chuang}(2000)}]{nielsen_quantum_2000}%
  \BibitemOpen
  \bibfield  {author} {\bibinfo {author} {\bibfnamefont {M.~A.}\ \bibnamefont
  {Nielsen}}\ and\ \bibinfo {author} {\bibfnamefont {I.~L.}\ \bibnamefont
  {Chuang}},\ }\href@noop {} {\emph {\bibinfo {title} {Quantum {Computation}
  and {Quantum} {Information}}}}\ (\bibinfo  {publisher} {Cambridge University
  Press},\ \bibinfo {address} {Cambridge ; New York},\ \bibinfo {year}
  {2000})\BibitemShut {NoStop}%
\bibitem [{\citenamefont {Chiribella}\ \emph
  {et~al.}(2008{\natexlab{a}})\citenamefont {Chiribella}, \citenamefont
  {D{'}Ariano},\ and\ \citenamefont
  {Perinotti}}]{chiribella_transforming_2008}%
  \BibitemOpen
  \bibfield  {author} {\bibinfo {author} {\bibfnamefont {G.}~\bibnamefont
  {Chiribella}}, \bibinfo {author} {\bibfnamefont {G.~M.}\ \bibnamefont
  {D{'}Ariano}}, \ and\ \bibinfo {author} {\bibfnamefont {P.}~\bibnamefont
  {Perinotti}},\ }\href {\doibase 10.1209/0295-5075/83/30004} {\bibfield
  {journal} {\bibinfo  {journal} {Europhys. Lett.}\ }\textbf {\bibinfo {volume}
  {83}},\ \bibinfo {pages} {30004} (\bibinfo {year}
  {2008}{\natexlab{a}})}\BibitemShut {NoStop}%
\bibitem [{\citenamefont {Chiribella}\ \emph
  {et~al.}(2008{\natexlab{b}})\citenamefont {Chiribella}, \citenamefont
  {D'Ariano},\ and\ \citenamefont {Perinotti}}]{chiribella_quantum_2008}%
  \BibitemOpen
  \bibfield  {author} {\bibinfo {author} {\bibfnamefont {G.}~\bibnamefont
  {Chiribella}}, \bibinfo {author} {\bibfnamefont {G.~M.}\ \bibnamefont
  {D'Ariano}}, \ and\ \bibinfo {author} {\bibfnamefont {P.}~\bibnamefont
  {Perinotti}},\ }\href {\doibase 10.1103/PhysRevLett.101.060401} {\bibfield
  {journal} {\bibinfo  {journal} {Phys. Rev. Lett.}\ }\textbf {\bibinfo
  {volume} {101}},\ \bibinfo {pages} {060401} (\bibinfo {year}
  {2008}{\natexlab{b}})}\BibitemShut {NoStop}%
\bibitem [{\citenamefont {Chiribella}\ \emph {et~al.}(2009)\citenamefont
  {Chiribella}, \citenamefont {D{'}Ariano},\ and\ \citenamefont
  {Perinotti}}]{chiribella_theoretical_2009}%
  \BibitemOpen
  \bibfield  {author} {\bibinfo {author} {\bibfnamefont {G.}~\bibnamefont
  {Chiribella}}, \bibinfo {author} {\bibfnamefont {G.~M.}\ \bibnamefont
  {D{'}Ariano}}, \ and\ \bibinfo {author} {\bibfnamefont {P.}~\bibnamefont
  {Perinotti}},\ }\href {\doibase 10.1103/PhysRevA.80.022339} {\bibfield
  {journal} {\bibinfo  {journal} {Phys. Rev. A}\ }\textbf {\bibinfo {volume}
  {80}},\ \bibinfo {pages} {022339} (\bibinfo {year} {2009})}\BibitemShut
  {NoStop}%
\bibitem [{\citenamefont {Oreshkov}\ \emph {et~al.}(2012)\citenamefont
  {Oreshkov}, \citenamefont {Costa},\ and\ \citenamefont
  {Brukner}}]{OreshkovETAL2012}%
  \BibitemOpen
  \bibfield  {author} {\bibinfo {author} {\bibfnamefont {O.}~\bibnamefont
  {Oreshkov}}, \bibinfo {author} {\bibfnamefont {F.}~\bibnamefont {Costa}}, \
  and\ \bibinfo {author} {\bibfnamefont {{\v C}.}~\bibnamefont {Brukner}},\
  }\href {\doibase 10.1038/ncomms2076} {\bibfield  {journal} {\bibinfo
  {journal} {Nat. Commun.}\ }\textbf {\bibinfo {volume} {3}},\ \bibinfo {pages}
  {1092} (\bibinfo {year} {2012})}\BibitemShut {NoStop}%
\bibitem [{\citenamefont {Ara{\'u}jo}\ \emph {et~al.}(2015)\citenamefont
  {Ara{\'u}jo}, \citenamefont {Branciard}, \citenamefont {Costa}, \citenamefont
  {Feix}, \citenamefont {Giarmatzi},\ and\ \citenamefont
  {Brukner}}]{araujo_witnessing_2015}%
  \BibitemOpen
  \bibfield  {author} {\bibinfo {author} {\bibfnamefont {M.}~\bibnamefont
  {Ara{\'u}jo}}, \bibinfo {author} {\bibfnamefont {C.}~\bibnamefont
  {Branciard}}, \bibinfo {author} {\bibfnamefont {F.}~\bibnamefont {Costa}},
  \bibinfo {author} {\bibfnamefont {A.}~\bibnamefont {Feix}}, \bibinfo {author}
  {\bibfnamefont {C.}~\bibnamefont {Giarmatzi}}, \ and\ \bibinfo {author}
  {\bibfnamefont {{\v C}.}~\bibnamefont {Brukner}},\ }\href {\doibase
  10.1088/1367-2630/17/10/102001} {\bibfield  {journal} {\bibinfo  {journal}
  {New J. Phys.}\ }\textbf {\bibinfo {volume} {17}},\ \bibinfo {pages} {102001}
  (\bibinfo {year} {2015})}\BibitemShut {NoStop}%
\bibitem [{\citenamefont {Horodecki}\ and\ \citenamefont
  {Oppenheim}(2013)}]{horodecki_quantumness_2013}%
  \BibitemOpen
  \bibfield  {author} {\bibinfo {author} {\bibfnamefont {M.}~\bibnamefont
  {Horodecki}}\ and\ \bibinfo {author} {\bibfnamefont {J.}~\bibnamefont
  {Oppenheim}},\ }\href {\doibase 10.1142/S0217979213450197} {\bibfield
  {journal} {\bibinfo  {journal} {Int. J. Mod. Phys. B}\ }\textbf {\bibinfo
  {volume} {27}},\ \bibinfo {pages} {1345019} (\bibinfo {year}
  {2013})}\BibitemShut {NoStop}%
\bibitem [{\citenamefont {Coecke}\ \emph {et~al.}(2016)\citenamefont {Coecke},
  \citenamefont {Fritz},\ and\ \citenamefont
  {Spekkens}}]{coecke_mathematical_2016}%
  \BibitemOpen
  \bibfield  {author} {\bibinfo {author} {\bibfnamefont {B.}~\bibnamefont
  {Coecke}}, \bibinfo {author} {\bibfnamefont {T.}~\bibnamefont {Fritz}}, \
  and\ \bibinfo {author} {\bibfnamefont {R.~W.}\ \bibnamefont {Spekkens}},\
  }\href {\doibase 10.1016/j.ic.2016.02.008} {\bibfield  {journal} {\bibinfo
  {journal} {Inf. Comp.}\ }\textbf {\bibinfo {volume} {250}},\ \bibinfo {pages}
  {59} (\bibinfo {year} {2016})}\BibitemShut {NoStop}%
\bibitem [{\citenamefont {Vedral}\ \emph {et~al.}(1997)\citenamefont {Vedral},
  \citenamefont {Plenio}, \citenamefont {Rippin},\ and\ \citenamefont
  {Knight}}]{vedral_quantifying_1997}%
  \BibitemOpen
  \bibfield  {author} {\bibinfo {author} {\bibfnamefont {V.}~\bibnamefont
  {Vedral}}, \bibinfo {author} {\bibfnamefont {M.~B.}\ \bibnamefont {Plenio}},
  \bibinfo {author} {\bibfnamefont {M.~A.}\ \bibnamefont {Rippin}}, \ and\
  \bibinfo {author} {\bibfnamefont {P.~L.}\ \bibnamefont {Knight}},\ }\href
  {\doibase 10.1103/PhysRevLett.78.2275} {\bibfield  {journal} {\bibinfo
  {journal} {Phys. Rev. Lett.}\ }\textbf {\bibinfo {volume} {78}},\ \bibinfo
  {pages} {2275} (\bibinfo {year} {1997})}\BibitemShut {NoStop}%
\bibitem [{\citenamefont {Bru{\ss}}(2002)}]{brus_characterizing_2002}%
  \BibitemOpen
  \bibfield  {author} {\bibinfo {author} {\bibfnamefont {D.}~\bibnamefont
  {Bru{\ss}}},\ }\href {\doibase 10.1063/1.1494474} {\bibfield  {journal}
  {\bibinfo  {journal} {J. Math. Phys.}\ }\textbf {\bibinfo {volume} {43}},\
  \bibinfo {pages} {4237} (\bibinfo {year} {2002})}\BibitemShut {NoStop}%
\bibitem [{\citenamefont {Plenio}\ and\ \citenamefont
  {Virmani}(2007)}]{plbnio_introduction_2007}%
  \BibitemOpen
  \bibfield  {author} {\bibinfo {author} {\bibfnamefont {M.~B.}\ \bibnamefont
  {Plenio}}\ and\ \bibinfo {author} {\bibfnamefont {S.}~\bibnamefont
  {Virmani}},\ }\href {https://doi.org/10.26421/QIC7.1-2-1} {\bibfield
  {journal} {\bibinfo  {journal} {Quantum Inf. Comput.}\ }\textbf {\bibinfo
  {volume} {7}},\ \bibinfo {pages} {1} (\bibinfo {year} {2007})}\BibitemShut
  {NoStop}%
\bibitem [{\citenamefont {Horodecki}\ \emph {et~al.}(2009)\citenamefont
  {Horodecki}, \citenamefont {Horodecki}, \citenamefont {Horodecki},\ and\
  \citenamefont {Horodecki}}]{horodecki_quantum_2009}%
  \BibitemOpen
  \bibfield  {author} {\bibinfo {author} {\bibfnamefont {R.}~\bibnamefont
  {Horodecki}}, \bibinfo {author} {\bibfnamefont {P.}~\bibnamefont
  {Horodecki}}, \bibinfo {author} {\bibfnamefont {M.}~\bibnamefont
  {Horodecki}}, \ and\ \bibinfo {author} {\bibfnamefont {K.}~\bibnamefont
  {Horodecki}},\ }\href {\doibase 10.1103/RevModPhys.81.865} {\bibfield
  {journal} {\bibinfo  {journal} {Rev. Mod. Phys.}\ }\textbf {\bibinfo {volume}
  {81}},\ \bibinfo {pages} {865} (\bibinfo {year} {2009})}\BibitemShut
  {NoStop}%
\bibitem [{\citenamefont {Horodecki}\ \emph {et~al.}(2003)\citenamefont
  {Horodecki}, \citenamefont {Horodecki},\ and\ \citenamefont
  {Oppenheim}}]{horodecki_reversible_2003}%
  \BibitemOpen
  \bibfield  {author} {\bibinfo {author} {\bibfnamefont {M.}~\bibnamefont
  {Horodecki}}, \bibinfo {author} {\bibfnamefont {P.}~\bibnamefont
  {Horodecki}}, \ and\ \bibinfo {author} {\bibfnamefont {J.}~\bibnamefont
  {Oppenheim}},\ }\href {\doibase 10.1103/PhysRevA.67.062104} {\bibfield
  {journal} {\bibinfo  {journal} {Phys. Rev. A}\ }\textbf {\bibinfo {volume}
  {67}},\ \bibinfo {pages} {062104} (\bibinfo {year} {2003})}\BibitemShut
  {NoStop}%
\bibitem [{\citenamefont {Streltsov}\ \emph {et~al.}(2018)\citenamefont
  {Streltsov}, \citenamefont {Kampermann}, \citenamefont {W{\"o}lk},
  \citenamefont {Gessner},\ and\ \citenamefont
  {Bru{\ss}}}]{streltsov_maximal_2018}%
  \BibitemOpen
  \bibfield  {author} {\bibinfo {author} {\bibfnamefont {A.}~\bibnamefont
  {Streltsov}}, \bibinfo {author} {\bibfnamefont {H.}~\bibnamefont
  {Kampermann}}, \bibinfo {author} {\bibfnamefont {S.}~\bibnamefont
  {W{\"o}lk}}, \bibinfo {author} {\bibfnamefont {M.}~\bibnamefont {Gessner}}, \
  and\ \bibinfo {author} {\bibfnamefont {D.}~\bibnamefont {Bru{\ss}}},\ }\href
  {\doibase 10.1088/1367-2630/aac484} {\bibfield  {journal} {\bibinfo
  {journal} {New J. Phys.}\ }\textbf {\bibinfo {volume} {20}},\ \bibinfo
  {pages} {053058} (\bibinfo {year} {2018})}\BibitemShut {NoStop}%
\bibitem [{\citenamefont {Brand{\~a}o}\ \emph {et~al.}(2013)\citenamefont
  {Brand{\~a}o}, \citenamefont {Horodecki}, \citenamefont {Oppenheim},
  \citenamefont {Renes},\ and\ \citenamefont
  {Spekkens}}]{brandao_resource_2013}%
  \BibitemOpen
  \bibfield  {author} {\bibinfo {author} {\bibfnamefont {F.~G. S.~L.}\
  \bibnamefont {Brand{\~a}o}}, \bibinfo {author} {\bibfnamefont
  {M.}~\bibnamefont {Horodecki}}, \bibinfo {author} {\bibfnamefont
  {J.}~\bibnamefont {Oppenheim}}, \bibinfo {author} {\bibfnamefont {J.~M.}\
  \bibnamefont {Renes}}, \ and\ \bibinfo {author} {\bibfnamefont {R.~W.}\
  \bibnamefont {Spekkens}},\ }\href {\doibase 10.1103/PhysRevLett.111.250404}
  {\bibfield  {journal} {\bibinfo  {journal} {Phys. Rev. Lett.}\ }\textbf
  {\bibinfo {volume} {111}},\ \bibinfo {pages} {250404} (\bibinfo {year}
  {2013})}\BibitemShut {NoStop}%
\bibitem [{\citenamefont {Gour}\ \emph {et~al.}(2015)\citenamefont {Gour},
  \citenamefont {M{\"u}ller}, \citenamefont {Narasimhachar}, \citenamefont
  {Spekkens},\ and\ \citenamefont {Yunger~Halpern}}]{gour_resource_2015}%
  \BibitemOpen
  \bibfield  {author} {\bibinfo {author} {\bibfnamefont {G.}~\bibnamefont
  {Gour}}, \bibinfo {author} {\bibfnamefont {M.~P.}\ \bibnamefont
  {M{\"u}ller}}, \bibinfo {author} {\bibfnamefont {V.}~\bibnamefont
  {Narasimhachar}}, \bibinfo {author} {\bibfnamefont {R.~W.}\ \bibnamefont
  {Spekkens}}, \ and\ \bibinfo {author} {\bibfnamefont {N.}~\bibnamefont
  {Yunger~Halpern}},\ }\href {\doibase 10.1016/j.physrep.2015.04.003}
  {\bibfield  {journal} {\bibinfo  {journal} {Phys. Rep.}\ }\textbf {\bibinfo
  {volume} {583}},\ \bibinfo {pages} {1} (\bibinfo {year} {2015})}\BibitemShut
  {NoStop}%
\bibitem [{\citenamefont {Lostaglio}(2019)}]{lostaglio_introductory_2019}%
  \BibitemOpen
  \bibfield  {author} {\bibinfo {author} {\bibfnamefont {M.}~\bibnamefont
  {Lostaglio}},\ }\href {\doibase 10.1088/1361-6633/ab46e5} {\bibfield
  {journal} {\bibinfo  {journal} {Rep. Prog. Phys}\ }\textbf {\bibinfo {volume}
  {82}},\ \bibinfo {pages} {114001} (\bibinfo {year} {2019})}\BibitemShut
  {NoStop}%
\bibitem [{\citenamefont {Gour}\ and\ \citenamefont
  {Spekkens}(2008)}]{gour_resource_2008}%
  \BibitemOpen
  \bibfield  {author} {\bibinfo {author} {\bibfnamefont {G.}~\bibnamefont
  {Gour}}\ and\ \bibinfo {author} {\bibfnamefont {R.~W.}\ \bibnamefont
  {Spekkens}},\ }\href {\doibase 10.1088/1367-2630/10/3/033023} {\bibfield
  {journal} {\bibinfo  {journal} {New J. Phys.}\ }\textbf {\bibinfo {volume}
  {10}},\ \bibinfo {pages} {033023} (\bibinfo {year} {2008})}\BibitemShut
  {NoStop}%
\bibitem [{\citenamefont {Vaccaro}\ \emph {et~al.}(2008)\citenamefont
  {Vaccaro}, \citenamefont {Anselmi}, \citenamefont {Wiseman},\ and\
  \citenamefont {Jacobs}}]{vaccaro_tradeoff_2008}%
  \BibitemOpen
  \bibfield  {author} {\bibinfo {author} {\bibfnamefont {J.~A.}\ \bibnamefont
  {Vaccaro}}, \bibinfo {author} {\bibfnamefont {F.}~\bibnamefont {Anselmi}},
  \bibinfo {author} {\bibfnamefont {H.~M.}\ \bibnamefont {Wiseman}}, \ and\
  \bibinfo {author} {\bibfnamefont {K.}~\bibnamefont {Jacobs}},\ }\href
  {\doibase 10.1103/PhysRevA.77.032114} {\bibfield  {journal} {\bibinfo
  {journal} {Phys. Rev. A}\ }\textbf {\bibinfo {volume} {77}},\ \bibinfo
  {pages} {032114} (\bibinfo {year} {2008})}\BibitemShut {NoStop}%
\bibitem [{\citenamefont {Gour}\ \emph {et~al.}(2009)\citenamefont {Gour},
  \citenamefont {Marvian},\ and\ \citenamefont
  {Spekkens}}]{gour_measuring_2009}%
  \BibitemOpen
  \bibfield  {author} {\bibinfo {author} {\bibfnamefont {G.}~\bibnamefont
  {Gour}}, \bibinfo {author} {\bibfnamefont {I.}~\bibnamefont {Marvian}}, \
  and\ \bibinfo {author} {\bibfnamefont {R.~W.}\ \bibnamefont {Spekkens}},\
  }\href {\doibase 10.1103/PhysRevA.80.012307} {\bibfield  {journal} {\bibinfo
  {journal} {Phys. Rev. A}\ }\textbf {\bibinfo {volume} {80}},\ \bibinfo
  {pages} {012307} (\bibinfo {year} {2009})}\BibitemShut {NoStop}%
\bibitem [{\citenamefont {Marvian}\ and\ \citenamefont
  {Spekkens}(2013)}]{marvian_theory_2013}%
  \BibitemOpen
  \bibfield  {author} {\bibinfo {author} {\bibfnamefont {I.}~\bibnamefont
  {Marvian}}\ and\ \bibinfo {author} {\bibfnamefont {R.~W.}\ \bibnamefont
  {Spekkens}},\ }\href {\doibase 10.1088/1367-2630/15/3/033001} {\bibfield
  {journal} {\bibinfo  {journal} {New J. Phys.}\ }\textbf {\bibinfo {volume}
  {15}},\ \bibinfo {pages} {033001} (\bibinfo {year} {2013})}\BibitemShut
  {NoStop}%
\bibitem [{\citenamefont {Marvian}\ and\ \citenamefont
  {Spekkens}(2014)}]{marvian_extending_2014}%
  \BibitemOpen
  \bibfield  {author} {\bibinfo {author} {\bibfnamefont {I.}~\bibnamefont
  {Marvian}}\ and\ \bibinfo {author} {\bibfnamefont {R.~W.}\ \bibnamefont
  {Spekkens}},\ }\href {\doibase 10.1038/ncomms4821} {\bibfield  {journal}
  {\bibinfo  {journal} {Nat. Commun.}\ }\textbf {\bibinfo {volume} {5}},\
  \bibinfo {pages} {3821} (\bibinfo {year} {2014})}\BibitemShut {NoStop}%
\bibitem [{\citenamefont {Piani}\ \emph {et~al.}(2016)\citenamefont {Piani},
  \citenamefont {Cianciaruso}, \citenamefont {Bromley}, \citenamefont {Napoli},
  \citenamefont {Johnston},\ and\ \citenamefont
  {Adesso}}]{piani_robustness_2016}%
  \BibitemOpen
  \bibfield  {author} {\bibinfo {author} {\bibfnamefont {M.}~\bibnamefont
  {Piani}}, \bibinfo {author} {\bibfnamefont {M.}~\bibnamefont {Cianciaruso}},
  \bibinfo {author} {\bibfnamefont {T.~R.}\ \bibnamefont {Bromley}}, \bibinfo
  {author} {\bibfnamefont {C.}~\bibnamefont {Napoli}}, \bibinfo {author}
  {\bibfnamefont {N.}~\bibnamefont {Johnston}}, \ and\ \bibinfo {author}
  {\bibfnamefont {G.}~\bibnamefont {Adesso}},\ }\href {\doibase
  10.1103/PhysRevA.93.042107} {\bibfield  {journal} {\bibinfo  {journal} {Phys.
  Rev. A}\ }\textbf {\bibinfo {volume} {93}},\ \bibinfo {pages} {042107}
  (\bibinfo {year} {2016})}\BibitemShut {NoStop}%
\bibitem [{\citenamefont {Aberg}(2006)}]{aberg_quantifying_2006}%
  \BibitemOpen
  \bibfield  {author} {\bibinfo {author} {\bibfnamefont {J.}~\bibnamefont
  {Aberg}},\ }\href {https://doi.org/10.48550/arXiv.quant-ph/0612146}
  {\bibfield  {journal} {\bibinfo  {journal} {arXiv:0612146}\ } (\bibinfo
  {year} {2006})}\BibitemShut {NoStop}%
\bibitem [{\citenamefont {Baumgratz}\ \emph {et~al.}(2014)\citenamefont
  {Baumgratz}, \citenamefont {Cramer},\ and\ \citenamefont
  {Plenio}}]{baumgratz_quantifying_2014}%
  \BibitemOpen
  \bibfield  {author} {\bibinfo {author} {\bibfnamefont {T.}~\bibnamefont
  {Baumgratz}}, \bibinfo {author} {\bibfnamefont {M.}~\bibnamefont {Cramer}}, \
  and\ \bibinfo {author} {\bibfnamefont {M.}~\bibnamefont {Plenio}},\ }\href
  {\doibase 10.1103/PhysRevLett.113.140401} {\bibfield  {journal} {\bibinfo
  {journal} {Phys. Rev. Lett.}\ }\textbf {\bibinfo {volume} {113}},\ \bibinfo
  {pages} {140401} (\bibinfo {year} {2014})}\BibitemShut {NoStop}%
\bibitem [{\citenamefont {Levi}\ and\ \citenamefont
  {Mintert}(2014)}]{levi_quantitative_2014}%
  \BibitemOpen
  \bibfield  {author} {\bibinfo {author} {\bibfnamefont {F.}~\bibnamefont
  {Levi}}\ and\ \bibinfo {author} {\bibfnamefont {F.}~\bibnamefont {Mintert}},\
  }\href {\doibase 10.1088/1367-2630/16/3/033007} {\bibfield  {journal}
  {\bibinfo  {journal} {New J. Phys.}\ }\textbf {\bibinfo {volume} {16}},\
  \bibinfo {pages} {033007} (\bibinfo {year} {2014})}\BibitemShut {NoStop}%
\bibitem [{\citenamefont {Chitambar}\ and\ \citenamefont
  {Gour}(2016{\natexlab{a}})}]{chitambar_comparison_2016}%
  \BibitemOpen
  \bibfield  {author} {\bibinfo {author} {\bibfnamefont {E.}~\bibnamefont
  {Chitambar}}\ and\ \bibinfo {author} {\bibfnamefont {G.}~\bibnamefont
  {Gour}},\ }\href {\doibase 10.1103/PhysRevA.94.052336} {\bibfield  {journal}
  {\bibinfo  {journal} {Phys. Rev. A}\ }\textbf {\bibinfo {volume} {94}},\
  \bibinfo {pages} {052336} (\bibinfo {year} {2016}{\natexlab{a}})}\BibitemShut
  {NoStop}%
\bibitem [{\citenamefont {Chitambar}\ and\ \citenamefont
  {Gour}(2016{\natexlab{b}})}]{chitambar_critical_2016}%
  \BibitemOpen
  \bibfield  {author} {\bibinfo {author} {\bibfnamefont {E.}~\bibnamefont
  {Chitambar}}\ and\ \bibinfo {author} {\bibfnamefont {G.}~\bibnamefont
  {Gour}},\ }\href {\doibase 10.1103/PhysRevLett.117.030401} {\bibfield
  {journal} {\bibinfo  {journal} {Phys. Rev. Lett.}\ }\textbf {\bibinfo
  {volume} {117}},\ \bibinfo {pages} {030401} (\bibinfo {year}
  {2016}{\natexlab{b}})}\BibitemShut {NoStop}%
\bibitem [{\citenamefont {Winter}\ and\ \citenamefont
  {Yang}(2016)}]{winter_operational_2016}%
  \BibitemOpen
  \bibfield  {author} {\bibinfo {author} {\bibfnamefont {A.}~\bibnamefont
  {Winter}}\ and\ \bibinfo {author} {\bibfnamefont {D.}~\bibnamefont {Yang}},\
  }\href {\doibase 10.1103/PhysRevLett.116.120404} {\bibfield  {journal}
  {\bibinfo  {journal} {Phys. Rev. Lett.}\ }\textbf {\bibinfo {volume} {116}},\
  \bibinfo {pages} {120404} (\bibinfo {year} {2016})}\BibitemShut {NoStop}%
\bibitem [{\citenamefont {Yadin}\ \emph {et~al.}(2016)\citenamefont {Yadin},
  \citenamefont {Ma}, \citenamefont {Girolami}, \citenamefont {Gu},\ and\
  \citenamefont {Vedral}}]{yadin_quantum_2016}%
  \BibitemOpen
  \bibfield  {author} {\bibinfo {author} {\bibfnamefont {B.}~\bibnamefont
  {Yadin}}, \bibinfo {author} {\bibfnamefont {J.}~\bibnamefont {Ma}}, \bibinfo
  {author} {\bibfnamefont {D.}~\bibnamefont {Girolami}}, \bibinfo {author}
  {\bibfnamefont {M.}~\bibnamefont {Gu}}, \ and\ \bibinfo {author}
  {\bibfnamefont {V.}~\bibnamefont {Vedral}},\ }\href {\doibase
  10.1103/PhysRevX.6.041028} {\bibfield  {journal} {\bibinfo  {journal} {Phys.
  Rev. X}\ }\textbf {\bibinfo {volume} {6}},\ \bibinfo {pages} {041028}
  (\bibinfo {year} {2016})}\BibitemShut {NoStop}%
\bibitem [{\citenamefont {Napoli}\ \emph {et~al.}(2016)\citenamefont {Napoli},
  \citenamefont {Bromley}, \citenamefont {Cianciaruso}, \citenamefont {Piani},
  \citenamefont {Johnston},\ and\ \citenamefont
  {Adesso}}]{napoli_robustness_2016}%
  \BibitemOpen
  \bibfield  {author} {\bibinfo {author} {\bibfnamefont {C.}~\bibnamefont
  {Napoli}}, \bibinfo {author} {\bibfnamefont {T.~R.}\ \bibnamefont {Bromley}},
  \bibinfo {author} {\bibfnamefont {M.}~\bibnamefont {Cianciaruso}}, \bibinfo
  {author} {\bibfnamefont {M.}~\bibnamefont {Piani}}, \bibinfo {author}
  {\bibfnamefont {N.}~\bibnamefont {Johnston}}, \ and\ \bibinfo {author}
  {\bibfnamefont {G.}~\bibnamefont {Adesso}},\ }\href {\doibase
  10.1103/PhysRevLett.116.150502} {\bibfield  {journal} {\bibinfo  {journal}
  {Phys. Rev. Lett.}\ }\textbf {\bibinfo {volume} {116}},\ \bibinfo {pages}
  {150502} (\bibinfo {year} {2016})}\BibitemShut {NoStop}%
\bibitem [{\citenamefont {Streltsov}\ \emph {et~al.}(2017)\citenamefont
  {Streltsov}, \citenamefont {Adesso},\ and\ \citenamefont
  {Plenio}}]{streltsov_colloquium_2017}%
  \BibitemOpen
  \bibfield  {author} {\bibinfo {author} {\bibfnamefont {A.}~\bibnamefont
  {Streltsov}}, \bibinfo {author} {\bibfnamefont {G.}~\bibnamefont {Adesso}}, \
  and\ \bibinfo {author} {\bibfnamefont {M.~B.}\ \bibnamefont {Plenio}},\
  }\href {\doibase 10.1103/RevModPhys.89.041003} {\bibfield  {journal}
  {\bibinfo  {journal} {Rev. Mod. Phys.}\ }\textbf {\bibinfo {volume} {89}},\
  \bibinfo {pages} {041003} (\bibinfo {year} {2017})}\BibitemShut {NoStop}%
\bibitem [{\citenamefont {Wu}\ \emph {et~al.}(2020)\citenamefont {Wu},
  \citenamefont {Theurer}, \citenamefont {Xiang}, \citenamefont {Li},
  \citenamefont {Guo}, \citenamefont {Plenio},\ and\ \citenamefont
  {Streltsov}}]{wu_quantum_2020}%
  \BibitemOpen
  \bibfield  {author} {\bibinfo {author} {\bibfnamefont {K.-D.}\ \bibnamefont
  {Wu}}, \bibinfo {author} {\bibfnamefont {T.}~\bibnamefont {Theurer}},
  \bibinfo {author} {\bibfnamefont {G.-Y.}\ \bibnamefont {Xiang}}, \bibinfo
  {author} {\bibfnamefont {C.-F.}\ \bibnamefont {Li}}, \bibinfo {author}
  {\bibfnamefont {G.-C.}\ \bibnamefont {Guo}}, \bibinfo {author} {\bibfnamefont
  {M.~B.}\ \bibnamefont {Plenio}}, \ and\ \bibinfo {author} {\bibfnamefont
  {A.}~\bibnamefont {Streltsov}},\ }\href {\doibase 10.1038/s41534-020-0250-z}
  {\bibfield  {journal} {\bibinfo  {journal} {npj Quantum Inf.}\ }\textbf
  {\bibinfo {volume} {6}},\ \bibinfo {pages} {1} (\bibinfo {year}
  {2020})}\BibitemShut {NoStop}%
\bibitem [{\citenamefont {Hickey}\ and\ \citenamefont
  {Gour}(2018)}]{hickey_quantifying_2018}%
  \BibitemOpen
  \bibfield  {author} {\bibinfo {author} {\bibfnamefont {A.}~\bibnamefont
  {Hickey}}\ and\ \bibinfo {author} {\bibfnamefont {G.}~\bibnamefont {Gour}},\
  }\href {\doibase 10.1088/1751-8121/aabe9c} {\bibfield  {journal} {\bibinfo
  {journal} {J. Phys. A}\ }\textbf {\bibinfo {volume} {51}},\ \bibinfo {pages}
  {414009} (\bibinfo {year} {2018})}\BibitemShut {NoStop}%
\bibitem [{\citenamefont {Wu}\ \emph {et~al.}(2021{\natexlab{a}})\citenamefont
  {Wu}, \citenamefont {Kondra}, \citenamefont {Rana}, \citenamefont {Scandolo},
  \citenamefont {Xiang}, \citenamefont {Li}, \citenamefont {Guo},\ and\
  \citenamefont {Streltsov}}]{wu_resource_2021}%
  \BibitemOpen
  \bibfield  {author} {\bibinfo {author} {\bibfnamefont {K.-D.}\ \bibnamefont
  {Wu}}, \bibinfo {author} {\bibfnamefont {T.~V.}\ \bibnamefont {Kondra}},
  \bibinfo {author} {\bibfnamefont {S.}~\bibnamefont {Rana}}, \bibinfo {author}
  {\bibfnamefont {C.~M.}\ \bibnamefont {Scandolo}}, \bibinfo {author}
  {\bibfnamefont {G.-Y.}\ \bibnamefont {Xiang}}, \bibinfo {author}
  {\bibfnamefont {C.-F.}\ \bibnamefont {Li}}, \bibinfo {author} {\bibfnamefont
  {G.-C.}\ \bibnamefont {Guo}}, \ and\ \bibinfo {author} {\bibfnamefont
  {A.}~\bibnamefont {Streltsov}},\ }\href {\doibase
  10.1103/PhysRevA.103.032401} {\bibfield  {journal} {\bibinfo  {journal}
  {Phys. Rev. A}\ }\textbf {\bibinfo {volume} {103}},\ \bibinfo {pages}
  {032401} (\bibinfo {year} {2021}{\natexlab{a}})}\BibitemShut {NoStop}%
\bibitem [{\citenamefont {Wu}\ \emph {et~al.}(2021{\natexlab{b}})\citenamefont
  {Wu}, \citenamefont {Kondra}, \citenamefont {Rana}, \citenamefont {Scandolo},
  \citenamefont {Xiang}, \citenamefont {Li}, \citenamefont {Guo},\ and\
  \citenamefont {Streltsov}}]{wu_operational_2021}%
  \BibitemOpen
  \bibfield  {author} {\bibinfo {author} {\bibfnamefont {K.-D.}\ \bibnamefont
  {Wu}}, \bibinfo {author} {\bibfnamefont {T.~V.}\ \bibnamefont {Kondra}},
  \bibinfo {author} {\bibfnamefont {S.}~\bibnamefont {Rana}}, \bibinfo {author}
  {\bibfnamefont {C.~M.}\ \bibnamefont {Scandolo}}, \bibinfo {author}
  {\bibfnamefont {G.-Y.}\ \bibnamefont {Xiang}}, \bibinfo {author}
  {\bibfnamefont {C.-F.}\ \bibnamefont {Li}}, \bibinfo {author} {\bibfnamefont
  {G.-C.}\ \bibnamefont {Guo}}, \ and\ \bibinfo {author} {\bibfnamefont
  {A.}~\bibnamefont {Streltsov}},\ }\href {\doibase
  10.1103/PhysRevLett.126.090401} {\bibfield  {journal} {\bibinfo  {journal}
  {Phys. Rev. Lett.}\ }\textbf {\bibinfo {volume} {126}},\ \bibinfo {pages}
  {090401} (\bibinfo {year} {2021}{\natexlab{b}})}\BibitemShut {NoStop}%
\bibitem [{\citenamefont {Wolfe}\ \emph {et~al.}(2020)\citenamefont {Wolfe},
  \citenamefont {Schmid}, \citenamefont {Sainz}, \citenamefont {Kunjwal},\ and\
  \citenamefont {Spekkens}}]{wolfe_quantifying_2020}%
  \BibitemOpen
  \bibfield  {author} {\bibinfo {author} {\bibfnamefont {E.}~\bibnamefont
  {Wolfe}}, \bibinfo {author} {\bibfnamefont {D.}~\bibnamefont {Schmid}},
  \bibinfo {author} {\bibfnamefont {A.~B.}\ \bibnamefont {Sainz}}, \bibinfo
  {author} {\bibfnamefont {R.}~\bibnamefont {Kunjwal}}, \ and\ \bibinfo
  {author} {\bibfnamefont {R.~W.}\ \bibnamefont {Spekkens}},\ }\href {\doibase
  10.22331/q-2020-06-08-280} {\bibfield  {journal} {\bibinfo  {journal}
  {Quantum}\ }\textbf {\bibinfo {volume} {4}},\ \bibinfo {pages} {280}
  (\bibinfo {year} {2020})}\BibitemShut {NoStop}%
\bibitem [{\citenamefont {Chitambar}\ and\ \citenamefont
  {Gour}(2019)}]{chitambar_quantum_2019}%
  \BibitemOpen
  \bibfield  {author} {\bibinfo {author} {\bibfnamefont {E.}~\bibnamefont
  {Chitambar}}\ and\ \bibinfo {author} {\bibfnamefont {G.}~\bibnamefont
  {Gour}},\ }\href {\doibase 10.1103/RevModPhys.91.025001} {\bibfield
  {journal} {\bibinfo  {journal} {Rev. Mod. Phys.}\ }\textbf {\bibinfo {volume}
  {91}},\ \bibinfo {pages} {025001} (\bibinfo {year} {2019})}\BibitemShut
  {NoStop}%
\bibitem [{\citenamefont {Gour}\ and\ \citenamefont
  {Scandolo}(2020{\natexlab{a}})}]{gour_dynamical_2020}%
  \BibitemOpen
  \bibfield  {author} {\bibinfo {author} {\bibfnamefont {G.}~\bibnamefont
  {Gour}}\ and\ \bibinfo {author} {\bibfnamefont {C.~M.}\ \bibnamefont
  {Scandolo}},\ }\href {\doibase 10.1103/PhysRevLett.125.180505} {\bibfield
  {journal} {\bibinfo  {journal} {Phys. Rev. Lett.}\ }\textbf {\bibinfo
  {volume} {125}},\ \bibinfo {pages} {180505} (\bibinfo {year}
  {2020}{\natexlab{a}})}\BibitemShut {NoStop}%
\bibitem [{\citenamefont {Gour}\ and\ \citenamefont
  {Scandolo}(2020{\natexlab{b}})}]{gour_dynamical_res_2020}%
  \BibitemOpen
  \bibfield  {author} {\bibinfo {author} {\bibfnamefont {G.}~\bibnamefont
  {Gour}}\ and\ \bibinfo {author} {\bibfnamefont {C.~M.}\ \bibnamefont
  {Scandolo}},\ }\href {https://doi.org/10.48550/arXiv.2101.01552} {\bibfield
  {journal} {\bibinfo  {journal} {arXiv:2101.01552}\ } (\bibinfo {year}
  {2020}{\natexlab{b}})}\BibitemShut {NoStop}%
\bibitem [{\citenamefont {Gour}\ and\ \citenamefont
  {Scandolo}(2021)}]{gour_entanglement_2021}%
  \BibitemOpen
  \bibfield  {author} {\bibinfo {author} {\bibfnamefont {G.}~\bibnamefont
  {Gour}}\ and\ \bibinfo {author} {\bibfnamefont {C.~M.}\ \bibnamefont
  {Scandolo}},\ }\href {\doibase 10.1103/PhysRevA.103.062422} {\bibfield
  {journal} {\bibinfo  {journal} {Phys. Rev. A}\ }\textbf {\bibinfo {volume}
  {103}},\ \bibinfo {pages} {062422} (\bibinfo {year} {2021})}\BibitemShut
  {NoStop}%
\bibitem [{\citenamefont {Chiribella}\ \emph
  {et~al.}(2013{\natexlab{a}})\citenamefont {Chiribella}, \citenamefont
  {D’Ariano}, \citenamefont {Perinotti},\ and\ \citenamefont
  {Valiron}}]{chiribella_quantum_2013}%
  \BibitemOpen
  \bibfield  {author} {\bibinfo {author} {\bibfnamefont {G.}~\bibnamefont
  {Chiribella}}, \bibinfo {author} {\bibfnamefont {G.~M.}\ \bibnamefont
  {D’Ariano}}, \bibinfo {author} {\bibfnamefont {P.}~\bibnamefont
  {Perinotti}}, \ and\ \bibinfo {author} {\bibfnamefont {B.}~\bibnamefont
  {Valiron}},\ }\href {\doibase 10.1103/PhysRevA.88.022318} {\bibfield
  {journal} {\bibinfo  {journal} {Phys. Rev. A}\ }\textbf {\bibinfo {volume}
  {88}},\ \bibinfo {pages} {022318} (\bibinfo {year}
  {2013}{\natexlab{a}})}\BibitemShut {NoStop}%
\bibitem [{\citenamefont {Regula}\ and\ \citenamefont
  {Takagi}(2021)}]{regula_fundamental_2021}%
  \BibitemOpen
  \bibfield  {author} {\bibinfo {author} {\bibfnamefont {B.}~\bibnamefont
  {Regula}}\ and\ \bibinfo {author} {\bibfnamefont {R.}~\bibnamefont
  {Takagi}},\ }\href {\doibase 10.1038/s41467-021-24699-0} {\bibfield
  {journal} {\bibinfo  {journal} {Nat. Commun.}\ }\textbf {\bibinfo {volume}
  {12}},\ \bibinfo {pages} {4411} (\bibinfo {year} {2021})}\BibitemShut
  {NoStop}%
\bibitem [{\citenamefont {Chiribella}\ and\ \citenamefont
  {Kristjánsson}(2019)}]{chiribella_quantum_2019}%
  \BibitemOpen
  \bibfield  {author} {\bibinfo {author} {\bibfnamefont {G.}~\bibnamefont
  {Chiribella}}\ and\ \bibinfo {author} {\bibfnamefont {H.}~\bibnamefont
  {Kristjánsson}},\ }\href {\doibase 10.1098/rspa.2018.0903} {\bibfield
  {journal} {\bibinfo  {journal} {Proc. R. Soc. A}\ }\textbf {\bibinfo {volume}
  {475}},\ \bibinfo {pages} {20180903} (\bibinfo {year} {2019})}\BibitemShut
  {NoStop}%
\bibitem [{\citenamefont {Kristj{\'a}nsson}\ \emph {et~al.}(2020)\citenamefont
  {Kristj{\'a}nsson}, \citenamefont {Chiribella}, \citenamefont {Salek},
  \citenamefont {Ebler},\ and\ \citenamefont
  {Wilson}}]{kristjansson_resource_2020}%
  \BibitemOpen
  \bibfield  {author} {\bibinfo {author} {\bibfnamefont {H.}~\bibnamefont
  {Kristj{\'a}nsson}}, \bibinfo {author} {\bibfnamefont {G.}~\bibnamefont
  {Chiribella}}, \bibinfo {author} {\bibfnamefont {S.}~\bibnamefont {Salek}},
  \bibinfo {author} {\bibfnamefont {D.}~\bibnamefont {Ebler}}, \ and\ \bibinfo
  {author} {\bibfnamefont {M.}~\bibnamefont {Wilson}},\ }\href {\doibase
  10.1088/1367-2630/ab8ef7} {\bibfield  {journal} {\bibinfo  {journal} {New J.
  Phys.}\ }\textbf {\bibinfo {volume} {22}},\ \bibinfo {pages} {073014}
  (\bibinfo {year} {2020})}\BibitemShut {NoStop}%
\bibitem [{\citenamefont {Bisio}\ and\ \citenamefont
  {Perinotti}(2019)}]{bisio_theoretical_2019}%
  \BibitemOpen
  \bibfield  {author} {\bibinfo {author} {\bibfnamefont {A.}~\bibnamefont
  {Bisio}}\ and\ \bibinfo {author} {\bibfnamefont {P.}~\bibnamefont
  {Perinotti}},\ }\href {\doibase 10.1098/rspa.2018.0706} {\bibfield  {journal}
  {\bibinfo  {journal} {Proc. R. Soc. A}\ }\textbf {\bibinfo {volume} {475}},\
  \bibinfo {pages} {20180706} (\bibinfo {year} {2019})}\BibitemShut {NoStop}%
\bibitem [{\citenamefont {Castro-Ruiz}\ \emph {et~al.}(2018)\citenamefont
  {Castro-Ruiz}, \citenamefont {Giacomini},\ and\ \citenamefont
  {Brukner}}]{castro-ruiz_dynamics_2018}%
  \BibitemOpen
  \bibfield  {author} {\bibinfo {author} {\bibfnamefont {E.}~\bibnamefont
  {Castro-Ruiz}}, \bibinfo {author} {\bibfnamefont {F.}~\bibnamefont
  {Giacomini}}, \ and\ \bibinfo {author} {\bibfnamefont {{\v C}.}~\bibnamefont
  {Brukner}},\ }\href {\doibase 10.1103/PhysRevX.8.011047} {\bibfield
  {journal} {\bibinfo  {journal} {Phys. Rev. X}\ }\textbf {\bibinfo {volume}
  {8}},\ \bibinfo {pages} {011047} (\bibinfo {year} {2018})}\BibitemShut
  {NoStop}%
\bibitem [{\citenamefont {Berk}\ \emph {et~al.}(2021)\citenamefont {Berk},
  \citenamefont {Garner}, \citenamefont {Yadin}, \citenamefont {Modi},\ and\
  \citenamefont {Pollock}}]{berk_resource_2021}%
  \BibitemOpen
  \bibfield  {author} {\bibinfo {author} {\bibfnamefont {G.~D.}\ \bibnamefont
  {Berk}}, \bibinfo {author} {\bibfnamefont {A.~J.~P.}\ \bibnamefont {Garner}},
  \bibinfo {author} {\bibfnamefont {B.}~\bibnamefont {Yadin}}, \bibinfo
  {author} {\bibfnamefont {K.}~\bibnamefont {Modi}}, \ and\ \bibinfo {author}
  {\bibfnamefont {F.~A.}\ \bibnamefont {Pollock}},\ }\href {\doibase
  10.22331/q-2021-04-20-435} {\bibfield  {journal} {\bibinfo  {journal}
  {Quantum}\ }\textbf {\bibinfo {volume} {5}},\ \bibinfo {pages} {435}
  (\bibinfo {year} {2021})}\BibitemShut {NoStop}%
\bibitem [{\citenamefont {Taddei}\ \emph {et~al.}(2019)\citenamefont {Taddei},
  \citenamefont {Nery},\ and\ \citenamefont {Aolita}}]{taddei_quantum_2019}%
  \BibitemOpen
  \bibfield  {author} {\bibinfo {author} {\bibfnamefont {M.~M.}\ \bibnamefont
  {Taddei}}, \bibinfo {author} {\bibfnamefont {R.~V.}\ \bibnamefont {Nery}}, \
  and\ \bibinfo {author} {\bibfnamefont {L.}~\bibnamefont {Aolita}},\ }\href
  {\doibase 10.1103/PhysRevResearch.1.033174} {\bibfield  {journal} {\bibinfo
  {journal} {Phys. Rev. Res.}\ }\textbf {\bibinfo {volume} {1}},\ \bibinfo
  {pages} {033174} (\bibinfo {year} {2019})}\BibitemShut {NoStop}%
\bibitem [{\citenamefont {Jia}\ and\ \citenamefont
  {Costa}(2019)}]{jia_causal_2019}%
  \BibitemOpen
  \bibfield  {author} {\bibinfo {author} {\bibfnamefont {D.}~\bibnamefont
  {Jia}}\ and\ \bibinfo {author} {\bibfnamefont {F.}~\bibnamefont {Costa}},\
  }\href {\doibase 10.1103/PhysRevA.100.052319} {\bibfield  {journal} {\bibinfo
   {journal} {Phys. Rev. A}\ }\textbf {\bibinfo {volume} {100}},\ \bibinfo
  {pages} {052319} (\bibinfo {year} {2019})}\BibitemShut {NoStop}%
\bibitem [{\citenamefont {Nery}\ \emph {et~al.}(2021)\citenamefont {Nery},
  \citenamefont {Quintino}, \citenamefont {Gu{\'e}rin}, \citenamefont
  {Maciel},\ and\ \citenamefont {Vianna}}]{nery_simple_2021}%
  \BibitemOpen
  \bibfield  {author} {\bibinfo {author} {\bibfnamefont {M.}~\bibnamefont
  {Nery}}, \bibinfo {author} {\bibfnamefont {M.~T.}\ \bibnamefont {Quintino}},
  \bibinfo {author} {\bibfnamefont {P.~A.}\ \bibnamefont {Gu{\'e}rin}},
  \bibinfo {author} {\bibfnamefont {T.~O.}\ \bibnamefont {Maciel}}, \ and\
  \bibinfo {author} {\bibfnamefont {R.~O.}\ \bibnamefont {Vianna}},\ }\href
  {\doibase 10.22331/q-2021-09-09-538} {\bibfield  {journal} {\bibinfo
  {journal} {Quantum}\ }\textbf {\bibinfo {volume} {5}},\ \bibinfo {pages}
  {538} (\bibinfo {year} {2021})}\BibitemShut {NoStop}%
\bibitem [{\citenamefont {Milz}\ \emph {et~al.}(2021)\citenamefont {Milz},
  \citenamefont {Spee}, \citenamefont {Xu}, \citenamefont {Pollock},
  \citenamefont {Modi},\ and\ \citenamefont {G{\"u}hne}}]{milz_genuine_2021}%
  \BibitemOpen
  \bibfield  {author} {\bibinfo {author} {\bibfnamefont {S.}~\bibnamefont
  {Milz}}, \bibinfo {author} {\bibfnamefont {C.}~\bibnamefont {Spee}}, \bibinfo
  {author} {\bibfnamefont {Z.-P.}\ \bibnamefont {Xu}}, \bibinfo {author}
  {\bibfnamefont {F.}~\bibnamefont {Pollock}}, \bibinfo {author} {\bibfnamefont
  {K.}~\bibnamefont {Modi}}, \ and\ \bibinfo {author} {\bibfnamefont
  {O.}~\bibnamefont {G{\"u}hne}},\ }\href {\doibase
  10.21468/SciPostPhys.10.6.141} {\bibfield  {journal} {\bibinfo  {journal}
  {SciPost Phys.}\ }\textbf {\bibinfo {volume} {10}},\ \bibinfo {pages} {141}
  (\bibinfo {year} {2021})}\BibitemShut {NoStop}%
\bibitem [{\citenamefont {Chiribella}(2012)}]{chiribella_perfect_2012}%
  \BibitemOpen
  \bibfield  {author} {\bibinfo {author} {\bibfnamefont {G.}~\bibnamefont
  {Chiribella}},\ }\href {\doibase 10.1103/PhysRevA.86.040301} {\bibfield
  {journal} {\bibinfo  {journal} {Phys. Rev. A}\ }\textbf {\bibinfo {volume}
  {86}},\ \bibinfo {pages} {040301} (\bibinfo {year} {2012})}\BibitemShut
  {NoStop}%
\bibitem [{\citenamefont {Ara{\'u}jo}\ \emph {et~al.}(2017)\citenamefont
  {Ara{\'u}jo}, \citenamefont {Gu{\'e}rin},\ and\ \citenamefont
  {Baumeler}}]{araujo_quantum_2017}%
  \BibitemOpen
  \bibfield  {author} {\bibinfo {author} {\bibfnamefont {M.}~\bibnamefont
  {Ara{\'u}jo}}, \bibinfo {author} {\bibfnamefont {P.~A.}\ \bibnamefont
  {Gu{\'e}rin}}, \ and\ \bibinfo {author} {\bibfnamefont {{\"A}.}~\bibnamefont
  {Baumeler}},\ }\href {\doibase 10.1103/PhysRevA.96.052315} {\bibfield
  {journal} {\bibinfo  {journal} {Phys. Rev. A}\ }\textbf {\bibinfo {volume}
  {96}},\ \bibinfo {pages} {052315} (\bibinfo {year} {2017})}\BibitemShut
  {NoStop}%
\bibitem [{\citenamefont {Quintino}\ \emph {et~al.}(2019)\citenamefont
  {Quintino}, \citenamefont {Dong}, \citenamefont {Shimbo}, \citenamefont
  {Soeda},\ and\ \citenamefont {Murao}}]{quintino_inversion_2018}%
  \BibitemOpen
  \bibfield  {author} {\bibinfo {author} {\bibfnamefont {M.~T.}\ \bibnamefont
  {Quintino}}, \bibinfo {author} {\bibfnamefont {Q.}~\bibnamefont {Dong}},
  \bibinfo {author} {\bibfnamefont {A.}~\bibnamefont {Shimbo}}, \bibinfo
  {author} {\bibfnamefont {A.}~\bibnamefont {Soeda}}, \ and\ \bibinfo {author}
  {\bibfnamefont {M.}~\bibnamefont {Murao}},\ }\href {\doibase
  10.1103/PhysRevLett.123.210502} {\bibfield  {journal} {\bibinfo  {journal}
  {Phys. Rev. Lett.}\ }\textbf {\bibinfo {volume} {123}},\ \bibinfo {pages}
  {210502} (\bibinfo {year} {2019})}\BibitemShut {NoStop}%
\bibitem [{\citenamefont {Bavaresco}\ \emph {et~al.}(2021)\citenamefont
  {Bavaresco}, \citenamefont {Murao},\ and\ \citenamefont
  {Quintino}}]{bavaresco_strict_2020}%
  \BibitemOpen
  \bibfield  {author} {\bibinfo {author} {\bibfnamefont {J.}~\bibnamefont
  {Bavaresco}}, \bibinfo {author} {\bibfnamefont {M.}~\bibnamefont {Murao}}, \
  and\ \bibinfo {author} {\bibfnamefont {M.~T.}\ \bibnamefont {Quintino}},\
  }\href {\doibase 10.1103/PhysRevLett.127.200504} {\bibfield  {journal}
  {\bibinfo  {journal} {Phys. Rev. Lett.}\ }\textbf {\bibinfo {volume} {127}},\
  \bibinfo {pages} {200504} (\bibinfo {year} {2021})}\BibitemShut {NoStop}%
\bibitem [{\citenamefont {Bavaresco}\ \emph {et~al.}(2022)\citenamefont
  {Bavaresco}, \citenamefont {Murao},\ and\ \citenamefont
  {Quintino}}]{bavaresco_unitaries_2021}%
  \BibitemOpen
  \bibfield  {author} {\bibinfo {author} {\bibfnamefont {J.}~\bibnamefont
  {Bavaresco}}, \bibinfo {author} {\bibfnamefont {M.}~\bibnamefont {Murao}}, \
  and\ \bibinfo {author} {\bibfnamefont {M.~T.}\ \bibnamefont {Quintino}},\
  }\href {\doibase 10.1063/5.0075919} {\bibfield  {journal} {\bibinfo
  {journal} {J. Math. Phys.}\ }\textbf {\bibinfo {volume} {63}},\ \bibinfo
  {pages} {042203} (\bibinfo {year} {2022})}\BibitemShut {NoStop}%
\bibitem [{\citenamefont {de~Pillis}(1967)}]{de_pillis_linear_1967}%
  \BibitemOpen
  \bibfield  {author} {\bibinfo {author} {\bibfnamefont {J.}~\bibnamefont
  {de~Pillis}},\ }\href {https://doi.org/10.2140/pjm.1967.23.129} {\bibfield
  {journal} {\bibinfo  {journal} {Pac. J. Math}\ }\textbf {\bibinfo {volume}
  {23}},\ \bibinfo {pages} {129} (\bibinfo {year} {1967})}\BibitemShut
  {NoStop}%
\bibitem [{\citenamefont {Jamio{\l}kowski}(1972)}]{jamiolkowski_linear_1972}%
  \BibitemOpen
  \bibfield  {author} {\bibinfo {author} {\bibfnamefont {A.}~\bibnamefont
  {Jamio{\l}kowski}},\ }\href {\doibase 10.1016/0034-4877(72)90011-0}
  {\bibfield  {journal} {\bibinfo  {journal} {Rep. Math. Phys.}\ }\textbf
  {\bibinfo {volume} {3}},\ \bibinfo {pages} {275} (\bibinfo {year}
  {1972})}\BibitemShut {NoStop}%
\bibitem [{\citenamefont {Choi}(1975)}]{choi1975}%
  \BibitemOpen
  \bibfield  {author} {\bibinfo {author} {\bibfnamefont {M.-D.}\ \bibnamefont
  {Choi}},\ }\href {\doibase 10.1016/0024-3795(75)90075-0} {\bibfield
  {journal} {\bibinfo  {journal} {Linear Algebra Appl.}\ }\textbf {\bibinfo
  {volume} {10}},\ \bibinfo {pages} {285} (\bibinfo {year} {1975})}\BibitemShut
  {NoStop}%
\bibitem [{\citenamefont {Chiribella}\ \emph
  {et~al.}(2008{\natexlab{c}})\citenamefont {Chiribella}, \citenamefont
  {D'Ariano},\ and\ \citenamefont {Perinotti}}]{chiribella_memory_2008}%
  \BibitemOpen
  \bibfield  {author} {\bibinfo {author} {\bibfnamefont {G.}~\bibnamefont
  {Chiribella}}, \bibinfo {author} {\bibfnamefont {G.~M.}\ \bibnamefont
  {D'Ariano}}, \ and\ \bibinfo {author} {\bibfnamefont {P.}~\bibnamefont
  {Perinotti}},\ }\href {\doibase 10.1103/PhysRevLett.101.180501} {\bibfield
  {journal} {\bibinfo  {journal} {Phys. Rev. Lett.}\ }\textbf {\bibinfo
  {volume} {101}},\ \bibinfo {pages} {180501} (\bibinfo {year}
  {2008}{\natexlab{c}})}\BibitemShut {NoStop}%
\bibitem [{\citenamefont {Shrapnel}\ \emph {et~al.}(2018)\citenamefont
  {Shrapnel}, \citenamefont {Costa},\ and\ \citenamefont
  {Milburn}}]{shrapnel_updating_2017}%
  \BibitemOpen
  \bibfield  {author} {\bibinfo {author} {\bibfnamefont {S.}~\bibnamefont
  {Shrapnel}}, \bibinfo {author} {\bibfnamefont {F.}~\bibnamefont {Costa}}, \
  and\ \bibinfo {author} {\bibfnamefont {G.}~\bibnamefont {Milburn}},\ }\href
  {\doibase 10.1088/1367-2630/aabe12} {\bibfield  {journal} {\bibinfo
  {journal} {New J. Phys.}\ }\textbf {\bibinfo {volume} {20}},\ \bibinfo
  {pages} {053010} (\bibinfo {year} {2018})}\BibitemShut {NoStop}%
\bibitem [{\citenamefont {Piani}\ \emph {et~al.}(2006)\citenamefont {Piani},
  \citenamefont {Horodecki}, \citenamefont {Horodecki},\ and\ \citenamefont
  {Horodecki}}]{piani_properties_2006}%
  \BibitemOpen
  \bibfield  {author} {\bibinfo {author} {\bibfnamefont {M.}~\bibnamefont
  {Piani}}, \bibinfo {author} {\bibfnamefont {M.}~\bibnamefont {Horodecki}},
  \bibinfo {author} {\bibfnamefont {P.}~\bibnamefont {Horodecki}}, \ and\
  \bibinfo {author} {\bibfnamefont {R.}~\bibnamefont {Horodecki}},\ }\href
  {\doibase 10.1103/PhysRevA.74.012305} {\bibfield  {journal} {\bibinfo
  {journal} {Phys. Rev. A}\ }\textbf {\bibinfo {volume} {74}},\ \bibinfo
  {pages} {012305} (\bibinfo {year} {2006})}\BibitemShut {NoStop}%
\bibitem [{\citenamefont {Eggeling}\ \emph {et~al.}(2002)\citenamefont
  {Eggeling}, \citenamefont {Schlingemann},\ and\ \citenamefont
  {Werner}}]{eggeling_semicausal_2002}%
  \BibitemOpen
  \bibfield  {author} {\bibinfo {author} {\bibfnamefont {T.}~\bibnamefont
  {Eggeling}}, \bibinfo {author} {\bibfnamefont {D.}~\bibnamefont
  {Schlingemann}}, \ and\ \bibinfo {author} {\bibfnamefont {R.~F.}\
  \bibnamefont {Werner}},\ }\href {\doibase 10.1209/epl/i2002-00579-4}
  {\bibfield  {journal} {\bibinfo  {journal} {EPL}\ }\textbf {\bibinfo {volume}
  {57}},\ \bibinfo {pages} {782} (\bibinfo {year} {2002})}\BibitemShut
  {NoStop}%
\bibitem [{\citenamefont {Oreshkov}\ and\ \citenamefont
  {Giarmatzi}(2016)}]{oreshkov_causal_2016}%
  \BibitemOpen
  \bibfield  {author} {\bibinfo {author} {\bibfnamefont {O.}~\bibnamefont
  {Oreshkov}}\ and\ \bibinfo {author} {\bibfnamefont {C.}~\bibnamefont
  {Giarmatzi}},\ }\href {\doibase 10.1088/1367-2630/18/9/093020} {\bibfield
  {journal} {\bibinfo  {journal} {New J. Phys.}\ }\textbf {\bibinfo {volume}
  {18}},\ \bibinfo {pages} {093020} (\bibinfo {year} {2016})}\BibitemShut
  {NoStop}%
\bibitem [{\citenamefont {Wechs}\ \emph {et~al.}(2019)\citenamefont {Wechs},
  \citenamefont {Abbott},\ and\ \citenamefont
  {Branciard}}]{wechs_definition_2019}%
  \BibitemOpen
  \bibfield  {author} {\bibinfo {author} {\bibfnamefont {J.}~\bibnamefont
  {Wechs}}, \bibinfo {author} {\bibfnamefont {A.~A.}\ \bibnamefont {Abbott}}, \
  and\ \bibinfo {author} {\bibfnamefont {C.}~\bibnamefont {Branciard}},\ }\href
  {\doibase 10.1088/1367-2630/aaf352} {\bibfield  {journal} {\bibinfo
  {journal} {New J. Phys.}\ }\textbf {\bibinfo {volume} {21}},\ \bibinfo
  {pages} {013027} (\bibinfo {year} {2019})}\BibitemShut {NoStop}%
\bibitem [{\citenamefont {Chiribella}\ \emph
  {et~al.}(2013{\natexlab{b}})\citenamefont {Chiribella}, \citenamefont
  {D'Ariano}, \citenamefont {Perinotti},\ and\ \citenamefont
  {Valiron}}]{PhysRevA.88.022318}%
  \BibitemOpen
  \bibfield  {author} {\bibinfo {author} {\bibfnamefont {G.}~\bibnamefont
  {Chiribella}}, \bibinfo {author} {\bibfnamefont {G.~M.}\ \bibnamefont
  {D'Ariano}}, \bibinfo {author} {\bibfnamefont {P.}~\bibnamefont {Perinotti}},
  \ and\ \bibinfo {author} {\bibfnamefont {B.}~\bibnamefont {Valiron}},\ }\href
  {\doibase 10.1103/PhysRevA.88.022318} {\bibfield  {journal} {\bibinfo
  {journal} {Phys. Rev. A}\ }\textbf {\bibinfo {volume} {88}},\ \bibinfo
  {pages} {022318} (\bibinfo {year} {2013}{\natexlab{b}})}\BibitemShut
  {NoStop}%
\bibitem [{\citenamefont {Beckman}\ \emph {et~al.}(2001)\citenamefont
  {Beckman}, \citenamefont {Gottesman}, \citenamefont {Nielsen},\ and\
  \citenamefont {Preskill}}]{beckman_causal_2001}%
  \BibitemOpen
  \bibfield  {author} {\bibinfo {author} {\bibfnamefont {D.}~\bibnamefont
  {Beckman}}, \bibinfo {author} {\bibfnamefont {D.}~\bibnamefont {Gottesman}},
  \bibinfo {author} {\bibfnamefont {M.~A.}\ \bibnamefont {Nielsen}}, \ and\
  \bibinfo {author} {\bibfnamefont {J.}~\bibnamefont {Preskill}},\ }\href
  {\doibase 10.1103/PhysRevA.64.052309} {\bibfield  {journal} {\bibinfo
  {journal} {Phys. Rev. A}\ }\textbf {\bibinfo {volume} {64}},\ \bibinfo
  {pages} {052309} (\bibinfo {year} {2001})}\BibitemShut {NoStop}%
\bibitem [{\citenamefont {Gutoski}(2009)}]{Gutoski09}%
  \BibitemOpen
  \bibfield  {author} {\bibinfo {author} {\bibfnamefont {G.}~\bibnamefont
  {Gutoski}},\ }\href {https://doi.org/10.26421/qic9.9-10-2} {\bibfield
  {journal} {\bibinfo  {journal} {Quantum Inf. Comput.}\ }\textbf {\bibinfo
  {volume} {9}},\ \bibinfo {pages} {739} (\bibinfo {year} {2009})}\BibitemShut
  {NoStop}%
\bibitem [{\citenamefont {Schmid}\ \emph {et~al.}(2021)\citenamefont {Schmid},
  \citenamefont {Du}, \citenamefont {Mudassar}, \citenamefont {Wit},
  \citenamefont {Rosset},\ and\ \citenamefont
  {Hoban}}]{schmid_postquantum_2021}%
  \BibitemOpen
  \bibfield  {author} {\bibinfo {author} {\bibfnamefont {D.}~\bibnamefont
  {Schmid}}, \bibinfo {author} {\bibfnamefont {H.}~\bibnamefont {Du}}, \bibinfo
  {author} {\bibfnamefont {M.}~\bibnamefont {Mudassar}}, \bibinfo {author}
  {\bibfnamefont {G.~C.-d.}\ \bibnamefont {Wit}}, \bibinfo {author}
  {\bibfnamefont {D.}~\bibnamefont {Rosset}}, \ and\ \bibinfo {author}
  {\bibfnamefont {M.~J.}\ \bibnamefont {Hoban}},\ }\href {\doibase
  10.22331/q-2021-03-23-419} {\bibfield  {journal} {\bibinfo  {journal}
  {Quantum}\ }\textbf {\bibinfo {volume} {5}},\ \bibinfo {pages} {419}
  (\bibinfo {year} {2021})}\BibitemShut {NoStop}%
\bibitem [{\citenamefont {Chitambar}\ \emph {et~al.}(2014)\citenamefont
  {Chitambar}, \citenamefont {Leung}, \citenamefont {Mančinska}, \citenamefont
  {Ozols},\ and\ \citenamefont {Winter}}]{chitambar_everything_2014}%
  \BibitemOpen
  \bibfield  {author} {\bibinfo {author} {\bibfnamefont {E.}~\bibnamefont
  {Chitambar}}, \bibinfo {author} {\bibfnamefont {D.}~\bibnamefont {Leung}},
  \bibinfo {author} {\bibfnamefont {L.}~\bibnamefont {Mančinska}}, \bibinfo
  {author} {\bibfnamefont {M.}~\bibnamefont {Ozols}}, \ and\ \bibinfo {author}
  {\bibfnamefont {A.}~\bibnamefont {Winter}},\ }\href {\doibase
  10.1007/s00220-014-1953-9} {\bibfield  {journal} {\bibinfo  {journal}
  {Commun. Math. Phys.}\ }\textbf {\bibinfo {volume} {328}},\ \bibinfo {pages}
  {303} (\bibinfo {year} {2014})}\BibitemShut {NoStop}%
\bibitem [{\citenamefont {Chen}\ and\ \citenamefont
  {Chitambar}(2020)}]{chen_entanglement-breaking_2019}%
  \BibitemOpen
  \bibfield  {author} {\bibinfo {author} {\bibfnamefont {S.}~\bibnamefont
  {Chen}}\ and\ \bibinfo {author} {\bibfnamefont {E.}~\bibnamefont
  {Chitambar}},\ }\href {\doibase 10.22331/q-2020-07-16-299} {\bibfield
  {journal} {\bibinfo  {journal} {Quantum}\ }\textbf {\bibinfo {volume} {4}},\
  \bibinfo {pages} {299} (\bibinfo {year} {2020})}\BibitemShut {NoStop}%
\bibitem [{\citenamefont {Bennett}\ \emph {et~al.}(1999)\citenamefont
  {Bennett}, \citenamefont {DiVincenzo}, \citenamefont {Fuchs}, \citenamefont
  {Mor}, \citenamefont {Rains}, \citenamefont {Shor}, \citenamefont {Smolin},\
  and\ \citenamefont {Wootters}}]{bennett_quantum_1999}%
  \BibitemOpen
  \bibfield  {author} {\bibinfo {author} {\bibfnamefont {C.~H.}\ \bibnamefont
  {Bennett}}, \bibinfo {author} {\bibfnamefont {D.~P.}\ \bibnamefont
  {DiVincenzo}}, \bibinfo {author} {\bibfnamefont {C.~A.}\ \bibnamefont
  {Fuchs}}, \bibinfo {author} {\bibfnamefont {T.}~\bibnamefont {Mor}}, \bibinfo
  {author} {\bibfnamefont {E.}~\bibnamefont {Rains}}, \bibinfo {author}
  {\bibfnamefont {P.~W.}\ \bibnamefont {Shor}}, \bibinfo {author}
  {\bibfnamefont {J.~A.}\ \bibnamefont {Smolin}}, \ and\ \bibinfo {author}
  {\bibfnamefont {W.~K.}\ \bibnamefont {Wootters}},\ }\href {\doibase
  10.1103/PhysRevA.59.1070} {\bibfield  {journal} {\bibinfo  {journal} {Phys.
  Rev. A}\ }\textbf {\bibinfo {volume} {59}},\ \bibinfo {pages} {1070}
  (\bibinfo {year} {1999})}\BibitemShut {NoStop}%
\bibitem [{\citenamefont {Steiner}(2003)}]{steiner_generalized_2003}%
  \BibitemOpen
  \bibfield  {author} {\bibinfo {author} {\bibfnamefont {M.}~\bibnamefont
  {Steiner}},\ }\href {\doibase 10.1103/PhysRevA.67.054305} {\bibfield
  {journal} {\bibinfo  {journal} {Phys. Rev. A}\ }\textbf {\bibinfo {volume}
  {67}},\ \bibinfo {pages} {054305} (\bibinfo {year} {2003})}\BibitemShut
  {NoStop}%
\bibitem [{\citenamefont {Uola}\ \emph {et~al.}(2019)\citenamefont {Uola},
  \citenamefont {Kraft}, \citenamefont {Shang}, \citenamefont {Yu},\ and\
  \citenamefont {Gühne}}]{uola_quantifying_2019}%
  \BibitemOpen
  \bibfield  {author} {\bibinfo {author} {\bibfnamefont {R.}~\bibnamefont
  {Uola}}, \bibinfo {author} {\bibfnamefont {T.}~\bibnamefont {Kraft}},
  \bibinfo {author} {\bibfnamefont {J.}~\bibnamefont {Shang}}, \bibinfo
  {author} {\bibfnamefont {X.-D.}\ \bibnamefont {Yu}}, \ and\ \bibinfo {author}
  {\bibfnamefont {O.}~\bibnamefont {Gühne}},\ }\href {\doibase
  10.1103/PhysRevLett.122.130404} {\bibfield  {journal} {\bibinfo  {journal}
  {Phys. Rev. Lett.}\ }\textbf {\bibinfo {volume} {122}},\ \bibinfo {pages}
  {130404} (\bibinfo {year} {2019})}\BibitemShut {NoStop}%
\bibitem [{\citenamefont {Uola}\ \emph {et~al.}(2020)\citenamefont {Uola},
  \citenamefont {Kraft},\ and\ \citenamefont
  {Abbott}}]{uola_quantification_2020}%
  \BibitemOpen
  \bibfield  {author} {\bibinfo {author} {\bibfnamefont {R.}~\bibnamefont
  {Uola}}, \bibinfo {author} {\bibfnamefont {T.}~\bibnamefont {Kraft}}, \ and\
  \bibinfo {author} {\bibfnamefont {A.~A.}\ \bibnamefont {Abbott}},\ }\href
  {\doibase 10.1103/PhysRevA.101.052306} {\bibfield  {journal} {\bibinfo
  {journal} {Phys. Rev. A}\ }\textbf {\bibinfo {volume} {101}},\ \bibinfo
  {pages} {052306} (\bibinfo {year} {2020})}\BibitemShut {NoStop}%
\bibitem [{\citenamefont {Bavaresco}\ \emph {et~al.}(2019)\citenamefont
  {Bavaresco}, \citenamefont {Ara{\'u}jo}, \citenamefont {Brukner},\ and\
  \citenamefont {T{\'u}lio~Quintino}}]{bavaresco_semi-device-independent_2019}%
  \BibitemOpen
  \bibfield  {author} {\bibinfo {author} {\bibfnamefont {J.}~\bibnamefont
  {Bavaresco}}, \bibinfo {author} {\bibfnamefont {M.}~\bibnamefont
  {Ara{\'u}jo}}, \bibinfo {author} {\bibfnamefont {{\v C}.}~\bibnamefont
  {Brukner}}, \ and\ \bibinfo {author} {\bibfnamefont {M.}~\bibnamefont
  {T{\'u}lio~Quintino}},\ }\href {\doibase 10.22331/q-2019-08-19-176}
  {\bibfield  {journal} {\bibinfo  {journal} {Quantum}\ }\textbf {\bibinfo
  {volume} {3}},\ \bibinfo {pages} {176} (\bibinfo {year} {2019})}\BibitemShut
  {NoStop}%
\bibitem [{\citenamefont {Branciard}\ \emph {et~al.}(2015)\citenamefont
  {Branciard}, \citenamefont {Ara{\'u}jo}, \citenamefont {Feix}, \citenamefont
  {Costa},\ and\ \citenamefont {Brukner}}]{branciard_simplest_2015}%
  \BibitemOpen
  \bibfield  {author} {\bibinfo {author} {\bibfnamefont {C.}~\bibnamefont
  {Branciard}}, \bibinfo {author} {\bibfnamefont {M.}~\bibnamefont
  {Ara{\'u}jo}}, \bibinfo {author} {\bibfnamefont {A.}~\bibnamefont {Feix}},
  \bibinfo {author} {\bibfnamefont {F.}~\bibnamefont {Costa}}, \ and\ \bibinfo
  {author} {\bibfnamefont {{\v C}.}~\bibnamefont {Brukner}},\ }\href {\doibase
  10.1088/1367-2630/18/1/013008} {\bibfield  {journal} {\bibinfo  {journal}
  {New J. Phys.}\ }\textbf {\bibinfo {volume} {18}},\ \bibinfo {pages} {013008}
  (\bibinfo {year} {2015})}\BibitemShut {NoStop}%
\bibitem [{git()}]{github}%
  \BibitemOpen
  \href@noop {} {}\bibinfo {note}
  {\href{https://github.com/jessicabavaresco/resource-theory-causal-connection}{{https://github.com/jessicabavaresco/}
  \\ {resource-theory-causal-connection}}}\BibitemShut {NoStop}%
\bibitem [{\citenamefont {Rosset}\ \emph {et~al.}(2018)\citenamefont {Rosset},
  \citenamefont {Buscemi},\ and\ \citenamefont {Liang}}]{rosset_resource_2018}%
  \BibitemOpen
  \bibfield  {author} {\bibinfo {author} {\bibfnamefont {D.}~\bibnamefont
  {Rosset}}, \bibinfo {author} {\bibfnamefont {F.}~\bibnamefont {Buscemi}}, \
  and\ \bibinfo {author} {\bibfnamefont {Y.-C.}\ \bibnamefont {Liang}},\ }\href
  {\doibase 10.1103/PhysRevX.8.021033} {\bibfield  {journal} {\bibinfo
  {journal} {Phys. Rev. X}\ }\textbf {\bibinfo {volume} {8}},\ \bibinfo {pages}
  {021033} (\bibinfo {year} {2018})}\BibitemShut {NoStop}%
\bibitem [{\citenamefont {Gu{\'e}rin}\ \emph {et~al.}(2019)\citenamefont
  {Gu{\'e}rin}, \citenamefont {Krumm}, \citenamefont {Budroni},\ and\
  \citenamefont {Brukner}}]{guerin_composition_2019}%
  \BibitemOpen
  \bibfield  {author} {\bibinfo {author} {\bibfnamefont {P.~A.}\ \bibnamefont
  {Gu{\'e}rin}}, \bibinfo {author} {\bibfnamefont {M.}~\bibnamefont {Krumm}},
  \bibinfo {author} {\bibfnamefont {C.}~\bibnamefont {Budroni}}, \ and\
  \bibinfo {author} {\bibfnamefont {{\v C}.}~\bibnamefont {Brukner}},\ }\href
  {\doibase 10.1088/1367-2630/aafef7} {\bibfield  {journal} {\bibinfo
  {journal} {New J. Phys.}\ }\textbf {\bibinfo {volume} {21}},\ \bibinfo
  {pages} {012001} (\bibinfo {year} {2019})}\BibitemShut {NoStop}%
\bibitem [{\citenamefont {Bennett}\ \emph
  {et~al.}(1996{\natexlab{a}})\citenamefont {Bennett}, \citenamefont
  {Bernstein}, \citenamefont {Popescu},\ and\ \citenamefont
  {Schumacher}}]{bennett_concentrating_1996}%
  \BibitemOpen
  \bibfield  {author} {\bibinfo {author} {\bibfnamefont {C.~H.}\ \bibnamefont
  {Bennett}}, \bibinfo {author} {\bibfnamefont {H.~J.}\ \bibnamefont
  {Bernstein}}, \bibinfo {author} {\bibfnamefont {S.}~\bibnamefont {Popescu}},
  \ and\ \bibinfo {author} {\bibfnamefont {B.}~\bibnamefont {Schumacher}},\
  }\href {\doibase 10.1103/PhysRevA.53.2046} {\bibfield  {journal} {\bibinfo
  {journal} {Phys. Rev. A}\ }\textbf {\bibinfo {volume} {53}},\ \bibinfo
  {pages} {2046} (\bibinfo {year} {1996}{\natexlab{a}})}\BibitemShut {NoStop}%
\bibitem [{\citenamefont {Bennett}\ \emph
  {et~al.}(1996{\natexlab{b}})\citenamefont {Bennett}, \citenamefont
  {Brassard}, \citenamefont {Popescu}, \citenamefont {Schumacher},
  \citenamefont {Smolin},\ and\ \citenamefont
  {Wootters}}]{bennett_purification_1996}%
  \BibitemOpen
  \bibfield  {author} {\bibinfo {author} {\bibfnamefont {C.~H.}\ \bibnamefont
  {Bennett}}, \bibinfo {author} {\bibfnamefont {G.}~\bibnamefont {Brassard}},
  \bibinfo {author} {\bibfnamefont {S.}~\bibnamefont {Popescu}}, \bibinfo
  {author} {\bibfnamefont {B.}~\bibnamefont {Schumacher}}, \bibinfo {author}
  {\bibfnamefont {J.~A.}\ \bibnamefont {Smolin}}, \ and\ \bibinfo {author}
  {\bibfnamefont {W.~K.}\ \bibnamefont {Wootters}},\ }\href {\doibase
  10.1103/PhysRevLett.76.722} {\bibfield  {journal} {\bibinfo  {journal} {Phys.
  Rev. Lett.}\ }\textbf {\bibinfo {volume} {76}},\ \bibinfo {pages} {722}
  (\bibinfo {year} {1996}{\natexlab{b}})}\BibitemShut {NoStop}%
\bibitem [{\citenamefont {T{\'u}lio~Quintino}\ and\ \citenamefont
  {Milz}(2021)}]{TulioMilz2021}%
  \BibitemOpen
  \bibfield  {author} {\bibinfo {author} {\bibfnamefont {M.}~\bibnamefont
  {T{\'u}lio~Quintino}}\ and\ \bibinfo {author} {\bibfnamefont
  {S.}~\bibnamefont {Milz}},\ }\href@noop {} {\bibfield  {journal} {\bibinfo
  {journal} {\textit{In preparation}}\ } (\bibinfo {year} {2021})}\BibitemShut
  {NoStop}%
\bibitem [{\citenamefont {Brandsen}\ \emph {et~al.}(2022)\citenamefont
  {Brandsen}, \citenamefont {Geng},\ and\ \citenamefont
  {Gour}}]{brandsen_what_2021}%
  \BibitemOpen
  \bibfield  {author} {\bibinfo {author} {\bibfnamefont {S.}~\bibnamefont
  {Brandsen}}, \bibinfo {author} {\bibfnamefont {I.~J.}\ \bibnamefont {Geng}},
  \ and\ \bibinfo {author} {\bibfnamefont {G.}~\bibnamefont {Gour}},\ }\href
  {\doibase 10.1103/PhysRevE.105.024117} {\bibfield  {journal} {\bibinfo
  {journal} {Phys. Rev. E}\ }\textbf {\bibinfo {volume} {105}},\ \bibinfo
  {pages} {024117} (\bibinfo {year} {2022})}\BibitemShut {NoStop}%
\end{thebibliography}%

\onecolumngrid
\appendix

\section*{APPENDIX}

%%%%%%%%%%%%%%%%%%%%%%%%%%%%%%%%%%%%%%%%%
\section{Characterisation of admissible adapters}
\label{app::Charc_LegalAdapters}

An admissible adapter $\Upsilon^{\texttt{A}}_{\Ads_I\Ads_O\Bds_I\Bds_O}$ should -- besides being positive -- map process matrices $W\in \Bcal(\Hcal_{A} \otimes \Hcal_{B})$ onto process matrices $W'\in \Bcal(\Hcal_{A'}\otimes \Hcal_{B'})$. Equivalently, as discussed, it can be understood as a map that transforms non-signalling maps $M'_{A'B'} \in \Bcal(\Hcal_{A'} \otimes \Hcal_{B'})$ to non-signalling maps $M_{AB} \in \Bcal(\Hcal_A\otimes \Hcal_B)$. Let us take the latter standpoint to derive the linear constraints on the set of adapters using techniques from~\cite{TulioMilz2021}. First, we see that for any non-signalling map $N' \in \Bcal(\Hcal_{A'} \otimes \Hcal_{B'})$ (whenever there is no risk of confusion, we will drop the subscripts from now on) we have 
\begin{gather}
    N' = N' - {}_{A_O'}N' + {}_{A_I'A_O'}N' \quad \text{and} \quad N' = N' - {}_{B_O'}N' + {}_{B_O'B_I'}N'\, .
\end{gather}
Combining these two conditions, we see that 
\begin{gather}
\begin{split}
    N' &= N' - {}_{A_O'}N' + {}_{A_I'A_O'}N' - {}_{B_O'}N' + {}_{A_O'B_O'}N' - {}_{A_I'A_O'B_O'}N' +  {}_{B_I'B_O'}N' - {}_{A_O'B_I'B_O'}N' + {}_{A_I'A_O'B_I'B_O'}N'  \\
    &=: L_{ns}'[N']
\end{split}
\end{gather}
holds for all non-signalling maps. Naturally, this equation is stronger than the requirement for general trace preserving maps, since it implies ${}_{A_O'B_O'}N' = {}_{A_I'A_O'B_I'B_O'}N'$, which is the TP requirement on the Choi matrices of maps $\Bcal(\Hcal_{A_{I}'}\otimes \Hcal_{B_I'}) \rightarrow \Bcal(\Hcal_{A_{O}'}\otimes \Hcal_{B_O'})$. The operator $L'_{NS}$ defined above is a trace-preserving self-dual projector. Now, demanding that $\Upsilon^{\texttt{A}}$ maps non-signalling maps to non-signalling maps implies
\begin{gather}
    L_{ns}[\Upsilon^{\texttt{A}} \star N'] = \Upsilon^{\texttt{A}} \star N'. 
\end{gather}
for all non-signalling maps $N'$. Since the span of the set of non-signalling maps is not the full space of Hermitian matrices, the above equation does not yet fully determine the linear constraints on $\Upsilon^{\texttt{A}}$. However, since $L_{ns}'$ is a projector, we see that every matrix $N' = L_{ns}'[R']$ satisfies $N' = L_{ns}'[N']$ for arbitrary $R' \in \Bcal(\Hcal_{A'} \otimes \Hcal_{B'})$, implying that the span of the set of non-signalling maps is given by $L'_{ns}[\Bcal(\Hcal_{A'}\otimes \Hcal_{B'})]$. With this, using the self-duality of $L_{ns}'$, we can transform the above equation to a full linear constraint on $\Upsilon^{\texttt{A}}$ such that it reads
\begin{gather}
    (L_{ns} \otimes L_{ns}')[\Upsilon^{\texttt{A}}] \star R' = (\Ical_{AB} \otimes L_{ns}')[\Upsilon^{\texttt{A}}] \star R'
\end{gather}
for \textit{all} $R' \in \Bcal(\Hcal_{A'}\otimes \Hcal_{B'})$. This, then, implies that \begin{gather}
    \Upsilon^{\texttt{A}} = \Upsilon^{\texttt{A}} - (\Ical_{AB} \otimes L_{ns}')[\Upsilon^{\texttt{A}}] + (L_{ns} \otimes L_{ns}')[\Upsilon^{\texttt{A}}]\,.
\end{gather}
Additionally, $\Upsilon^{\texttt{A}}$ is trace-rescaling on the set of non-signalling maps (since it maps channels to channels), i.e., 
\begin{gather}
    \tr(\Upsilon^{\texttt{A}} \star N') = \frac{d_{A_I}d_{B_I}}{d_{A'_I}d_{B'_I}} \tr(N')
\end{gather}
for all non-signalling maps $N'$. Using the same arguments as above, as well as the trace preservation of $L_{ns}'$, this translates to 
\begin{gather}
    \tr[(\Ical_{AB} \otimes L_{ns}'[\Upsilon^{\texttt{A}}]) (R^{\prime \mathrm{T}} \otimes \ident_{AB})] = \frac{d_{A_I}d_{B_I}}{d_{A'_I}d_{B'_I}} \tr(R')\, ,
\end{gather}
which implies 
\begin{gather}
    \tr_{AB}\{\Ical_{AB} \otimes L_{ns}'[\Upsilon^{\texttt{A}}]\} = \frac{d_{A_I}d_{B_I}}{d_{A'_I}d_{B'_I}} \ident_{A'B'}\, .
\end{gather}
With this, we obtain
\begin{gather}
    \tr(\Upsilon^{\texttt{A}}) =  d_{A_I}d_{B_I}d_{A_O'}d_{B_O'}. 
\end{gather}
Using this, we see that
\begin{gather}
    {}_{AB}(L_{ns}'[\Upsilon^{\texttt{A}}]) = {}_{ABA'B'}\Upsilon^{\texttt{A}}\, ,
\end{gather}
where we omitted the identity operator $\Ical_{AB}$. This gives us the conditions on an admissible adapter presented in Def.~\ref{def::LegalAdapters}.

%%%%%%%%%%%%%%%%%%%%%%%%%%%%%%%%%%%%%%%%%
\section{Free-preserving adapters}
\label{app:FreePres}
In the main text, we show that free-preserving adapters, besides being positive, have to satisfy the following constraints:
\begin{gather}
\begin{split}
   {}_{A_O'B_O'} (\Upsilon^\texttt{FP} \star W^{A||B}) &= (\Upsilon^\texttt{FP} \star W^{A||B}) \\
   \text{and} \quad \tr(\Upsilon^\texttt{FP} \star W^{A||B}) &= d_{A'_O}d_{B_O'}\, .
\end{split}   
\end{gather}
Following the same arguments as in the previous section, any free process matrix $W^{A||B}$ can be understood as $W^{A||B} = {}_{A_OB_O}R$ for some $R$, so that the first of the two above equations can be rewritten as  
\begin{gather}
    {}_{A_O'B_O'} (\Upsilon^\texttt{FP} \star {}_{A_OB_O}R) = (\Upsilon^\texttt{FP} \star {}_{A_OB_O}R)\, .
\end{gather}
Using the self-duality of ${}_{A_OB_O}\sbt$ and the fact that the above has to hold for all $R\in \Bcal(\Hcal_{A}\otimes \Bcal_{B})$, we obtain 
\begin{gather}
     {}_{A_O'B_O'A_OB_O}\Upsilon^{\texttt{FP}} = {}_{A_OB_O} \Upsilon^{\texttt{FP}} 
\end{gather}
as desired. Again, analogously to the proof in the previous section, the trace rescaling property of $\Upsilon^\texttt{FP}$ on the set of free process matrices can be rewritten as 
\begin{gather}
    \tr(\Upsilon^\texttt{FP} \star {}_{A_OB_O}R) = \frac{d_{A'_O}d_{B_O'}}{d_{A_O}d_{B_O}} \tr R\, 
\end{gather}
for all $R\in \Bcal(\Hcal_{A}\otimes \Bcal_{B})$, where we used the fact that ${}_{A_OB_O}\sbt$ is a trace preserving operator. This implies $\tr_{A'B'}({}_{A_OB_O}\Upsilon^{\texttt{FP}}) = \frac{d_{A'_O}d_{B_O'}}{d_{A_O}d_{B_O}} \ident_{AB}$, which, in turn, yields 
\begin{gather}
    \tr(\Upsilon^\texttt{FP}) =  d_{A_I}d_{B_I}d_{A_O'}d_{B_O'} \quad \text{and} \quad {}_{A'B'A_OB_O}\Upsilon^\textup{\texttt{FP}} = {}_{A'B'AB}\Upsilon^\textup{\texttt{FP}}\, ,
\end{gather}
as claimed in the main text.

\section{Admissible and free-preserving adapters can create causal non-separability}
\label{app::Maps_SeptoNonsep}

As we have seen in the main text, admissible and free-preserving adapters $\Upsilon^{AFP}$ can change the causal order of a non-free process and even create mixtures of causal orders. Now we are left to verify that admissible and free-preserving adapters can also map non-free causally separable processes to causally non-separable processes. We tackle this question numerically and we make use of the witnesses for causal non-separability that where introduced in~\cite{araujo_witnessing_2015}. These witnesses are such that, if for a witness $G$ and a process matrix $W$ we have $\tr(GW)<0$, then the process matrix $W$ is causally non-separable. With this, one can search for an adapter $\Upsilon^{\texttt{A}\texttt{FP}} \in \Theta_{{\texttt{A}\texttt{FP}}}$ that satisfies $\Upsilon^{{\texttt{A}\texttt{FP}}} \star W \notin \texttt{Sep}$ for some $W \in \texttt{Sep}$ via a see-saw SDP. In particular, if there is a causally separable $W$ such that $\Upsilon^{{\texttt{A}\texttt{FP}}} \star W \notin \texttt{Sep}$, then there must also be a causally ordered process matrix $W^{A\prec B}$ such that $\Upsilon^{{\texttt{A}\texttt{FP}}} \star W^{A\prec B}  \notin \texttt{Sep}$. As we see in Sec.~\ref{subsec::max_par_robust}, for two parties, the most valuable process (in the sense that it maximizes the causal robustness) with ordering $A\prec B$ is of the form 
\begin{gather}
    W^{A\rightarrow B} = \Psi_{A_I} \otimes \Phi^+_{A_OB_I} \otimes \ident_{B_O}\, , 
\end{gather}
where $\Psi_{A_I}$ is an arbitrary quantum state. 

Intuitively, this process matrix provides a good starting point when searching for causally non-separable processes resulting from the action of an adapter $\Upsilon^{{\texttt{A}\texttt{FP}}} \in \Theta_{{\texttt{A}\texttt{FP}}}$ on a causally ordered process. Employing this guess, we run the following program to find such an admissible and free-preserving adapter; first, we sample a random witness $G$ for causal non-separability. For this given witness, we run the SDP
\begin{align}
\begin{split}
 \textbf{given} \ \ & \text{Witness of causal non-separability} \ G \\ 
 \textbf{minimize} \ \ & \tr[(\Upsilon \star W^{A\rightarrow B})G] \\ 
 \textbf{subject to} \ \ & \Upsilon^{{\texttt{A}\texttt{FP}}} \in \Theta_{\texttt{A}\texttt{FP}}
\end{split}
\end{align}

If the resulting $\Upsilon^{{\texttt{A}\texttt{FP}}}$ is such that $\tr[(\Upsilon^{{\texttt{A}\texttt{FP}}} \star W^{A\rightarrow B})G] <0$, then the resulting process matrix $\Upsilon^{{\texttt{A}\texttt{FP}}} \star W^{A\rightarrow B}$ is causally non-separable and we have indeed found an admissible and free-preserving adapter that maps causally separable to causally non-separable process matrices. To obtain better and more robust results, the resulting adapter of the above SDP can be fed into a second SDP to optimize the respective witness $G$, i.e., 
\begin{align}
\begin{split}
 \textbf{given} \ \ & \text{Adapter} \ \Upsilon^{{\texttt{A}\texttt{FP}}} \in \Theta_{\texttt{A}\texttt{FP}} \\ 
 \textbf{minimize} \ \ & \tr[(\Upsilon^{{\texttt{A}\texttt{FP}}} \star W^{A\rightarrow B})G] \\ 
 \textbf{subject to} \ \ & G \ \text{is proper witness of causal non-separability.}
\end{split}
\end{align}
While running these two SDPs is not guaranteed to yield an adapter $\Upsilon^{{\texttt{A}\texttt{FP}}} \in \Theta_{\texttt{A}\texttt{FP}}$ with the desired properties, it actually provides adapters for which $\tr[(\Upsilon^{{\texttt{A}\texttt{FP}}} \star W^{A\rightarrow B})G] \approx -0.1661$ (as a comparison, for the best witness of $W^{\text{OCB}}$, one obtains $\tr(W^\text{OCB} G) \approx -0.1716$). While not necessarily the best achievable value, $-0.1661$ is well-beyond any potential numerical imprecision thresholds, implying that we indeed found an adapter with the desired properties. Since it does not appear to possess a particularly nice structure, we do not report it here explicitly, however, the code used to find this adapter and a data file containing the matrices that correspond to the adapter $\Upsilon^{{\texttt{A}}\texttt{FP}}$, the causally ordered, fully signalling process matrix $W^{A\rightarrow B}$, and the causally non-separable process matrix $\Upsilon^{{\texttt{A}\texttt{FP}}} \star W^{A\rightarrow B}$ can be found in the online repository~\cite{github}.

%%%%%%%%%%%%%%%%%%%%%%%%%%%%%%%%%%%%%%%%%
\section{Definition of non-signalling adapters}
\label{app::NSAdapt}
In Sec.~\ref{sssec::NS_Adapters} we defined the set $\Theta_\texttt{NS}$ of non-signalling adapters $\Upsilon^\texttt{NS}_{\Ads_I\Ads_O\Bds_I\Bds_O} \in \Bcal(\Hcal_{\Ads_I}\otimes \Hcal_{\Ads_O}\otimes\Hcal_{\Bds_I}\otimes\Hcal_{\Bds_O})$ as the set of positive semidefinite matrices that satisfy 
\begin{gather}
\label{eqn::freeAdApp}
\begin{split}
    &\Upsilon^\texttt{NS} = L_A[\Upsilon^\texttt{NS}], \quad \Upsilon^\texttt{NS} = L_B[\Upsilon^\texttt{NS}], \quad \\ 
    \text{and} \ &\tr[\Upsilon^\texttt{NS}] = d_{A_O'}d_{A_I} d_{A_O'}d_{A_I}\, ,
\end{split}
\end{gather}
where $L_X[\Upsilon^\texttt{NS}] = \Upsilon^\texttt{NS} - {}_{X_O}\Upsilon^\texttt{NS} + {}_{X_OX_O'}\Upsilon^\texttt{NS} - {}_{X_I'X_OX_O'}\Upsilon^\texttt{NS} + {}_{X_IX_I'X_OX_O'}\Upsilon^\texttt{NS}$. While positivity guarantees that adapters will lead to positive probabilities even when applied to process matrices with additional degrees of freedom (i.e., degrees of freedom the adapter does not act on), the above trace conditions and the normalization of $\Upsilon^\texttt{NS}$ ensure that $\Upsilon^\texttt{NS}$ does not allow for communication between Alice and Bob, and that the resulting process matrix is properly normalized, i.e., $\tr[\Upsilon^\texttt{NS} \star W] = d_{A_{O}'} d_{B_O'}$ for all $W\in \texttt{Proc}$ and all $\Upsilon^\texttt{NS} \in \Theta_\texttt{NS}$. 

Let us first show the former statement, namely that a non-signalling adapter that satisfies Eq.~\eqref{eqn::freeAdApp} does not allow for communication between Alice and Bob. To see this, consider a deterministic operation $\Omega_{\Ads_IA_O'} \otimes \ident_{A_O}$ that Alice can perform. Due to its causal ordering we have ${}_{A_O'}\Omega_{\Ads_IA_O'} = {}_{A_O'A_I'}\Omega_{\Ads_IA_O'}$ and $\tr[\Omega_{\Ads_IA_O'}] = d_{A_I'}$. It is easy to see that with this, we have $L_A[\Omega\otimes \ident_{A_O}] = {}_{A_IA_I'A_OA_O'}[\Omega\otimes \ident_{A_O}]$. We thus obtain
\begin{gather}
\begin{split}
    \Upsilon^\texttt{NS} \star (\Omega \otimes \ident_{A_O})&= (L_A \otimes L_B)[\Upsilon^\texttt{NS}] \star (\Omega \otimes \ident_{A_{O}})
    = (L_A \otimes L_B)[\Upsilon^\texttt{NS}] \star L_A[\Omega] = \\
    &= \Upsilon^\texttt{NS} \star {}_{A_IA_I'A_OA_O'}\Omega = \frac{1}{d_{A_I}d_{A_O'}} \tr_{\Ads_I\Ads_O}\Upsilon^\texttt{NS}, 
    \end{split}
\end{gather}
where we have dropped the labels for concise notation and we have used the fact that the operators $L_X$ are self-dual projections. Since the last line of the above equation is independent of the specific comb $\Omega$ Alice employs, Bob's local adapter is independent of Alice's operations, implying that she cannot signal to him via the non-signalling adapter $\Upsilon^\texttt{NS}$. Running the same argument for operations on Bob's side then shows that a non-signalling adapter indeed does not allow for signalling between the involved parties. This, then, implies that adapters that satisfy the properties laid out in Def.~\ref{def::free_adapters} automatically satisfy the requirements of Eqs.~\eqref{eqn::non_signal1} and~\eqref{eqn::non_signal2}.

Finally, it would remain to show that non-signalling adapters indeed satisfy $\tr(\Upsilon^\texttt{NS}_{\Ads_{I}\Ads_O \Bds_I \Bds_O}\star W_{AB}) = d_{A_O'}d_{B_O'}$, i.e., they transform process matrices to correctly-normalized objects. This has already been done in the main text. As we saw, the free adapters form a proper subset of all valid adapters, i.e., $\Theta_\texttt{NS} \subset \Theta_{\texttt{A}}$, and all admissible adapters map proper process matrices to proper process matrices, i.e., they also preserve the correct normalization.

%%%%%%%%%%%%%%%%%%%%%%%%%%%%%%%%%%%%%%%%%
\section{Shared entanglement and local operations}
\label{app::pre_shared}

In the main text, we claimed that $\Theta_\texttt{LOSE} \subset \Theta_\texttt{NS}$, implying that adapters from local operations and shared entanglement satisfy not only Requirement R$3$, i.e., they do not allow for signalling between Alice and Bob, but also Requirements R$1$ and R$2$, by mapping proper processes to proper processes and free processes to free processes. We state this here in more detail as the following Lemma: 

\begin{lemma}
An adapter of the form 
\begin{gather}
    \Upsilon^{\textup{\texttt{LOSE}}} = \rho_{\widetilde A_I \widetilde A'_O \widetilde B_I \widetilde B'_O} \star \Lambda_{\widetilde \Ads_I \Ads_I} \star \Gamma_{\widetilde A' \Ads_O} \star \Lambda_{\widetilde \Bds_I \Bds_I} \star \Gamma_{\widetilde B' \Bds_O}\, ,
\label{eqn::presharedApp}
\end{gather}
stemming from shared entanglement and local operations, maps proper process matrices to proper processes and does not allow for signalling between Alice and Bob.
\end{lemma}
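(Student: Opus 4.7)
My plan is to verify the operational non-signalling conditions of Eqs.~\eqref{eqn::non_signal1}--\eqref{eqn::non_signal2} directly for $\Upsilon^{\texttt{LOSE}}$; the property of mapping proper processes to proper processes then follows immediately, since non-signalling adapters are admissible (the inclusion $\Theta_{\texttt{NS}} \subseteq \Theta_{\texttt{A}}$ was established in Sec.~\ref{sssec::NS_Adapters}). Positivity of $\Upsilon^{\texttt{LOSE}}$ is immediate, because each of $\rho$, $\Lambda_{\widetilde{\Ads}_I \Ads_I}$, $\Gamma_{\widetilde{A}'\Ads_O}$, $\Lambda_{\widetilde{\Bds}_I \Bds_I}$, $\Gamma_{\widetilde{B}'\Bds_O}$ is positive semidefinite and the link product preserves positivity.

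The central task is to show that for every deterministic comb $\Omega$ of type $A_I \prec A_I' \prec A_O' \prec A_O$, that is $\Omega = \Omega_{A_I A_I' A_O'} \otimes \ident_{A_O}$ with $\tr_{A_O'}\Omega_{A_I A_I' A_O'} = \Omega_{A_I}\otimes\ident_{A_I'}$ and $\tr\Omega_{A_I}=1$, the quantity $\Upsilon^{\texttt{LOSE}} \star \Omega$ depends only on Bob's share of $\rho$. Using associativity and commutativity of $\star$, I reorganise
\begin{align*}
\Upsilon^{\texttt{LOSE}} \star \Omega \;=\; \bigl(\rho \star \Lambda_{\widetilde{\Bds}_I\Bds_I} \star \Gamma_{\widetilde{B}'\Bds_O}\bigr) \star \bigl(\Lambda_{\widetilde{\Ads}_I\Ads_I} \star \Gamma_{\widetilde{A}'\Ads_O} \star \Omega\bigr),
\end{align*}
and evaluate the inner ``Alice factor'' in four elementary steps. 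First, $\ident_{A_O}$ contracts with $\Gamma_A$ to give $\tr_{A_O}\Gamma_{\widetilde{A}_O' A_O' A_O} = \ident_{\widetilde{A}_O' A_O'}$ by trace-preservation. Second, the $\ident_{A_O'}$ factor contracts with $\Omega_{A_I A_I' A_O'}$, producing $\Omega_{A_I}\otimes\ident_{A_I'}$ via the comb condition. Third, $\ident_{A_I'}$ contracts with $\Lambda_A$, giving $\tr_{A_I'}\Lambda_{\widetilde{A}_I A_I A_I'} = \ident_{\widetilde{A}_I A_I}$ by trace-preservation. Fourth, $\Omega_{A_I}$ contracts with $\ident_{A_I}$, yielding the scalar $\tr\Omega_{A_I}=1$. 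The net effect is $\Lambda_{\widetilde{\Ads}_I\Ads_I} \star \Gamma_{\widetilde{A}'\Ads_O} \star \Omega = \ident_{\widetilde{A}_I \widetilde{A}_O'}$, which is manifestly independent of $\Omega$; consequently $\Upsilon^{\texttt{LOSE}} \star \Omega = (\tr_{\widetilde{A}_I \widetilde{A}_O'}\rho)\star\Lambda_{\widetilde{\Bds}_I\Bds_I} \star \Gamma_{\widetilde{B}'\Bds_O}$, proving non-signalling from Alice to Bob. Interchanging the roles of Alice and Bob gives the analogous non-signalling in the opposite direction.

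The main obstacle is purely bookkeeping: tracking which spaces are shared between which operators at each step of the link product, and respecting the correct trace-preservation direction for the CPTP Choi matrices $\Lambda$ and $\Gamma$. Conceptually the argument is a standard consequence of the fact that local CPTP operations on a shared quantum state cannot create observable effects on the other party's side, lifted from the level of channels to the level of adapters through the flexibility of the link product.
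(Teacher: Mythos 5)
Your proof is correct, and the non-signalling half is essentially the paper's own computation: the paper likewise contracts $\Upsilon^{\texttt{LOSE}}$ with a general deterministic comb $\Omega_{\Ads_I\Ads_O}=\ident_{A_O}\otimes\Omega_{\Ads_IA_O'}$ and peels off the factors using trace-preservation of $\Gamma$ and $\Lambda$ together with the comb condition on $\Omega$, arriving at $\rho_{\widetilde B_I\widetilde B_O'}\star\Lambda_{\widetilde\Bds_I\Bds_I}\star\Gamma_{\widetilde B'\Bds_O}$, independent of $\Omega$. Where you genuinely diverge is the admissibility half. The paper proves it from scratch: it contracts $\Upsilon^{\texttt{LOSE}}$ with a product $M_{A'}\otimes M_{B'}$ of CPTP maps, checks that the resulting channel $N_{AB}$ obeys the non-signalling trace conditions, and then invokes the decomposition~\eqref{eqn::DecompNonSig} of non-signalling channels into affine combinations of products to conclude that non-signalling channels are mapped to non-signalling channels, hence $\texttt{Proc}\mapsto\texttt{Proc}$. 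You instead obtain admissibility as a corollary of the inclusion $\Theta_{\texttt{NS}}\subseteq\Theta_{\texttt{A}}$, which is more economical but rests on one step you leave implicit: that inclusion is stated for adapters satisfying the projector characterization of Def.~\ref{def::free_adapters}, whereas what you verify are the operational conditions~\eqref{eqn::non_signal1}--\eqref{eqn::non_signal2}. The two are compatible -- the trace conditions of Def.~\ref{def::free_adapters} are derived in the main text precisely by specializing $\Omega$ to state-feeding combs $\rho_{A_I}\otimes\eta_{A_O'}\otimes\ident_{A_I'A_O}$, which are a subclass of the combs you quantify over -- so your chain closes, but you should state explicitly that operational non-signalling for all deterministic combs implies membership in $\Theta_{\texttt{NS}}$, since the appendix of the paper only spells out the converse direction (Def.~\ref{def::free_adapters} implies Eqs.~\eqref{eqn::non_signal1}--\eqref{eqn::non_signal2}) in full.
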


\begin{proof}
We start by proving the former. Since the link product of positive objects is again a positive object, we see that for an adapter of the form of Eq.~\eqref{eqn::presharedApp} we have $\Upsilon^\texttt{LOSE} \geq 0$ and thus $W'_{A'B'} = \Upsilon^\texttt{LOSE} \star W_{AB} \geq 0$ for all proper process matrices $W_{AB}$. It remains to show that $W'_{A'B'}$ is a proper process matrix. This can be done by checking its action on products $M_{A'} \otimes M_{B'}$ of CPTP maps. We have 
\begin{align}
    \begin{split}
        &W'_{A'B'} \star M_{A'} \star M_{B'}= \rho_{\widetilde A_I \widetilde A'_O \widetilde B_I \widetilde B'_O} \star \Lambda_{\widetilde \Ads_I \Ads_I} \star \Gamma_{\widetilde A' \Ads_O} \star \Lambda_{\widetilde \Bds_I \Bds_I} \star \Gamma_{\widetilde B' \Bds_O} \star W_{AB} \star M_{A'} \star M_{B'} \\
        &=: N_{AB} \star W_{AB} 
    \end{split}
    \label{eqn::ProperProcOp}
\end{align}
It is easy to see that the map $N_{AB}$ satisfies $\tr_{A_O}N_{AB} = \ident_{A_I} \otimes N_{B}$ and $\tr_{B_O}N_{AB} = \ident_{B_I} \otimes N_{A}$ with CPTP maps $N_A$ and $N_B$, implying that $N_{AB}$ is non-signalling between Alice and Bob.

\begin{figure}
    \centering
    \begin{minipage}{.45\textwidth}
        \centering
        \includegraphics[width=0.9\linewidth]{Shared_initial.png}
        \caption{\textbf{Free adapters from shared entanglement and local operations.} For better orientation, here, we reproduce Fig.~\ref{fig::Shared_initial} from the main text.}
        \label{fig::Shared_initial2}
    \end{minipage}%
    \hspace*{10mm}
    \begin{minipage}{0.45\textwidth}
        \centering
        \includegraphics[width=0.8\linewidth]{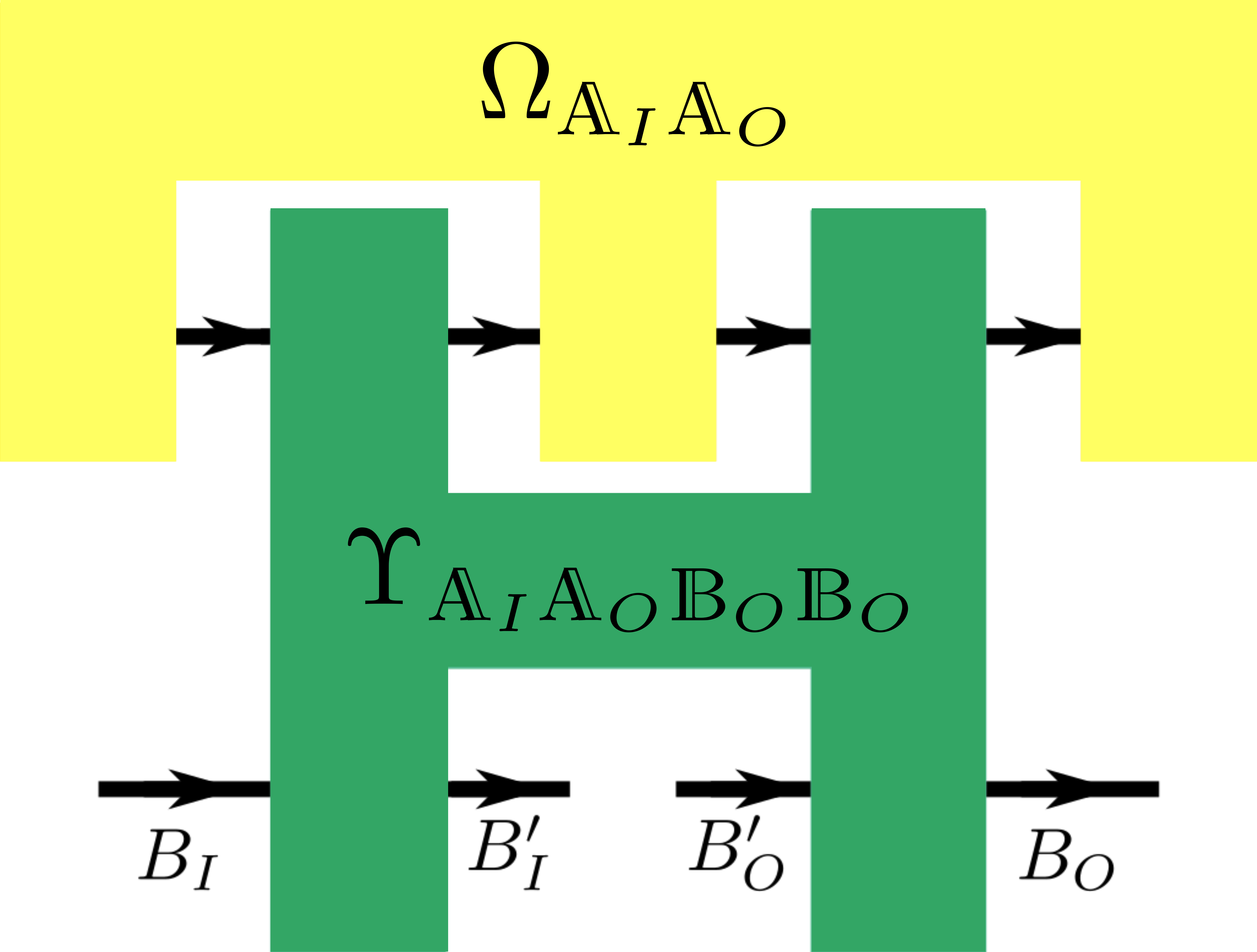}
        \caption{\textbf{No signalling for adapters.} Given the adapter $\Upsilon_{\Ads_I\Ads_O\Bds_I\Bds_O}$, the most general thing Alice could do to send information to Bob would be to perform a causally ordered comb $\Omega_{\Ads_I\Ads_O}$. If the adapter is non-signalling, then the resulting comb on Bobs side (defined on $B_I, B_I', B_O'$, and $B_O$) must be independent of $\Omega_{\Ads_I\Ads_O}$.}
        \label{fig:Non_sig_Adapt}
    \end{minipage}
\end{figure}

Concretely, we have (see Fig.~\ref{fig::Shared_initial2} for better orientation):
\begin{gather}
    \begin{split}
        &\tr_{A_O}N_{AB} = \rho_{\widetilde A_I \widetilde A'_O \widetilde B_I \widetilde B'_O} \star \Lambda_{\widetilde \Ads_I \Ads_I} \star \tr_{A_O}\Gamma_{\widetilde A' \Ads_O} \star \Lambda_{\widetilde \Bds_I \Bds_I} \star \Gamma_{\widetilde B' \Bds_O} \star M_{A'} \star M_{B'}\\
        &= \tr_{\widetilde  A'_O}\rho_{\widetilde A_I \widetilde A'_O \widetilde B_I \widetilde B'_O} \star \tr_{\widetilde A_I'} \Lambda_{\widetilde \Ads_I \Ads_I} \star \Lambda_{\widetilde \Bds_I \Bds_I}  \star \Gamma_{\widetilde B' \Bds_O} \star \tr_{A_O'}M_{A'} \star M_{B'} \\
        &= \rho_{\widetilde A_I \widetilde B_I \widetilde B'_O} \star \tr_{\widetilde A_I' A_I'} \Lambda_{\widetilde \Ads_I \Ads_I} \star \Lambda_{\widetilde \Bds_I \Bds_I} \star \Gamma_{\widetilde B' \Bds_O} \star M_{B'}\\
        &= \ident_{A_I} \star \rho_{\widetilde{B}_I\widetilde B_O'} \star \Lambda_{\widetilde \Bds_I \Bds_I} \star \Gamma_{\widetilde B' \Bds_O} \star M_{B'}\\
        &=: \ident_{A_I} \otimes N_B
    \end{split}
\end{gather}
where we have used $\tr_{A_O}\Gamma_{\widetilde A' \Ads_{O}} = \ident_{\widetilde A' A_O'}$,  $\tr_{A_O'}M_{A'} = \ident_{A_I'}$, $\tr_{\widetilde A_I' A_I'} \Lambda_{\widetilde \Ads_I \Ads_I}$,  and the fact that $\ident_X$ is the Choi matrix of $\tr_X$. In a similar vein, it is easy to see that $N_B$ satisfies $\tr_{B_{O}}N_B = \ident_{B_I}$, i.e., it is CPTP. The corresponding relations for $\tr_{B_O}N_{AB}$ follow analogously. $N_{AB}$ is thus a non-signalling map. Invoking the decomposition~\eqref{eqn::DecompNonSig} of general non-signalling maps in terms of products of CPTP maps implies that $\Upsilon^\texttt{LOSE}$ maps non-signalling maps on `the primed degrees of freedom' to non-signalling maps `on the unprimed degrees of freedom'. As we have seen in the main text, this implies that it maps proper process matrices to proper process matrices, which concludes the first part of the proof.

To show that the above $\Upsilon^\texttt{LOSE}$ does not allow for signalling between Alice and Bob, we first note that, in the above reasoning, we have already shown that $\Upsilon^\texttt{LOSE}$ maps all non-signalling maps to non-signalling maps (i.e., $\Upsilon^\texttt{LOSE} \in \Theta_{\texttt{A}\texttt{FP}}$). However, as discussed in detail in Sec.~\ref{sssec::NS_Adapters}, this is not sufficient to guarantee that the adapter does not enable any signalling between Alice and Bob. In addition, we have to show that, given the adapter $\Upsilon^\texttt{LOSE}$, no matter what Alice `inserts' into it, Bob locally `sees' the same comb (and vice versa). In the language of Def.~\ref{def::free_adapters}, where we defined non-signalling adapters, this means that $\Upsilon^\texttt{LOSE}$ satisfies $L_A[\Upsilon^\texttt{LOSE}] = L_B[\Upsilon^\texttt{LOSE}] = \Upsilon^\texttt{LOSE}$. Here, for concreteness, we will not use the projectors $L_A$ and $L_B$ to prove the non-signalling conditions, but rather show them directly (yet equivalently) by proving that Alice cannot send signals to Bob by performing deterministic operations on her side of the adapter. 

In detail, this means that $\Upsilon^\texttt{LOSE} \star \Omega_{\Ads_I\Ads_O}$ is independent of any comb $\Omega_{\Ads_I\Ads_O}$ with ordering $A_I\prec A_I' \prec A_O' \prec A_O$ that Alice could `insert' into the adapter (see Fig.~\ref{fig:Non_sig_Adapt}).

This can be checked by direct insertion into Eq.~\eqref{eqn::presharedApp} and using the causality constraints on $\Omega_{\Ads_I\Ads_O} = \ident_{A_O} \otimes \Omega_{\Ads_I A_O'}$:  
\begin{gather}
\begin{split}
    &\Upsilon^\texttt{LOSE} \star \Omega_{\Ads_I\Ads_O} = \ident_{A_O} \star \Gamma_{\widetilde A' \Ads_O} \star \Omega_{\Ads_I A_O'} \star \rho_{\widetilde A_I \widetilde A'_O \widetilde B_I \widetilde B'_O} \star \Lambda_{\widetilde \Ads_I \Ads_I}  \star \Lambda_{\widetilde \Bds_I \Bds_I} \star \Gamma_{\widetilde B' \Bds_O} \\
    &= \ident_{\widetilde{A}_O'A_O'\widetilde{A}_I'} \star \Omega_{\Ads_I A_O'} \star \rho_{\widetilde A_I \widetilde A'_O \widetilde B_I \widetilde B'_O} \star \Lambda_{\widetilde \Ads_I \Ads_I} \star \Lambda_{\widetilde \Bds_I \Bds_I} \star \Gamma_{\widetilde B' \Bds_O} \\
    &= \rho_{\widetilde B_I \widetilde B'_O} \star \Lambda_{\widetilde \Bds_I \Bds_I} \star \Gamma_{\widetilde B' \Bds_O}\, ,
\end{split}
\end{gather}
which is independent of Alice's operations. As for the previous derivation, here, we have alternatingly used the properties of CPTP maps and causally ordered combs, in particular $\tr_{A_O}\Gamma_{\widetilde A' \Ads_O} = \ident_{\widetilde{A}_O'A_O'\widetilde{A}_I'}$, $\tr_{A_I'\widetilde{A}_I'}\Lambda_{\widetilde \Ads_I \Ads_I} = \ident_{\widetilde{A}_I A_I}$, and $\tr_{A_O'}\Omega_{\Ads_I A_O'} = \ident_{A_I'} \otimes \Omega_{A_I}$. Naturally, the same independence can be shown with respect to Bob's operations, implying that $\Upsilon^{\texttt{LOSE}}_{\Ads_I\Ads_O\Bds_I\Bds_O}$ does not enable any signalling between Alice and Bob. Consequently, we have $\Theta_\texttt{LOSE} \subseteq \Theta_\texttt{NS}$. The fact that the inclusion is strict is proven in the main text. 
\end{proof}

%%%%%%%%%%%%%%%%%%%%%%%%%%%%%%%%%%%%%%%%%%
\section{Equivalence of \texorpdfstring{$\Theta_{\texttt{CA}}$}{} and \texorpdfstring{$\Theta_{\texttt{NS}}$}{}}
\label{app::CV_NS}
Here, we provide the prove of Prop.~\ref{thm::comp_leg_ad}, where we stated that the set $\Theta_{\texttt{CA}}$ of completely admissible adapters coincides with the set $\Theta_{\texttt{NS}}$ of non-signalling adapters. As laid out in Sec.~\ref{sec::comp_Leg}, a completely admissible adapter $\Upsilon^{\texttt{CA}}_{\Ads_I\Ads_O\Bds_I\Bds_O}$ must -- besides being a positive semidefinite matrix -- map any non-signalling map $\bar M_{A'\bar A B'\bar B}$ with $A_I'\bar A_I \not \rightarrow B_O'\bar B_O$ and $B_I'\bar B_I \not \rightarrow A_O'\bar A_O$ to a non-signalling map $\bar N_{A\bar A B \bar B} =  \Upsilon^{\texttt{CA}}_{\Ads_I\Ads_O\Bds_I\Bds_O} \star \bar M_{A'\bar A B'\bar B}$, where $\bar N_{A\bar A B \bar B}$ is non-signalling $A_I\bar A_I \not \rightarrow B_O\bar B_O$ and $B_I\bar B_I \not \rightarrow A_O\bar A_O$. 

As was the case for admissible adapters, complete admissibility is equivalent to mapping any non-signalling map of the form $M_{A'\bar{A}} \otimes M_{B'\bar{B}}$ -- where $M_{X'\bar{X}}$ is an arbitrary CPTP map $X_I'\bar{X}_I \rightarrow X_{O}'\bar{X}_O$ -- to a non-signalling map $N_{A\bar{A}B\bar{B}}$ (see Fig.~\ref{fig::CL}).
\begin{figure}
    \centering
    \includegraphics[width = 0.5\linewidth]{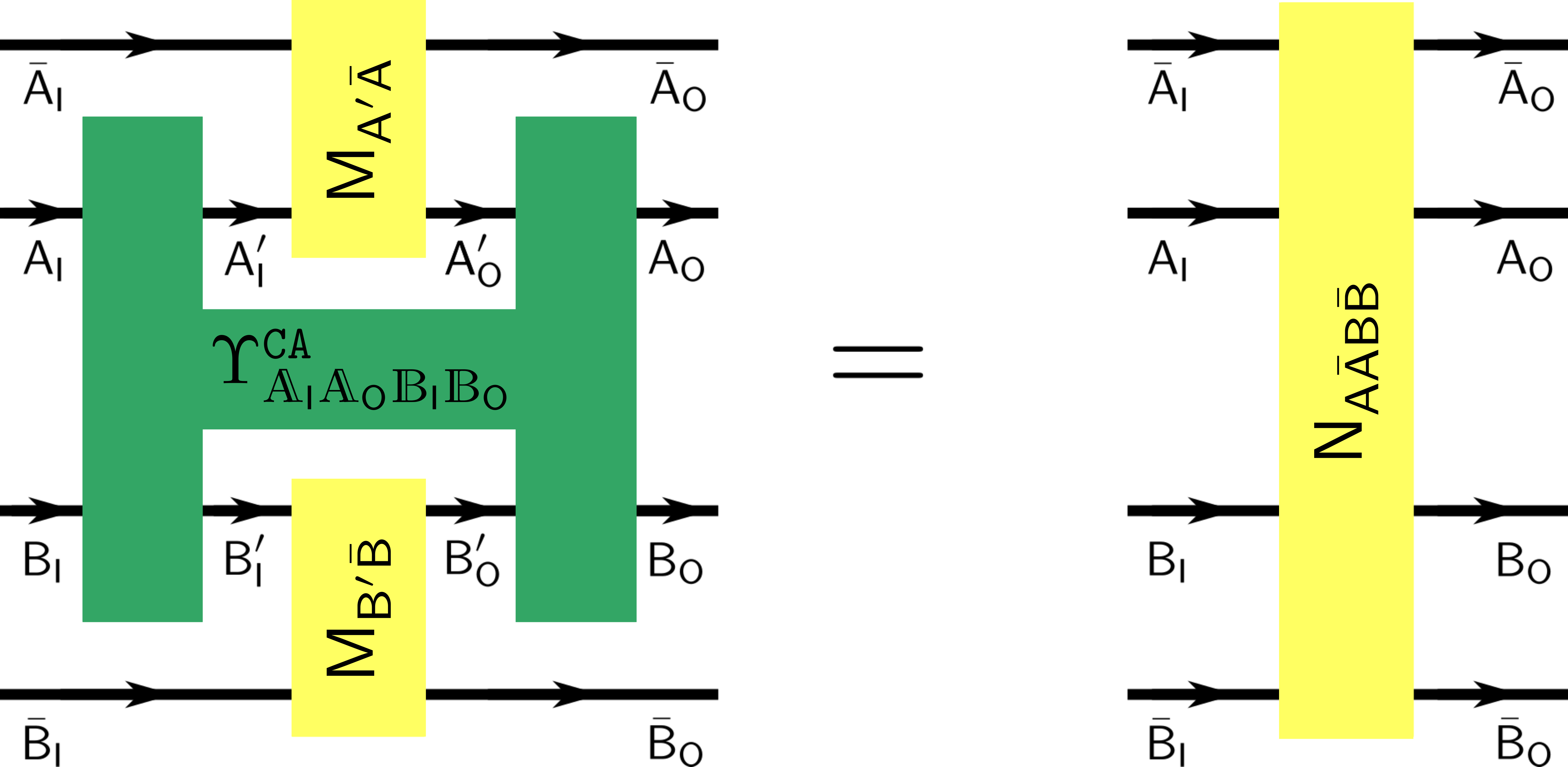}
    \caption{\textbf{Completely admissible adapters.} An adapter $\Upsilon^\texttt{CA}$ is completely admissible if it maps any non-signalling map to a non-signalling map, even when only acting on parts of the map. As for the case of admissible adapters, this is equivalent to mapping arbitrary tensor products of CPTP maps to a non-signalling map, i.e., if the adapter in the figure is completely admissible, then $N_{A\bar{A}B\bar{B}}$ is non-signalling $A_I\bar{A}_I \not \rightarrow B_O\bar{B}_O$ and $B_I\bar{B}_I \not \rightarrow A_O\bar{A}_O$ for all CPTP maps $M_{A'\bar{A}}$ and $M_{B'\bar{B}}$.}
    \label{fig::CL}
\end{figure}

With this, one could derive the linear constraints on the set $\Theta_\texttt{CA}$ of completely admissible adapters in terms of a projector $L_{CA}$ and show that its properties correspond to those of the projectors that define $\Theta_\texttt{NS}$ [given in Eqs.~\eqref{eqn::defFree} and~\eqref{eqn::trFree}]. However, here, we take a more explicit route and directly show that both $\Theta_\texttt{NS} \subseteq \Theta_\texttt{CA}$ and $\Theta_\texttt{CA} \subseteq \Theta_\texttt{NS}$ hold.

We start by showing $\Theta_{\texttt{NS}} \subseteq \Theta_\texttt{CA}$. Recall that for any $\Upsilon^\texttt{NS}_{\Ads_I\Ads_O\Bds_I\Bds_O} \in \Theta_{\texttt{NS}}$ we have ${}_{B_O}\Upsilon^{\texttt{NS}} = {}_{B_OB_O'}\Upsilon^{\texttt{NS}}$ and ${}_{B_OB_O'B_I'}\Upsilon^{\texttt{NS}} = {}_{B_OB_O'B_I'B_I}\Upsilon^{\texttt{NS}}$. With this, we can show that $N_{A\bar{A}B\bar{B}}:= \Upsilon^\texttt{NS} \star (M_{A'\bar{A}} \otimes M_{B'\bar{B}})$ is non-signalling $B_I\bar{B}_I \not \rightarrow A_O\bar{A}_O$, i.e., ${}_{B_O\bar{B}_O} N_{A\bar{A}B\bar{B}} = {}_{B_O\bar{B}_O B_I \bar{B}_I} N_{A\bar{A}B\bar{B}}$. Concretely, we have 
\begin{gather}
\label{eqn::eqCL1}
\begin{split}
    &{}_{B_O\bar{B}_O} N_{A\bar{A}B\bar{B}} \\
    &= {}_{B_O\bar{B}_O}(\Upsilon^\texttt{NS}_{\Ads_I\Ads_O\Bds_I\Bds_O} \star M_{A'\bar{A}} \otimes M_{B'\bar{B}}) =  {}_{B_OB_O'}\Upsilon^\texttt{NS}_{\Ads_I\Ads_O\Bds_I\Bds_O} \star M_{A'\bar{A}} \star {}_{\bar{B}_O}M_{B'\bar{B}} \\
    &= 
    {}_{B_O}\Upsilon^\texttt{NS}_{\Ads_I\Ads_O\Bds_I\Bds_O} \star M_{A'\bar{A}} \star {}_{\bar{B}_OB_O' B_I' \bar{B}_I} M_{B'\bar{B}} = {}_{B_OB_O'B_I'B_I} \Upsilon^{\texttt{NS}}_{\Ads_I\Ads_O\Bds_I\Bds_O} \star M_{A'\bar{A}} \star {}_{\bar{B}_O\bar{B}_I} M_{B'\bar{B}} \\
    & = {}_{B_O\bar{B}_OB_I\bar{B}_I}(\Upsilon^\texttt{NS}_{\Ads_I\Ads_O\Bds_I\Bds_O} \star M_{A'\bar{A}} \star {}_{\bar{B}_OB_O'} M_{B'\bar{B}}) \,
\end{split}
\end{gather}
where we have both used the fact that the operators ${}_X\sbt$ can be moved around freely in the link product and the property ${}_{\bar{B}_OB'_O}M_{B'\bar{B}} =  {}_{\bar{B}_OB'_O\bar{B}_I B_I'}M_{B'\bar{B}}$ of CPTP maps. In the same vein (and using ${}_X({}_X\sbt) = {}_X\sbt$), we see that  
\begin{gather}
\label{eqn::eqCL2}
    {}_{B_O\bar{B}_OB_I \bar{B}_I} N_{A\bar{A}B\bar{B}} =  {}_{B_O\bar{B}_OB_I\bar{B}_I}(\Upsilon^\texttt{NS}_{\Ads_I\Ads_O\Bds_I\Bds_O} \star M_{A'\bar{A}} \star {}_{\bar{B}_OB_O'} M_{B'\bar{B}})\, .
\end{gather}
Comparison of Eqs.~\eqref{eqn::eqCL1} and~\eqref{eqn::eqCL2} implies ${}_{B_O\bar{B}_O} N_{A\bar{A}B\bar{B}} = {}_{B_O\bar{B}_O B_I \bar{B}_I} N_{A\bar{A}B\bar{B}}$. The fact that $M_{A\bar{A}B\bar{B}}$ is non-signalling from $A_I\bar{A}_I$ to $B_O\bar{B}_O$ follows in the same way. Non-signalling adapters thus map non-signalling maps to non-signalling maps, even when only acting on a part of them, implying $\Theta^{\texttt{NS}} \subseteq \Theta^\texttt{CA}$.

To prove $\Theta^{\texttt{NS}} \supseteq \Theta^\texttt{CA}$, let us consider two particular CPTP maps on Alice's and Bob's side, respectively, namely a swap $\bar{A}_I \leftrightarrow A_O', A'_I \leftrightarrow \bar{A}_O$ on Alice's side, and a CPTP map $M_{B'\bar{B}} = L_{B'} \otimes \Phi^+_{\bar{B}_I\bar{B}_O}$ (where $L_{B'}$ is CPTP and $\Phi^+_{\bar{B}_I\bar{B}_O}$ corresponds to an identity channel from $\bar B_I$ to $\bar B_O$) on Bob's side (see Fig.~\ref{fig::CLProof}).
\begin{figure}
    \centering
    \includegraphics[width=0.7\linewidth]{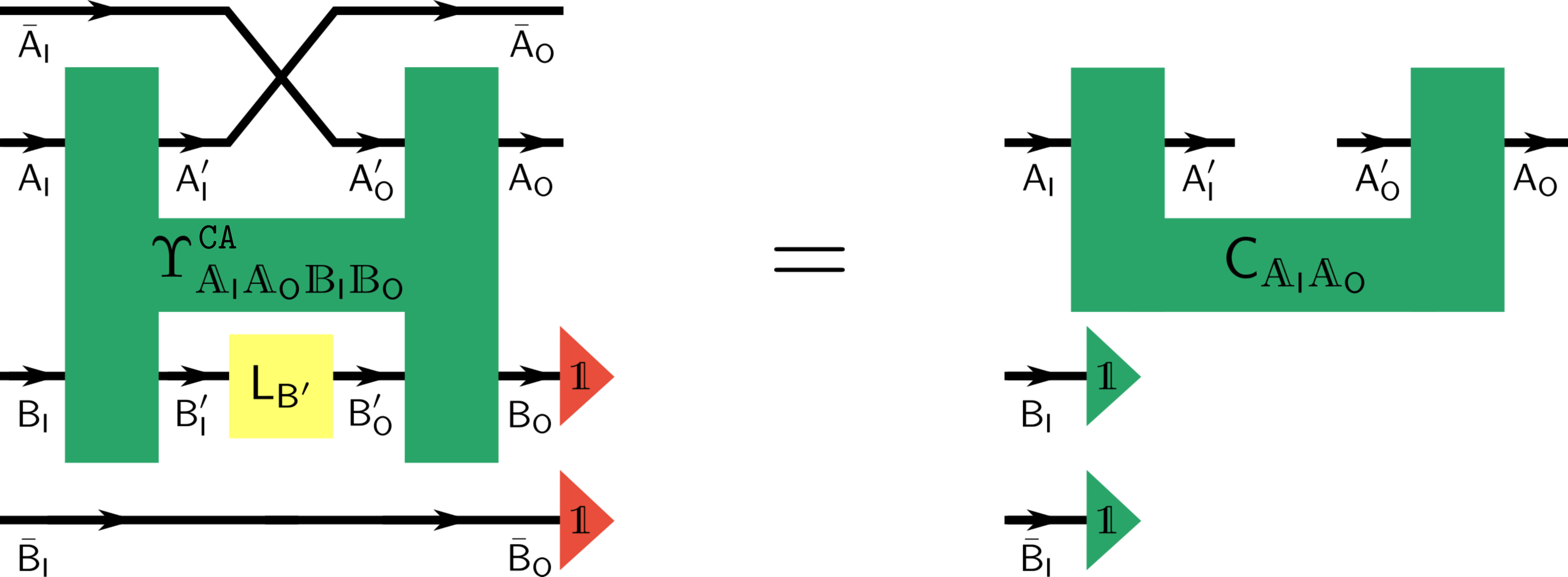}
    \caption{\textbf{Complete admissibility and non-signalling.} Taking the swap and a product map as the two maps on Alice's and Bob's side, respectively, if the adapter $\Upsilon^\texttt{CA}$ is completely admissible, then the resulting channel should be non-signalling (here, this requirement is shown for non-signalling from Alice to Bob).}  
    \label{fig::CLProof}
\end{figure}
Since the swap on Alice's side is merely a relabelling, we drop the usage of bars on the labels for Alice from here on. Demanding that, for this particular choice of CPTP maps, the resulting channel is non-signalling $B_I\bar{B}_I \not \rightarrow A_O A_I'$ is equivalent to the requirement (see Fig.~\ref{fig::CLProof})
\begin{gather}
\label{eqn::CL3}
    \tr_{B_O\bar B_O}(\Upsilon_{\Ads_I\Ads_O\Bds_I\Bds_O}^\texttt{CA} \star L_{B'}) = \ident_{B_I\bar B_I} \otimes C_{\Ads_I \Ads_O}\, , 
\end{gather}
where $C_{\Ads_I\Ads_O}$ is a causally ordered comb (since $\Upsilon_{\Ads_I\Ads_O\Bds_I\Bds_O}^\texttt{CA}$ is an admissible adapter), but its concrete form is not important for the following argument. We now show that $C_{\Ads_I\Ads_O}$ must be \textit{independent} of $L_B'$ if $\Upsilon_{\Ads_I\Ads_O\Bds_I\Bds_O}^\texttt{CA}$ is completely admissible. To see this, let us assume the contrary, that there exist two different CPTP maps $L_{B'}$ and $K_{B'}$ such that 
\begin{gather}
    \tr_{B_O\bar B_O}(\Upsilon_{\Ads_I\Ads_O\Bds_I\Bds_O}^\texttt{CA} \star L_{B'}) := \ident_{B_I\bar B_I} \otimes C_{\Ads_I \Ads_O} \neq  \ident_{B_I \bar B_I} \otimes C^\#_{\Ads_I \Ads_O} := \tr_{B_O\bar B_O}(\Upsilon_{\Ads_I\Ads_O\Bds_I\Bds_O}^\texttt{CA} \star K_{B'})\, .
\end{gather}
Then, Bob can perform a CPTP map $H_{B'\bar{B}}$ that classically controls the CPTP  maps $L_{B'}$ (for measurement outcome $0$ on $\bar B_I$) and $K_{B'}$ (for measurement outcome 1 on $\bar B_I$), i.e., 
\begin{gather}
    H_{B'\bar{B}'} = \ketbra{00}{00}_{\bar{B}_I\bar B_O} \otimes L_{B'} + \ketbra{11}{11}_{\bar{B}_I\bar B_O} \otimes K_{B'}\, , 
\end{gather}
and we assume that the degrees of freedom labeled with bars correspond to qubits. Now, contracting this map with the adapter $\Upsilon_{\Ads_I\Ads_O\Bds_I\Bds_O}^\texttt{CA}$ (and tracing out the final degrees of freedom $B_O$ and $\bar B_O$) yields [cf.~\eqref{eqn::CL3}]
\begin{gather}
    \tr_{B_O\bar B_O}(\Upsilon_{\Ads_I\Ads_O\Bds_I\Bds_O}^\texttt{CA}) \star H_{B'\bar{B}} = \ketbra{0}{0}_{\bar{B}_I} \otimes \ident_{B_I} \otimes C_{\Ads_I \Ads_O} + \ketbra{1}{1}_{\bar{B}_I} \otimes   \ident_{B_I} \otimes C_{\Ads_I \Ads_O}^\#\, .
\end{gather}
However, since $H_{B'\bar{B}}$ is itself a CPTP map, the above must -- due to complete admissibility of $\Upsilon_{\Ads_I\Ads_O\Bds_I\Bds_O}^\texttt{CA}$ -- also factorize as 
\begin{gather}
    \tr_{B_O\bar B_O}(\Upsilon_{\Ads_I\Ads_O\Bds_I\Bds_O}^\texttt{CA}) \star H_{B'\bar{B}} = \ident_{B_I\bar B_I} \otimes D_{\Ads_I \Ads_O}\, ,
\end{gather}
for some causally ordered comb $D_{\Ads_I \Ads_O}$. This, in turn, shows that $C_{\Ads_I \Ads_O} = C_{\Ads_I \Ads_O}^\#$ holds, i.e., Alice's part of the adapter is independent of the CPTP map that Bob performs (if Bob's final degree of freedom is discarded). The same argument can be employed to show that Bob's part of the adapter $\Upsilon_{\Ads_I\Ads_O\Bds_I\Bds_O}^\texttt{CA}$ is independent of the CPTP map that Alice performs (if Alice's final degrees of freedom are discarded). In fact, this is already sufficient to conclude that $\Theta^{\texttt{NS}} \supseteq \Theta^\texttt{CA}$, since it is exactly this property that we used to derive the non-signalling conditions of $\Theta_\texttt{NS}$. Combining this with the fact that $\Theta^{\texttt{NS}} \subseteq \Theta^\texttt{CA}$ (shown above), we see that the requirements on completely admissible adapters are equivalent to those on non-signalling adapters and we have $\Theta_{\texttt{NS}} = \Theta_{\texttt{CA}}$.

%%%%%%%%%%%%%%%%%%%%%%%%%%%%%%%%%%%%%%%%%
\section{Convexity of \texorpdfstring{$\Rcal_s(W)$}{}}
\label{app:conv}

In order to prove the convexity of $\Rcal_s(W)$, assume that for a given $W = pW_1 + (1-p)W_2$, we have found the process matrices $T^*\in \texttt{Proc}$ and $C^*\in \texttt{Free}$ that lead to the minimal value in Eq.~\eqref{eqn::ParRobust}. Then, 
\begin{gather}
    W = (1+\Rcal_s(W))C^* - \Rcal_s(W)T^*. 
\end{gather}
Importantly, any other such \textit{pseudo-mixture} representation $W = (1+\Rcal')C' - \Rcal'T'$, where $C'\in \texttt{Free}$ and $T'\in \texttt{Proc}$ satisfies (by definition) $\Rcal' \geq \Rcal_s(W)$. 

Let $C_1^*, C_2^*, T_1^*,$ and $T_2^*$ be the corresponding process matrices for $W_1$ and $W_2$, \textit{i.e.}, $W_1 = (1+\Rcal_s(W_1))C_1^* - \Rcal_s(W_1)T_1^*$ and $W_2 = (1+\Rcal_s(W_2))C_2^* - \Rcal_s(W_2)T_2^*$.
For convenience, we will set $\Rcal_i := \Rcal_s(W_i)$. Consequently, we have 
\begin{gather}
W = p(1+\Rcal_1)C_1^* + (1-p)(1+\Rcal_2)C_2^* - p\Rcal_1T_1^* - (1-p)\Rcal_2T_2^*. 
\end{gather}
Up to normalization, the first two terms together form the convex combination of two free processes, while the second two terms form a convex combination of proper process matrices. Thus, we can express $W$ as 
\begin{gather}
W = [p(1+\Rcal_1) + (1-p)(1+\Rcal_2)]C' -[p\Rcal_1 + (1-p)\Rcal_2]T', 
\end{gather}
where $T' \in \texttt{Proc}$ and $C' \in  \texttt{Free}$. This implies 
\begin{gather}
\Rcal_s(pW_1 + (1-p)W_2) \leq p\Rcal_s(W_1) + (1-p)\Rcal_s(W_2), 
\end{gather}
\textit{i.e.}, $\Rcal_s(W)$ is a convex function.

%%%%%%%%%%%%%%%%%%%%%%%%%%%%%%%%%%%%%%%%%
\section{Three-party processes from \texorpdfstring{$W^{A\rightarrow B \rightarrow C}$}{}}
\label{app::tri_part_trans}

In the main text we showed that all processes $W^{A\prec B}$ that satisfied the dimensional constraints we laid out can be obtained from the fully signalling process matrix $W^{A\rightarrow B}$ by means of free adapters. Here, we show that the analogous statement holds for processes of the form $W^{A\prec B\prec C}$, which can all be obtained from the fully signalling three-party process matrix \begin{gather}
    W^{A\rightarrow B \rightarrow C} = \rho_{A_I} \otimes \Phi^+_{A_OB_I} \otimes \Phi^+_{B_OC_I} \otimes \ident_{C_I}
\end{gather} by means of free adapters. 

\begin{figure}
    \centering
    \begin{minipage}{.48\textwidth}
        \centering
        \includegraphics[width=\linewidth]{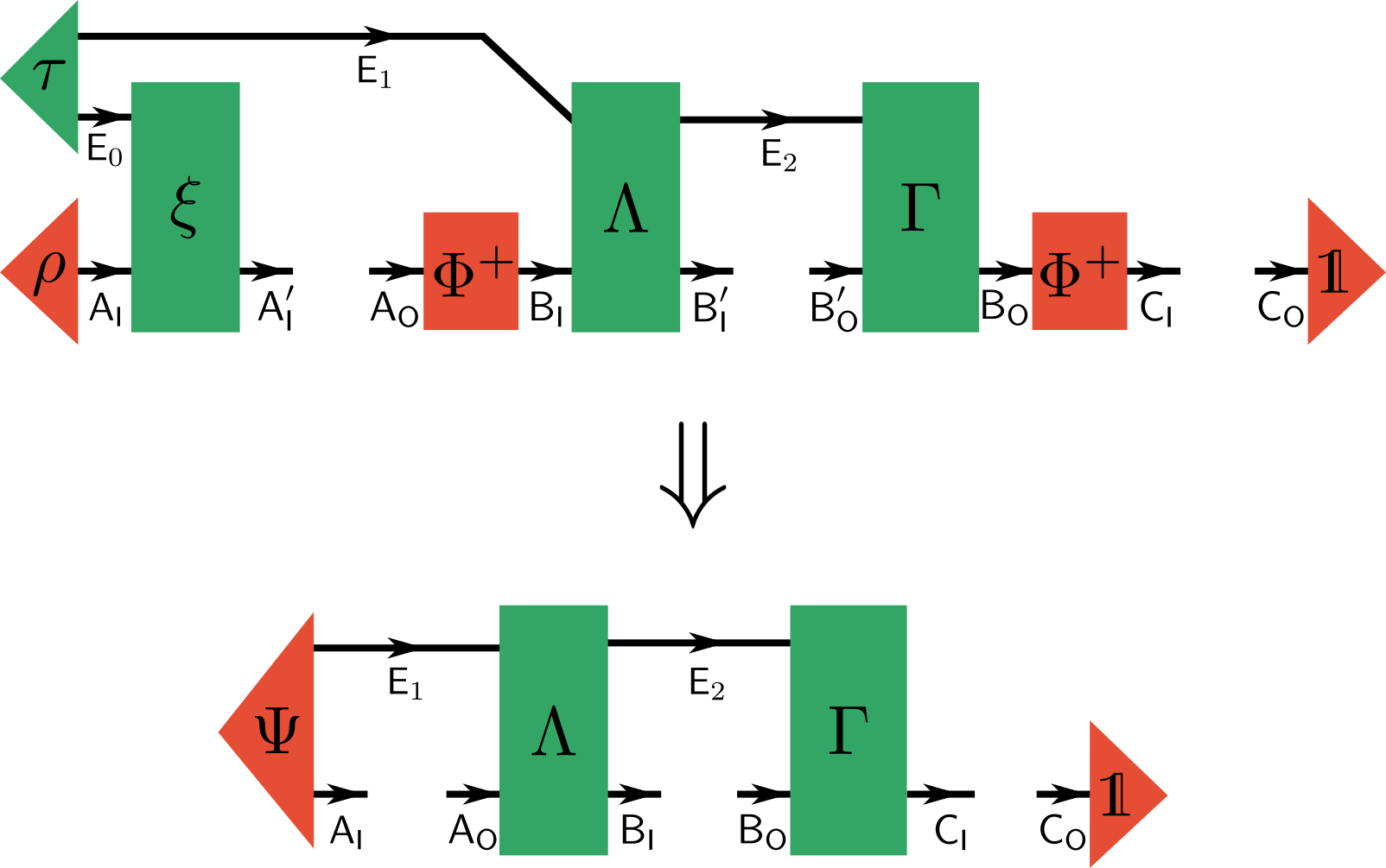}
        \caption{\textbf{General $W^{A\prec B\prec C}$ from $W^{A\rightarrow B \rightarrow C}$.} By performing the correct local adapters and sharing an entangled state, Alice and Bob can implement the representation of general three-step combs. For convenience, we have relabelled the spaces in the resulting comb (bottom panel).}
        \label{fig::Three_stepTrans}
    \end{minipage}%
    \hspace*{10mm}
    \begin{minipage}{0.48\textwidth}
        \centering
        \includegraphics[width=\linewidth]{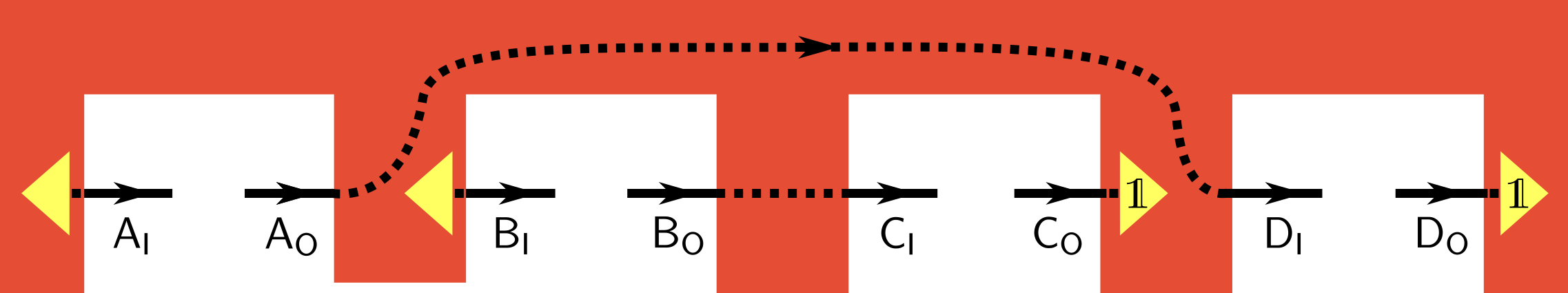}
        \caption{\textbf{Conjectured unreachable comb.} Starting from $W^{A\rightarrow B\rightarrow C\rightarrow D}$, we conjecture that the above comb, which basically corresponds to an identity channel from $A_O$ to $C_I$ and an identity channel from $B_O$ to $C_I$, cannot be reached by free adapters.}
        \label{fig::Four_Conj}
    \end{minipage}
\end{figure}

As for the two-party case, let us motivate this statement graphically (see Fig.~\ref{fig::Three_stepTrans} for a graphical representation). Any causally ordered comb $W^{A\prec B\prec C}$ can be represented as a concatenation of an initial system-environment state $\Psi_{A_IE_1}$ and two CPTP mappings $A_OE_1 \rightarrow B_IE_2$ and $B_OE_2 \rightarrow C_I$ with corresponding Choi matrices $\Lambda_{E_1A_O E_2 B_I}$ and $\Gamma_{E_2B_O C_I}$. It is then easy to see that, by only using shared entantglement and local operations, starting from $W^{A\rightarrow B\rightarrow C}$ Alice and Bob can implement any such $\Psi, \Lambda$ and $\Gamma$, and thus reach all $W^{A\prec B\prec C}$ (see Fig.~\ref{fig::Three_stepTrans}). In particular, let $\tau_{E_O E_1} = \Psi_{E_O E_1} \cong \Psi_{A_IE_1}$ be the initially shared entangled state, and let $\xi_{A_IE_OA_I'} = \ident_{A_I} \otimes \Phi^+_{E_OA_I'}$ be the Choi matrix of a local map in Alice's laboratory that traces out the system on $A_I$ and relabels $E_O \mapsto A_I'$. Then $\rho_{A_I} \star  \tau_{E_O E_1} \star \xi_{A_IE_OA_I'} = \Psi_{A_I'E_1} \cong \Psi_{A_IE_1}$. Since the degree of freedom $E_1$ is shared with Bob's laboratory, it is then sufficient for Bob to locally perform CPTP maps with corresponding Choi matrices $\Lambda_{B_IE_1E_2B_I'}$ and $\Gamma_{E_2B_O'B_O}$. The identity maps in the fully signalling comb $W^{A\rightarrow B \rightarrow C}$ then lead to a relabelling $B_I \mapsto A_O$ and $B_O \mapsto C_I$. Finally then, identifying $A_I'$ with $A_I$, $B_I'$ with $B_I$ and $B_O'$ with $B_O$, we find that 
\begin{gather}
\begin{split}
    W^{A\rightarrow B \rightarrow C} \star \Upsilon &=  (\rho_{A_I} \otimes \Phi^+_{A_OB_I} \otimes \Phi^+_{B_OC_I} \otimes \ident_{C_I}) \star (\tau_{E_O E_1} \star \xi_{A_IE_OA_I'} \star \Lambda_{B_IE_1E_2B_I'} \star \Gamma_{E_2B_O'B_O}) \\
    &\cong \Psi_{A_IE_1} \star \Lambda_{E_1A_O E_2 B_I} \star \Gamma_{E_2B_O C_I}\, ,
\end{split}
\end{gather}
and thus all combs $W^{A\prec B\prec C}$ can be reached from $W^{A\rightarrow B\rightarrow C}$ by means of free adapters. 

While the above derivation works for the case of three parties, it cannot be employed in the four-party scenario, where we conjecture the existence of ordered processes that cannot be reached from $W^{A\rightarrow B\rightarrow C\rightarrow D}$ by means of free adapters. In particular, we expect the following process to not be reachable (see Fig.~\ref{fig::Four_Conj}): 
\begin{gather}
    W_{4\text{parties}} = \Psi_{A_I} \otimes \Phi^+_{A_OD_I} \otimes \eta_{B_I} \otimes \Phi^+_{B_OC_I} \otimes \ident_{C_OD_O}
\end{gather}
This process consists of an identity channel between Alice and Dave, and identity channel between Bob and Alice, and either states that are fed in, or degrees that are traced out on the remaining degrees of freedom. The reason why we conjecture that this process cannot be reached from $W^{A\rightarrow B \rightarrow C \rightarrow D}$ by means of an adapter that does not enable any communication between the parties is as follows: If an adapter $\Upsilon$ does not enable direct communication between the respective parties, all signalling between Alice and Dave, as well as the signalling between Bob and Charlie must have come from $W^{A\rightarrow B \rightarrow C\rightarrow D}$. Consequently, signals that go from Alice to Dave must have gone via Bob (since there is no direct channel between Alice and Dave in $W^{A\rightarrow B \rightarrow C\rightarrow D}$). However, Bob has an identity channel to Charlie, but, at the same time, needs to feed forward signals from Alice to Dave, so that the resulting $W_{4\text{parties}}$ can contain an identity channel between Alice and Dave. This, then, should not be possible, since Bob cannot, at the same time, send his information undisturbed to Charlie, but concurrently also perfectly feed Alice's signals forward towards Charlie. 

While somewhat intuitive, these arguments are not sufficient to exclude the possibility to obtain $W_{4\text{parties}}$ from $W^{A\rightarrow B \rightarrow C\rightarrow D}$ by means of free adapters. As we have seen, there are free adapters that cannot be obtained from shared entanglement and local operations. They thus require internal signalling for their implementation, however, this signalling is such that it cannot be employed by the respective parties to signal to each other. In principle, though, this signalling might be \textit{activated} by concatenation with $W^{A\rightarrow B \rightarrow C\rightarrow D}$. Nonetheless, we conjecture that $W_{4\text{parties}}$ cannot be obtained from $W^{A\rightarrow B \rightarrow C\rightarrow D}$ by means of free adapters.

%%%%%%%%%%%%%%%%%%%%%%%%%%%%%%%%%%%%%%%%%
\section{Signalling robustness of multipartite causally ordered processes}\label{app::robust_multipartite}

Here, we provide proofs and generalizations of the results on the bounds on the signalling robustness in the multi-party scenario mentioned in the main text. In Sec.~\ref{subsec::robust_multipartite} we claimed the following result: \newline \newline
\textbf{Proposition 5.}
{\it
Let $\textup{\texttt{Proc}}_{1:N}$ be the set of all causally ordered processes on $N$ parties with causal order $X^{(1)}\prec X^{(2)} \prec \cdots \prec X^{(N)}$. For any $W\in \textup{\texttt{Proc}}_{1:N}$ we have 
\begin{gather}
    \Rcal_s(W) \leq \prod_{i=1}^{N-1} d_{X^{(i)}_O}^{2} - 1:= d_{\bar{O}}^2-1,
\end{gather} 
where  $d_{X^{(i)}_O}$ is the output dimension of party $X^{(i)}_O$. } \newline

\begin{proof}
The proof follows a similar line as that of Thm.~B3 in Ref.~\cite{nery_simple_2021}. First, consider the quantum channel $\widetilde{\Lambda}[\rho] = \tr(\rho) \ident/d_X$, where $d$ is the dimension of the space $X$ on which $\rho$ is defined. This channel can be written as $\widetilde{\Lambda}[\rho] = \frac{1}{d^2}\sum_{j=0}^{d^2-1} U_j \rho U_j^\dagger$, where the $d^2$ unitary matrices $\{U_j\}$ form an orthogonal basis of the space spanned by unitary matrices and $U_0 = \ident$. Using this decomposition of $\widetilde \Lambda$, we can write its action as 
\begin{gather}
\label{eqn::decompOrthU}
    \widetilde{\Lambda}[\rho] = \frac{1}{d^2} \rho + \frac{1}{d^2} \sum_{j=1}^{d^2-1} U_j \rho U_j^\dagger =: (\frac{1}{d^2} \Ical + \frac{d^2 - 1}{d^2} \widetilde{\Gamma})[\rho]\, ,
\end{gather}
where $\widetilde{\Gamma}$ is a CPTP map and $\Ical$ is the identity map. The map $\widetilde{\Lambda}$ traces out the state it acts on and replaces it by a maximally mixed state. Consequently, applying it to all output spaces $X_O^{(i)}$ (except the last one) of $W\in \textup{\texttt{Proc}}_{1:N}$ yields a process of the form $\rho_{X_I} \otimes \ident_{X_O}$. In detail, due to its causal ordering, we have $W = W_{X_I^{(1)}X_O^{(1)}\cdots X_I^{(N)}}\otimes \ident_{X_O^{(N)}}$. Applying $\widetilde \Lambda$ to each output space of $W$ except for the last one yields 
\begin{gather}
    {}_{X_O^{(1)}\cdots X_O^{(N-1)}}W 
    = 1/(d_{X^{(1)}_O} \cdots d_{X^{(N-1)}_O})\tr_{X^{(1)}_O \cdots X^{(N-1)}_O}W \otimes \ident_{X_O} 
    =:\rho_{X_I} \otimes \ident_{X_O}\, ,
\end{gather}
where $X_I$ $(X_O)$ denotes \textit{all} input (output) spaces. Now, employing Eq.~\eqref{eqn::decompOrthU} and setting $d_{\bar{O}}^2:= \prod_{i=1}^{N-1} d_{X^{(i)}_O}^2$, we obtain 
\begin{gather}
\label{eqn::OmegaEq}
    {}_{X_O^{(1)}\cdots X_O^{(N-1)}}W 
    = \prod_{i=1}^{N-1} (\frac{1}{d_{X^{(i)}_O}^2} \Ical_{X_O^{(i)}} + \frac{d_{X^{(i)}_O}^2 - 1}{d_{X^{(i)}_O}^2} \widetilde{\Gamma}_{X_O^{(i)}})[W]   =: (\frac{1}{d_{\bar{O}}^2} \Ical_{X_O} + (1- \frac{1}{d_{\bar{O}}^2}) \widetilde{\Omega}_{X_O})[W]\, , 
\end{gather}
where $\widetilde{\Omega}_{X_O}$ is a CPTP map. Let us assume that $T:= \widetilde{\Omega}_{X_O}[W]$ is a proper process matrix. In this case, using the above considerations, we have $1/d_{\bar{O}}^2 W + (1- 1/d_{\bar{O}}^2) T \in \texttt{Free}$. Recalling the definition of the signalling robustness 
\begin{gather}
\Rcal_s(W) = \underset{T \in \texttt{Proc}} \min \left\{ s\geq 0 \left| \frac{W + sT}{1+s} = C \in \texttt{Free} \right.\right\}\, ,
\end{gather}
we see that $\Rcal_s(W)\leq \prod_{i=1}^{N-1} d_{X^{(i)}_O}^2 -1$, which corresponds to the claim of the Proposition. It remains to show that $\widetilde{\Omega}_{X_O}[W]$ is indeed a proper process matrix. To this end, we note from Eq.~\eqref{eqn::OmegaEq} that $\widetilde \Omega_{X_O}$ corresponds to a convex combination of local CPTP maps that act on the respective output spaces of each party. Since each such tensor product of maps simply corresponds to a post-processing of the local output spaces, it transforms a proper process matrix to a proper process matrix, and analogously so does a convex combination of such maps. This concludes the proof.  
\end{proof}
For the case where all involved dimensions are equal (the more general case will be discussed below), it is now straightforward to show that a process of the form
\begin{gather}
\label{eqn::MostResMulti}
    W^{X^{(1)} \rightarrow \cdots \rightarrow X^{(N)}}
    = \frac{\ident_{X^{(1)}_I}}{d_{X^{(1)}_I}} \otimes \Phi^+_{X^{(1)}_OX^{(2)}_I} \otimes \cdots\otimes \Phi^+_{X^{(N-1)}_OX^{(N)}_I} \otimes \ident_{X^{(N)}_O}
\end{gather}
maximizes the signalling robustness on $\textup{\texttt{Proc}}_{1:N}$, as claimed in the main text. In clear analogy to the two-party case, the matrix $S\geq 0$ used in the dual SDP~\eqref{sdp::parallel_dual} for the computation of the signalling robustness in the multi-party case satisfies 
\begin{gather}
    \tr_{X_O^{(1)}\cdots X_O^{(N)}}S = \ident_{X_I^{(1)}\cdots X_I^{(N)}}. 
\end{gather}
Then, choosing 
\begin{gather}
\label{eqn::witnessMulti}
    S = \ident_{X^{(1)}_I} \otimes \Phi^+_{X^{(1)}_OX^{(2)}_I} \otimes \cdots \otimes \Phi^+_{X^{(N-1)}_OX^{(N)}_I} \otimes \ident_{X^{(N)}_O}/d_{X^{(N)}_O}\, ,
\end{gather}
we see that $\tr(SW^{X^{(1)}\rightarrow \cdots \rightarrow X^{(N)}}) = d_{\bar{O}}^2-1$. This proofs Cor.~\ref{cor::maxProcOrdered} of the main text.

Evidently, the proof of Prop.~\ref{prop::maxRordered} does not work for general, causally indefinite, processes, since we explicitly used the fact that $W$ factorizes, i.e., it satisfies $W = {}_{X_O^{(N)}}W$. Nonetheless, we can slightly generalize the above statement without changing the proof. In particular, even though we focused on causally ordered processes, the only actual assumption we employ in the above proof is that there is a definite last party ($X^{(N)}$ in the notation of the proof). Whether or not the remainder of the process is causally ordered did not factor in. Consequently, we have the following corollary:
\begin{corollary}
Let $\textup{\texttt{Proc}}_{\prec N}$ be the set of all processes that have $X^{(N)}$ as the definite last party, i.e., if $W \in \textup{\texttt{Proc}}_{\prec N}$, then ${}_{X^{(N)}_O}W = W$ and ${}_{X^{(N)}_OX^{(N)}_I}W = {}_{X^{(N)}_OX^{(N)}_IX^{(N-1)}_O}W$. For all $W \in \textup{\texttt{Proc}}_{\prec N}$ we have $\Rcal_s(W) \leq \prod_{i=1}^{N-1} d_{X^{(i)}_O}^{2} - 1$, where  $d_{X^{(i)}_O}$ is the output dimension of party $X^{(i)}$.  
\end{corollary}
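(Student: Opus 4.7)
My plan is to carry over the argument of Proposition~\ref{prop::maxRordered} essentially verbatim, verifying that the only structural feature it really exploited was the factorisation of $W$ on the last output leg rather than the full hierarchy of causal-order trace conditions. Concretely, the defining property ${}_{X_O^{(N)}}W = W$ of $\texttt{Proc}_{\prec N}$ gives, up to normalisation, $W = \widetilde W_{X_I^{(1)}X_O^{(1)}\cdots X_I^{(N)}}\otimes\ident_{X_O^{(N)}}$, which is exactly the starting point used in the proof of Proposition~\ref{prop::maxRordered} (there derived from causal ordering, but here built into the hypothesis).

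With this factorisation in hand, I would instantiate the map $\widetilde\Lambda[\rho]=\tfrac{1}{d^2}\Ical+\tfrac{d^2-1}{d^2}\widetilde\Gamma$ on each output space $X_O^{(i)}$ with $i<N$ and apply it to $W$. Exactly as in Eq.~\eqref{eqn::OmegaEq}, this yields
\begin{equation*}
{}_{X_O^{(1)}\cdots X_O^{(N-1)}}W \;=\; \tfrac{1}{d_{\bar O}^{\,2}}\,W \;+\; \Bigl(1-\tfrac{1}{d_{\bar O}^{\,2}}\Bigr)\,T,
\end{equation*}
where $T\coloneqq\widetilde\Omega_{X_O}[W]$ and $d_{\bar O}^{\,2}=\prod_{i=1}^{N-1}d_{X_O^{(i)}}^{\,2}$. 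Combining ${}_{X_O^{(N)}}W=W$ with the action of ${}_{X_O^{(1)}\cdots X_O^{(N-1)}}\sbt$ on the left-hand side shows that this left-hand side has the form $\rho_{X_I}\otimes\ident_{X_O}$ and is therefore in $\texttt{Free}$. Rearranging as in the original proof then gives $\Rcal_s(W)\leq d_{\bar O}^{\,2}-1$ by direct appeal to the definition~\eqref{eqn::ParRobust}.

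The single non-trivial point I need to re-examine is that $T$ is still a proper process matrix — the step where, in the original argument, this was deduced from the fact that $\widetilde\Omega_{X_O}$ is a convex combination of local post-processings on the output legs. Since local CPTP post-processing on any output $X_O^{(i)}$ can be absorbed into the local operation of party $X^{(i)}$, it maps $\texttt{Proc}\to\texttt{Proc}$ independently of the causal order of $W$, and the set $\texttt{Proc}$ is convex; hence $T\in\texttt{Proc}$ whenever $W\in\texttt{Proc}$. This is the only place where one must check that dropping the internal causal-ordering hierarchy does not break the argument, but the verification is immediate because the operation in question is a tensor product of local post-processings and never sees the intermediate causal structure — it only sees that each output is a bona fide output leg of some party. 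I therefore expect no genuine obstacle, and the corollary follows directly.
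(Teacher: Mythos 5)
Your proposal is correct and follows exactly the route the paper itself takes: the paper's proof of this corollary is literally the one-line remark that the argument of Prop.~\ref{prop::maxRordered} goes through unchanged because only the factorisation ${}_{X_O^{(N)}}W = W$ on the last output leg was ever used, which is precisely your observation. Your additional check that $T=\widetilde\Omega_{X_O}[W]$ remains a proper process matrix (local output post-processings preserve $\texttt{Proc}$ regardless of internal causal structure, and $\texttt{Proc}$ is convex) is the right and only point needing verification, and it is handled correctly.
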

The proof of this statement follows along the same lines as that of Prop.~\ref{prop::maxRordered} for causally ordered processes. 

Having found these upper bounds for the signalling robustness on the sets $\textup{\texttt{Proc}}_{1:N}$ and $\textup{\texttt{Proc}}_{\prec N}$ of causally ordered processes and processes with a definite last party, respectively, then allows one to find an analogous bound on their convex hulls.
\begin{corollary}
\label{cor::MaxRobSep}
Let $\textup{\texttt{Proc}}_{\pi(N)}^\textup{\texttt{conv}}$ be the convex hull of all processes that have a definitive last party. Then, for all $W \in \textup{\texttt{Proc}}_{\pi(N)}^\textup{\texttt{conv}}$ we have $\Rcal_s(W) \leq \underset{\bar{O}}{\max} \{d_{\bar{O}}^2 - 1\}$, where $d_{\bar{O}}^2$ is defined as in Prop.~\ref{prop::maxRordered}, i.e., the dimension of all spaces except for the last one, and the maximization is over all possible causal orders -- given by the permutation $\pi$ -- of $N$ parties. 
\end{corollary}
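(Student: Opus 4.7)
The plan is to derive this bound as an essentially immediate consequence of the convexity of $\Rcal_s$ established in Appendix~\ref{app:conv} together with the preceding corollary bounding $\Rcal_s$ on each set $\texttt{Proc}_{\prec N}$ of processes with a fixed definite last party. Concretely, any $W \in \texttt{Proc}_{\pi(N)}^\texttt{conv}$ admits, by definition, a decomposition of the form
\begin{gather}
W \;=\; \sum_{k} p_k\, W_k, \qquad p_k \geq 0, \quad \sum_k p_k = 1,
\end{gather}
where each $W_k$ belongs to some $\texttt{Proc}_{\prec \pi_k(N)}$, i.e.\ has a definite last party determined by a permutation $\pi_k$ of the $N$ parties. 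First I would invoke convexity of the signalling robustness to obtain $\Rcal_s(W) \leq \sum_k p_k\, \Rcal_s(W_k)$.

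Next I would apply the preceding corollary to each $W_k$ individually. Since $W_k$ has a fixed definite last party corresponding to some permutation $\pi_k$, the corollary gives
\begin{gather}
\Rcal_s(W_k) \;\leq\; \prod_{i=1}^{N-1} d_{X^{\pi_k(i)}_O}^{\,2} - 1 \;=:\; d_{\bar O_{\pi_k}}^{\,2} - 1.
\end{gather}
Each of these bounds is by definition majorized by $\max_{\bar O}\{d_{\bar O}^{\,2} - 1\}$, where the maximum runs over all permutations $\pi$ of the $N$ parties (equivalently, all possible choices of which party is the definite last one).

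Combining these two inequalities and using $\sum_k p_k = 1$ yields
\begin{gather}
\Rcal_s(W) \;\leq\; \sum_k p_k\, \bigl(d_{\bar O_{\pi_k}}^{\,2} - 1\bigr) \;\leq\; \max_{\bar O}\bigl\{d_{\bar O}^{\,2} - 1\bigr\},
\end{gather}
which is precisely the claim. I do not expect any serious obstacle in this proof: the only non-trivial ingredients -- convexity of $\Rcal_s$ and the bound on $\texttt{Proc}_{\prec N}$ -- are already established. The one subtlety worth flagging explicitly is that the decomposition of $W$ into summands $W_k$ with a definite last party need not be unique, but the final bound is decomposition-independent since it uses only the maximum over permutations, so no optimization over decompositions is required.
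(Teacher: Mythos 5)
Your proof is correct and follows essentially the same route as the paper's: decompose $W$ as a convex combination of processes each with a definite last party, apply convexity of $\Rcal_s$, bound each term via the preceding corollary, and majorize by the maximum over permutations. The only difference is cosmetic — you explicitly flag the decomposition-independence of the bound, which the paper leaves implicit.
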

Since $\textup{\texttt{Proc}}_{1:N} \subset \textup{\texttt{Proc}}_{\prec N}$, the above Corollary also applies to the convex hull of $\textup{\texttt{Proc}}_{1:N}$. 
\begin{proof}
If $W\in \textup{\texttt{Proc}}_{\pi(N)}^\textup{\texttt{conv}}$ then it can be written as $W = \sum_{\pi} p_{\pi} W^{\pi(N)}$, where $\pi$ is a permutation of $N$ elements, $\{p_{\pi}\}$ is a probability distribution with $\sum_{\pi} p_{\pi}\ = 1$, and $W^{\pi(N)}$ denotes a process with final party $X^{(\pi(N))}$. Using the convexity of the signalling robustness, we obtain
\begin{gather}
    \Rcal_s(W) = \Rcal_s(W)\Big(\sum_{\pi} p_{\pi} W^{\pi(N)}\Big) \leq \sum_{\pi} p_{\pi} \Rcal_s(W^{\pi(N)}) \leq \underset{\pi}{\max} \Rcal_s(W^{\pi(N)}) = \underset{\bar{O}}{\max} \{d_{\bar{O}}^2 - 1\}.
\end{gather}
\end{proof}

A priori, this bound is not tight. However, under the assumption that there is a causal ordering of parties such that $d_{\bar O}^2$ is maximized and the dimensions of all input spaces are lower bounded by the dimension of their respective preceding output spaces (i.e., $d_{X^{(i)}_O} \leq d_{X^{(i+1)}_I}$ for all $i = 1,\dots, N-1$), we have the following Proposition: 
\begin{proposition}
\label{prop::maxRob}
Let $W$ be a process matrix defined on the $N$ laboratories $\{X^{(i)}\}$. Let $X^{(1)} \prec X^{(2)} \prec \cdots \prec X^{(N)}$ be an order in which $d_{\bar O}^2$ is maximized, and let $d_{X^{(i)}_O} \leq d_{X^{(i+1)}_I}$ for all $i = 1,\dots, N-1$. Then there exists a causally ordered Markovian process $W^{X^{(1)} \prec \cdots \prec X^{(N)}}$ that maximizes the signalling robustness on $\textup{\texttt{Proc}}_{\pi(N)}^\textup{\texttt{conv}}$, i.e., it satisfies 
\begin{gather}
    \Rcal_s(W^{X^{(1)} \prec \cdots \prec X^{(N)}}) = \underset{\bar{O}} \max \{d_{\bar O}^2 - 1\}\, .
\end{gather}
\end{proposition}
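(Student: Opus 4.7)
The plan is to combine the upper bound already established in Corollary~\ref{cor::MaxRobSep} with a matching lower bound obtained from an explicit construction, mimicking the equal-dimension case of Eq.~\eqref{eqn::MostResMulti} but accommodating the dimensional mismatch via isometries. Since $\textup{\texttt{Proc}}_{1:N}\subset \textup{\texttt{Proc}}_{\pi(N)}^{\texttt{conv}}$, it suffices to exhibit a single causally ordered Markovian process in the chosen order $X^{(1)}\prec\dots\prec X^{(N)}$ whose signalling robustness equals $d_{\bar O}^2-1$, with $d_{\bar O}^2=\prod_{i=1}^{N-1} d_{X_O^{(i)}}^2$ the maximum guaranteed by assumption.

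For the construction, for each $i=1,\dots,N-1$ I would pick an isometry $V_i:\Hcal_{X_O^{(i)}}\to\Hcal_{X_I^{(i+1)}}$ (which exists thanks to the dimensional hypothesis $d_{X_O^{(i)}}\leq d_{X_I^{(i+1)}}$), and let $V_i\in\Bcal(\Hcal_{X_O^{(i)}}\otimes\Hcal_{X_I^{(i+1)}})$ denote its (rank-one) Choi matrix, normalised so that $\tr_{X_I^{(i+1)}}V_i=\ident_{X_O^{(i)}}$. Then define
\begin{gather}
W := \frac{\ident_{X_I^{(1)}}}{d_{X_I^{(1)}}}\otimes V_1 \otimes V_2 \otimes \cdots \otimes V_{N-1}\otimes \ident_{X_O^{(N)}}.
\end{gather}
A direct application of the hierarchy of trace conditions~\eqref{eqn::Hierarchy1}--\eqref{eqn::Hierarchy3} shows that $W$ is a valid comb of order $X^{(1)}\prec\cdots\prec X^{(N)}$ (the key point being $\tr_{X_I^{(i+1)}}V_i=\ident_{X_O^{(i)}}$), and its trace equals $\prod_i d_{X_O^{(i)}}$, so $W\in\texttt{Proc}_{1:N}$.

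For the lower bound I would plug into the dual SDP~\eqref{sdp::parallel_dual} the candidate witness
\begin{gather}
S := \ident_{X_I^{(1)}}\otimes V_1 \otimes \cdots \otimes V_{N-1} \otimes \frac{\ident_{X_O^{(N)}}}{d_{X_O^{(N)}}}.
\end{gather}
Clearly $S\geq 0$. For the second dual constraint, note that $\tr_{X_O^{(i)}}V_i = V_i V_i^{\dagger}$ is the orthogonal projector onto the image of $V_i$ inside $\Hcal_{X_I^{(i+1)}}$ and is therefore bounded above by $\ident_{X_I^{(i+1)}}$; taking the tensor product of these inequalities yields $\tr_{X_O^{(1)}\cdots X_O^{(N)}}S \leq \ident_{X_I^{(1)}\cdots X_I^{(N)}}$, as required. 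Finally, using $V_i^2 = d_{X_O^{(i)}} V_i$ (since $V_i$ is rank one with $\tr V_i = d_{X_O^{(i)}}$), one computes $\tr(V_i^2) = d_{X_O^{(i)}}^2$, so that
\begin{gather}
\tr(WS) \;=\; \frac{1}{d_{X_I^{(1)}}\,d_{X_O^{(N)}}}\; d_{X_I^{(1)}}\, d_{X_O^{(N)}}\prod_{i=1}^{N-1} d_{X_O^{(i)}}^2 \;=\; d_{\bar O}^2.
\end{gather}
Feasibility of $(S)$ in the dual~\eqref{sdp::parallel_dual} therefore gives $\Rcal_s(W)\geq d_{\bar O}^2-1$. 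Combined with the upper bound of Corollary~\ref{cor::MaxRobSep}, this yields $\Rcal_s(W) = \max_{\bar O}\{d_{\bar O}^2-1\}$, proving the proposition.

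The only genuine technical point is the dimensional mismatch: the equal-dimension witness based on $\Phi^+$ is no longer available, and the main obstacle is to verify that the isometry-based $S$ still satisfies $\ident-\tr_{X_O}S\geq 0$. This is exactly where the ordering assumption $d_{X_O^{(i)}}\leq d_{X_I^{(i+1)}}$ is needed, ensuring that $\tr_{X_O^{(i)}}V_i$ is a bona fide projector rather than a multiple of the identity strictly exceeding it. Once this is in place the rank-one structure of each $V_i$ makes the computation of $\tr(WS)$ essentially a bookkeeping exercise.
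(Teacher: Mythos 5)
Your proof is correct and follows essentially the same route as the paper: the upper bound is taken from Corollary~\ref{cor::MaxRobSep}, and the matching lower bound comes from a Markovian comb built from embeddings $\Hcal_{X_O^{(i)}}\hookrightarrow\Hcal_{X_I^{(i+1)}}$ together with the corresponding dual witness $S$ (the paper uses the specific zero-padding isometry, which it calls a ``generalized identity map''; your arbitrary isometry is the same construction up to a local unitary). Your write-up is in fact more explicit than the paper's about why $\ident-\tr_{X_O}S\geq 0$ holds, correctly identifying $\tr_{X_O^{(i)}}V_i=V_iV_i^\dagger$ as a projector and locating exactly where the hypothesis $d_{X_O^{(i)}}\leq d_{X_I^{(i+1)}}$ enters.
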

\begin{proof}
First, assume that the dimension of neighboring input and output spaces is the same, i.e., $d_{X^{(i)}_O} = d_{X^{(i+1)}_I}$ for all $i = 1, \dots, N-1$. Then the above statement is directly implied by from Cor.~\ref{cor::maxProcOrdered}

The case where $d_{X^{(i)}_O} < d_{X^{(i+1)}_I}$ for some (or possibly all) adjacent laboratories follows in a similar vein: First, we define a `generalized identity map' $\Ical_{X^{(i)}_O \rightarrow X^{(i+1)}_I}$ by its action on all states $\rho\in \Bcal(\Hcal_{X^{(i)}_O})$: 
\begin{gather}
    \Ical_{X^{(i)}_O \rightarrow X^{(i+1)}_I}[\rho] = \begin{pmatrix} \rho & 0 \\ 0 & 0 \end{pmatrix} \in \Bcal(\Hcal_{X^{(i+1)}_I})\, ,
\end{gather}
i.e., $\Ical_{X^{(i)}_O \rightarrow X^{(i+1)}_I}[\rho]$ transmits $\rho$ and merely pads it out with zeros such that the output lives in the right space. The corresponding Choi matrix of $\Ical_{X^{(i)}_O \rightarrow X^{(i+1)}_I}$ is given by 
\begin{gather}
    \Phi_{X^{(i)}_O \rightarrow X^{(i+1)}_I}^+ := \begin{pmatrix} \Phi_{X^{(i)}_O X^{(i)'}_O}^+ & 0 \\ 0 & 0 \end{pmatrix}\, ,
\end{gather}
where $\Hcal_{X^{(i)}_O} \cong \Hcal_{X^{(i)'}_O}$. Using this generalized identity map, the corresponding witness $S$ and the causally ordered process $W^{X^{(1)} \rightarrow \cdots \rightarrow X^{(N)}}$ can be defined in the same vein as in Eqs~\eqref{eqn::MostResMulti} and~\eqref{eqn::witnessMulti} with the same result for the signalling robustness.
\end{proof}

Since the signalling robustness of causally non-separable processes does not seem to exceed that of the processes of Eq.~\eqref{eqn::MostResMulti}, allowing for communication in several directions beyond probabilistic mixing does not appear increase the amount of causal connection. The fact that the upper bound cannot necessarily be achieved in the case where the input dimension of an input space is not larger than or equal to its preceding output space then also possesses a straightforward explanation. In this case, there is no identity channel from one laboratory to the other, leading to lower causal connection and an unattainability of the bound $\underset{\bar{O}} \max \{d_{\bar O}^2 - 1\}$. 

\section{All monotones of the resource theory of causal connection}
\label{app::Montones_allmonotones}

As we have seen, neither the signalling nor the causal robustness are sufficient to decide whether a process can be converted into another by means of free adapters. Intuitively though, if one knew \textit{all} monotones of the resource theory of causal connection, one could unambiguously decide whether or not two processes can be connected via free adapters. Here, using results from Ref.~\cite{gour_dynamical_res_2020}, we make this intuition concrete and provide \textit{all} monotones of the resource theory of causal connection. These monotones, in turn, provide a necessary and sufficient condition for the convertibility of processes. 

In Ref.~\cite{gour_dynamical_res_2020}, the complete set of monotones for a wide class of quantum resource theories of processes has been derived. There, the authors consider sets of free maps that map CPTP maps onto CPTP maps, and, under weak assumptions on the set of free CPTP maps, derive all monotones of such a resource theory. Here, we follow these ideas to provide a characterization of all monotones in a resource theory where the free processes are given by $\texttt{Free} = \{\rho_{A_IB_I} \otimes \ident_{A_OB_O} \in \Bcal(\Hcal_{AB})\}$. Unlike in the previous sections, we will \textit{not} work with the Choi matrices of adapters, but consider the free adapters as maps $\widetilde \Upsilon$ that map Choi matrices $W$ on $\Bcal(\Hcal_{AB})$ to Choi matrices $W'$ on $\Bcal(\Hcal_{A'B'})$. Concretely, with this, we get the set of free transformations
\begin{gather}
    \bar \Theta_{\texttt{NS}}(AB \rightarrow A'B') = \{\widetilde \Upsilon| \Upsilon_{\Ads_I\Ads_O\Bds_I\Bds_O} \in \Theta_{\texttt{NS}}\}\, ,
\end{gather}
where  $\Upsilon_{\Ads_I\Ads_O\Bds_I\Bds_O}$ is the Choi matrix of $\widetilde \Upsilon$.

Considering the map $\widetilde \Upsilon$ instead of its Choi matrix $\Upsilon_{\Ads_I\Ads_O\Bds_I\Bds_O}$ will lead to some notational simplifications in what follows. Naturally, everything can be phrased entirely in terms of Choi matrices (or entirely in terms of maps, as is done in Ref.~\cite{gour_dynamical_res_2020}). In what follows, whenever there is no risk of confusion, we will drop the explicit input and output spaces and simply write $\bar \Theta_{\texttt{NS}}$ instead of $\bar \Theta_{\texttt{NS}}(AB \rightarrow A'B')$. 

Following Ref.~\cite{gour_dynamical_res_2020}, we can define the following monotones for our resource theory: 
\begin{gather}
\label{eqn::ResMons}
 f_Q(W) = \underset{\widetilde \Upsilon \in \bar{\Theta}_{\texttt{NS}}} \max \tr(Q \widetilde \Upsilon(W))\,,
\end{gather}
where $W$ is a proper process matrix on $\Bcal(\Hcal_{AB})$ and $Q$ is a proper process matrix on $\Bcal(\Hcal_{A'B'})$. Note that we can use the maximum instead of the supremum in the above equation, as the set of free adapters we consider is compact, i.e., the maximum exists. Since the concatenation of free adapters still yields a free adapter, it is easy to see that Eq.~\eqref{eqn::ResMons} indeed defines a monotone of the resource theory. With this, we obtain an analogous characterization of a \textit{complete} set of monotones as the one found in~\cite{gour_dynamical_res_2020}:

\begin{proposition}
\label{thm::Monotones}
Let $W_{AB}$ be a proper process matrix on $\Bcal(\Hcal_{AB})$ and $W'_{A'B'}$ be a proper process matrix on $\Bcal(\Hcal_{A'B'})$. Then there exists a free adapter $\widetilde \Upsilon \in \bar \Theta_{\textup{\texttt{NS}}}$ such that $W'_{A'B'} = \widetilde \Upsilon(W_{AB})$ iff 
\begin{gather}
\label{eqn::All_mono}
   f_Q(W_{AB}) \geq f_Q(W'_{A'B'})  \quad \forall Q \in \textup{\texttt{Proc}}_{A'B'}\, , 
\end{gather}
where $\textup{\texttt{Proc}}_{A'B'}$ is the set of valid process matrices on $A'B'$.
\end{proposition}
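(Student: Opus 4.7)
The plan is to prove the two directions separately, with the forward implication being essentially immediate from the definitions and the reverse being a Hahn--Banach separation argument, with a subtlety in turning the separating Hermitian into a bona fide process matrix.

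For the forward direction, suppose $W'_{A'B'} = \widetilde{\Upsilon}_0(W_{AB})$ for some $\widetilde{\Upsilon}_0 \in \bar{\Theta}_{\texttt{NS}}$. The key fact I would use is that non-signalling adapters are closed under composition: if $\widetilde{\Upsilon} \in \bar{\Theta}_{\texttt{NS}}(A'B' \to A'B')$, then $\widetilde{\Upsilon} \circ \widetilde{\Upsilon}_0 \in \bar{\Theta}_{\texttt{NS}}(AB \to A'B')$, which follows directly from the projector characterization of $\Theta_{\texttt{NS}}$ in Def.~\ref{def::free_adapters} (the composition of non-signalling structures remains non-signalling). Then
\[
 f_Q(W') = \max_{\widetilde{\Upsilon}} \tr\bigl(Q \widetilde{\Upsilon}(\widetilde{\Upsilon}_0(W))\bigr) \leq \max_{\widetilde{\Upsilon}'} \tr\bigl(Q \widetilde{\Upsilon}'(W)\bigr) = f_Q(W),
\]
since the maximum on the left is over a subset of the free adapters optimized over on the right.

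For the reverse direction, I would work by contrapositive. Set $\mathcal{F}(W) \coloneqq \{\widetilde{\Upsilon}(W) : \widetilde{\Upsilon} \in \bar{\Theta}_{\texttt{NS}}\} \subseteq \texttt{Proc}_{A'B'}$. First I would verify that $\mathcal{F}(W)$ is compact and convex: convexity follows because $\bar{\Theta}_{\texttt{NS}}$ is convex (as $\Theta_{\texttt{NS}}$ is cut out by linear constraints together with positivity), and compactness follows because $\bar{\Theta}_{\texttt{NS}}$ is bounded (trace is fixed) and closed in finite dimension, with $\widetilde{\Upsilon} \mapsto \widetilde{\Upsilon}(W)$ continuous. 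Assuming $W' \notin \mathcal{F}(W)$, the Hahn--Banach separation theorem produces a Hermitian operator $H$ on $\Hcal_{A'B'}$ and a constant $c \in \mathbb{R}$ with $\tr(HX) \leq c$ for all $X \in \mathcal{F}(W)$ and $\tr(HW') > c$.

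The main obstacle, and the only nontrivial step, is promoting $H$ to a proper process matrix $Q \in \texttt{Proc}_{A'B'}$ while preserving the strict separation. I would carry this out in three moves. First, replace $H$ by $L_V(H)$: since $L_V$ is a self-dual projector and every $X$ in sight satisfies $L_V(X) = X$, the inner products $\tr(HX)$ are unchanged, so WLOG $H \in \mathrm{Im}(L_V)$. Second, observe that $P_0 \coloneqq \ident_{A'B'}/(d_{A'_I} d_{B'_I})$ is a proper (in fact free) process matrix, and that $\tr(P_0 X) = d_{A'_O} d_{B'_O}/(d_{A'_I} d_{B'_I}) =: \kappa$ is the \emph{same constant} for every $X \in \texttt{Proc}_{A'B'}$. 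Therefore adding $\lambda P_0$ shifts both sides of the separation inequality by the same quantity $\lambda \kappa$; choosing $\lambda$ large enough makes $H + \lambda P_0$ positive semidefinite without leaving $\mathrm{Im}(L_V)$ (since $P_0 \in \mathrm{Im}(L_V)$). Third, rescale by the positive factor $\mu \coloneqq d_{A'_O} d_{B'_O}/\tr(H + \lambda P_0)$ to obtain $Q \coloneqq \mu(H + \lambda P_0) \in \texttt{Proc}_{A'B'}$ satisfying $\tr(QX) \leq \mu(c+\lambda \kappa) < \tr(QW')$ for all $X \in \mathcal{F}(W)$. Taking the supremum over $\mathcal{F}(W)$ on the left and bounding the right using the identity adapter in $\bar{\Theta}_{\texttt{NS}}(A'B' \to A'B')$ gives $f_Q(W) < \tr(QW') \leq f_Q(W')$, contradicting the hypothesis. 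The hardest conceptual point is recognizing that the trace-normalization constraint $\tr(X) = d_{A'_O}d_{B'_O}$ shared by all elements of $\mathcal{F}(W) \cup \{W'\}$ makes the shift by $\lambda P_0$ innocuous, which is precisely what allows the separation to survive the positivity repair.
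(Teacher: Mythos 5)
Your proof is correct and follows essentially the same route as the paper's: the forward direction via closure of $\bar{\Theta}_{\texttt{NS}}$ under composition, and the reverse via convexity and compactness of the orbit $\mathcal{F}(W)$, hyperplane separation, and then repairing the Hermitian witness into a proper process matrix by projecting with $L_V$ and mixing with the normalized identity, exploiting the fact that all process matrices share the same trace so this shift is harmless. The only (cosmetic) difference is that you add $\lambda P_0$ and rescale whereas the paper takes the convex combination $(1-\varepsilon)\ident/d_{A_I'B_I'}+\varepsilon L_V(Q)$; these are the same trick.
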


The proof proceeds analogous to the original proof of Thm.~4 in Ref.~\cite{gour_dynamical_res_2020}. First, using convexity and closedness of the set of all free adapters, we will show that Eq.~\eqref{eqn::All_mono} provides all monotones if $Q$ is \textit{Hermitian}. In the second step, we show that it is sufficient to restrict $Q$ to be a proper process matrix on $\Bcal(\Hcal_{A'B'})$. 

\begin{proof}
\textbf{Step 1}: We denote the orbit of a process matrix $W_{AB}$ on $\Bcal(\Hcal_{AB})$ under free transformations by $\Ocal_W = \{\widetilde \Upsilon(W_{AB})| \widetilde \Upsilon \in \bar \Theta_{\texttt{NS}}\}$. As $\bar \Theta_{\texttt{NS}}$ is convex and closed, so is $\Ocal_W$. Consequently, any process matrix $W'_{A'B'}$ is not an element of $\Ocal_W$ iff there exists a Hermitian matrix $Q_{A'B'}$ on $\Bcal(\Hcal_{A'B'})$ such that 
\begin{gather}
    \tr(Q_{A'B'}W'_{A'B'}) > \underset{\Upsilon \in \bar \Theta_{\texttt{NS}}} \max \tr(Q_{A'B'} \widetilde \Upsilon[W_{AB}])\, ,
\end{gather}
where the above follows from the hyperplane separation theorem for convex sets. On the other hand, $W'_{A'B'}$ is \textit{inside} the set $\Ocal_W$ iff for \textit{all} Hermitian matrices $Q_{A'B'}$ on $ \Bcal(\Hcal_{A'B'})$, we have 
\begin{gather}
\label{eqn::Membership}
    \tr(Q_{A'B'}W'_{A'B'}) \leq \underset{\Upsilon \in \bar \Theta_{\texttt{NS}}} \max \tr(Q_{A'B'} \Upsilon(W_{AB}))\, .
\end{gather}

Naturally, the `only if' direction of the above equation holds trivially for any set, while the `if' direction requires the convexity of the set. Eq.~\eqref{eqn::Membership} allows one to check whether or not there exists an $\Upsilon \in \bar \Theta_{\texttt{NS}}$ such that $\Upsilon(W_{AB}) = W'_{A'B'}$; this is exactly the case when $W'_{A'B'}$ satisfies Eq.~\eqref{eqn::Membership}. Now, we can show that Eq.~\eqref{eqn::Membership} is equivalent to Eq.~\eqref{eqn::All_mono} when we allow $Q_{A'B'}$ to be Hermitian (instead of a proper process matrix). 

First, if Eq.~\eqref{eqn::All_mono} holds (for all Hermitian $Q_{A'B'}$), then we have 
\begin{gather}
    \underset{\widetilde \Upsilon' \in \bar \Theta_{\texttt{NS}}(A'B' \rightarrow A'B')} \max \tr[Q_{A'B'} \widetilde \Upsilon'(W'_{A'B'})]
    \leq \underset{\Upsilon \in \bar \Theta_{\texttt{NS}} (AB \rightarrow A'B')} \max \tr[Q_{A'B'} \widetilde \Upsilon(W_{AB})] \quad \forall Q_{A'B'}\, .
\end{gather}
Choosing $\widetilde\Upsilon'$ to be the identity map (which is a free adapter), we see that Eq.~\eqref{eqn::Membership} is satisfied, and consequently $W'_{A'B'} \in \Ocal_W$. 

Conversely, if Eq.~\eqref{eqn::Membership} holds, then for any $\widetilde \Upsilon' \in \bar \Theta_{\texttt{NS}}(A'B' \rightarrow A'B')$ we have 
\begin{gather}
\begin{split}
    \tr(Q_{A'B'}\widetilde \Upsilon'(W'_{A'B'})) &= \tr(\widetilde\Upsilon^{\prime \dagger}(Q_{A'B'})W'_{A'B'}) \leq \underset{\Upsilon \in \bar\Theta_{\texttt{NS}}(AB \rightarrow A'B')} \max\tr[\widetilde\Upsilon^{\prime \dagger}(Q_{A'B'}) \widetilde \Upsilon(W_{AB})] \\
    &= \underset{\Upsilon \in \bar \Theta_{\texttt{NS}}(AB \rightarrow A'B')}\max\tr[Q_{A'B'} \widetilde \Upsilon^\prime \circ \widetilde \Upsilon(W_{AB})] \\
    &\leq  \underset{\Upsilon \in \bar \Theta_{\texttt{NS}}(AB \rightarrow A'B')}\max\tr[Q_{A'B'} \widetilde \Upsilon(W_{AB})] = f_Q(W)\, ,
\end{split}
\end{gather}
where we used Eq.~\eqref{eqn::Membership} in the second line, $\widetilde \Upsilon^{\prime \dagger}$ is the dual map of $\widetilde \Upsilon$, and we have employed the fact that $\widetilde \Upsilon' \circ \widetilde \Upsilon$ is still a free operation. As the above equation holds for all $Q_{A'B'}$ and free adapters $\widetilde \Upsilon'$, it implies $f_{Q}(W_{AB}) \geq f_Q(W'_{A'B'})$. Now, it remains to show that we can restrict the set of matrices $Q_{A'B'}$ for which we have to check the monotones to the set of proper process matrices. 

\textbf{Step 2}: To show that proper process matrices are sufficient, we consider the matrix
\begin{gather}
    \bar{Q}_{A'B'} = (1-\varepsilon)\frac{\ident_{A'B'}}{d_{A_I'B_I'}} + \varepsilon L_V(Q_{A'B'}).
\end{gather}
This matrix satisfies $L_V(\bar{Q}_{A'B'}) = \bar{Q}_{A'B'}$, and for $\varepsilon$ small enough, we have $\bar{Q}_{A'B'}\geq 0$, i.e., $\bar{Q}_{A'B'}$ is -- up to normalization -- a proper process matrix. For this matrix, we obtain
\begin{gather}
\begin{split}
    \tr(\bar{Q}_{A'B'}W'_{A'B'}) &= d_{A_O'B_{O}'}^2(1-\varepsilon) + \varepsilon \tr(L_V(Q_{A'B'})W'_{A'B'}) \\
    &= d_{A_O'B_{O}'}^2(1-\varepsilon) + \varepsilon \tr(Q_{A'B'}W'_{A'B'})\, ,
\end{split}
\end{gather}
where we have used the self-duality of $L_V$ and the fact that $W'_{A'B'}$ is a proper process matrix, i.e., it satisfies $L_V(W'_{A'B'}) = W'_{A'B'}$. From the above equation we see that up to a constant offset that only depends on the dimension $d_{A_O'B_O'}$, and a positive constant pre-factor, $f_{\bar Q}$ yields the same value as $f_Q$. Consequently, checking Eq.~\eqref{eqn::All_mono} for proper process matrices $Q_{A'B'}$ is sufficient. This concludes the proof.
\end{proof}
Since we did not rely in the proof on the fact that there are only two parties, the above straightforwardly generalizes to more parties. 

While in principle providing a complete characterization of all the monotones of the theory of causal connection, the monotones provided in Prop.~\ref{thm::Monotones} are rather abstract, which obstructs their intuitive interpretation, and makes it hard to deduce how the monotones we already encountered in the main text (the signalling and the causal robustness) fit in with this complete family of monotones. While first steps towards an operational interpretation of such a family of monotones have been taken for the case of supermaps acting on CPTP maps~\cite{brandsen_what_2021}, giving the monotones we provided here a physical interpretation remains an open problem. 
\end{document}